\documentclass[a4paper,notitlepage,12pt]{article}
\usepackage[sectionbib]{natbib}
\usepackage{amsmath,mathtools}
\usepackage{amssymb,amsthm} 
\usepackage{geometry}
\usepackage{setspace}
\usepackage{enumitem}
\usepackage{dsfont}
\usepackage{algorithm,algpseudocode,float}
\usepackage{lipsum}
\usepackage{graphicx}
\usepackage{multirow}
\usepackage{booktabs}
\usepackage{caption}
\usepackage{xr}
\usepackage{longtable,tabularx}
\usepackage{threeparttable}
\usepackage{adjustbox}
\usepackage{siunitx}
\usepackage{array}
\usepackage{makecell}
\usepackage{url}
\usepackage{dcolumn}
\newcolumntype{d}{D{.}{.}{-1}} % decimal-aligned column

\allowdisplaybreaks[1]     % 允许公式跨页显示
\usepackage{array}
\usepackage{xr}

\makeatletter
\newcommand*{\addFileDependency}[1]{% 
  \typeout{(#1)}
  \@addtofilelist{#1}
  \IfFileExists{#1}{}{\typeout{No file #1.}}
}
\makeatother

\newcommand{\bm}{\mathbf}
\newcommand{\bbm}{\boldsymbol}

\newcommand{\op}{\mathrm{op}}
\newcommand{\F}{\mathrm{F}}
\newcommand{\tr}{\mathrm{tr}}

\DeclareMathOperator*{\argmin}{arg\,min}

\makeatletter
\newenvironment{breakablealgorithm}
{% \begin{breakablealgorithm}
		\begin{center}
			\refstepcounter{algorithm}% New algorithm
			\hrule height.8pt depth0pt \kern2pt% \@fs@pre for \@fs@ruled
			\renewcommand{\caption}[2][\relax]{% Make a new \caption
				{\raggedright\textbf{\ALG@name~\thealgorithm} ##2\par}%
				\ifx\relax##1\relax % #1 is \relax
				\addcontentsline{loa}{algorithm}{\protect\numberline{\thealgorithm}##2}%
				\else % #1 is not \relax
				\addcontentsline{loa}{algorithm}{\protect\numberline{\thealgorithm}##1}%
				\fi
				\kern2pt\hrule\kern2pt
			}
		}{% \end{breakablealgorithm}
		\kern2pt\hrule\relax% \@fs@post for \@fs@ruled
	\end{center}
}
\makeatother

\usepackage[
  colorlinks,
  linkcolor=blue,
  citecolor=blue,
  filecolor=magenta,
  urlcolor=cyan,
  pageanchor=true   % 关键：确保写出有效页锚
]{hyperref}

\makeatletter
\renewcommand*{\l@section}[2]{%
  \addpenalty{-\@highpenalty}%
  \addvspace{0.7em}%
  \@dottedtocline{1}{1em}{2em}{\bfseries #1}{#2}}
\renewcommand*{\l@subsection}[2]{%
  \@dottedtocline{2}{3em}{2em}{#1}{#2}}
\makeatother

\newtheorem{assumption}{Assumption}
\newtheorem{definition}{Definition}

\newtheorem{theorem}{Theorem}
\newtheorem{lemma}{Lemma}

\newtheorem{proposition}{Proposition}

\title{\vspace{-2cm}Private Federated Learning for High-dimensional \\Time Series}
\author{Kejun Chen and Qianqian Zhu\thanks{Address for correspondence: Qianqian Zhu, School of Statistics and Data Science, Institute of Big Data Research, Shanghai University of Finance and Economics, Shanghai, China. Email: zhu.qianqian@mail.shufe.edu.cn}  \\ \vspace{-0.3cm}\textit{Shanghai University of Finance and Economics}}

\begin{document}

\setlength{\parindent}{16pt}

\maketitle

\begin{abstract}
	In the era of big data, leveraging information from multiple clients while preserving data privacy has emerged as a critical challenge in modern statistical modeling and forecasting. This paper introduces a privacy-preserving federated learning framework for high-dimensional vector autoregressive models, where each client's dynamics are characterized by a common low-rank structure augmented with sparse client-specific deviations. %The proposed approach makes two key contributions. First, 
	We develop a two-stage estimation procedure that integrates differentially private representation learning for the shared component with local personalization for client-specific adjustments, enabling effective information pooling under selective privacy constraints. %Second, 
	Non-asymptotic error bounds are established for both the single-client and federated estimators to characterize the inherent privacy-utility trade-off, and consistency of a ridge-type rank selection criterion is proved. %providing theoretical guarantees that characterize the inherent privacy--utility trade-off.  
	Simulation studies demonstrate that federation substantially improves estimation accuracy when local sample sizes are limited. Two empirical applications to analyzing electricity-economy linkages across U.S. states and conducting multi-task macroeconomic forecasting across countries, highlight the superior predictive accuracy of the proposed method over existing single-client benchmarks.  
	%The practical value of the framework is further demonstrated through two empirical applications: a selective federated analysis of electricity-economy linkages across U.S. states and a multi-task macroeconomic forecasting study across countries. 
%This paper develops a privacy-preserving federated learning framework for vector autoregressive (VAR) models in high-dimensional settings. We study a structured multi-client setting in which each client shares a common structure with client-specific departures. As a benchmark, we study a single-client estimator based solely on local data To estimate these components under selective privacy protection, we propose a two-stage procedure. Stage~I performs differentially private representation learning for the common low-rank component. Stage~II carries out local personalization at each client. We establish nonasymptotic error bounds and provide a robust rank selection method with consistency guarantees. Simulation studies illustrate the privacy--utility trade-off and show that federation can substantially improve performance when local sample sizes are limited. Two empirical applications, including a selective federated analysis of electricity--economy linkages and a multi-task macroeconomic forecasting study, demonstrate the practical utility of the proposed approach.
\end{abstract}

\textit{Keywords}: differential privacy; federated learning; high-dimensional time series; non-asymptotic properties; vector autoregression.

\newpage
	\begingroup
	\setcounter{tocdepth}{2}
	\setstretch{1.0}
	\small
	\tableofcontents
	\endgroup
	\vspace{0.5em}

\vspace{1cm}

\linespread{1.55}
\selectfont{}

\section{Introduction}

Recent advances in data collection and statistical processing have substantially improved data quality, making increasingly rich and high-dimensional datasets central to policy design and decision-making by governments, regulators, and local administrators \citep{phillips2020evidence}. 
At the subnational level, economic variables are particularly valuable as they provide a granular view of local economic conditions and enable targeted policy interventions. 
In practice, however, the sample size available for a small region is often limited. 
When the number of economic variables collected for such a region is large relative to its sample size, relying solely on local data may be insufficient for reliable inference or effective policy design. 
A natural remedy is to pool information across economically similar regions.
However, many informative variables for small regions, such as local purchasing managers' indexes, local electricity consumption, employment, and energy-use measures, are often not publicly available or directly shareable due to confidentiality or disclosure-risk concerns \citep{matthews2011confidentiality,young2022using}. 
%Highly disaggregated statistics can reveal distinctive features of the local economy and are therefore frequently subject to confidentiality restrictions or disclosure-risk considerations \citep{matthews2011confidentiality,young2022using}. As a result, while such local data would be valuable for cross-region information pooling, they are often unavailable for direct sharing in practice.
This practical need for cross-region information sharing for high-dimensional forecasting under the constraints of data privacy motivates the framework developed in this paper.

Federated learning provides a natural framework for decentralized and privacy-constrained settings, where data remain stored at local clients, and a global model is learned via iterative client-server communication instead of raw data centralization \citep{mcmahan2017communication,zhang2021survey}.
In standard federated learning, each client transmits local parameters or gradients to the server for aggregation, but these transmissions lack formal privacy protection, as sensitive information about local datasets may still be inferred from the transmitted parameters or gradients \citep{kairouz2021advances}. 
%While this approach enables multiple data holders to jointly leverage distributed information while avoiding raw-data centralization, keeping data local does not guarantee rigorous privacy protection \citep{kairouz2021advances}. Sensitive information about local datasets may still be inferred from the transmitted parameters or gradients. 
To address this limitation, a growing body of research integrates differential privacy (DP) into federated learning \citep{geyer2017differentially}. %privacy-preserving federated learning by incorporating differential privacy (DP) into the federated learning process \citep{geyer2017differentially}.
Classical DP provides a rigorous framework for protecting individual records \citep{dwork2006calibrating,dwork2014algorithmic}, which requires that the output of a randomized algorithm does not change substantially when any single record is modified.  
%Classical DP is formulated for datasets as collections of individual records \citep{dwork2006calibrating,dwork2014algorithmic}. Intuitively, DP requires that the output of a randomized algorithm should not change substantially when any single individual's record is modified, so that the server cannot reliably infer whether a particular individual is included in the dataset or recover sensitive information.
In practice, DP is implemented by injecting carefully calibrated random noise into the information transmitted by each client, ensuring that the contribution of any individual user is formally protected while still allowing the model to effectively utilize each client's data information \citep{geyer2017differentially}.
%formally protecting individual contributions while still enabling effective modeling using each client's data information \citep{geyer2017differentially}.

However, most existing theoretical work on privacy-preserving federated learning assumes independent observations. Examples include federated linear regression \citep{cai2021cost,liu2023near,li2024federated}, federated ReLU regression \citep{ding2025nearly}, and federated covariance estimation \citep{li2024federatedpca}. By contrast, federated methods and guarantees for multivariate time series remain scarce. Moreover, extending DP to dependent time series introduces additional challenges. 
%Despite this progress, most existing theoretical treatments of privacy-preserving federated learning are still developed for independent observations. For example, a substantial literature studies federated linear regression \citep{cai2021cost,liu2023near,li2024federated}, federated ReLU regression \citep{ding2025nearly}, and federated covariance estimation \citep{li2024federatedpca}. By contrast, federated methods and guarantees for multivariate time-series forecasting where observations are temporally dependent remain comparatively limited.
%Moreover, while classical DP protects against changes in a single record, extending DP to dependent time series introduces additional subtleties. 
Even when dependent or sequential data are considered, the privacy notion is typically still defined via neighboring datasets that differ only at a single time point, without explicitly accounting for how such a perturbation propagates through the underlying dynamic system and influences subsequent observations \citep{cardoso2022differentially,huang2023shuffled}.
This gap motivates the need for a privacy-preserving federated framework specifically designed for high-dimensional time series forecasting. 

%With federated learning and DP in place, the focus shifts to multivariate time series forecasting, where temporal dependence and dynamic propagation pose additional challenges.
Vector autoregressive (VAR) models provide a natural and widely used framework for multivariate time series forecasting \citep{lutkepohl2005new,zivot2006vector,hamilton2020time}. 
However, when the number of variables is large relative to the available sample size, the forecasting problem becomes inherently high-dimensional. 
To address this issue, recent work has developed high-dimensional VAR models that can capture complex dynamic dependencies even with limited observations. 
One prominent line of work assumes sparsity in the transition matrices, assuming that only a small fraction of lagged cross-variable effects are nonzero; see, e.g., \citet{basu2015regularized} and \citet{kock2015oracle}. A complementary line exploits low-rank structure to capture low-dimensional common cross-equation dependence in large VAR systems; see, e.g., \citet{reinsel1998multiple} and \citet{wang2022high}. 
In many forecasting applications, however, it is often more realistic to model VAR transition matrices with a low-rank plus sparse decomposition, where the low-rank part captures common low-dimensional dependence, while the sparse part accommodates idiosyncratic effects that cannot be well represented by a low-rank structure alone; see, e.g., \citet{basu2019low} and \citet{bai2023multiple}.
Intuitively, the low-rank component represents common latent forces that induce broad comovement across series, while the sparse component captures a limited number of direct dynamic interactions. 
Ignoring either component can lead to model misspecification.

Motivated by the challenges outlined above, this paper develops a novel privacy-preserving federated framework for high-dimensional time series forecasting based on VAR models. This framework jointly captures cross-client commonality and client-specific heterogeneity in decentralized settings where sensitive local data cannot be directly shared and privacy protection is required. 
Specifically, for each client, we model the local time series using a VAR specification in which the transition matrices admit a low-rank-plus-sparse decomposition. 
To facilitate effective information sharing across clients while preserving heterogeneity, we assume the low-rank component is common across all clients to capture shared latent dynamics, whereas the sparse components are client-specific to accommodate individual deviations. 
We propose a two-stage estimation method with formal privacy guarantees, and provide a single client learning as a baseline. Theoretical comparisons reveal that the proposed federated approach outperforms single client learning when the privacy cost is moderate, client-specific sparse components are small, and the pooled sample is sufficiently large. 
We demonstrate the empirical effectiveness of the proposed framework through simulation studies and two real applications. The numerical results show that our method can improve forecasting accuracy in data-scarce local environments while maintaining formal privacy protection.
Our main contributions are threefold.
\begin{enumerate}
    \item[(i)] We introduce a privacy-preserving federated framework for high-dimensional VAR forecasting that jointly models cross-client commonality and client-specific heterogeneity. Notably, when privacy protection is not required, our framework naturally reduces to a personalized multi-task learning approach, providing a useful tool for improving prediction accuracy in multi-task settings. 

    \item[(ii)] We propose a regularized single-client estimator as a baseline, and develop a privacy-preserving two-stage federated procedure: Stage I performs private representation learning to estimate the common structure under federated constraints, while Stage II refines client-specific dynamics via local personalization without sharing raw data.

    \item[(iii)] Non-asymptotic properties are established for the single client learning and federated methods. We quantify the estimation errors of both approaches, and explicitly characterize the impact of privacy noise, cross-client information aggregation, and high dimensionality. Our analysis clarifies when federated learning improves over purely local estimation and how privacy protection affects statistical accuracy.
\end{enumerate}

The rest of this paper is organized as follows. 
Section~\ref{sec:methodology} introduces the proposed methodology, including the model setting, the single client learning procedure, and the federated learning method across multiple clients.
Section~\ref{sec:implement} discusses implementation issues, including the computational algorithm, rank selection, and the selection of tuning parameters and initialization. 
Section~\ref{sec:theoretical_analysis} presents the theoretical analysis, including error bounds for the single-client and federated estimators, as well as the consistency of rank selection procedure. 
Section~\ref{sec:Simulation} reports simulation results for both the single-client and federated settings. 
Section~\ref{sec:RealData} presents two empirical applications, including a selective federated learning analysis of electricity-economy linkages and an application to macroeconomic forecasting in the multi-task learning setting.
Additional technical details and proofs are provided in the Supplementary Material.

% \begin{itemize}[leftmargin=1.6em]
%   \item \textbf{Motivation (multi-client economic forecasting with selective privacy).} Federated economic forecasting where multiple related clients hold raw observations that cannot be pooled, and only a subset of coordinates is privacy-sensitive, requiring \emph{selective} privacy protection with cross-client information sharing.

%   \item \textbf{Related literature.} (i) high-dimensional VAR estimation; (ii) multi-task learning that leverages shared structure across heterogeneous clients under centralized access; (iii) federated and personalized federated learning under communication constraints and client heterogeneity; and (iv) differential privacy.

%   \item \textbf{Gap (time dependence + selective DP + VAR structure).} (i) Existing personalized federated learning theory is largely developed for independent data and does not directly cover time series settings; (ii) most DP analyses used in federated learning do not explicitly account for time-series dependence.

%   \item \textbf{Main contributions.} (i) a structured federated VAR model capturing commonality and heterogeneity; (ii) a privacy-preserving two-stage estimation procedure (private representation learning + local personalization); (iii) nonasymptotic theory; (iv) our framework also applies to the non-private personalized multi-task learning setting as a special case; and (iv) empirical results on economic forecasting.

%   \item \textbf{Organization.}
% \end{itemize}

%\subsection{Notation}
%The following notation are used throughout the paper. 
Throughout the paper, vectors and matrices are denoted by boldface lowercase and uppercase letters, respectively (e.g., $\bbm a$, $\bbm \xi$, $\bm A$, $\bm \Delta$). For a vector $\bbm{a}$, its Euclidean norm is denoted by $\|\bbm{a}\|_2$. For a matrix $\bm A \in \mathbb R^{m \times n}$, we denote its transpose, Frobenius norm, and nuclear norm by $\bm A^\top$, $\|\bm A\|_\F$, and $\|\bm A\|_*$, respectively. For $q \in [1,\infty)$, the $\ell_q$ norm of $\bm A$ is defined as $\|\bm A\|_q =( \sum_{i=1}^m \sum_{j=1}^n |A_{ij}|^q)^{1/q}$; the same formula for $q \in (0,1)$ defines the entrywise $\ell_q$ quasi-norm. When $q=0$, $\|\bm A\|_0=\#\{(i,j): A_{ij} \neq 0\}$ is the number of nonzero entries of $\bm A$.
Moreover, let $\mathrm{SVD}_r(\bm A) = \sum_{i=1}^r \sigma_i \bbm u_i \bbm v_i^\top$ denote the best rank-$r$ approximation of $\bm A$ obtained from its singular value decomposition, where $\sigma_i$ is the $i$-th singular value, and $\bbm u_i$ and $\bbm v_i$ are the corresponding left and right singular vectors.
For a rank-$r$ matrix $\bm A \in \mathbb R^{n_1 \times n_2}$ with singular value decomposition $\bm A = \bm U \bm\Sigma \bm V^\top$, its tangent space is defined as $\mathcal T_r(\bm A) = \bigl\{\bm \Xi = \bm U \bm R^\top + \bm L \bm V^\top: \bm R \in \mathbb R^{n_2 \times r}, \bm L \in \mathbb R^{n_1 \times r}\bigr\}$.
%Denote the tangent space of a rank-$r$ matrix $\bm A \in \mathbb R^{n_1 \times n_2}$ with singular value decomposition $\bm A = \bm U \bm\Sigma \bm V^\top$ as $\mathcal T_r(\bm A) = \bigl\{\bm \Xi = \bm U \bm R^\top + \bm L \bm V^\top: \bm R \in \mathbb R^{n_2 \times r}, \bm L \in \mathbb R^{n_1 \times r}\bigr\}$.
The projection of $\bm B \in \mathbb R^{n_1 \times n_2}$ onto the tangent space $\mathcal T_r(\bm A)$ is given by $\mathcal P_{\mathcal T_r(\bm A)}(\bm B) = \bm U \bm U^\top \bm B + \bm B \bm V \bm V^\top - \bm U \bm U^\top \bm B \bm V \bm V^\top$.
For sequences $\{x_n\}$ and $\{y_n\}$, we write $x_n\gtrsim y_n$ if there exists a constant $C>0$ such that $x_n\geq C y_n$ for all $n$, and $x_n\asymp y_n$ if both $x_n\gtrsim y_n$ and $y_n\gtrsim x_n$ hold.
The dataset in Section \ref{sec:RealData} and computer programs for the analysis are available at \url{https://github.com/CKKQ/FL-VAR}. 

\section{Methodology}\label{sec:methodology}

\subsection{Model Setting}\label{sec:Setting}

%Suppose there are $K$ clients, indexed by $k \in [K] := \{1,\ldots,K\}$. For the $d$-dimensional time series $\{\bbm y_{k,t}\}$ at client $k$, we consider a vector autoregressive (VAR) model as follows
Consider a federated learning scenario with $K$ clients, indexed by $k \in [K] := {1,\ldots,K}$. For each client $k$, we observe a $d$-dimensional time series $\{\bbm y_{k,t}\}$ of length $T_k$, which follows a vector autoregressive (VAR) model of order $p$:
\begin{equation}
  \bbm y_{k,t} = \bm A_{k,1} \bbm y_{k,t-1} + \cdots + \bm A_{k,p} \bbm y_{k,t-p} + \bbm \epsilon_{k,t},  \quad t \in [T_k], \; k \in [K],
  \label{eq:VARpk}
\end{equation}
where $\bm A_{k,j} \in \mathbb R^{d\times d}$ with $1\leq j\leq p$ are the client-specific transition matrices, $\{\bbm \epsilon_{k,t}\}$ are independent and identically distributed ($i.i.d.$) innovations with zero mean and finite covariance matrix $\bbm \Sigma_{\epsilon,k}\in \mathbb{R}^{d\times d}$, and $\{\bbm \epsilon_{k,t}\}$ are mutually independent across clients. 
While the sample size $T_k$ can vary from client to client, all clients share a common lag order $p$ for model parsimony. In practice, individual clients with a smaller true order $p_k \le p$ are accommodated by setting the corresponding higher-lag matrices $\bm A_{k,j}$ for $j > p_k$ to zero.

Let $\bbm x_{k,t} = [\bbm y_{k,t-1}^\top,\ldots,\bbm y_{k,t-p}^\top]^\top \in \mathbb R^{pd}$ be the lagged variable vector and $\bm A_k = [\bm A_{k,1},\ldots,\bm A_{k,p}] \in \mathbb R^{d \times pd}$ be the stacked transition matrix. Then model \eqref{eq:VARpk} can be written as
\begin{equation}
  \bbm y_{k,t} = \bm A_k \bbm x_{k,t} + \bbm \epsilon_{k,t}, \quad k \in [K], \; t \in [T_k].
  \label{eq:VARpk-compact}
\end{equation}
To handle the high-dimensionality and enable information sharing across clients, we assume that each client-specific coefficient matrix $\bm A_k$ admits a low-dimensional structure as follows:  
\begin{equation}
  \bm A_k = \bm{A}_0 + \bm\Delta_k, \quad k \in [K],
  \label{eq:decomposition}
\end{equation}
where $\bm A_0$ is an exactly low-rank matrix that captures the common temporal dynamics across all clients, and $\bm\Delta_k$ is a client-specific and weakly sparse deviation matrix.
This structure allows for effective information sharing across clients via $\bm A_0$ while accommodating individual client characteristics through $\bm\Delta_k$.
It is important to note that $\bm A_0$ itself is not required to be uniquely identifiable; see also \cite{xu2025multitask} and \cite{huang2025optimal}. 

In the following, we consider two approaches for estimating the client-specific VAR models introduced in \eqref{eq:VARpk-compact} under the structured decomposition \eqref{eq:decomposition}. The first relies solely on local data to estimate $\bm A_k$, serving as a baseline. The second, which is our primary focus, is a federated method that leverages data from all clients while adhering to privacy constraints.

\subsection{Single Client Learning}\label{sec:single-client-estimation}

We first consider a baseline approach where each client's VAR model, as specified in \eqref{eq:VARpk-compact} under the decomposition \eqref{eq:decomposition}, is estimated using only its locally available data. For client $k$, the single-client estimator of $\bm A_k$ is defined as $\widetilde{\bm A}_k = \widetilde{\bm A}_{0,k} + \widetilde{\bbm \Delta}_k$, where $(\widetilde{\bm A}_{0,k}, \widetilde{\bbm \Delta}_k)$ is obtained by solving the following regularized optimization problem:
\begin{equation}
	(\widetilde{\bm A}_{0,k}, \widetilde{\bbm \Delta}_k)
	\in \argmin_{\bm A_{0,k}, \bbm \Delta_k} \frac{1}{T_k} \sum_{t=1}^{T_k} 
	\left\|\bbm y_{k,t} -(\bm A_{0,k} + \bbm \Delta_k)\bbm x_{k,t}\right\|_2^2
	+ \lambda_k \|\bm A_{0,k}\|_* + \omega_k \|\bbm \Delta_k\|_1,
	\label{eq:local-estimator}
\end{equation}
subject to $\|\bbm \Delta_{k}\|_\op \leq \zeta$. The subscript $k$ in $\widetilde{\bm A}_{0,k}$ emphasizes that this estimate of the common structure is derived solely from client $k$'s own data. 
%Here, the subscript $k$ in $\widetilde{\bm A}_{0,k}$ indicates that this common component is estimated locally from client $k$'s data. $\lambda_k$, $\omega_k$ and $\zeta$ are tuning parameters: 
The tuning parameters $\lambda_k$, $\omega_k$ and $\zeta$ control distinct aspects of regularization. 
The nuclear-norm penalty $\|\bm A_{0,k}\|_*$ induces low-rankness in the shared component $\bm A_0$, while the entrywise $\ell_1$-norm penalty $\|\bbm \Delta_k\|_1$ promotes sparsity in the client-specific deviation $\bbm \Delta_k$. The operator-norm constraint $\|\bbm\Delta_k\|_{\op}\leq \zeta$ imposes a weak identifiability control on the deviation component. This constraint reduces the ambiguity between the sparse deviation and the low-rank common component by preventing the deviation term from capturing an excessive amount of the shared low-rank signal. Its role is therefore analogous to the weak-identifiability control provided by spikiness-type conditions in low-rank-plus-sparse decompositions; see, e.g., \citet{agarwal2012noisy}.

The optimization problem in \eqref{eq:local-estimator} is convex and can be solved using standard first-order proximal gradient methods. In our implementation, we employ an alternating direction method of multipliers (ADMM) to calculate $\widetilde{\bm A}_k$ for simplicity; further details are provided in Section \ref{sec:algorithm} of the Supplementary materials. 
%In particular, we use a proximal gradient (ISTA) algorithm as in Section \ref{sec:algorithm} for implementation.

As established in Proposition~\ref{thm:local-model-error-upper-bound}, the estimation error of the single-client estimator $\widetilde{\bm A}_k$ is fundamentally limited by the local sample size $T_k$. Consequently, in practical settings where $T_k$ is relatively small, this approach may yield unsatisfactory estimation and prediction accuracy. To address this limitation and improve performance for each client, we next propose a federated learning method that enables information sharing across clients with data-privacy protection. %while respecting data-privacy constraints.
%the error bounds of the single-client estimator $\widetilde{\bm A}_k$ depend on the local sample size $T_k$. %, which can be relatively small in practice. This may lead to limited accuracy of estimation and prediction when $T_k$ is relatively small in practice. To improve the estimation and prediction accuracy of each client, in the following we propose a federated learning method that pools information across clients with data-privacy protection.    

\subsection{Federated Learning via Multiple Clients}\label{sec:estimation} 

Building on the limitations of local learning, we now develop a privacy-preserving federated procedure that leverages data across all clients. Our approach employs a two-stage estimation strategy including differentially private representation learning for the shared low-rank component, and a personalized refinement step for each client.
%For client-specific VAR models at \eqref{eq:VARpk-compact} under the decomposition in \eqref{eq:decomposition}, our goal is to exploit similarity among clients to improve their estimation and prediction, while respecting data-privacy constraints and allowing for heterogeneous model parameters across clients. For this purpose, we propose a two-stage estimation strategy:
\begin{itemize}
    \item \textbf{Stage I: Differentially Private Representation Learning.} To estimate the low-rank common coefficient matrix $\bm A_0$ under privacy constraints, we propose a differentially private federated procedure based on noisy aggregated gradients; see Algorithm~\ref{algorithmTL}. 
    %We run the differentially private federated procedure in Algorithm~\ref{algorithmTL} to learn the low-rank common coefficient matrix $\widehat{\bm A}_0$ based on noisy aggregated gradients.
    \item \textbf{Stage II: Personalized Refinement.} %To obtain entrywise sparse deviations $\widehat{\bm \Delta}_k$'s and client-specific coefficient matrices $\widehat{\bm A}_k$'s, we solve the regularized local optimization problems at each client based on the estimated matrix $\widehat{\bm A}_0$; see Algorithm~\ref{algorithmTL-stage2}.
    Given the estimated common matrix $\widehat{\bm A}_0$, each client then solves a regularized local optimization problem to obtain an entrywise sparse deviation $\widehat{\bm \Delta}_k$ and, consequently, its client-specific coefficient matrix $\widehat{\bm A}_k$; see Algorithm~\ref{algorithmTL-stage2}.
\end{itemize}

%Next, we introduce the two-stage estimation method in detail. Stage~I estimates the common coefficient matrix $\bm A_0$ by starting from the pooled constrained least-squares loss
%\[
%  \min \limits_{\mathrm{rank}(\bm A_0) \leq r}
%  \sum_{k=1}^K \frac{T_k}{T}\sum_{t=1}^{T_k}\frac{1}{T_k}
%  \bigl\|\bbm y_{k,t} - \bm A_0 \bbm x_{k,t}\bigr\|_2^2,
%\]
%where $T = \sum_{k=1}^K T_k$. The weight $T_k/T$, proportional to the sample size of client $k$, ensures that the objective function corresponds to the empirical risk over the pooled dataset. Alternative weighting schemes are also possible without affecting the core methodology.
%Here, other weighting schemes can also be adopted without affecting the subsequent development. Note that the above objective function does not explicitly encode differential privacy. 
We now describe each stage in detail. Stage~I aims to estimate the common low-rank matrix $\bm A_0$ using a privacy-preserving gradient-based procedure by applying the Gaussian mechanism to the projected local gradients.  
%To incorporate differential privacy, Stage~I implements a privacy-preserving gradient-based procedure by applying the Gaussian mechanism to the projected local gradients. 
To isolate this common structure $\bm A_0$, we treat the client-specific deviations $\bbm\Delta_k$ in decomposition \eqref{eq:decomposition} as nuisance parameters. This allows us to rewrite the original model \eqref{eq:VARpk-compact} as $\bbm y_{k,t} = \bm A_0 \bbm x_{k,t} + \bbm e_{k,t}$ with $\bbm e_{k,t}= \bbm\Delta_k\bbm x_{k,t}+\bbm \epsilon_{k,t}$, where the composite error term $\bbm e_{k,t}$ absorbs both the client-specific deviation and the innovation. Based on the reformulated model, for each client $k$ we define the local least-squares loss $\ell_k(\bm A)=T_k^{-1}\sum_{t=1}^{T_k}\|\bbm y_{k,t} - \bm A \bbm x_{k,t}\|_2^2$ and its gradient $\bm G_k(\bm A)=2T_k^{-1}\sum_{t=1}^{T_k}(\bm A \bbm x_{k,t} - \bbm y_{k,t})\bbm x_{k,t}^\top$. Denote by $\bm A_0^{(n)}$ the estimate of $\bm A_0$ at iteration $n\geq 0$, and let $\xi_k^{(n)}$ be the sensitivity of the local gradient $\bm G_k(\bm A_0^{(n)})$. 

At each iteration $n$, client $k$ computes its local gradient $\bm G_k(\bm A_0^{(n)})$ and adds an $i.i.d.$ Gaussian noise matrix $\bm W_k^{(n)}$ to ensure differential privacy. The noisy gradient is then projected onto the tangent space $\mathcal T_r(\bm A_0^{(n)})$ as follows:
%Consider the update of $\bm A_0^{(n)}$ at the $n$-th iteration for illustration. Let $\xi_k^{(n)}$ be the sensitivity of the local gradient $\bm G_k(\bm A_0^{(n)})$ for the $k$-th client.  
%Denote $\mathcal D_k=\{\bm y_{k,t}\}_{t=1}^{T_k}$, and let $\mathcal D_k\sim \mathcal D_k'$ be the two local datasets differ in a single time index.
%For differential privacy, we add an $i.i.d.$ Gaussian noise matrix $\bm W_k^{(n)}$ to the local gradient $\bm G_k(\bm A_0^{(n)})$, and project the noisy gradient onto the tangent space $\mathcal T_r(\bm A_0^{(n)})$, 
%with the noise variance $\sigma_n^2$ calibrated to ensure $(\epsilon,\delta)$-differential privacy. The server then aggregates these privatized gradients and performs the noisy projected update in \eqref{eq:stage2-update}.
\begin{equation}
  \bm Z_k^{(n)}
  = \mathcal P_{\mathcal T_r(\bm A_0^{(n)})}\left(\bm G_k(\bm A_0^{(n)}) + \bm W_k^{(n)}\right),
  \qquad
  [\bm W_k^{(n)}]_{ij} \overset{i.i.d.}{\sim} \mathcal{N}(0,(\sigma_k^{(n)})^2),
  \label{eq:stage2-priv-grad}
\end{equation}
where the noise variance $(\sigma_k^{(n)})^2$ is calibrated according to the desired privacy parameters $(\epsilon,\delta)$ and the sensitivity $\xi_k^{(n)}$ of the local gradient. The rank $r$ of the common matrix $\bm A_0$ is a key structural parameter, and its selection is discussed in Section \ref{sec:rank-selection}. 
The server collects the privatized gradients $\bm Z_k^{(n)}$ from all clients, and performs a weighted aggregation using sample-size weights $T_k/T$, where $T = \sum_{k=1}^K T_k$ is the total sample size. A gradient descent step is then taken, followed by a projection onto the rank-$r$ manifold via singular value decomposition:
\begin{equation}
  \bm A_0^{(n+1)}
  = \mathrm{SVD}_{r}\left(\bm A_0^{(n)} - \rho \sum_{k =1}^K \frac{T_k}{T}\bm Z_k^{(n)}\right),
  \label{eq:stage2-update}
\end{equation}
with step size $\rho>0$. The operator $\mathrm{SVD}_{r}(\bm A)$ guarantees that the updated iterate remains low-rank, leading to the best rank-$r$ approximation of $\bm A$ and enabling efficient Riemannian-style updates. The weight $T_k/T$, proportional to the sample size of client $k$, ensures that clients with more data contribute proportionally more to the global update. 
We stop updating $\bm A_0$ after $N_g$ iterations, and set $\widehat{\bm A}_0 = \bm A_0^{(N_g)}$ as the estimated common matrix of $\bm A_0$. The complete description of this stage is detailed in Algorithm~\ref{algorithmTL}.

%{\color{blue}Recall that $\bm A_k=\bm A_0+\bm\Delta_k$, where $\bm\Delta_k$ captures client-specific heterogeneity and is assumed to be sparse. Stage~I focus on learning a shared low-rank representation $\bm A_0$ by treating $\{\bm\Delta_k\}_{k=1}^K$ as nuisance deviations: although the client-specific gradients are computed from data generated under $\bm A_k$, their aggregation provides an informative descent direction for the common component when the deviations are not too large in aggregate. The impact of $\bm\Delta_k$ is not ignored; rather, it is explicitly quantified in our theory through a heterogeneity term (see Theorem~\ref{thm:common-A0-error-bounds}).}

Given the estimated common low-rank matrix $\widehat{\bm A}_0$ from Stage I, Stage~II aims to recover client-specific deviations $\bm\Delta_k$ and the corresponding personalized coefficient matrices $\bm A_k$.
For each client $k\in[K]$, we define the personalized estimator of $\bbm\Delta_k$ as the minimizer of a penalized local empirical loss as follows:
\begin{equation} \label{eq:stage2-opt}
  \widehat{\bbm\Delta}_k^{\text{opt}}
  \in \argmin_{\bm\Delta_k \in \mathbb R^{d \times pd}}
  \left\{\frac{1}{T_k} \sum_{t=1}^{T_k}\left\|\bbm{y}_{k,t} - (\widehat{\bm A}_0 + \bbm \Delta_k)\bbm{x}_{k,t}\right\|_2^2
  + \varpi_k \|\bbm\Delta_k\|_1\right\}, \quad k \in [K],
\end{equation}
where $\varpi_k>0$ is a client-specific regularization parameter. The $\ell_1$-norm penalty promotes entrywise sparsity in $\bm\Delta_k$, aligning with the assumption that each client's deviation from the common structure is weakly sparse. Note that the optimization problem in \eqref{eq:stage2-opt} is convex and amenable to efficient first-order methods. We adopt a FISTA-based algorithm to solve it and denote the algorithmic solution by $\widehat{\bbm \Delta}_k$; see Algorithm~\ref{algorithmTL-stage2} for details. 
Then the personalized estimator of matrix $\bm A_k$ is given by 
\begin{align*}
  \widehat{\bm A}_k = \widehat{\bm A}_0 + \widehat{\bbm\Delta}_k, \quad k\in[K].
\end{align*}

\section{Implementary Issues}\label{sec:implement}

\subsection{Algorithm}\label{sec:algorithm}

This section presents two algorithms for the two-stage privacy-preserving federated method. 
Practical considerations, including initialization strategies and the selection of rank $r$ and regularization parameters, are discussed in Sections~\ref{sec:rank-selection} and~\ref{sec:tunning_param}.

Algorithm~\ref{algorithmTL} performs differentially private representation learning to estimate the shared low-rank matrix $\bm A_0$ by aggregating noisy projected gradients from all clients.

\begin{breakablealgorithm}
\caption{Differentially Private Representation Learning at Stage I}\label{algorithmTL}
\begin{algorithmic}[1]
  \State \textbf{Input:} Local data $\{\bbm y_{k,t}\}_{t=1}^{T_k}$ for $k\in[K]$, number of global iterations $N_g$, step size $\rho$, rank $r$, standard deviation of the Gaussian noise $\{\sigma_k^{(n)}\}$, and initial global parameter $\bm A_0^{(0)}$.
  \State \textbf{Preprocessing:} For each client $k\in[K]$, construct $\bbm x_{k,t}$
  and compute $T=\sum_{k=1}^K T_k$.
  \For{$n=0,\ldots,N_g-1$}
    \For{each client $k\in[K]$ in parallel}
      \State Compute the local gradient $\bm G_k(\bm A_0^{(n)})$.
      \State Form the privatized message $\bm Z_k^{(n)}$ according to \eqref{eq:stage2-priv-grad} and send $\bm Z_k^{(n)}$ to the server.
    \EndFor
    \State \textbf{Server update:} Update $\bm A_0^{(n+1)}$ according to \eqref{eq:stage2-update}.
  \EndFor
  \State Set $\widehat{\bm A}_0=\bm A_0^{(N_g)}$.
  \State \textbf{Output:} $\widehat{\bm A}_0$.
\end{algorithmic}
\end{breakablealgorithm}

Given the estimated common structure $\widehat{\bm A}_0$ from Stage I, Algorithm~\ref{algorithmTL-stage2} refines the estimate locally for each client via a FISTA-based algorithm, yielding sparse deviations $\widehat{\bbm\Delta}_k$ and the final client-specific matrices $\widehat{\bm A}_k$. For each client $k\in[K]$, define the empirical loss function conditioned on $\widehat{\bm A}_0$ as $\mathcal L_k(\bm\Delta_k) =T_k^{-1}\sum_{t=1}^{T_k}\|\bbm y_{k,t}-(\widehat{\bm A}_0+\bm\Delta_k)\bbm x_{k,t}\|_2^2$, and let $\nabla\mathcal L_k(\bm\Delta_k)$ denote its gradient with respect to $\bm\Delta_k$.
Following the standard FISTA algorithm \citep{beck2009fast}, each iteration of the optimization proceeds in two steps. First, we perform a proximal gradient update. Starting from the extrapolated point $\widetilde{\bbm\Delta}_k^{(n)}$, we take a gradient step with respect to $\mathcal L_k(\bm\Delta_k)$ and apply the soft-thresholding operator to enforce entrywise sparsity:
\begin{equation}
  \bm\Delta_k^{(n+1)} = \mathcal S_{\eta\varpi_k}\left(\widetilde{\bbm\Delta}_k^{(n)}-\eta\,\nabla\mathcal L_k(\widetilde{\bbm\Delta}_k^{(n)})\right),
  \label{eq:stage2-prox-update}
\end{equation}
where $[\mathcal S_{\tau}(\mathbf Z)]_{ij} = \operatorname{sign}(Z_{ij})\max(|Z_{ij}|-\tau,0)$. Second, we update the momentum parameter $q_{n+1}$ and the extrapolated iterate $\widetilde{\bbm\Delta}_k^{(n+1)}$ according to the standard FISTA extrapolation rule:
\begin{equation}
  q_{n+1}=\frac{1+\sqrt{1+4q_n^2}}{2}
  \quad\text{and}\quad
  \widetilde{\bbm\Delta}_k^{(n+1)} = \bm\Delta_k^{(n+1)} + \frac{q_n-1}{q_{n+1}}\left(\bm\Delta_k^{(n+1)}-\bm\Delta_k^{(n)}\right).
  \label{eq:stage2-extrapolation}
\end{equation}
These extrapolation steps accelerate convergence while maintaining the simplicity of first-order methods; see \citet{beck2009fast} for further discussions on the momentum parameter and extrapolated iterate in FISTA. The full procedure is detailed in Algorithm~\ref{algorithmTL-stage2}.

\begin{breakablealgorithm}
\caption{Personalized Refinement at Stage II}\label{algorithmTL-stage2}
	\begin{algorithmic}[1]
		\State \textbf{Input:} Global coefficient matrix $\widehat{\bm A}_0$ from Stage I, local data $\{\bbm y_{k,t}\}_{t=1}^{T_k}$ for each client $k \in [K]$, number of local iterations $N_l$, step size $\eta$, and regularization parameter $\varpi_k$.
		\For{each client $k \in [K]$ in parallel}
		  \State \textbf{Preprocessing:} For client $k$, construct $\bbm x_{k,t}$.
		  \State Initialize the local deviation by $\bbm\Delta_k^{(0)} = \bm 0$, $q_0 = 1$, and set $\widetilde{\bbm\Delta}_k^{(0)} = \bbm\Delta_k^{(0)}$.
		  \For{$n = 0,\ldots, N_l-1$}
		    \State Update $\bm\Delta_k^{(n+1)}$ according to \eqref{eq:stage2-prox-update}.
		    \State Update $q_{n+1}$ and $\widetilde{\bbm\Delta}_k^{(n+1)}$ according to \eqref{eq:stage2-extrapolation}.
		  \EndFor
      \State Set the personalized estimator for client $k$ as $\widehat{\bm A}_k = \widehat{\bm A}_0 + \widehat{\bbm\Delta}_k$ with $\widehat{\bbm\Delta}_k = \bm\Delta_k^{(N_l)}$.
		\EndFor
		\State \textbf{Output:} Personalized coefficient matrices $\{\widehat{\bm A}_k\}_{k \in [K]}$.
	\end{algorithmic}
\end{breakablealgorithm}

\subsection{Rank Selection}\label{sec:rank-selection}

The estimation of the common low-rank matrix $\bm A_0$ in Stage I depends critically on the choice of its rank $r$. We estimate $r$ using a two-step procedure based on the ridge-type ratio criterion \citep{xia2015consistently}. First, for each client $k$, we obtain a candidate rank $\widehat{r}_k$ by minimizing the ratio of successive singular values of the local estimator $\widetilde{\bm A}_{0,k}$:
\begin{align}\label{eq:Ridge-type estimator}
  \widehat{r}_k = \argmin_{1 \leq r \leq \bar{r}-1} \frac{\widetilde{\sigma}_{k,r+1} + c(d,T_k)}{\widetilde{\sigma}_{k,r} + c(d,T_k)},
\end{align}
where $\widetilde{\sigma}_{k,r}$ denotes the $r$-th singular value of $\widetilde{\bm A}_{0,k}$ defined in \eqref{eq:local-estimator}, and the term $c(d,T_k)$ is a ridge-type penalty that stabilizes the ratio.
Based on the consistency conditions for ridge-type ratio estimators established in Theorem~\ref{thm:rank-selection-consistency}, we recommend setting $c(d,T_k)=10^{-2}\sqrt{pd/T_k}$. 
Next, to achieve a robust global estimate, we aggregate these client-specific ranks by taking their mode 
$$\widehat{r} = \operatorname{mode}(\{\widehat{r}_1, \dots, \widehat{r}_K\}).$$
The mode is chosen for its robustness against potentially noisy or uninformative individual estimates. This estimated rank $\widehat{r}$ is then employed in the federated learning of Stage I.

\subsection{Tuning Parameter Selection and Initialization}\label{sec:tunning_param}

All tuning parameters in the proposed method, including $\lambda_k$, $\omega_k$ and $\zeta$ in \eqref{eq:local-estimator} for the single-client estimator and $\varpi_k$ in \eqref{eq:stage2-opt} for Stage~II of the federated learning procedure, are selected via cross-validation based on a rolling forecasting procedure. Specifically, for each client, we choose the optimal parameters by minimizing the one-step-ahead prediction error over a prespecified grid of candidate values.

For Stage~I in Algorithm~\ref{algorithmTL}, the initial value $\bm A_0^{(0)}$ is taken as the single-client estimator $\widetilde{\bm A}_{0,k^{'}}$ from the client with the largest sample size $T_{k^{'}}$, which satisfies the condition on $\bm A_0^{(0)}$ in Theorem~\ref{thm:common-A0-error-bounds} when $T_{k^{'}}$ is large enough. This choice is motivated by the error bound in Proposition~\ref{thm:local-model-error-upper-bound}, which suggests that clients with more data provide a more reliable initial estimate of the common structure $\bm A_0$.
The number of global iterations is set to $N_g=\lceil 10\log(T)\rceil$ with $T = \sum_{k=1}^K T_k$, in accordance with the logarithmic iteration requirement in Theorem~\ref{thm:common-A0-error-bounds}. 
The step size $\rho$ is selected by cross-validation, which naturally satisfies the condition on $\rho$ in Theorem~\ref{thm:common-A0-error-bounds}. 
For differential privacy, the standard deviation of the Gaussian noise is calibrated as $\sigma_k^{(n)}=\kappa\sqrt{2\log(1.25/\delta)}/\varepsilon$, where $\kappa$ is chosen adaptively based on the scale of the data, following the Gaussian mechanism (see, e.g., \citet{dwork2014algorithmic}). 

For Stage~II in Algorithm~\ref{algorithmTL-stage2}, the local deviation for each client is initialized as $\bm\Delta_k^{(0)}=\bm 0$, which aligns with the conditions required in Theorem~\ref{thm:personalized-error-bounds}. 
The number of local iterations is set to $N_l=20$, as we observe empirically that the algorithm typically converges within $15$ iterations. This choice balances computational efficiency with solution accuracy. 
The step size is set to $\eta=(2T_k^{-1}\|\sum_{t=1}^{T_k}\bbm x_{k,t}\bbm x_{k,t}^\top\|_{\op})^{-1}$ as required in Theorem~\ref{thm:personalized-error-bounds}, which corresponds to the inverse of twice the Lipschitz constant and ensures convergence of the FISTA procedure.

\section{Theoretical Analysis}\label{sec:theoretical_analysis}

We begin by formalizing the key structural assumptions underlying the decomposition in \eqref{eq:decomposition}. Throughout this section, we use the superscript $*$ to denote the true underlying parameters. 
The shared matrix $\bm A_0^*$ is assumed to be exactly low rank with $\mathrm{rank}(\bm A_0^*) = r$ for some integer $r>0$. For each client $k \in [K]$, the deviation matrix $\bbm\Delta_k^*$ is assumed to be entrywise weakly sparse and lie in an $\ell_q$-ball for some $q \in [0,1)$. More precisely, there exists a constant $s_q>0$ such that for all $k\in[K]$,
\[
\bbm\Delta_k^*\in \mathbb B_q(s_q) = \Bigl\{\bbm\Delta\in\mathbb R^{d\times pd}: \sum_{i=1}^{d}\sum_{j=1}^{pd}\bigl|\Delta_{ij}\bigr|^{q}\leq s_q\Bigr\}.
\]
This $\ell_q$-ball condition provides a convenient unified control over the magnitude of the deviations. In particular, it implies a uniform bound on their Frobenius norms: for any $q\in[0,1)$ and $\bbm\Delta_k^*\in\mathbb B_q(s_q)$, we have $\|\bm\Delta_k^*\|_{\F} \leq \phi(s_q)$ for all $k\in[K]$, where $\phi(s_q) = \max_{1\leq k\leq K}\|\bbm\Delta_k^*\|_{\infty}^{1-\frac q2}s_q^{1/2}$ is a uniform upper bound on the Frobenius norms $\|\bm\Delta_k^*\|_{\F}$.

To address the inherent non-identifiability between the low-rank structure $\bm A_0^*$ and the sparse deviations $\{\bm\Delta_k^*\}$, we impose a weak identifiability condition on $\{\bm\Delta_k^*\}$. Specifically, we assume that $\max_{k\in[K]}\|\bbm\Delta_k^*\|_{\op}\leq \zeta$, for some constants $\zeta>0$. This condition restricts the size of the client-specific deviations in directions that may overlap with the low-rank common component, in a spirit similar to the weak identifiability role of spikiness-type conditions in low-rank-plus-sparse decompositions; see, e.g., \citet{agarwal2012noisy}.

\subsection{Single Client Error Bounds}\label{sec:single-client-error-bounds}

In this section, we establish theoretical error bounds for the single-client estimator $\widetilde{\bm A}_k$ defined in \eqref{eq:local-estimator} for each client. These results serve as baselines for the proposed federated learning. We begin by introducing two key assumptions on the data-generating process.

\begin{assumption}[Stationarity]
	\label{assump:stationarity}
	For each client $k \in [K]$, the matrix-valued autoregressive polynomial $\mathcal A_k(z) = \bm I_d - \sum_{j=1}^p \bm A_{k,j}^* z^j$ satisfies $\det \mathcal A_k(z) \neq 0$ for all $|z| \leq 1$.
\end{assumption}

\begin{assumption}[Sub-Gaussian Innovations]\label{assump:subg}
	For each client $k\in[K]$, let $\bbm{\epsilon}_{k,t}=\bm{\Sigma}_{\epsilon,k}^{1/2}\bbm{\xi}_{k,t}$, where $\{\bbm{\xi}_{k,t}\}$ are $i.i.d.$ across both $t$ and $k$, with $\mathbb{E}(\bbm{\xi}_{k,t})=\bm{0}$ and $\textup{var}(\bbm{\xi}_{k,t})=\bm{I}_d$, and $\bm{\Sigma}_{\epsilon,k}=\textup{var}(\bbm{\epsilon}_{k,t})$ is a positive definite matrix. Moreover, the entries $(\xi_{k,t,i})_{1\leq i\leq d}$ of $\bbm{\xi}_{k,t}$ are mutually independent and $\sigma^2$-sub-Gaussian, i.e., $\mathbb{E}(e^{\mu\xi_{k,t,i}})\leq e^{\mu^2\sigma^2/2}$, for any $\mu\in\mathbb{R}$, $k\in[K]$, $t\in[T_k]$, and $i\in[d]$.
\end{assumption}

Assumption~\ref{assump:stationarity} ensures that the VAR process for each client is stable and stationary, a standard condition in time series analysis; see, e.g., \citet{lutkepohl2005new} for details. Assumption~\ref{assump:subg} imposes a standard sub-Gaussian tail condition on the innovation process, which is commonly adopted in high-dimensional time series analysis to facilitate concentration inequalities. 
The independence of innovations across clients simplifies the analysis, while this condition can be relaxed to allow for $\alpha$-mixing dependence across clients using techniques similar to those in \citet{merlevede2011bernstein}.

Under Assumptions~\ref{assump:stationarity} and \ref{assump:subg}, we introduce several constants that characterize the VAR process for each client $k\in[K]$. Define the spectral quantities $\mu_{\min}(\mathcal A_k) = \min_{|z|=1} \lambda_{\min}(\mathcal A_k^\dagger(z)\mathcal A_k(z))$ and $\mu_{\max}(\mathcal A_k) = \max_{|z|=1} \lambda_{\max}(\mathcal A_k^\dagger(z)\mathcal A_k(z))$, where $\dagger$ denotes the conjugate transpose. It can be shown that $\mu_{\min}(\mathcal A_k)>0$ under Assumption~\ref{assump:stationarity}; see, e.g., Proposition~2.2 in \cite{basu2015regularized}. This positivity ensures that the process has a well-defined spectral density.
Based on these quantities and the innovation covariance $\bm \Sigma_{\epsilon,k}$, we define the following positive constants
\[
C_{\epsilon,\mathcal A}^{(k)} = \frac{\lambda_{\max}(\bm \Sigma_{\epsilon,k})}{\mu_{\min}(\mathcal A_k)}, \quad
c_{\epsilon,\mathcal A}^{(k)} = \frac{\lambda_{\min}(\bm \Sigma_{\epsilon,k})}{\mu_{\max}(\mathcal A_k)}, \quad
C_{\mathrm{RSC}}^{(k)} = \frac{1}{2}c_{\epsilon,\mathcal A}^{(k)}, \quad\text{and}\quad
\kappa_{\epsilon,\mathcal A}^{(k)} = \frac{C_{\epsilon,\mathcal A}^{(k)}}{c_{\epsilon,\mathcal A}^{(k)}}.
\]
The constant $C_{\mathrm{RSC}}^{(k)}$ will play a key role in establishing the restricted strong convexity property of the loss function. Importantly, we allow $C_{\epsilon,\mathcal A}^{(k)}$ and $C_{\mathrm{RSC}}^{(k)}$ to grow with the dimension $d$.

\begin{proposition}[Single-client Learning]\label{thm:local-model-error-upper-bound}
	Suppose that Assumptions~\ref{assump:stationarity} and \ref{assump:subg} hold. If the local sample size satisfies $T_k \gtrsim (C_{\epsilon,\mathcal A}^{(k)}/C_{\mathrm{RSC}}^{(k)}\vee 1)^2p^2d \vee p\log(pd)$, and the tuning parameters are chosen as
	\[
	\lambda_k \asymp \sigma^2 C_{\epsilon,\mathcal A}^{(k)}\sqrt{\frac{pd}{T_k}} \vee C_{\mathrm{RSC}}^{(k)}\zeta
	\ \ \text{and} \ \
	\omega_k \asymp \sigma^2 C_{\epsilon,\mathcal A}^{(k)}\sqrt{\frac{\log(pd)}{T_k}},
	\quad \forall k\in[K],
	\]
	then with probability at least $1 - C\exp(-Cpd) - C\exp(-C\log(pd))$,
	the following bound holds simultaneously for all $k\in[K]$:
	\[
	\bigl\|\widetilde{\bm A}_{0,k}-\bm A_0^*\bigr\|_{\F} + \bigl\|\widetilde{\bbm\Delta}_k-\bbm\Delta_k^*\bigr\|_{\F} \lesssim \underbrace{\sqrt{r}\frac{\lambda_k}{C_{\mathrm{RSC}}^{(k)}} + \sqrt{s_q}\left(\frac{\omega_k}{C_{\mathrm{RSC}}^{(k)}}\right)^{1-q/2}}_{\mathsf{Error}_{\mathrm{loc}}^{(k)}}.
	\]
\end{proposition}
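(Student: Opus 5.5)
This is a noisy low-rank-plus-sparse decomposition for a dependent design, so I would follow the Agarwal--Negahban--Wainwright (2012) template and supply the VAR-specific concentration results from Basu and Michailidis (2015). Fix a client $k$ and drop the index. Write $\widehat{\bm\Gamma}=T^{-1}\sum_t\bbm x_t\bbm x_t^\top$ and the score $\bm E=T^{-1}\sum_t\bbm\epsilon_t\bbm x_t^\top$. Set the error matrices $\bm\Theta=\widetilde{\bm A}_{0}-\bm A_0^*$ and $\bm\Psi=\widetilde{\bbm\Delta}-\bbm\Delta^*$.

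\emph{Basic inequality.} Optimality of $(\widetilde{\bm A}_0,\widetilde{\bbm\Delta})$ against the feasible truth $(\bm A_0^*,\bbm\Delta^*)$, which satisfies $\|\bbm\Delta^*\|_{\op}\le\zeta$, gives
\[
\tr\bigl((\bm\Theta+\bm\Psi)\widehat{\bm\Gamma}(\bm\Theta+\bm\Psi)^\top\bigr)\le 2\langle \bm E,\bm\Theta+\bm\Psi\rangle+\lambda(\|\bm A_0^*\|_*-\|\widetilde{\bm A}_0\|_*)+\omega(\|\bbm\Delta^*\|_1-\|\widetilde{\bbm\Delta}\|_1).
\]
\emph{Deviation bounds.} I need $\|\bm E\|_{\op}\lesssim\sigma^2C_{\epsilon,\mathcal A}\sqrt{pd/T}$ with probability $1-Ce^{-Cpd}$, by an $\epsilon$-net argument combined with Hanson--Wright for VAR processes. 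I also need $\|\bm E\|_\infty\lesssim\sigma^2C_{\epsilon,\mathcal A}\sqrt{\log(pd)/T}$ with probability $1-Ce^{-C\log(pd)}$, by a union bound over entries. The choices of $\lambda$ and $\omega$ then dominate twice these norms, so $2|\langle\bm E,\bm\Theta\rangle|\le\frac{\lambda}{2}\|\bm\Theta\|_*$ and $2|\langle\bm E,\bm\Psi\rangle|\le\frac{\omega}{2}\|\bm\Psi\|_1$.

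\emph{Cone condition.} Decompose $\bm\Theta$ via the rank-$r$ tangent-space projection of $\bm A_0^*$. Threshold $\bbm\Delta^*$ at level $\tau\asymp\omega/C_{\mathrm{RSC}}$, keeping a support $S$ with $|S|\le s_q\tau^{-q}$ and tail $\|\bbm\Delta^*_{S^c}\|_1\le s_q\tau^{1-q}$. Decomposability then yields
\[
\lambda\|\bm\Theta_{\perp}\|_*+\omega\|\bm\Psi_{S^c}\|_1\lesssim\lambda\sqrt{2r}\|\bm\Theta\|_\F+\omega\sqrt{|S|}\|\bm\Psi\|_\F+\omega\|\bbm\Delta^*_{S^c}\|_1,
\]
so $\|\bm\Theta\|_*+\frac{\omega}{\lambda}\|\bm\Psi\|_1$ is controlled by $\sqrt r\|\bm\Theta\|_\F+\frac{\omega}{\lambda}(\sqrt{|S|}\|\bm\Psi\|_\F+s_q\tau^{1-q})$.

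\emph{Restricted strong convexity.} Show that $\tr(\bm M\widehat{\bm\Gamma}\bm M^\top)\ge 2C_{\mathrm{RSC}}\|\bm M\|_\F^2-\tau_T\|\bm M\|_*^2$ (or the analogous bound with an $\ell_1$ tolerance). This follows from $\lambda_{\min}(\bm\Gamma)\ge c_{\epsilon,\mathcal A}=2C_{\mathrm{RSC}}$, the spectral-density bound, and the Basu--Michailidis concentration of $\widehat{\bm\Gamma}$ around $\bm\Gamma$ on sparse or low-rank sets with deviation proportional to $C_{\epsilon,\mathcal A}$. The sample-size requirement $T\gtrsim(C_{\epsilon,\mathcal A}/C_{\mathrm{RSC}}\vee1)^2p^2d\vee p\log(pd)$ is exactly what makes the tolerance negligible on the cone.

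\emph{Cross-term (main obstacle).} RSC gives curvature in $\bm\Theta+\bm\Psi$, but we need control of $\|\bm\Theta\|_\F^2+\|\bm\Psi\|_\F^2$, which requires bounding $|\langle\bm\Theta,\bm\Psi\rangle|$. Both $\|\widetilde{\bbm\Delta}\|_{\op}$ and $\|\bbm\Delta^*\|_{\op}$ are at most $\zeta$, so $\|\bm\Psi\|_{\op}\le2\zeta$. Hence
\[
|\langle\bm\Theta,\bm\Psi\rangle|\le 2\zeta\|\bm\Theta\|_* .
\]
This is why $\lambda\gtrsim C_{\mathrm{RSC}}\zeta$ is imposed: the term $C_{\mathrm{RSC}}\zeta\|\bm\Theta\|_*$ is absorbed into the nuclear-norm penalty, at the cost of inflating $\lambda$. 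If the cross term through $\widehat{\bm\Gamma}$ needs extra care, I would first pass to $\bm\Gamma$ using the operator-norm concentration and then apply the same duality.

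\emph{Conclusion.} Combining the pieces gives
\[
C_{\mathrm{RSC}}(\|\bm\Theta\|_\F^2+\|\bm\Psi\|_\F^2)\lesssim\lambda\sqrt r\|\bm\Theta\|_\F+\omega\sqrt{s_q}\tau^{-q/2}\|\bm\Psi\|_\F+\omega s_q\tau^{1-q}.
\]
Solving this quadratic with $\tau=\omega/C_{\mathrm{RSC}}$ gives $\|\bm\Theta\|_\F+\|\bm\Psi\|_\F\lesssim\sqrt r\lambda/C_{\mathrm{RSC}}+\sqrt{s_q}(\omega/C_{\mathrm{RSC}})^{1-q/2}$. The bound holds simultaneously over $k\in[K]$ on the intersection of the per-client events, since the constants absorb the union.
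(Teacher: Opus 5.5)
Your proposal follows essentially the same route as the paper's proof: the basic inequality against the feasible truth, and operator- and $\ell_\infty$-norm deviation bounds dominated by $\lambda_k/4$ and $\omega_k/4$. It then uses the joint cone condition from decomposability over the rank-$r$ tangent space and the thresholded support of $\bbm\Delta_k^*$, absorbs the cross term $|\langle\bm\Theta,\bm\Psi\rangle|\le 2\zeta\|\bm\Theta\|_*$ through $\lambda_k\gtrsim C_{\mathrm{RSC}}^{(k)}\zeta$, and solves the resulting quadratic with threshold $\asymp\omega_k/C_{\mathrm{RSC}}^{(k)}$. The one place the paper is simpler is the RSC step: under $T_k\gtrsim (C_{\epsilon,\mathcal A}^{(k)}/C_{\mathrm{RSC}}^{(k)})^2p^2d$, operator-norm concentration of $\widehat{\bm\Gamma}$ and $\lambda_{\min}(\bbm\Sigma_{x,k})\ge 2C_{\mathrm{RSC}}^{(k)}$ give $\lambda_{\min}(\widehat{\bm\Gamma})\ge C_{\mathrm{RSC}}^{(k)}$ for all matrices, so there is no tolerance term. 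This spares you from controlling $\|\bm\Theta+\bm\Psi\|_*^2$ on the cone, which would be awkward for the sparse part $\bm\Psi$.
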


Proposition~\ref{thm:local-model-error-upper-bound} provides a non-asymptotic error bound for the local estimator $(\widetilde{\bm A}_{0,k}, \widetilde{\bbm \Delta}_k)$. Although the displayed bound contains two grouped terms, the choice $\lambda_k\asymp \sigma^2 C_{\epsilon,\mathcal A}^{(k)}\sqrt{pd/T_k}\vee C_{\mathrm{RSC}}^{(k)}\zeta$ shows that it can be interpreted as consisting of three components. The first component, $\sqrt r\,\sigma^2 C_{\epsilon,\mathcal A}^{(k)}C_{\mathrm{RSC}}^{(k)-1}\sqrt{pd/T_k}$, is the usual stochastic error for recovering the low-rank component under nuclear-norm regularization. The second component, $\sqrt r\,\zeta$, comes from the weak-identifiability control imposed by the operator-norm bound $\|\bbm\Delta_k^*\|_{\op}\leq \zeta$; it accounts for the residual ambiguity between the low-rank component and the client-specific deviation. The third component, $\sqrt{s_q}(\omega_k/C_{\mathrm{RSC}}^{(k)})^{1-q/2}$, corresponds to the estimation error for the entrywise weakly sparse deviation $\bbm\Delta_k^*$; in the special case of strict sparsity with $q=0$, it reduces to the standard upper bound for $\ell_1$-regularized estimation. Overall, the bound depends critically on the local sample size $T_k$. When $T_k$ is small, the bound becomes large, indicating that reliable estimation of both $\bm A_0^*$ and $\bbm\Delta_k^*$ is difficult using only local data. This limitation is also corroborated by the simulation results in Section~\ref{sec:simu-single-client-est}, motivating the need for federated information sharing.

Since the optimization problem in \eqref{eq:local-estimator} is convex, standard convergence results for the ADMM guarantee that the optimization error can be made negligible relative to the statistical error by running a sufficient number of iterations. For brevity, we therefore omit a detailed analysis of the optimization error here.

\subsection{Federated Error Bounds}\label{sec:fed-error-bounds}

We begin this section by formalizing privacy protection in our federated time-series setting. In many practical applications, only certain coordinates of a multivariate time series contain sensitive information, while others may be non-sensitive. To accommodate this, we introduce a sensitive index set $\mathcal I \subset [d]$, indicating that only the coordinates of $\bbm y_{k,t}$ indexed by $\mathcal I$ require protection. 
In our parametric VAR models, variations in the series $\bbm y_{k,t}$ are driven by the underlying innovations $\bbm\epsilon_{k,t}$. Consequently, we define privacy at the innovation level, which provides a natural and tractable framework for theoretical analysis.

Before stating the formal definition of differential privacy, we first specify the notion of neighboring datasets under this selective protection requirement. 
For each client $k\in[K]$, let $\mathcal D_k=\{\{\bbm y_{k,t}\}_{t=1-p}^{0},\{\bbm\epsilon_{k,t}\}_{t=1}^{T_k}\}$ denote its local dataset, comprising both the initial observations and the innovation sequence. We say that two datasets $\mathcal D_k$ and $\mathcal D_k'$ are neighboring with respect to the sensitive index set $\mathcal I$, denoted $\mathcal D_k \overset{\mathcal I}{\sim} \mathcal D_k'$, if they differ only in the sensitive coordinates of the innovation at a single time point.
More precisely, this requires the existence of some $t_0\in[T_k]$ such that: (i) $\bbm\epsilon_{k,t_0,\mathcal I}\neq \bbm\epsilon'_{k,t_0,\mathcal I}$; (ii) $\bbm\epsilon_{k,t,\mathcal I}=\bbm\epsilon'_{k,t,\mathcal I}$ for all $t\neq t_0$; and (iii) $\bbm\epsilon_{k,t,\mathcal I^{c}}=\bbm\epsilon'_{k,t,\mathcal I^{c}}$ for all $t\in[T_k]$. Given this concept, we introduce the following definitions.

\begin{definition}[Selective federated $(\varepsilon,\delta)$-DP]\label{def:federated-selective-dp}
  Fix a sensitive index set $\mathcal I\subset[d]$ and consider $K$ clients, where client $k$ holds a local dataset $\mathcal D_k$.
  Let $\mathcal O_k$ denote the output space of client $k$.
  A collection of randomized local mechanisms $\{M_k:\mathcal D_k \to\mathcal O_k\}_{k=1}^K$ is said to satisfy selective federated $(\varepsilon,\delta)$-differential privacy on $\mathcal I$ if, for every client $k\in[K]$, all neighboring datasets $\mathcal D_k\overset{\mathcal I}{\sim}\mathcal D_k'$, and all measurable sets $S_k\subseteq\mathcal O_k$,
  \[
    \mathbb P\bigl(M_k(\mathcal D_k)\in S_k\bigr)
    \leq
    \mathrm e^{\varepsilon}\mathbb P\bigl(M_k(\mathcal D_k')\in S_k\bigr)+\delta.
  \]
\end{definition}

The parameters $(\varepsilon,\delta)$ quantify the strength of the privacy guarantee (see, e.g., \citet{dwork2006calibrating,dwork2014algorithmic}).
The guarantee is conditional on any fixed pair of neighboring datasets $\mathcal D_k\overset{\mathcal I}{\sim}\mathcal D_k'$; hence the probability statement in Definition~\ref{def:federated-selective-dp} is independent of the sampling distribution of the data and only reflects the randomness added by the mechanism.
A smaller $\varepsilon$ corresponds to stronger privacy, as it forces the output distributions under neighboring datasets to be closer.
The parameter $\delta$ allows a small probability of failure of the pure DP guarantee, and setting $\delta=0$ recovers the standard $\varepsilon$-DP.
In general, smaller values of $(\varepsilon,\delta)$ provide stronger privacy protection for the local mechanism. %$M_k$.

\begin{definition}[Sensitivity]\label{def:grad-sensitivity}
Fix a sensitive index set $\mathcal I\subset[d]$. For client $k\in[K]$ and iteration $n$, let $\bm G_k(\bm A_0^{(n)};\mathcal D_k)$ denote the local gradient evaluated at $\bm A_0^{(n)}$ using dataset $\mathcal D_k$. 
We define the sensitivity $\xi_k^{(n)}$ as any deterministic upper bound of $\sup_{\mathcal D_k \overset{\mathcal I}{\sim} \mathcal D_k'}\|\bm G_k(\bm A_0^{(n)};\mathcal D_k) -\bm G_k(\bm A_0^{(n)};\mathcal D_k')\|_{\F}$.
\end{definition}

The sensitivity $\xi_k^{(n)}$ quantifies the maximum possible change in the gradient message transmitted from client $k$ to the server when the underlying dataset is replaced by a neighboring one. 
This quantity plays a crucial role in calibrating the Gaussian mechanism: a larger sensitivity $\xi_k^{(n)}$ requires injecting more noise into the transmitted gradient to achieve the privacy guarantee, which in turn increases the privacy-utility trade-off in Stage~I.

To state the federated error bound concisely, we first introduce the following global constants
\[
\Lambda_{\epsilon,\max} = \max_{k\in[K]}\lambda_{\max}(\bbm\Sigma_{\epsilon,k}),\quad
c_{\epsilon,\mathcal A}^{\min} = \min_{k\in[K]}c_{\epsilon,\mathcal A}^{(k)},\quad
C_{\epsilon,\mathcal A}^{\max} = \max_{k\in[K]}C_{\epsilon,\mathcal A}^{(k)},\quad \text{and} \quad
\kappa_{\epsilon,\mathcal A} = \frac{C_{\epsilon,\mathcal A}^{\max}}{c_{\epsilon,\mathcal A}^{\min}}.
\]

\begin{theorem}[Private Representation Learning]\label{thm:common-A0-error-bounds}
Assume that Assumptions~\ref{assump:stationarity} and \ref{assump:subg} hold. Suppose Algorithm~\ref{algorithmTL} satisfies: the initialization $\|\bm A_0^{(0)}-\bm A_0^*\|_{\F}\leq c_{\epsilon,\mathcal A}^{\min}(3C_{\epsilon,\mathcal A}^{\max}+c_{\epsilon,\mathcal A}^{\min})/20(6C_{\epsilon,\mathcal A}^{\max}+c_{\epsilon,\mathcal A}^{\min})^2\sigma_r(\bm A_0^*)$, the iteration number $N_g\asymp \allowbreak \log(T)$, the step size $\rho = 2/(c_{\epsilon,\mathcal A}^{\min}+3C_{\epsilon,\mathcal A}^{\max})$, the non-identifiability-induced deviation satisfies $K C_{\epsilon,\mathcal A}^{\max}\zeta \leq c_0 \sigma_r(\bm A_0^*)$ for some small $c_0>0$, and the standard deviation of Gaussian noise $\sigma_k^{(n)} \asymp \xi_k^{(n)}\log T \allowbreak \log(1.25\log T/\delta)/\varepsilon$ with sensitivities $\xi_k^{(n)} \asymp \sqrt{C_{\epsilon,\mathcal A}^{\max}r}(\sigma_r(\bm A_0^*)+\zeta)\sigma^2\Lambda_{\epsilon,\max}p \allowbreak \sqrt{pd+\log T}\sqrt{|\mathcal I| +\log T}/T_k$ for $n\in[N_g]$. Then Algorithm~\ref{algorithmTL} satisfies selective federated privacy $(\varepsilon,\delta)$-DP on the sensitive index set $\mathcal I$ with probability at least $1-\exp(-C \log T)$. Moreover, if the total sample size satisfies $T\gtrsim \{(C_{\epsilon,\mathcal A}^{\max}\sigma^2)^2 \vee \sigma^2(\sigma_r(\bm A_0^*)+\zeta)p\Lambda_{\epsilon,\max}/(\varepsilon\sqrt{C_{\epsilon,\mathcal A}^{\max}})\sqrt{K|\mathcal I|\log(1.25/\delta)}\}rpd$, then with probability at least $1 - \exp(-Cpd) - K\exp(-C\log T)$, the following error bound holds:
\begin{align*}
\|\widehat{\bm A}_0-\bm A_0^*\|_{\F}
\lesssim \kappa_{\epsilon,\mathcal A}\left(
\mathsf{Error}_{\mathrm{stat}}
+\mathsf{Error}_{\mathrm{h}}
+\mathsf{Error}_{\mathrm{DP}}\right),
\end{align*}
where $\mathsf{Error}_{\mathrm{stat}} = \sigma^2\sqrt{rpd/T}$, $\mathsf{Error}_{\mathrm{h}} = \sqrt{r}K\zeta$, and
\begin{align*}
\mathsf{Error}_{\mathrm{DP}} = \frac{\sigma^2r\bigl(\sigma_r(\bm A_0^*)+\zeta\bigr)p\Lambda_{\epsilon,\max}}{\sqrt{C_{\epsilon,\mathcal A}^{\max}}}\frac{\log T}{\varepsilon}\sqrt{\log\Bigl(\frac{1.25\log T}{\delta}\Bigr)} \frac{\sqrt{K(|\mathcal I|+\log T)}(pd+\log T)}{T}.
\end{align*}
\end{theorem}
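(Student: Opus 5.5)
The proof has two parts: a privacy guarantee, and a linear-convergence analysis of the noisy Riemannian gradient iteration that produces the final error bound.

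\textbf{Privacy.} First I will show that the prescribed $\xi_k^{(n)}$ upper bounds the gradient sensitivity in Definition~\ref{def:grad-sensitivity} with high probability. Changing $\bbm\epsilon_{k,t_0,\mathcal I}$ changes every later observation through the moving-average representation $\bbm y_{k,t}=\sum_{j\ge0}\bm\Psi_{k,j}\bbm\epsilon_{k,t-j}$, whose coefficients decay geometrically under Assumption~\ref{assump:stationarity}. The gradient difference $\bm G_k(\bm A;\mathcal D_k)-\bm G_k(\bm A;\mathcal D_k')$ then splits into terms of the form $T_k^{-1}\sum_t(\bm A\delta\bbm x_{k,t}-\delta\bbm y_{k,t})\bbm x_{k,t}^\top$ and cross terms. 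I will bound these on a high-probability event where $\max_t\|\bbm x_{k,t}\|_2\lesssim\sigma\sqrt{\Lambda_{\epsilon,\max}p(pd+\log T)}$ and $\max_t\|\bbm\epsilon_{k,t,\mathcal I}\|_2\lesssim\sigma\sqrt{|\mathcal I|+\log T}$, using sub-Gaussian maxima and a union bound over $t\le T$. I will also use $\|\bm A_0^{(n)}\|_\op\lesssim\sigma_r(\bm A_0^*)+\zeta$ along the trajectory, with the rank-$r$ structure contributing the $\sqrt r$ factor and the geometric sum contributing $\sqrt{C_{\epsilon,\mathcal A}^{\max}}$. Since the tangent-space projection is a contraction in Frobenius norm, it does not increase sensitivity. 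The Gaussian mechanism with $\sigma_k^{(n)}\asymp\xi_k^{(n)}\sqrt{\log(1.25/\delta')}/\varepsilon'$ is therefore $(\varepsilon',\delta')$-DP at each iteration. Composing over $N_g\asymp\log T$ rounds with $\varepsilon'=\varepsilon/N_g$ and $\delta'=\delta/N_g$ gives $(\varepsilon,\delta)$-DP; this is where the extra $\log T$ and $\log(1.25\log T/\delta)$ factors come from. Post-processing covers the server update.

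\textbf{Convergence.} Write the aggregated gradient as $\bm G(\bm A)=\sum_k (T_k/T)\bm G_k(\bm A)$. Its population minimizer differs from $\bm A_0^*$ through the bias $\sum_k (T_k/T)\bm\Gamma_k\bm\Delta_k^{*\top}$-type term. I will prove restricted strong convexity and smoothness of the pooled loss over rank-$2r$ or $3r$ directions, with constants $c_{\epsilon,\mathcal A}^{\min}/2$ and $3C_{\epsilon,\mathcal A}^{\max}/2$. To do so I will use the spectral-density bounds of \citet{basu2015regularized} for each client together with a covering argument on low-rank matrices; this step needs $T\gtrsim (C_{\epsilon,\mathcal A}^{\max}\sigma^2)^2 rpd$. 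I will then perform a one-step analysis. Projecting onto $\mathcal T_r(\bm A_0^{(n)})$ incurs a curvature error of order $\|\bm A_0^{(n)}-\bm A_0^*\|_\F^2/\sigma_r(\bm A_0^*)$, and $\mathrm{SVD}_r$ expands the error by at most a constant factor; the initialization radius is chosen so that this quadratic term is absorbed. With $\rho=2/(c^{\min}+3C^{\max})$ this yields a recursion
\[
\|\bm A_0^{(n+1)}-\bm A_0^*\|_\F\le(1-c/\kappa_{\epsilon,\mathcal A})\|\bm A_0^{(n)}-\bm A_0^*\|_\F+C\rho\Bigl(\|\mathcal P_{\mathcal T}(\bm E)\|_\F+\sqrt r K C^{\max}\zeta+\bigl\|\mathcal P_{\mathcal T}\textstyle\sum_k \tfrac{T_k}{T}\bm W_k^{(n)}\bigr\|_\F\Bigr).
\]
Here $\bm E$ is the score noise $T^{-1}\sum_{k,t}\bbm\epsilon_{k,t}\bbm x_{k,t}^\top$, whose projection onto a rank-$2r$ tangent space is $\lesssim\sigma^2 C^{\max}\sqrt{rpd/T}$ by a martingale/Hanson--Wright bound. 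The heterogeneity term is bounded by $\|\bm\Delta_k^*\|_\op\le\zeta$ using the rank-$r$ projection. The condition $KC^{\max}\zeta\le c_0\sigma_r$ keeps iterates inside the basin. For the privacy noise, the projected Gaussian has Frobenius norm $\lesssim\sqrt{rd p}\,(\sum_k (T_k/T)^2\sigma_k^{2})^{1/2}$. Substituting $\sigma_k$ with $\xi_k\propto1/T_k$ gives $\sqrt K/T$ times the constants, which reproduces $\mathsf{Error}_{\mathrm{DP}}$. The second sample-size condition ensures this term is below the basin radius. Unrolling the recursion for $N_g\asymp\log T$ steps makes the initial error negligible, and dividing by the contraction rate produces the $\kappa_{\epsilon,\mathcal A}$ prefactor. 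A final union bound over clients and iterations gives the stated probability.

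\textbf{Main obstacle.} I expect the sensitivity bound to be the hardest step. It must hold uniformly along the data-dependent iterates while correctly tracking how one innovation perturbation propagates through the VAR dynamics, and the induction must simultaneously keep $\|\bm A_0^{(n)}\|_\op$ bounded. I would try to handle this by conditioning on the good event and proving both claims jointly by induction on $n$.
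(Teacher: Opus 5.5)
Your overall architecture matches the paper's proof. That means impulse-response propagation of a single innovation change on a high-probability event, a per-round Gaussian mechanism at budget $(\varepsilon/N_g,\delta/N_g)$ with basic composition and post-processing, and a one-step bound split into tangent-space contraction, heterogeneity, pooled-score and aggregated-noise terms, closed by induction inside a basin and unrolled over $N_g\asymp\log T$ rounds. However, two steps fail as you state them.

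\textbf{Sensitivity.} This step rests on a false claim: $\|\bm A_0^{(n)}\|_{\op}\lesssim\sigma_r(\bm A_0^*)+\zeta$ cannot hold along the trajectory, since $\|\bm A_0^{(n)}\|_{\op}\geq\sigma_1(\bm A_0^*)-\|\bm A_0^{(n)}-\bm A_0^*\|_{\F}$. Suppose you bound $\bm A\,\delta\bbm x_{k,t}$, $\delta\bbm y_{k,t}$ and the residual $\bm A\bbm x_{k,t}-\bbm y_{k,t}$ separately. Then the factor you get is of order $\|\bm A_0^*\|_{\op}+1$ rather than $\sigma_r(\bm A_0^*)+\zeta$. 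So the prescribed $\xi_k^{(n)}$, and hence DP at the stated noise level, does not follow. The missing idea is the model cancellation used in the paper. Because the innovations coincide for $t\neq t_0$, one has $\bbm y'_{k,t}-\bbm y_{k,t}=\bm A_k^*(\bbm x'_{k,t}-\bbm x_{k,t})$ for $t>t_0$, and $\bm A_0^{(n)}\bbm x_{k,t}-\bbm y_{k,t}=(\bm A_0^{(n)}-\bm A_k^*)\bbm x_{k,t}-\bbm\epsilon_{k,t}$. Every $\bm A$-dependent term then carries only $\|\bm A_0^{(n)}-\bm A_k^*\|_{\op}\leq\|\bm A_0^{(n)}-\bm A_0^*\|_{\F}+\zeta\leq\sigma_r(\bm A_0^*)+\zeta$. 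This follows directly from the induction hypothesis, so no separate control of $\|\bm A_0^{(n)}\|_{\op}$ is needed and your \emph{main obstacle} reduces to the induction you already run. With this rewriting, your good event must also control the full innovations, $\max_t\|\bbm\epsilon_{k,t}\|_2\lesssim\sigma\sqrt{\Lambda_{\epsilon,\max}(d+\log T)}$, not only their sensitive block. The injection-time contribution is then just $\bbm\epsilon_{k,t_0}^{\Delta}\bbm x_{k,t_0}^\top$. On bookkeeping, $\sqrt{C_{\epsilon,\mathcal A}^{\max}}$ enters through $\max_t\|\bbm x_{k,t}\|_2$, whose sub-Gaussian constant is $\sigma\sqrt{C_{\epsilon,\mathcal A}^{(k)}}$ (your bound drops the $1/\mu_{\min}$), while the geometric impulse-response sums supply the factor $p$.

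\textbf{Retraction.} It is not enough that $\mathrm{SVD}_r$ expands the error by a constant factor. With $\rho=2/(c_{\epsilon,\mathcal A}^{\min}+3C_{\epsilon,\mathcal A}^{\max})$, the pre-retraction contraction factor is $(3C_{\epsilon,\mathcal A}^{\max}-c_{\epsilon,\mathcal A}^{\min})/(3C_{\epsilon,\mathcal A}^{\max}+c_{\epsilon,\mathcal A}^{\min})$, whose margin below $1$ is only of order $1/\kappa_{\epsilon,\mathcal A}$. Any fixed factor, such as the generic $2$ from Eckart--Young plus the triangle inequality, therefore destroys your $(1-c/\kappa_{\epsilon,\mathcal A})$ recursion. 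You need the second-order bound of Lemma~\ref{lem:matrix-pert-lesssim}, $\|\mathrm{SVD}_r(\bm M')-\bm A_0^*\|_{\F}\leq\bigl(1+40\|\bm M'-\bm A_0^*\|_{\op}/\sigma_r(\bm A_0^*)\bigr)\|\bm M'-\bm A_0^*\|_{\F}$, alongside Lemma~\ref{lem:normal-second-order} for the curvature term. Both corrections have relative size $\|\bm A_0^{(n)}-\bm A_0^*\|_{\F}/\sigma_r(\bm A_0^*)$. That is why the basin radius must be of order $\sigma_r(\bm A_0^*)/\kappa_{\epsilon,\mathcal A}$, as in the theorem's initialization condition. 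With these two repairs your argument coincides with the paper's proof.
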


Theorem~\ref{thm:common-A0-error-bounds} gives a high-probability sensitivity-calibrated selective federated privacy guarantee and a non-asymptotic error bound for $\widehat{\bm A}_0$. The high-probability statement only concerns whether the realized, given dataset generated from the VAR process satisfies the prescribed sensitivity upper bound. That is, the probability is taken over the data-generating process to control the event that the data-dependent sensitivity $\xi_k^{(n)}$ may exceed its deterministic upper bound; it is not a failure probability of the randomized Gaussian mechanism. Conditional on any fixed neighboring datasets satisfying this sensitivity bound, the selective federated $(\varepsilon,\delta)$-DP inequality is a mechanism-level statement whose probabilities are taken only over the added Gaussian noise. The estimation error decomposes into three components:
\begin{enumerate}
	\item[(i)] Statistical estimation error ($\mathsf{Error}_{\mathrm{stat}}$): This component corresponds to the standard error rate for estimating a rank-$r$ matrix from pooled data. 
	\item[(ii)] Client heterogeneity error ($\mathsf{Error}_{\mathrm{h}}$): This term reflects the intrinsic ambiguity in separating the shared low-rank component $\bm A_0^*$ from the client-specific deviations $\{\bbm\Delta_k^*\}_{k=1}^K$. $\|\bbm\Delta_k^*\|_{\op}\leq\zeta$ provides a weak-identifiability control on this ambiguity, leading to the bound $\mathsf{Error}_{\mathrm{h}}=\sqrt r K\zeta$. This $K$-dependence is a conservative bound for the common-component estimation step and is refined after the personalized Stage~II step, where the corresponding heterogeneity term no longer carries the factor $K$ under the per-client sample size condition in Theorem~\ref{thm:personalized-error-bounds}.
	\item[(iii)] DP Gaussian-noise error ($\mathsf{Error}_{\mathrm{DP}}$): This term quantifies the additional error incurred by injecting Gaussian noise to enforce selective federated $(\varepsilon,\delta)$-DP on the sensitive index set $\mathcal I$. As expected, it grows with stronger privacy (smaller $\varepsilon$ and/or $\delta$) and with the number of protected coordinates $|\mathcal I|$, capturing the fundamental privacy--utility trade-off.
\end{enumerate}

\begin{theorem}[Personalized Refinement]\label{thm:personalized-error-bounds}
Assume that the conditions of Theorem~\ref{thm:common-A0-error-bounds} hold.
Suppose Algorithm~\ref{algorithmTL-stage2} is initialized with $\bm\Delta_k^{(0)}=\bm 0$ for each client $k$. 
Let the step size be $\eta=(2T_k^{-1}\bigl\|\sum_{t=1}^{T_k}\bbm x_{k,t}\bbm x_{k,t}^\top\bigr\|_{\op})^{-1}$, and choose the number of local iterations as $N_l \asymp \sqrt{\kappa_{\epsilon,\mathcal A}^{(k)}}\phi(s_q)/\mathsf{Error}_{\Delta}^{(k)}$ with $\mathsf{Error}_{\Delta}^{(k)} = \sqrt{\kappa_{\epsilon,\mathcal A}^{(k)}}\|\widehat{\bm A}_0-\bm A_0^*\|_{\F} + \sqrt{s_q}(\varpi_k/C_{\mathrm{RSC}}^{(k)})^{1-q/2}$.
If the per-client sample size satisfies $T_k \gtrsim \bigl(C_{\epsilon,\mathcal A}^{(k)}/C_{\mathrm{RSC}}^{(k)}\vee 1\bigr)^2 p^2 d \vee p\log(pd)$, and the regularization parameter is chosen as $\varpi_k \asymp \sigma^2 C_{\epsilon,\mathcal A}^{(k)}\sqrt{\log(pd)/T_k}$,
then, with probability at least $1 - C\exp(-Cpd) - C\exp(-C\log(pd))$, we have $\|\widehat{\bbm\Delta}_k-\bbm\Delta_k^*\|_{\F} \lesssim \mathsf{Error}_{\Delta}^{(k)}$ for all $k\in[K]$.
Moreover, with probability at least $1-K\exp(-Cpd)-K\exp(-C\log T)$, we have
\[
	\|\widehat{\bm A}_0-\bm A_0^*\|_{\F} + \|\widehat{\bm\Delta}_k-\bm\Delta_k^*\|_{\F}
	\lesssim
	\sqrt{\kappa_{\epsilon,\mathcal A}^{(k)}}\kappa_{\epsilon,\mathcal A}\left(\mathsf{Error}_{\mathrm{stat}} + \mathsf{Error}_{\mathrm{DP}} + \underbrace{C_{\epsilon,\mathcal A}^{\max}\sqrt{r}\zeta}_{\mathsf{Error}_{\mathrm{h}}'}\right)
	+ \underbrace{\sqrt{s_q}\left(\frac{\varpi_k}{C_{\mathrm{RSC}}^{(k)}}\right)^{1-q/2}}_{\mathsf{Error}_{\mathrm{p}}^{(k)}}.
\]
\end{theorem}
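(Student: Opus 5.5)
We plan to treat Stage~II as a Lasso-type problem with a \emph{perturbed} response and to add the optimization error of FISTA. Fix a client $k$ and write $\bbm y_{k,t}-\widehat{\bm A}_0\bbm x_{k,t} = \bbm\Delta_k^*\bbm x_{k,t} + (\bm A_0^*-\widehat{\bm A}_0)\bbm x_{k,t} + \bbm\epsilon_{k,t}$. The problem \eqref{eq:stage2-opt} is then an $\ell_1$-penalized least-squares problem for $\bbm\Delta_k^*$, with an additional misspecification term $\bm E_k:=\bm A_0^*-\widehat{\bm A}_0$. Since $\widehat{\bm A}_0$ is computed in Stage~I from all clients' data (including client $k$), we will not condition on it. Instead, we will bound the cross term uniformly using only $\|\bm E_k\|_{\F}$, so dependence between $\widehat{\bm A}_0$ and client $k$'s sample causes no difficulty.

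\textbf{Step 1 (events).} On one event we collect three facts.
\begin{itemize}
\item[(a)] Restricted eigenvalue bounds for $\widehat{\bm\Gamma}_k=T_k^{-1}\sum_t\bbm x_{k,t}\bbm x_{k,t}^\top$: a lower RSC of the form $\langle \bm D,\bm D\widehat{\bm\Gamma}_k\rangle\geq C_{\mathrm{RSC}}^{(k)}\|\bm D\|_{\F}^2-\tau\|\bm D\|_1^2$ with $\tau\lesssim \log(pd)/T_k$, together with an upper bound $\lambda_{\max}(\widehat{\bm\Gamma}_k)\lesssim C_{\epsilon,\mathcal A}^{(k)}$.
\item[(b)] The deviation bound $\|T_k^{-1}\sum_t\bbm\epsilon_{k,t}\bbm x_{k,t}^\top\|_\infty\lesssim\sigma^2C_{\epsilon,\mathcal A}^{(k)}\sqrt{\log(pd)/T_k}$.
\item[(c)] The event of Theorem~\ref{thm:common-A0-error-bounds}.
\end{itemize}
Facts (a) and (b) are exactly the ingredients of Proposition~\ref{thm:local-model-error-upper-bound}, which rely on the Basu--Michailidis spectral bounds. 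They hold under the stated condition on $T_k$ with probability $1-C\exp(-Cpd)-C\exp(-C\log(pd))$. A union bound over $k$ then gives the second probability statement.

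\textbf{Step 2 (statistical error of the optimum).} We compare the objective at $\widehat{\bbm\Delta}_k^{\mathrm{opt}}$ and at $\bbm\Delta_k^*$. With $\bm D=\widehat{\bbm\Delta}^{\mathrm{opt}}_k-\bbm\Delta_k^*$, the basic inequality gives
\[
\langle \bm D,\bm D\widehat{\bm\Gamma}_k\rangle \le 2\bigl|\langle \bm D, T_k^{-1}\textstyle\sum_t\bbm\epsilon_{k,t}\bbm x_{k,t}^\top\rangle\bigr| + 2\bigl|\langle \bm D\widehat{\bm\Gamma}_k^{1/2},\bm E_k\widehat{\bm\Gamma}_k^{1/2}\rangle\bigr| + \varpi_k(\|\bbm\Delta_k^*\|_1-\|\widehat{\bbm\Delta}^{\mathrm{opt}}_k\|_1).
\]
We handle the three terms as follows.
\begin{itemize}
\item The first term is at most $(\varpi_k/2)\|\bm D\|_1$, by (b) and the choice of $\varpi_k$.
\item For the second term we apply Cauchy--Schwarz and then $ab\le a^2/4+b^2$, giving at most $\tfrac12\langle \bm D,\bm D\widehat{\bm\Gamma}_k\rangle+2C_{\epsilon,\mathcal A}^{(k)}\|\bm E_k\|_{\F}^2$.
\item For the penalty term we use the standard $\ell_q$-ball thresholding argument of Negahban et al.: threshold at level $\eta=\varpi_k/C_{\mathrm{RSC}}^{(k)}$ and decompose $\|\bm D\|_1\lesssim\sqrt{s_q\eta^{-q}}\|\bm D\|_{\F}+s_q\eta^{1-q}$.
\end{itemize}
Combining these with the RSC in (a) yields
\[
C_{\mathrm{RSC}}^{(k)}\|\bm D\|_{\F}^2\lesssim C_{\epsilon,\mathcal A}^{(k)}\|\bm E_k\|_{\F}^2+\varpi_k\bigl(\sqrt{s_q\eta^{-q}}\|\bm D\|_{\F}+s_q\eta^{1-q}\bigr).
\]
Solving this quadratic inequality gives $\|\bm D\|_{\F}\lesssim\sqrt{\kappa^{(k)}_{\epsilon,\mathcal A}}\|\bm E_k\|_{\F}+\sqrt{s_q}\eta^{1-q/2}=\mathsf{Error}_\Delta^{(k)}$. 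The main obstacle is keeping the $\tau\|\bm D\|_1^2$ slack absorbed. To do so we will use the cone-type bound on $\|\bm D\|_1$ obtained above, which now has an extra $\|\bm E_k\|_{\F}^2/\varpi_k$ contribution. The absorption then requires $\tau s_q\eta^{-q}\ll C_{\mathrm{RSC}}^{(k)}$, which holds under the sample-size condition on $T_k$.

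\textbf{Step 3 (optimization error and final combination).} The step size $\eta$ is the inverse Lipschitz constant. The FISTA rate of \citet{beck2009fast} therefore gives $F(\bbm\Delta_k^{(N_l)})-F(\widehat{\bbm\Delta}^{\mathrm{opt}}_k)\lesssim C_{\epsilon,\mathcal A}^{(k)}\|\widehat{\bbm\Delta}^{\mathrm{opt}}_k\|_{\F}^2/N_l^2$. Here $\|\widehat{\bbm\Delta}^{\mathrm{opt}}_k\|_{\F}\lesssim\phi(s_q)+\mathsf{Error}_\Delta^{(k)}$, since the initialization is $\bm 0$. Repeating the Step~2 argument with $\bbm\Delta_k^{(N_l)}$ in place of the optimum, as an approximate minimizer with this excess objective, adds $\sqrt{C_{\epsilon,\mathcal A}^{(k)}/C_{\mathrm{RSC}}^{(k)}}\,\phi(s_q)/N_l$. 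With $N_l\asymp\sqrt{\kappa^{(k)}_{\epsilon,\mathcal A}}\phi(s_q)/\mathsf{Error}_\Delta^{(k)}$ this extra term is $O(\mathsf{Error}_\Delta^{(k)})$, which proves the first claim. For the second claim we add $\|\widehat{\bm A}_0-\bm A_0^*\|_{\F}$ and plug in Theorem~\ref{thm:common-A0-error-bounds}, using $\kappa^{(k)}_{\epsilon,\mathcal A}\ge1$ to absorb constants. The remaining point is the heterogeneity term: $\sqrt rK\zeta$ must be replaced by $\mathsf{Error}_{\mathrm h}'=C^{\max}_{\epsilon,\mathcal A}\sqrt r\zeta$. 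We plan to revisit the Stage~I bias term $\sum_k(T_k/T)\mathcal P_{\mathcal T}(\bbm\Delta_k^*\widehat{\bm\Gamma}_k)$. Bounding it by its weighted average, $\sqrt{2r}\,\max_k\|\bbm\Delta^*_k\|_{\op}\lambda_{\max}(\widehat{\bm\Gamma}_k)\lesssim\sqrt rC^{\max}_{\epsilon,\mathcal A}\zeta$, rather than by a sum over $k$, is available under the per-client sample-size condition of the present theorem. The contraction argument of Theorem~\ref{thm:common-A0-error-bounds} then delivers $\mathsf{Error}_{\mathrm h}'$ in place of $\mathsf{Error}_{\mathrm h}$.
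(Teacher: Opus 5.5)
Your proposal is correct and follows essentially the same route as the paper. The shared steps are: a basic inequality in which the Stage~I error enters only through $\lambda_{\max}(\widehat{\bm\Gamma}_k)\|\widehat{\bm A}_0-\bm A_0^*\|_{\F}^2$ after Young's inequality, so the dependence of $\widehat{\bm A}_0$ on client $k$'s data is harmless; $\ell_q$ thresholding at level $\varpi_k/C_{\mathrm{RSC}}^{(k)}$; the FISTA rate with $\|\widehat{\bbm\Delta}_k^{\mathrm{opt}}\|_{\F}\lesssim\phi(s_q)+\mathsf{Error}_\Delta^{(k)}$; and the refined Term~II bound $\sqrt{2r}\sum_k(T_k/T)\zeta\|\bm S_k\|_{\op}\lesssim C_{\epsilon,\mathcal A}^{\max}\sqrt r\zeta$ obtained from per-client covariance bounds.

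The differences are minor. The paper uses the full RSC $\lambda_{\min}(\widehat{\bm\Gamma}_k)\geq C_{\mathrm{RSC}}^{(k)}$ of Lemma~\ref{lem:RSC-var}, so your tolerance $\tau$ can be taken to be zero. That makes your absorption step unnecessary; note that $T_k\gtrsim p^2d$ would not by itself justify that step for $\tau>0$, e.g.\ when $q=0$ and $s_0$ is large. The paper also controls the optimization error via the quadratic-growth bound $F_k(\bm\Delta)-F_k(\widehat{\bbm\Delta}_k^{\mathrm{opt}})\geq C_{\mathrm{RSC}}^{(k)}\|\bm\Delta-\widehat{\bbm\Delta}_k^{\mathrm{opt}}\|_{\F}^2$ plus the triangle inequality, rather than re-running the basic inequality for an approximate minimizer; both give the same rate.
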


Theorem~\ref{thm:personalized-error-bounds} provides a non-asymptotic error bound for the personalized estimator $\widehat{\bm A}_k=\widehat{\bm A}_0+\widehat{\bbm\Delta}_k$.
The terms $\mathsf{Error}_{\mathrm{stat}}$ and $\mathsf{Error}_{\mathrm{DP}}$ are inherited from Stage~I and respectively represent the pooled statistical error and the additional Gaussian-noise error for estimating the common component $\bm A_0^*$. The refined heterogeneity term $\mathsf{Error}_{\mathrm h}'$ comes from the weak identifiability of the decomposition $\bm A_k^*=\bm A_0^*+\bbm\Delta_k^*$; compared with $\mathsf{Error}_{\mathrm h}$ in Theorem~\ref{thm:common-A0-error-bounds}, it no longer carries the factor $K$. The personalization term $\mathsf{Error}_{\mathrm p}^{(k)}$ is the local sparse-deviation estimation error in Stage~II. The number of local FISTA iterations $N_l$ is chosen so that the optimization error is dominated by the corresponding statistical error.

We next compare the personalized federated bound with the single-client bound in Proposition~\ref{thm:local-model-error-upper-bound} for a fixed client $k\in[K]$. Suppose the relevant condition numbers are uniformly bounded, i.e., $\kappa_{\epsilon,\mathcal A}^{(k)}\leq \kappa_{\epsilon,\mathcal A}\leq C$ for some constant $C>0$. Under this simplification, the personalization term $\mathsf{Error}_{\mathrm p}^{(k)}$ has the same order as the sparse-deviation part of the single-client bound, while the low-rank component benefits from the pooled sample size $T=\sum_{k=1}^K T_k$ through $\mathsf{Error}_{\mathrm{stat}}=\sigma^2\sqrt{rpd/T}$.
Thus, the federated estimator improves over the single-client estimator when the gain from pooling dominates the additional costs from privacy noise. A sufficient condition is that $\mathsf{Error}_{\mathrm{DP}} \lesssim \sigma^2\sqrt{rpd/T_k}$. In particular, it is enough to require the pooled sample size $T$ satisfies
\begin{align*}
T
\gtrsim
\frac{\bigl(\sigma_r(\bm A_0^*)+\zeta\bigr)p\Lambda_{\epsilon,\max}}
{\varepsilon\sqrt{C_{\epsilon,\mathcal A}^{\max}}}
\sqrt{rT_kpdK|\mathcal I|\log(1.25/\delta)}.
\end{align*}

\subsection{Rank Selection Consistency}\label{sec:rank-select-con}
This section establishes the consistency of the rank selection criterion proposed in Section~\ref{sec:rank-selection}.

\begin{theorem}[Rank selection consistency]\label{thm:rank-selection-consistency}
Let $r^*$ be the true rank of $\bm A_0^*$, with $r^* \leq \overline r$. Under the conditions of Proposition~\ref{thm:local-model-error-upper-bound}, if for each client $k\in[K]$, the ridge-type penalty $c(d,T_k)$ satisfies
\[
  \sqrt{\overline{r}}\frac{\lambda_k}{C_{\mathrm{RSC}}^{(k)}}
  + \sqrt{s_q}\left(\frac{\omega_k}{C_{\mathrm{RSC}}^{(k)}}\right)^{1-q/2}
  \ll c(d,T_k)
  \ll \sigma_{r^*}(\bm A_0^*) \min_{1\leq j\leq r^*-1}\frac{\sigma_{j+1}(\bm A_0^*)}{\sigma_{j}(\bm A_0^*)},
\]
then $\mathbb{P}(\widehat{r}_k=r^*)\to 1$ as $d,T_k\to\infty$ for each $k\in[K]$.
\end{theorem}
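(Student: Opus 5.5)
Write $e_k:=\mathsf{Error}_{\mathrm{loc}}^{(k)}$, the single-client error bound of Proposition~\ref{thm:local-model-error-upper-bound} with $r$ replaced by $\overline r$, which upper bounds the bound with $r^*\le\overline r$. The plan is a deterministic perturbation argument on the event $\mathcal E_k$ on which Proposition~\ref{thm:local-model-error-upper-bound} holds, which has probability tending to one as $d,T_k\to\infty$.

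\emph{Step 1 (singular value perturbation).} On $\mathcal E_k$, Weyl's inequality gives, for every $j$,
\[
|\widetilde\sigma_{k,j}-\sigma_j(\bm A_0^*)|\le\|\widetilde{\bm A}_{0,k}-\bm A_0^*\|_{\op}\le\|\widetilde{\bm A}_{0,k}-\bm A_0^*\|_{\F}\lesssim e_k .
\]
Since $\sigma_j(\bm A_0^*)=0$ for $j>r^*$, we get $\widetilde\sigma_{k,j}\lesssim e_k=o(c(d,T_k))$ for $j>r^*$. For $j\le r^*$ we get $\widetilde\sigma_{k,j}=\sigma_j(\bm A_0^*)(1+o(1))$, because $e_k\ll c(d,T_k)\ll\sigma_{r^*}(\bm A_0^*)$. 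Write $c=c(d,T_k)$ and $\sigma_j=\sigma_j(\bm A_0^*)$.

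\emph{Step 2 (ratios).} Let $R_j=(\widetilde\sigma_{k,j+1}+c)/(\widetilde\sigma_{k,j}+c)$.
\begin{itemize}
\item At $j=r^*$: $R_{r^*}=(c+o(c))/(\sigma_{r^*}(1+o(1))+c)=O(c/\sigma_{r^*})\to0$.
\item For $j>r^*$: both numerator and denominator equal $c(1+o(1))$, so $R_j\to1$.
\item For $j<r^*$: $R_j=(\sigma_{j+1}(1+o(1))+c)/(\sigma_j(1+o(1))+c)\ge(1+o(1))\sigma_{j+1}/(\sigma_j+c)$. Using $c\ll\sigma_{r^*}\le\sigma_j$, this is $\ge(1+o(1))\sigma_{j+1}/\sigma_j$.
\end{itemize}

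\emph{Step 3 (comparison).} We need $R_{r^*}<R_j$ for all $j\ne r^*$ with probability tending to one. For $j>r^*$ this is immediate. For $j<r^*$ we need $c/\sigma_{r^*}\ll\min_{j<r^*}\sigma_{j+1}/\sigma_j$. This is exactly the upper condition $c\ll\sigma_{r^*}\min_j\sigma_{j+1}/\sigma_j$. Hence on $\mathcal E_k$, for $d,T_k$ large, $\widehat r_k=\arg\min_j R_j=r^*$, and so $\mathbb P(\widehat r_k=r^*)\ge\mathbb P(\mathcal E_k)\to1$.

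\emph{Main obstacle.} The main difficulty is making the little-$o$ terms in Step 2 uniform. If the $\sigma_j$ vary with $d$, the lower bound for $j<r^*$ must be written with explicit constants, $R_j\ge\sigma_{j+1}(1-\eta)/(\sigma_j+c+e_k)$, and the assumed separation must be used in ratio form rather than as absolute gaps. A second point is checking that the probability in Proposition~\ref{thm:local-model-error-upper-bound} tends to one. This holds because $pd\to\infty$ and $\log(pd)\to\infty$ as $d\to\infty$.
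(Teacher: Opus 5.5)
Your proposal is correct and follows the paper's proof: Weyl's inequality applied to the Frobenius bound of Proposition~\ref{thm:local-model-error-upper-bound} (with $r$ replaced by $\overline r$), then a three-case analysis of the ridge ratio for $j>r^*$, $j<r^*$ and $j=r^*$. Your Step~3 is slightly more careful than the paper, which extracts only $c(d,T_k)/\sigma_{r^*}(\bm A_0^*)\to 0$ from the upper condition and treats the ratios $\sigma_{j+1}(\bm A_0^*)/\sigma_j(\bm A_0^*)$ as fixed limits; you use the separation condition in ratio form, which is what is needed when the singular values vary with $d$ (both arguments tacitly require $r^*\le\overline r-1$, so that $r^*$ lies in the search range of the ratio criterion).
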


Theorem~\ref{thm:rank-selection-consistency} establishes that the proposed ridge-type ratio criterion consistently recovers the true rank $r^*$ of the common low-rank matrix $\bm A_0^*$. The two-sided condition on $c(d,T_k)$ in Theorem~\ref{thm:rank-selection-consistency} has a clear interpretation.
The lower bound enforces that the stochastic perturbation of the local singular values (as controlled by Proposition~\ref{thm:local-model-error-upper-bound}) is dominated by $c(d,T_k)$, so that the ridge-type ratio $(\widetilde{\sigma}_{k,r+1}+c(d,T_k))/(\widetilde{\sigma}_{k,r}+c(d,T_k))$ is not driven by noise-induced tail singular values.
The upper bound requires $c(d,T_k)$ to remain sufficiently small relative to the population singular-value separation among the leading $r^*$ components of $\bm A_0^*$.
Intuitively, this condition may be violated if the smallest nonzero singular value $\sigma_{r^*}(\bm A_0^*)$ is too small, or if there is an abrupt drop in magnitude between two adjacent singular values.
In either case, the ridge-type ratio becomes less informative for identifying the true rank $r^*$.

\section{Simulation}\label{sec:Simulation}

This section conducts six simulation experiments to illustrate the finite sample performance of the proposed methods. The first two experiments in Section~\ref{sec:simu-single-client-est}, evaluate the proposed single-client estimator and the consistency of the associated rank selection procedure. The remaining four experiments in Section~\ref{sec:simu-fed-est} assess the performance of the federated method.
For each setting, the experiment is repeated over 1{,}000 replications. 
All tuning parameters are chosen following the guidelines in Section~\ref{sec:tunning_param} unless otherwise specified. 
%In all figures, the shaded bands represent the 5th to 95th percentile range across these replications. We state this convention here and omit repeating it thereafter.

The data are generated from the VAR model in~\eqref{eq:VARpk} with dimension $d = 50$ and lag order $p = 1$. The innovations $\bbm\epsilon_{k,t}$ are drawn as $d$-dimensional standard Gaussian random vectors, independent across both time $t$ and clients $k$. Unless otherwise specified, we set the sample size per client to $T_k=400$ for all $k\in[K]$, the privacy budget to $(\varepsilon,\delta)=(2,0.1)$, and the number of clients to $K=5$. 
The client-specific transition matrices are constructed as $\bm A_k=\bm A_0+\bbm\Delta_k$. We generate the common component $\bm A_0$ by drawing its entries independently from a standard Gaussian distribution and then enforcing $\operatorname{rank}(\bm A_0)=2$ unless specified otherwise. Each deviation matrix $\bbm\Delta_k$ is initially drawn from a standard Gaussian distribution, then normalized to satisfy the weak sparsity constraint $\|\bbm\Delta_k\|_{0.1}^{0.1}\leq 10$. We then rescale each $\bbm\Delta_k$ so that the Frobenius-norm ratio satisfies $\|\bm A_0\|_{\F}:\|\bbm\Delta_k\|_{\F}=5:1$. Finally, we rescale $\bm A_k$ to ensure that the resulting VAR processes are stationary, as required by Assumption~\ref{assump:stationarity}.

%Considering the privacy analysis in Theorem~\ref{thm:common-A0-error-bounds} characterizes the order of the Gaussian noise level via the sensitivities $\xi_k^{(n)}$, while the associated leading constant depends on unknown problem-specific quantities and is therefore difficult to calibrate accurately.
The theoretical analysis in Theorem~\ref{thm:common-A0-error-bounds} provides an order-wise characterization of the Gaussian noise level $\sigma_k^{(n)}$ in terms of the sensitivities $\xi_k^{(n)}$. However, the leading constants involve unknown problem-specific quantities and are therefore difficult to calibrate exactly in practice. To isolate and illustrate the effect of the privacy parameters $(\varepsilon,\delta)$ in our simulation, we simply set $\sigma_k^{(n)}=\sqrt{2\log(1.25/\delta)}/\varepsilon$ for all $k\in[K]$ and $n\in[N_g]$ to evaluate the privacy-utility trade-off.
This choice can be viewed as a conservative benchmark, as it does not leverage the potentially smaller sensitivity that arises when clients are similar and sample sizes $T_k$ are sufficiently large. In such favorable settings, the actual noise required for a given $(\varepsilon, \delta)$ could be much smaller, leading to improved empirical performance relative to the results reported here. 
%When the client-specific parameters are sufficiently similar and $T_k$ satisfies a mild sample-size requirement, the noise scale is much smaller under the same $(\varepsilon,\delta)$, thereby improving empirical performance relative to the above choice of $\sigma_k^{(n)}$.

\subsection{Single Client Estimation and Rank Selection}\label{sec:simu-single-client-est}

We conduct two experiments to examine the estimation accuracy and rank selection consistency for a single client using only its local data. 
The single-client estimator in \eqref{eq:local-estimator} is obtained via ADMM algorithm; see Section~\ref{sec:ADMM} of the Supplementary Material for implementation details. 
%For notational simplicity, we write $T$ for the sample size of this client and suppress the subscript $k$.

Figure~\ref{fig:single_client_errors} reports the average Frobenius-norm estimation errors of the single-client estimators $\widetilde{\bm A}_{0,k}, \widetilde{\bbm \Delta}_k$ and $\widetilde{\bm A}_{k}=\widetilde{\bm A}_{0,k} + \widetilde{\bbm \Delta}_k$, computed over 1,000 replications, with the 5th--95th percentile bands shown as shaded regions. 
All errors decrease as $T_k$ increases, confirming that larger local datasets lead to more accurate estimation. For small $T_k$, the total error $\|\widetilde{\bm A}_{k}-\bm A_k^*\|_{\F}$ closely tracks the deviation error $\|\widetilde{\bbm \Delta}_k-\bbm \Delta_k^*\|_{\F}$, indicating that the total estimation error is dominated by the difficulty of recovering the client-specific sparse deviation $\bm\Delta_k^*$ when data are scarce. As $T$ becomes larger, the deviation error $\|\widetilde{\bbm \Delta}_k-\bbm \Delta_k^*\|_{\F}$ decreases faster, while the total error $\|\widetilde{\bm A}_k-\bm A_k^*\|_{\F}$ remains slightly larger due to the estimation error in the common component $\bm A_0^*$.
Finally, the 5th--95th percentile bands become narrower as $T_k$ increases, indicating improved statistical stability of the single-client estimator with larger sample sizes.

Table~\ref{tab:rank-correct-rate} reports the correct selection rates of the proposed ridge-type rank estimator in~\eqref{eq:Ridge-type estimator} across different sample sizes $T_k$ and true ranks $r^*$. For any given $r^*$, the correct selection rate improves as $T_k$ increases, confirming the rank-selection consistency established in Theorem \ref{thm:rank-selection-consistency}. At a given sample size, correctly identifying a larger true rank is generally more challenging, leading to lower selection accuracy when $T_k$ is small. However, the selector remains consistent, as evidenced by the rapid improvement in selection rates as $T_k$ grows across all rank values.
%Moreover, for a fixed sample size, identifying a larger true rank is generally more challenging and thus yields a lower correct rate at smaller $T$; nevertheless, the selector remains consistent, as its accuracy improves rapidly with $T_k$ for all values of $r^*$.

\begin{figure}[H]
    \centering
    \includegraphics[width=0.8\textwidth]{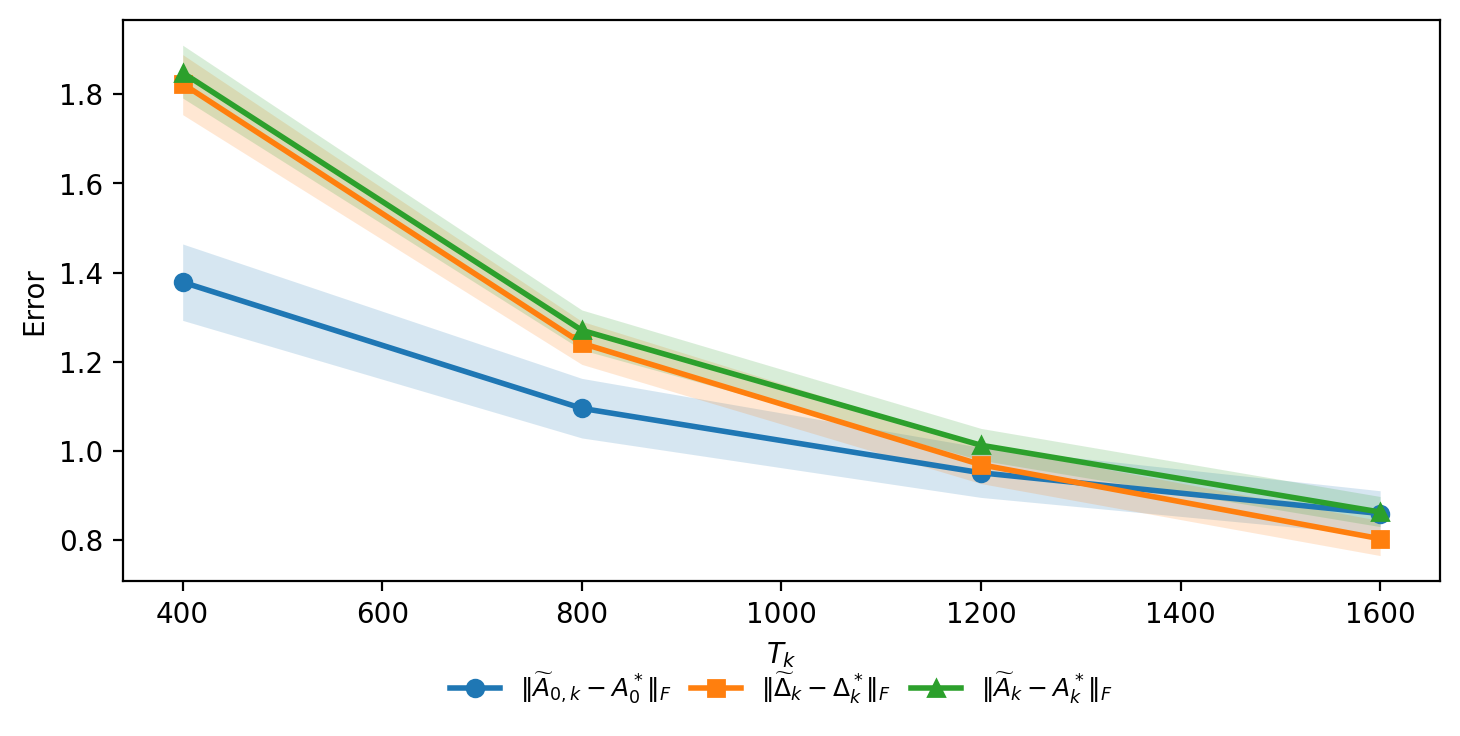}
    \caption{Average Frobenius-norm estimation errors for $\bm A_0^*, \bbm \Delta_k^*$ and $\bm A_k^*$ versus the sample size $T_k$ under the single client setting. Shaded bands indicate the 5th--95th percentiles range across $1,000$ replications.}
    \label{fig:single_client_errors}
\end{figure}

\begin{table}[H]
\centering
\caption{Correct selection rate (\%) of the ridge-type rank estimator.}
\label{tab:rank-correct-rate}
\setlength{\tabcolsep}{14pt}
\renewcommand{\arraystretch}{0.8}
\begin{tabular}{lcccc}
\toprule
$r^* / T_k$ & 400 & 800 & 1200 & 1600 \\
\midrule
1 & 68.20 & 99.40 & 100.00 & 100.00 \\
2 & 42.90 & 86.30 & 98.40 & 100.00 \\
3 & 41.80 & 85.50 & 99.00 & 100.00 \\
\bottomrule
\end{tabular}
\end{table}

\subsection{Federated Estimation}\label{sec:simu-fed-est}

We conduct four experiments to assess the finite-sample performance of the proposed federated method from four perspectives:  
(i) the impact of the privacy budget $(\varepsilon,\delta)$ on the number of iterations required in Stage I; (ii) the estimation errors under varying privacy budgets $(\varepsilon,\delta)$; (iii) the estimation errors under varying numbers of clients $K$; and (iv) the effect of client-specific sample sizes $T_k$ on estimation accuracy.  
In all experiments, the common component $\widehat{\bm A}_0$ is obtained by Algorithm~\ref{algorithmTL}, the client-specific deviations $\widehat{\bbm \Delta}_k$ are calculated using Algorithm~\ref{algorithmTL-stage2}, and the personalized coefficient matrices are given by $\widehat{\mathbf{A}}_{k}=\widehat{\bm A}_0+\widehat{\bbm \Delta}_k$.

Figure~\ref{fig:stage1_A0_fro_err_vs_iter__by_eps_delta} illustrates the average Frobenius-norm estimation error of $\bm A_0^{(n)}$ with respect to the iteration number $n$ for different privacy budgets, computed over 1,000 replications, with the 5th–95th percentile bands shown as shaded regions.  
Smaller values of $\varepsilon$ and $\delta$ (i.e., more stringent privacy) lead to higher estimation errors and more volatile convergence behavior. This is consistent with Theorem~\ref{thm:common-A0-error-bounds}, where the privacy-induced term $\mathsf{Error}_{\mathrm{DP}}$ dominates and increases sharply as the privacy budget tightens. For larger $(\varepsilon,\delta)$ (i.e., weaker privacy), the federated estimator achieves lower errors and smoother convergence than the single-client initialization.
%In contrast, when the privacy budget is less restrictive (larger $(\varepsilon,\delta)$), the federated procedure yields more accurate and stable estimates of $\bm A_0$ than the single-client initialization.

\begin{figure}[H]
	\centering
	\includegraphics[width=1\textwidth]{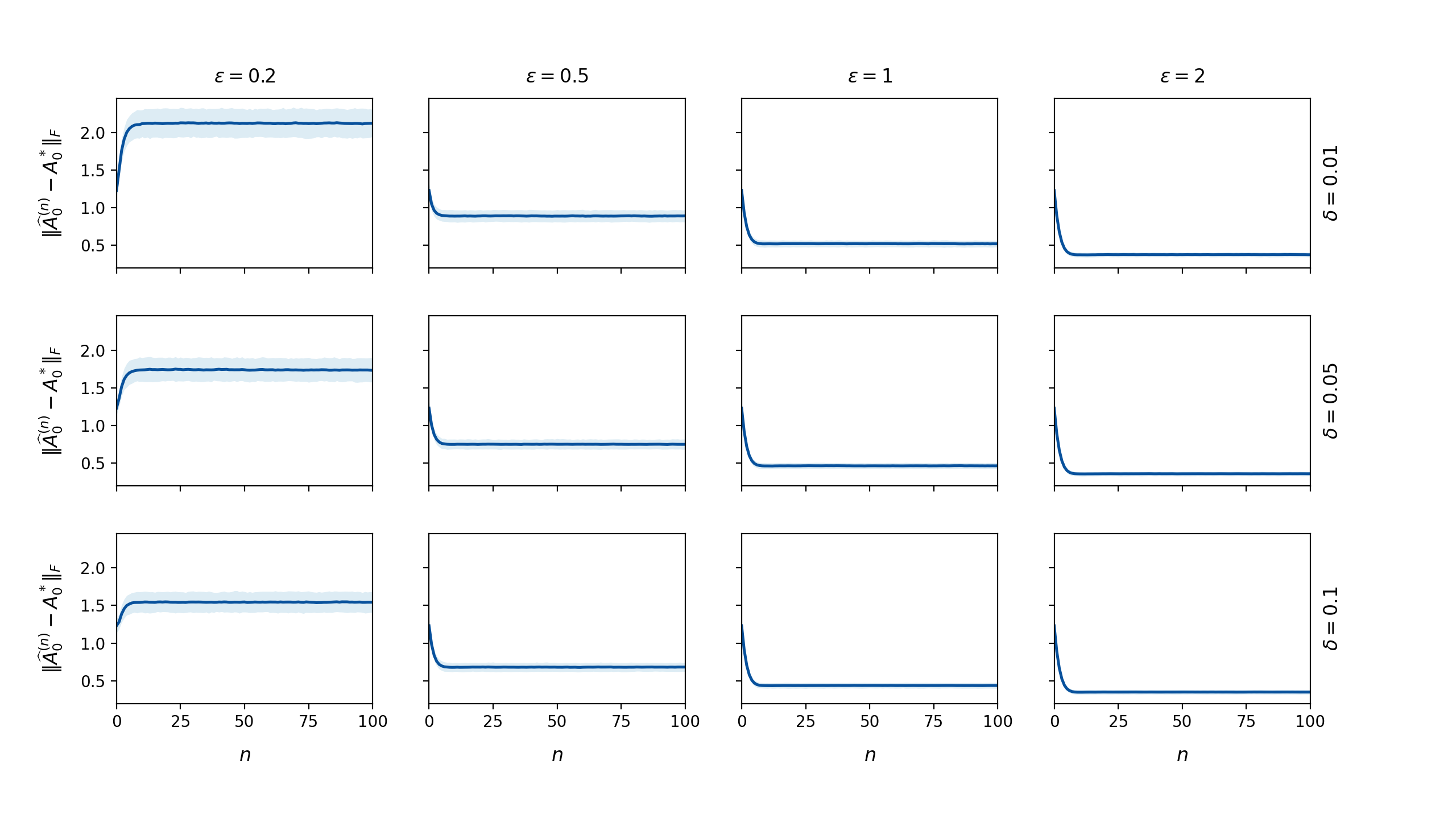}
	\caption{Average Frobenius-norm estimation errors of $\bm A_0^{(n)}$ versus the number of iterations $n$ under different $(\varepsilon,\delta)$ configurations for $T_k=400$ and $K=5$. Shaded bands indicate the 5th--95th percentiles range across $1,000$ replications.}
	\label{fig:stage1_A0_fro_err_vs_iter__by_eps_delta}
	\end{figure}

Figure~\ref{fig:sweep_privacy_eps_delta_heat_mosaic} displays the average estimation errors for $\widehat{\bm A}_0$, $\widehat{\bm\Delta}_k$, and $\widehat{\bm A}_k$ under different privacy budgets $(\varepsilon,\delta)$, and the benefit of the federated method over the single-client baseline in \eqref{eq:local-estimator}, based on 1,000 replications. 
The benefit for a generic parameter matrix $\{\bbm \Xi_k^*\}_{k=1}^K$ is defined as
\begin{align}\label{eq:benifit}
\mathrm{Benefit}(\bbm \Xi_k^*) = \frac{1}{K}\sum_{k=1}^{K}\|\widetilde{\bbm \Xi}_{k}-\bbm \Xi_k^*\|_\F - \frac{1}{K}\sum_{k=1}^{K}\|\widehat{\bbm \Xi}_{k}-\bbm \Xi_k^*\|_\F,
\end{align}
where $\{\widetilde{\bbm \Xi}_{k},\widehat{\bbm \Xi}_{k}, \bbm \Xi_k^*\}=\{\widetilde{\bm A}_{0,k},\widehat{\bm A}_{0}, \bm A_{0}^*\}$, $\{\widetilde{\bbm \Delta}_{k},\widehat{\bbm \Delta}_{k}, \bbm \Delta_{k}^*\}$ or $\{\widetilde{\bm A}_{k},\widehat{\bm A}_{k}, \bm A_{k}^*\}$ for all $k\in[K]$. 
%Here $\widetilde{\bbm \Xi}_{k}$ and $\widehat{\bbm \Xi}_{k}$ denote the single-client and federated estimators, respectively. In particular, $\{\widetilde{\bbm \Xi}_{k},\widehat{\bbm \Xi}_{k}\}\in\{\{\widetilde{\bm A}_{0,k},\widehat{\bm A}_{0}\},\{\widetilde{\bbm \Delta}_{k},\widehat{\bbm \Delta}_{k}\},\{\widetilde{\bm A}_{k},\widehat{\bm A}_{k}\}\}$. For the common component $\bm A_0^*$, we set $\bbm \Xi_k^*=\bm A_0^*$, $\widetilde{\bbm \Xi}_k=\widetilde{\bm A}_{0,k}$, and $\widehat{\bbm \Xi}_k=\widehat{\bm A}_0$ for all $k\in[K]$.
It can be seen that all the estimation errors decrease as $\varepsilon$ and/or $\delta$ increase (i.e., the privacy constraint is relaxed). 
The benefit heatmaps further validate the expected privacy-utility trade-off.
%reveal that the advantage of federation is sensitive to the privacy budget. 
Under stringent privacy (e.g., small $\varepsilon$), the federated method may fail to outperform the single-client baseline, particularly for the common component $\bm A_0^*$. However, once the privacy budget is moderately relaxed, the federated procedure delivers clear improvements over local learning across most $(\varepsilon,\delta)$ configurations.
%In summary, Figure~\ref{fig:sweep_privacy_eps_delta_heat_mosaic} demonstrates the expected privacy-utility tradeoff: stricter privacy leads to larger estimation errors, while a moderately relaxed privacy budget allows federation to deliver clear improvements over the single-client baseline.

\begin{figure}[H]
	\centering
	\includegraphics[width=1\textwidth]{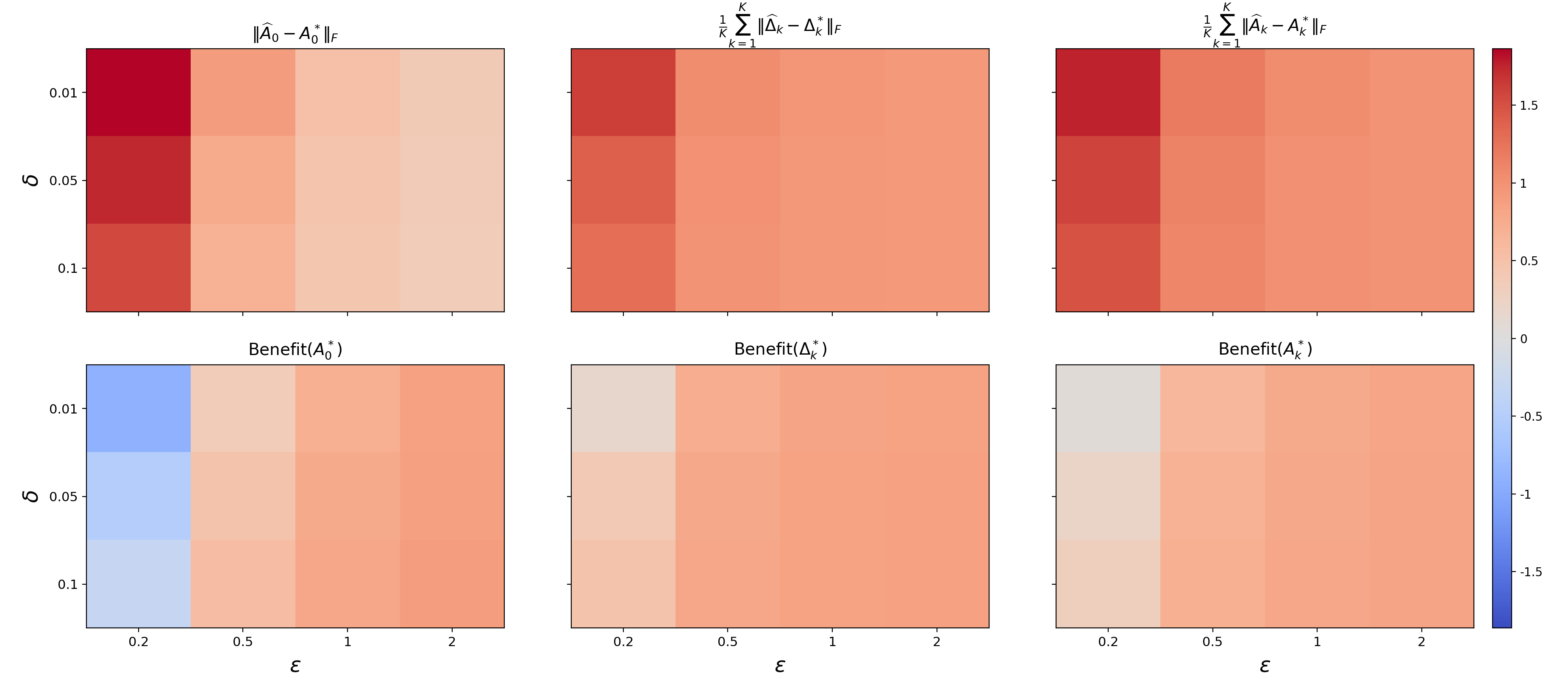}
	\caption{Heatmaps of average Frobenius-norm estimation errors (upper panel) for $\bm A_0^*$, $\bbm \Delta_k^*$ and $\mathbf{A}_k^*$, and their benefit (lower panel) over single-client estimators, under different $(\varepsilon,\delta)$ configurations for $T_k=400$ and $K=5$.}
	\label{fig:sweep_privacy_eps_delta_heat_mosaic}
	\end{figure}

Figure~\ref{fig:sweep_K_errors} shows the average Frobenius-norm estimation errors versus the number of clients $K$, with $T_k=400$ for all $k$ and $(\varepsilon,\delta)=(2,0.1)$, averaged over 1,000 replications. In Figure~\ref{fig:sweep_K_errors}, the label \(\widehat{\bbm \Delta}_k\) represents the average error $K^{-1}\sum_{k=1}^K\|\widehat{\bbm\Delta}_k-\bbm\Delta_k^*\|_{\F}$, and the other labels are defined analogously.
We can see that increasing $K$ leads to a noticeable reduction in the federated estimation error for the shared structure $\bm A_0^*$, as a larger $K$ provides more cross-client information for estimating $\bm A_0^*$. In contrast, the errors for the deviation $\bbm\Delta_k^*$ and the overall client-specific matrix $\bm A_k^*$, remain nearly unchanged as $K$ grows. This is because the deviation component $\bbm\Delta_k^*$ is client-specific and must be estimated from the local data of client $k$ alone. 
%This pattern is consistent with the decomposition $\bm A_k^*=\bm A_0^*+\bbm\Delta_k^*$. Since the local sample size $T_k$ is fixed at 400 for each client, increasing $K$ mainly contributes additional information for estimating the shared component $\bm A_0^*$ through cross-client aggregation, thus leading to a clear reduction in $\|\widehat{\bm A}_0-\bm A_0^*\|_{\F}$. By contrast, the deviation component $\bbm\Delta_k^*$ is client-specific and must be estimated from the local data of client $k$ alone. 
With $T_k$ held fixed, adding more clients does not reduce the statistical difficulty of estimating each $\bbm\Delta_k^*$, so the average error remains stable.
Moreover, as observed in Figure~\ref{fig:single_client_errors}, when $T_k$ is relatively small, the estimation error of $\widehat{\bm A}_k$ is dominated by the error in estimating $\bbm\Delta_k^*$, which is consistent with the pattern seen here.

\begin{figure}[H]
	\centering
	\includegraphics[width=0.8\textwidth]{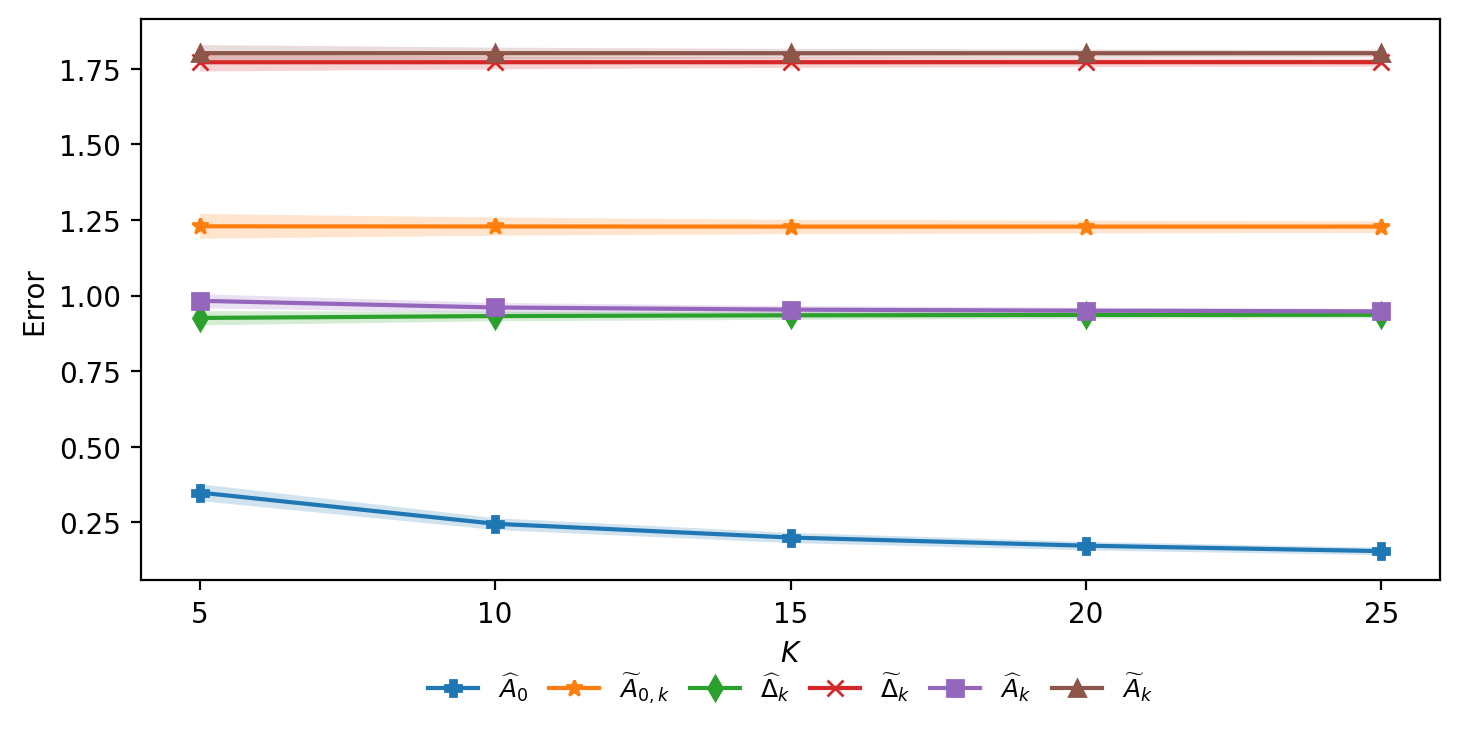}
	\caption{Average Frobenius-norm estimation errors for $\bm A_0^*$, $\bbm \Delta_k^*$ and $\mathbf{A}_k^*$ versus client number $K$ for $T_k=400$ and $(\varepsilon,\delta)=(2,0.1)$. Shaded bands indicate the 5th--95th percentiles range across $1,000$ replications.}
	\label{fig:sweep_K_errors}
\end{figure}

Figure~\ref{fig:teff_sweep_errors} compares the average Frobenius-norm estimation errors of the federated and single-client methods versus the local sample size $T_k$, with $K=5$ and $(\varepsilon,\delta)=(2,0.1)$, averaged over 1,000 replications. All clients share the same $T_k$ for simplicity. In each panel, ``Federated'' and ``Local'' denote the average estimation errors of the corresponding estimators for the coefficient matrix indicated in the subtitle, while ``Benefit'' quantifies the improvement of the federated method over the local baseline as defined in \eqref{eq:benifit}. %These averages are computed in the same way as in Figure~\ref{fig:sweep_K_errors}. “Benefit” denotes the improvement of the federated estimator over the local baseline, as defined in \eqref{eq:benifit}.
%Figure~\ref{fig:teff_sweep_errors} shows that all three components benefit from federated estimation, but the magnitude and shape of the benefit differ across panels. The shared component $\bm A_0^*$ benefits directly from cross-client aggregation in Stage~I, whereas the client-specific components $\bbm\Delta_k^*$ and hence the full matrices $\bm A_k^*$ benefit indirectly through a more accurate estimate of $\bm A_0^*$.

Figure~\ref{fig:teff_sweep_errors} shows that all three components benefit from federated estimation, though the benefit patterns differ across panels. 
For the shared component $\bm A_0^*$ (left panel), the benefit decreases as $T_k$ increases. This aligns with the error bounds in Proposition~\ref{thm:local-model-error-upper-bound} and Theorem~\ref{thm:common-A0-error-bounds}. With $K=5$ fixed, the federated estimator pools $T=\sum_{k=1}^K T_k = 5T_k$ samples, starting from a much lower error level. While its error continues to decrease with $T_k$, the reduction is less pronounced than that of the single-client estimator, whose error is governed solely by $T_k$. As a result, the benefit shrinks as $T_k$ grows.
In contrast, the benefit for the client-specific deviation $\bbm\Delta_k^*$ (middle panel) and the full transition matrix $\bm A_k^*$ (right panel) initially increases with $T_k$ and then gradually plateaus. When $T_k$ is small, $\widehat{\bm A}_0$ remains noisy, limiting the improvement passed to $\widehat{\bbm\Delta}_k$. As $T$ increases, $\widehat{\bm A}_0$ becomes more accurate and stable, enhancing the estimation of $\bbm\Delta_k^*$. Once $\widehat{\bm A}_0$ is estimated accurately enough, the dominant constraint for estimating $\bbm\Delta_k^*$ shifts to the local sample size $k$ itself, so further improvements in Stage I yield only marginal gains in Stage II, stabilizing the benefit. 
Finally, as observed in Figure~\ref{fig:single_client_errors}, the estimation error of $\widehat{\bm A}_k$ is largely driven by that of $\widehat{\bbm\Delta}_k$ when $T_k$ is small, explaining why the error and benefit patterns for $\bm A_k^*$ closely mirror those for $\bbm\Delta_k^*$.

\begin{figure}[H]
	\centering
	\includegraphics[width=1\textwidth]{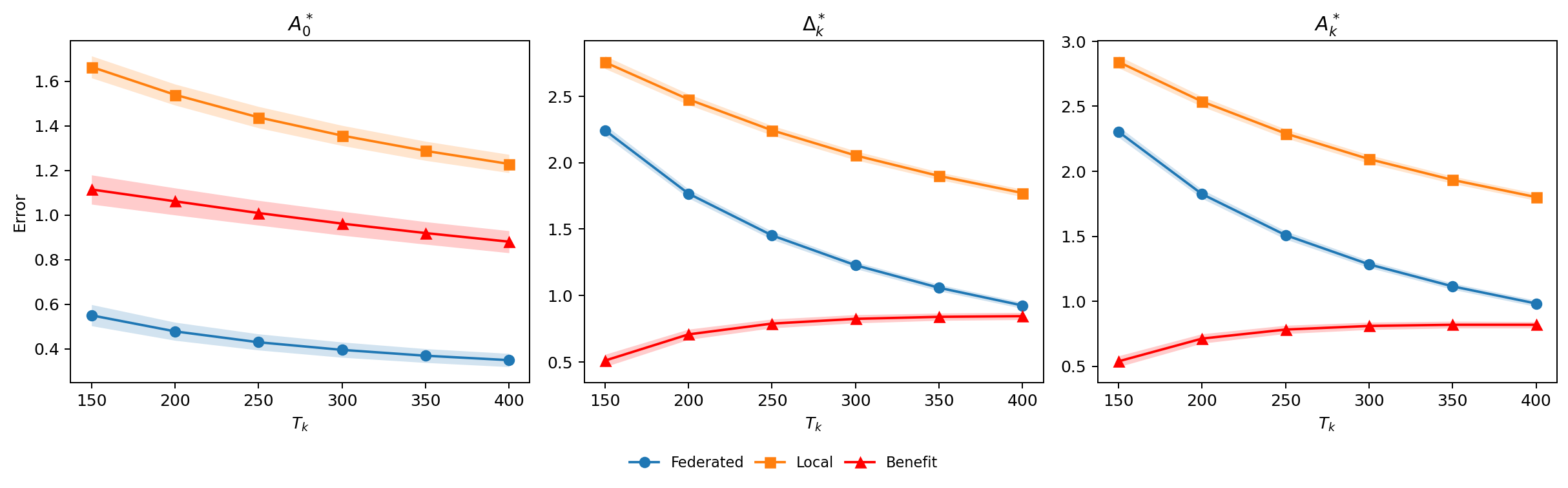}
	\caption{Average Frobenius-norm estimation errors of federated and single-client methods, and the benefit of federation, versus sample size $T_k$ for $K=5$ and $(\varepsilon,\delta)=(2,0.1)$. Shaded bands indicate the 5th--95th percentiles range across $1,000$ replications.}
	\label{fig:teff_sweep_errors}
\end{figure}

\section{Empirical Analysis}\label{sec:RealData}

We illustrate the practical utility of our federated VAR framework through two empirical studies. Section~\ref{sec:empirical_design_monthly_electricity}, investigates the relationship between electricity consumption and economic indicators across U.S. states, a setting motivated by privacy-sensitive scenarios where raw local data cannot be directly shared. %This setting is inherently privacy-sensitive, as raw local data cannot be directly shared across clients. 
Section~\ref{sec:macro-forecasting} then considers macroeconomic forecasting across multiple countries. Although privacy protection is not the primary concern here, this application illustrates the broader utility of our framework as a personalized multi-task learning approach that operates without privacy constraints.
%In this case, privacy protection is not the main concern. Instead, the motivation is to demonstrate the usefulness of our framework in a more general setting where it reduces to a personalized multi-task learning approach without privacy constraints.

The two datasets are constructed from publicly available sources. The first combines data from the U.S. Energy Information Administration (EIA) (\url{https://www.eia.gov/electricity/data/eia861m/}) and Federal Reserve Economic Data (FRED) (\url{https://fred.stlouisfed.org/}), while the second is compiled from the OECD Economic Outlook database (\url{https://www.oecd.org/economic-outlook/}). For each client, we align all variables to a common sample period, allowing sample sizes and time spans to vary across clients; see Table~\ref{tab:sample_range_all} in the Supplementary Material. %Table~\ref{tab:sample_range_all} in the Supplementary Material reports the sample period for each application. 
All data are then preprocessed by possible seasonal adjustment and transformations to remove non-stationarity, followed by standardization to zero mean and unit variance. Variable definitions, category labels, and preprocessing details are provided in Tables~\ref{tab:data_preprocessing_1} and \ref{tab:data_preprocessing_2} of the Supplementary Material.

For each application, we compare the forecasting performance of our federated VAR method with several single-client benchmarks: the combined nuclear-norm and $\ell_1$-regularized VAR (Nuc+$\ell_1$) introduced in Section~\ref{sec:single-client-estimation}, the nuclear-norm-regularized VAR (Nuc), the $\ell_1$-regularized VAR ($\ell_1$), and the unregularized least-squares VAR (LS). 
For our federated method, we evaluate the privacy-utility trade-off by adding Gaussian noise with standard deviation $\sigma_k^{(n)} = \sqrt{2\log(1.25/\delta)}/(10\varepsilon)$ for all $k\in[K]$ and $n\in[N_g]$, and report results under three privacy configurations: $(\varepsilon,\delta)=(0.2,0.05)$, $(0.1,0.01)$, and a non-private case (NDP) without noise.
Following \cite{koop2013forecasting}, all methods use a VAR(4) specification in both applications. 
Hyperparameters and initialization for our methods follow the procedure described in Section~\ref{sec:tunning_param}, while tuning parameters for the single-client benchmarks are chosen by standard cross-validation.

To evaluate the forecasting performance, we use the root mean squared forecast error (RMSFE) for each variable, computed via an expanding window approach. Specifically, for each forecast origin, the model is re-estimated using all available data up to that point, and one-step-ahead forecasts are generated for the last 20 observations. The RMSFE is then computed as the square root of the average squared forecast errors across these 20 forecast origins.

\subsection{Federated Learning for Electricity and Economic Activity}\label{sec:empirical_design_monthly_electricity}

We apply our federated VAR method to study the linkage between electricity consumption and economic indicators across U.S. states. 
Electricity data are valuable for macroeconomic analysis, serving as timely proxies for local economic activity \citep{arora2016electricity}. However, at finer geographic levels such as counties, cities, or utility service territories, the data are often subject to proprietary restrictions, regulatory limits, or data-sharing constraints, as public release could expose information about individual entities. By contrast, state-level aggregates are more readily available, as they are reported over a large and diverse user base that masks individual-level information. The U.S. Energy Information Administration publishes such state-level statistics on electricity sales, revenue, customer counts, and prices. 
While our empirical analysis uses these publicly accessible aggregates as a reproducible proxy, the modeling rationale directly mirrors the privacy-sensitive settings for which our method is designed.	 
%At such disaggregated levels, public release is more likely to reveal information about an individual utility, a large customer, or a small group of customers. As a result, these data are more often subject to proprietary restrictions, regulatory limits, or data-sharing constraints. 
%electricity statistics for larger geographic units such as states are more readily released because they are reported in aggregated form over a much broader set of users, firms, or utilities, making it less likely that any individual entity can be identified. In particular, the U.S. Energy Information Administration publishes state-level aggregates on electricity sales, revenue, customer counts, and prices. 
%For empirical illustration, we therefore use these publicly available state-level electricity data as a reproducible proxy, and conduct the economic analysis at this aggregate level, while preserving the same privacy-motivated modeling rationale as in the intended small-region deployment.

The five states in our study, Indiana, Ohio, Wisconsin, Michigan, and Illinois are all located in the Great Lakes region. As documented by \citet{ahking2014economies}, the business cycles of these states are substantially synchronized, though Michigan and Illinois exhibit distinct patterns from the others. This combination of shared dynamics and state-specific characteristics makes them well-suited for our personalized federated learning framework, which is designed to leverage cross-state information while preserving local heterogeneity.
%\cite{ahking2014economies} shows that the business cycles of these states are substantially synchronized, while also indicating that Michigan and Illinois are not fully homogeneous with the others. This makes these five states a suitable set of clients in our personalized federated learning framework, which is designed to exploit shared information across states to improve state-level forecasting performance while preserving cross-state heterogeneity.
For each state, we construct a 12-dimensional monthly time series that includes electricity sales, revenue, prices, and key economic indicators such as employment, industrial production, and retail sales; see Table~\ref{tab:data_preprocessing_1} in the Supplementary Material for details. Using a VAR(4) specification, the ridge-type ratio rank selection criterion in~\eqref{eq:Ridge-type estimator} selects $\widehat{r}=3$ for the common component $\bm A_0$ in Stage~I, with all other tuning parameters chosen as described at the beginning of this section. 

Table~\ref{tab:5_state_by_method_rmsfe_transposed} reports the forecasting performance of our federated method and the single-client benchmarks across the five states. We have the following findings. 
First, the non-private federated estimator (NDP) achieves the lowest RMSFE for four of the five states, outperforming all single-client benchmarks. This suggests that borrowing information across economically related states is beneficial for state-level forecasting if the states are sufficiently similar and local sample sizes are moderate.
Second, under moderate privacy noise, the federated estimator remains highly competitive and still outperforms the single-client methods in most cases. Notably, for Michigan, the private federated estimator under $(\varepsilon,\delta)=(0.2,0.05)$ even slightly outperforms the non-private version. This suggests that appropriately calibrated Gaussian noise can act as an implicit regularizer, mitigating overfitting and improving out-of-sample generalization \citep{boursier2024utility}.
Third, stronger privacy protection degrades forecasting performance. Increasing the noise level from $(\varepsilon,\delta)=(0.2,0.05)$ to $(0.1,0.01)$ leads to uniformly higher RMSFE across all five states. This pattern reflects the standard privacy-utility trade-off: while modest noise may occasionally improve generalization through implicit regularization, excessive noise obscures the shared signal in Stage~I and deteriorates forecast accuracy.

\begin{table}[H]
\centering
\caption{RMSFEs of our federated method (Fed-VAR) under three $(\varepsilon,\delta)$-DP settings and four single-client benchmarks (Single-VAR) across five states. Within each state, the smallest value is shown in bold and the second smallest is underlined.}
\label{tab:5_state_by_method_rmsfe_transposed}

\setlength{\tabcolsep}{12.2pt}
\renewcommand{\arraystretch}{0.90}
\footnotesize

\makebox[\textwidth][c]{%
\begin{tabular}{l c c c c @{\hspace{12pt}} c c c c}
\toprule
& & \multicolumn{3}{c}{\textbf{Fed-VAR}} & \multicolumn{4}{c}{\textbf{Single-VAR}} \\
\cmidrule(lr){3-5}\cmidrule(lr){6-9}
\textbf{State}
& \textbf{$T_k$}
& \textbf{NDP}
& $\mathbf{(0.2,0.05)}$
& $\mathbf{(0.1,0.01)}$
& \textbf{Nuc + $\ell_1$}
& \textbf{Nuc}
& \textbf{$\ell_1$}
& \textbf{LS} \\
\midrule

Illinois  & 44 & \textbf{0.751} & \underline{0.780} & 0.813 & 0.819 & 0.881 & 0.798 & 3.437 \\
Indiana   & 44 & \textbf{1.379} & \underline{1.387} & 1.564 & 1.569 & 1.487 & 1.389 & 4.311 \\
Michigan  & 44 & \underline{0.598} & \textbf{0.592} & 0.617 & 0.606 & 0.607 & 0.613 & 1.816 \\
Ohio      & 44 & \textbf{0.911} & \underline{0.965} & 1.015 & 0.977 & 0.983 & 0.994 & 2.910 \\
Wisconsin & 44 & \textbf{0.725} & \underline{0.735} & 0.784 & 0.838 & 0.916 & 0.917 & 2.874 \\
\midrule
\textbf{Average} & 44
& \textbf{0.914} & \underline{0.933} & 1.013 & 1.015 & 1.016 & 0.977 & 3.176 \\
\bottomrule
\end{tabular}%
}
\end{table}

%To provide a structural interpretation of the forecasting results in Table~\ref{tab:5_state_by_method_rmsfe_transposed},
Figure~\ref{fig:heatmap_state_panel} presents heatmaps of the estimated shared component $\widehat{\bm A}_0$ and two representative state-specific deviations, $\widehat{\bm \Delta}_{\mathrm{Wisconsin}}$ and $\widehat{\bm \Delta}_{\mathrm{Illinois}}$, under $(\varepsilon,\delta)=(0.2,0.05)$. Together with Table~\ref{tab:5_state_by_method_rmsfe_transposed}, this figure provides structural insight into why electricity data are useful for predicting state-level economic conditions. 

The top panel reveals that $\widehat{\bm A}_0$ exhibits a clear dependence pattern across electricity variables and macroeconomic indicators, indicating that lagged electricity-market variables carry predictive information for current economic outcomes. Several features are noteworthy. 
First, the strongest common signals are concentrated in the rows corresponding to unemployment, payroll employment, and the coincident index, indicating that current economic conditions are systematically related to lagged electricity and macroeconomic variables.
Second, the largest coefficients appear primarily in the first one or two lag blocks, suggesting that the predictive linkage between electricity variables and economic indicators is primarily short-run. This pattern aligns with the view that changes in electricity usage and electricity-market activity are closely connected to near-term fluctuations in employment and overall economic activity.

The lower two panels in Figure~\ref{fig:heatmap_state_panel} show that the state-specific deviations are much more localized. Most entries in $\widehat{\bm \Delta}_{\mathrm{Wisconsin}}$ and $\widehat{\bm \Delta}_{\mathrm{Illinois}}$ are close to zero, with only a limited subset exhibiting noticeable departures from the shared structure. Notably, $\widehat{\bm \Delta}_{\mathrm{Illinois}}$ appears more diffuse and contains more visible nonzero entries than $\widehat{\bm \Delta}_{\mathrm{Wisconsin}}$, indicating that Illinois deviates more substantially from the common Great Lakes pattern. This aligns with \citet{ahking2014economies}' finding that Illinois is not fully homogeneous with the other Great Lakes states.

\begin{figure}[H]
    \centering
    \includegraphics[width=1.0\textwidth]{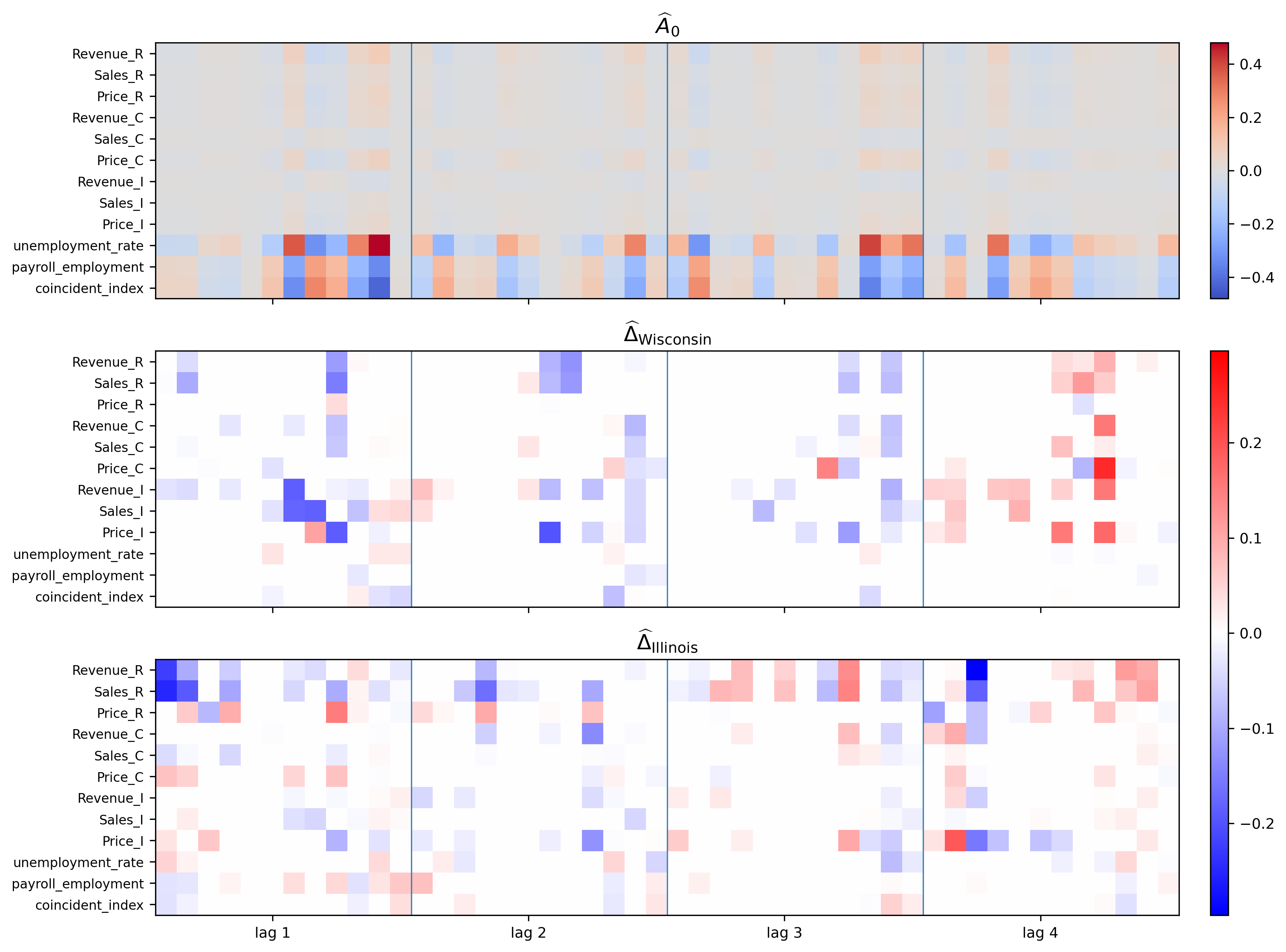}
    \caption{Heatmaps of the estimated shared component $\widehat{\bm A}_0$ and two representative state-specific deviations, $\widehat{\bm \Delta}_{\mathrm{Wisconsin}}$ and $\widehat{\bm \Delta}_{\mathrm{Illinois}}$, from the federated VAR model under $(0.2,0.05)$-DP. Columns are grouped by lag order.}
    \label{fig:heatmap_state_panel}
\end{figure}

\subsection{Macroeconomic Forecasting}\label{sec:macro-forecasting}

We apply our federated VAR method to macroeconomic forecasting across eight countries: the United States, Australia, Canada, Germany, Korea, Norway, Sweden, and Denmark. For each country, we construct a 25-dimensional quarterly macroeconomic panel including real activity, trade, production and sales, housing, interest rates, labour markets, prices, and financial indicators; see  Table~\ref{tab:data_preprocessing_2} in the Supplementary Material for details. The sample size varies across countries, with the United States having the longest series and Denmark the shortest. Under a VAR(4) specification, the ridge-type ratio rank selection criterion in~\eqref{eq:Ridge-type estimator} selects $\widehat{r}=4$ for the common component $\bm A_0$ in Stage~I, with other tuning parameters chosen as described at the beginning of this section.

Table~\ref{tab:8_country_by_method_rmsfe_transposed} reports the forecasting performance of our federated method and the single-client benchmarks across the eight countries. The results exhibit patterns broadly similar to those observed in Section \ref{sec:empirical_design_monthly_electricity}. Specifically, the non-private federated estimator (NDP) achieves the lowest average RMSFE across countries and outperforms the single-client benchmarks in most cases. Under moderate privacy noise with $(\varepsilon,\delta)=(0.2,0.05)$, the federated method remains highly competitive, attaining the best performance for several countries. In contrast, under stronger privacy protection with $(\varepsilon,\delta)=(0.1,0.01)$, the forecasting accuracy deteriorates noticeably across all countries, which again reflects the standard privacy-utility trade-off.

\begin{table}[H]
\centering
\caption{RMSFEs of our federated method (Fed-VAR) under three $(\varepsilon,\delta)$-DP settings and four single-client benchmarks (Single-VAR) across eight countries (ordered by decreasing sample size). Within each country, the smallest value is shown in bold and the second smallest is underlined.}
\label{tab:8_country_by_method_rmsfe_transposed}

\setlength{\tabcolsep}{12.2pt}
\renewcommand{\arraystretch}{0.90}
\footnotesize

\makebox[\textwidth][c]{%
\begin{tabular}{l c c c c @{\hspace{12pt}} c c c c}
\toprule
& & \multicolumn{3}{c}{\textbf{Fed-VAR}} & \multicolumn{4}{c}{\textbf{Single-VAR}} \\
\cmidrule(lr){3-5}\cmidrule(lr){6-9}
\textbf{Country}
& \textbf{$T_k$}
& \textbf{NDP}
& $\mathbf{(0.2,0.05)}$
& $\mathbf{(0.1,0.01)}$
& \textbf{Nuc + $\ell_1$}
& \textbf{Nuc}
& \textbf{$\ell_1$}
& \textbf{LS} \\
\midrule

United States & 212 & \textbf{1.771} & \underline{1.771} & 2.069 & 1.784 & 1.771 & 1.810 & 3.222 \\
Australia     & 152 & \underline{1.385} & 1.389 & 1.592 & 1.396 & \textbf{1.384} & 1.409 & 2.727 \\
Canada        & 128 & \textbf{1.737} & \underline{1.767} & 1.942 & 1.775 & 1.825 & 1.801 & 5.566 \\
Germany       & 124 & \textbf{1.462} & \underline{1.502} & 1.727 & 1.573 & 1.608 & 1.564 & 3.800 \\
Korea         &  88 & \textbf{1.060} & \underline{1.069} & 1.201 & 1.092 & 1.162 & 1.119 & 2.561 \\
Norway        &  88 & \underline{1.268} & \textbf{1.240} & 1.464 & 1.287 & 1.308 & 1.321 & 3.128 \\
Sweden        &  84 & \textbf{1.299} & \underline{1.305} & 1.516 & 1.390 & 1.362 & 1.361 & 2.651 \\
Denmark       &  72 & \textbf{1.220} & \underline{1.229} & 1.429 & 1.318 & 1.272 & 1.258 & 1.739 \\
\midrule
\textbf{Average} & 118
& \textbf{1.418} & \underline{1.429} & 1.639 & 1.469 & 1.479 & 1.474 & 3.347 \\
\bottomrule
\end{tabular}
}
\end{table}

To provide a structural insight into the forecasting results in Table~\ref{tab:8_country_by_method_rmsfe_transposed}, Figure~\ref{fig:heatmap_macro_panel} illustrates heatmaps of the estimated shared component $\widehat{\bm A}_0$ and two representative client-specific deviations, $\widehat{\bm \Delta}_{\mathrm{AUS}}$ and $\widehat{\bm \Delta}_{\mathrm{USA}}$. 
The top panel shows that $\widehat{\bm A}_0$ exhibits a clear and non-negligible dependence pattern across variable groups and lag blocks, providing visual evidence of a common dynamic structure shared across countries. Several features are noteworthy.
First, stronger entries appear around export, import, and industrial production variables, particularly in the first lag block, indicating a common short-run predictive linkage between external demand and production activity across countries. 
Second, the rows and columns associated with inflation and housing-related variables display relatively strong and persistent signals, suggesting these variables form an important shared dynamic block. %in the cross-country system. 
Third, %labor-market variables also exhibit visible common dependence patterns: 
unemployment shows several negative coefficients, while employment, hourly earnings, and unit labor cost display more positive interactions with activity and price variables, reflecting coherent labor-market dynamics.
By contrast, the lower two panels reveal that the client-specific deviations are much more localized. Most entries in $\widehat{\bm \Delta}_{\mathrm{AUS}}$ and $\widehat{\bm \Delta}_{\mathrm{USA}}$ are close to zero, with only a limited subset exhibiting noticeable deviations from the shared structure. This pattern visually supports our personalized federated decomposition.
Moreover, the degree of heterogeneity differs across countries. Australia's deviation matrix contains more visible nonzero entries than that of the United States, indicating that Australia requires a richer client-specific correction, whereas U.S. dynamics are comparatively well captured by the common structure with only localized adjustments.

\begin{figure}[H]
    \centering
    \includegraphics[width=1.0\textwidth]{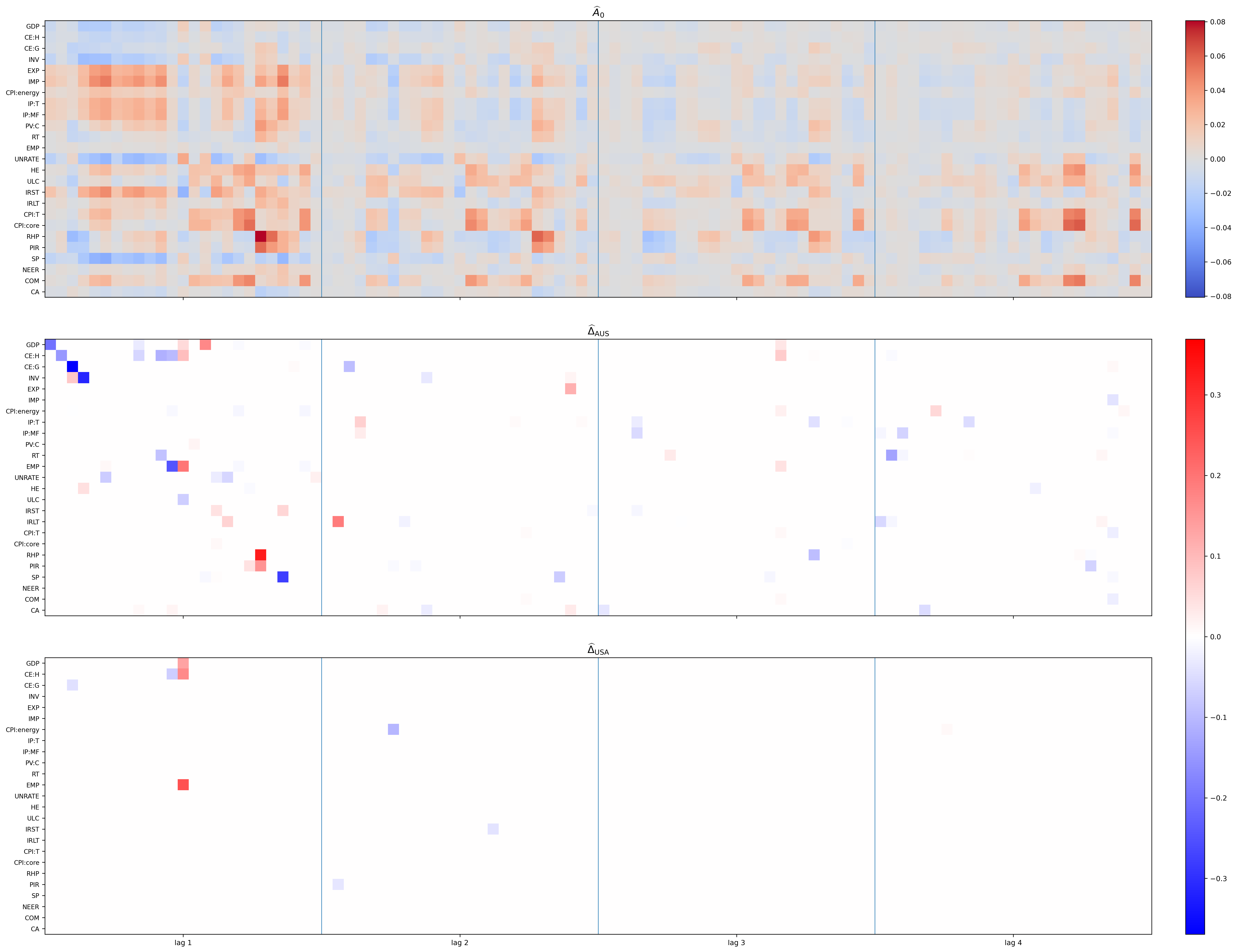}
    \caption{Heatmaps of the estimated shared component $\widehat{\bm A}_0$ and two representative client-specific deviations, $\widehat{\bm \Delta}_{\mathrm{AUS}}$ and $\widehat{\bm \Delta}_{\mathrm{USA}}$, from the federated VAR model under NDP. Columns are grouped by lag order.}
    \label{fig:heatmap_macro_panel}
\end{figure}

\section{Conclusion and Discussion}\label{sec:Conclusions}

This paper develops a privacy-preserving personalized federated learning framework for high-dimensional VAR models. The proposed method leverages a low-rank common structure shared across clients to capture global dynamics, while allowing sparse client-specific deviations to accommodate heterogeneity. The framework operates in two stages: differentially private representation learning for the shared component via noisy gradient aggregation, followed by local personalization for client-specific sparse deviations. We establish non-asymptotic error bounds for both the single-client and federated estimators, and prove consistency of the associated rank selection procedure. A central insight from our theoretical comparison is that the federated estimator outperforms its single-client counterpart when the pooled sample is sufficiently large and client heterogeneity is modest. Simulation results corroborate these findings, showing substantial improvements in estimation accuracy when local sample sizes are limited. 
Empirical applications to state-level electricity–economy linkages and cross-country macroeconomic forecasting further demonstrate the superior forecasting performance of the proposed method, underscoring its practical value in privacy-sensitive and heterogeneous data environments.
%The superior estimation and forecasting performance of the approach is validated through extensive simulations and two empirical applications. 
%Overall, our results provide a principled statistical foundation for privacy-preserving federated forecasting with heterogeneous clients, and suggest that structured representations are a promising route to improving both utility and interpretability under realistic data-sharing constraints.

Several promising directions remain for future research. 
First, the proposed federated learning framework can be extended to more complex time series structures, such as matrix- or tensor-valued time series models, where the shared component may exhibit multi-way low-rank structure. 
Second, given the heavy-tailed nature of many economic and financial time series, integrating robust loss functions or truncation-based methods into the proposed two-stage estimation could improve stability and ensure theoretical guarantees under relaxed moment conditions. 
%one may represent the collection of client- and lag-specific VAR coefficient matrices as a higher-order tensor, and model cross-client commonality and client heterogeneity with a shared low-rank structure and client-specific departures. 
Third, extending the privacy-preserving federated learning framework to time-varying second moments, such as multivariate volatility and covariance dynamics, would enable privacy-aware risk monitoring and portfolio applications using decentralized financial data.
%an important direction is privacy-preserving modeling of time-varying second moments, such as multivariate volatility and covariance dynamics. Extending the selective federated DP framework to volatility models may enable privacy-aware risk monitoring and portfolio applications with decentralized financial data.

\newpage
\appendix
\setcounter{section}{0}
\setcounter{subsection}{0}
\setcounter{subsubsection}{0}

\renewcommand\thesection{S.\arabic{section}}
\renewcommand\thesubsection{S.\arabic{section}.\arabic{subsection}}
\renewcommand\thesubsubsection{S.\arabic{section}.\arabic{subsection}.\arabic{subsubsection}}

\makeatletter
\renewcommand{\theHsection}{appendix.\arabic{section}}
\renewcommand{\theHsubsection}{appendix.\arabic{section}.\arabic{subsection}}
\renewcommand{\theHsubsubsection}{appendix.\arabic{section}.\arabic{subsection}.\arabic{subsubsection}}

\section*{Appendix}
\makeatother
This appendix provides technical details, proofs and additional simulation results supporting the main paper. Section \ref{append:notation and preliminaries} introduces the essential notation and preliminaries. Section \ref{append:Primary lemmas} presents primary lemmas that form the theoretical foundation for our main results. Section \ref{append:proofs of main results} contains proofs of all theorems and propositions. Section \ref{append:proofs of primary lemmas} provides detailed proofs of the primary lemmas from Section \ref{append:Primary lemmas}. Section \ref{append:technical lemmas} collects technical lemmas and provides corresponding proofs for stationary VAR processes that are of independent interest and broadly applicable. Section \ref{sec:ADMM} presents the ADMM algorithm for solving the local estimation problem \eqref{eq:local-estimator}. Finally, Section \ref{sec:add-tables} provides additional data description tables for the empirical studies in Section \ref{sec:RealData}.

	\setcounter{lemma}{0}
	\setcounter{equation}{0}

\section{Notation and Preliminaries}\label{append:notation and preliminaries}
	Throughout the appendix, vectors and matrices are denoted by boldface lowercase and uppercase letters, respectively (e.g., $\bbm a$, $\bbm \xi$, $\bm A$, $\bm \Delta$). For a vector $\bbm{a}$, its Euclidean norm is denoted by $\|\bbm{a}\|_2$. For a matrix $\bm A \in \mathbb R^{m \times n}$, we denote its transpose, Frobenius norm, nuclear norm, and operator norm by $\bm A^\top$, $\|\bm A\|_\F$, $\|\bm A\|_*$, and $\|\bm A\|_{\mathrm{op}}$, respectively. 
	For $q \in [1,\infty)$, the $\ell_q$ norm of $\bm A$ is defined as $\|\bm A\|_q =( \sum_{i=1}^m \sum_{j=1}^n |A_{ij}|^q)^{1/q}$; the same formula for $q \in (0,1)$ defines the entrywise $\ell_q$ quasi-norm. When $q=0$, $\|\bm A\|_0=\#\{(i,j): A_{ij} \neq 0\}$ is the number of nonzero entries of $\bm A$. When $q=\infty$, $\|\bm A\|_\infty = \max_{i,j} |A_{ij}|$ is the entrywise maximum norm.
	Moreover, let $\mathrm{SVD}_r(\bm A) = \sum_{i=1}^r \sigma_i \bbm u_i \bbm v_i^\top$ denote the best rank-$r$ approximation of $\bm A$ obtained from its singular value decomposition, where $\sigma_i$ is the $i$-th singular value, and $\bbm u_i$ and $\bbm v_i$ are the corresponding left and right singular vectors.

	For any subspace $\mathcal S$, let $\overline{\mathcal S}$ denote its closure and $\overline{\mathcal S}^\perp$ its orthogonal complement. For a rank-$r$ matrix $\bm A \in \mathbb R^{n_1 \times n_2}$ with singular value decomposition $\bm A = \bm U \bm\Sigma \bm V^\top$, its tangent space is defined as $\mathcal T_r(\bm A) = \bigl\{\bm \Xi = \bm U \bm R^\top + \bm L \bm V^\top: \bm R \in \mathbb R^{n_2 \times r}, \bm L \in \mathbb R^{n_1 \times r}\bigr\}$, and its orthogonal complement (with respect to the Frobenius inner product) is $\mathcal T_r^\perp(\bm A) = \{ \bm \Xi \in \mathbb R^{n_1 \times n_2} : \bm U^\top \bm \Xi = \bm 0,\ \bm \Xi \bm V = \bm 0 \}$.
	The projection of $\bm B \in \mathbb R^{n_1 \times n_2}$ onto the tangent space $\mathcal T_r(\bm A)$ is denoted as $\mathcal P_{\mathcal T_r(\bm A)}(\bm B) = \bm U \bm U^\top \bm B + \bm B \bm V \bm V^\top - \bm U \bm U^\top \bm B \bm V \bm V^\top$.

	Define the following spaces of square-summable sequences of real vectors:
	$$
		\ell_2 = \left\{(\ldots,\bbm x_{-1}^\top,\bbm x_0^\top,\bbm x_1^\top,\ldots)^\top : \sum_{j\in\mathbb Z} \|\bbm x_j\|_2^2 < \infty\right\},
		\ \text{ and } \ 
		\ell_2^+ = \left\{(\bbm x_0^\top,\bbm x_1^\top,\ldots)^\top : \sum_{j=0}^\infty \|\bbm x_j\|_2^2 < \infty\right\}.
	$$
	Both spaces are equipped with the natural inner product and form Hilbert spaces of square-summable real vector-valued sequences. 
	
	For a random variable $X$ or random vector $\bbm \zeta$, $\|X\|_{\psi_2}$ and $\|\bbm \zeta\|_{\psi_2}$ denote their sub-Gaussian Orlicz norms, respectively. For sequences $\{x_n\}$ and $\{y_n\}$, we write $x_n\gtrsim y_n$ if there exists a constant $C>0$ such that $x_n\geq C y_n$ for all $n$, and $x_n\asymp y_n$ if both $x_n\gtrsim y_n$ and $y_n\gtrsim x_n$ hold.
	Throughout the appendix, $C, C', C_1, c$ and similar symbols denote positive constants whose values may change from line to line unless otherwise specified.

	\section{Primary Lemmas}\label{append:Primary lemmas}

	\setcounter{lemma}{0}
	\setcounter{equation}{0}

	In this section, we present primary lemmas that form the theoretical foundation for our main results. Lemmas~\ref{lem:subg-l2-max-sqrtds} and \ref{lem:gaussian-opnorm-hp-strong} provide high-probability bounds for the $\ell_2$-norm of sub-Gaussian vectors and the spectral norm of Gaussian random matrices, respectively. Lemmas~\ref{lem:normal-second-order} and \ref{lem:tangent-contraction-op} are key technical results that characterize the geometry of low-rank matrix spaces and the contraction properties of projections onto tangent spaces. Lemma~\ref{lem:matrix-pert-lesssim} is a matrix perturbation result that controls the error of low-rank approximations under perturbations. Finally, Lemmas~\ref{lem:var-cov-conc} and \ref{lem:RSC-var} establish concentration inequalities for sample covariance matrices and restricted strong convexity conditions for VAR processes, which are crucial for our error analysis. Detailed proofs of these lemmas are provided in Section \ref{append:proofs of primary lemmas}.

	% ============================================================
	\begin{lemma}[Sub-Gaussian $\ell_2$-norm tail bound]\label{lem:subg-l2-max-sqrtds}
		Let $\bbm\xi_{t}\in\mathbb R^d$ be mean-zero and sub-Gaussian with $\|\bbm\zeta_{t}\|_{\psi_2} \leq K_\zeta$. Then for $T' \leq T$, there exists a constant $C>0$ such that
		\begin{align}\label{eq:max-eps-subg-highprob-sqrtds}
			\mathbb P\left(\max_{1\leq t \leq T'}\|\bbm\zeta_{t}\|_2 \leq C K_\zeta\sqrt{d+\log T}\right) \geq 1- \exp(-C\log T).
		\end{align}
	\end{lemma}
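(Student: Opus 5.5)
The plan is to bound a single vector $\bbm\zeta_t$ by a covering argument and then take a union bound over $t\le T'$. The statement mixes $\bbm\xi_t$ and $\bbm\zeta_t$; I read both as the same mean-zero sub-Gaussian vector with $\|\bbm\zeta_t\|_{\psi_2}\le K_\zeta$. By definition, this means $\|\langle \bbm u,\bbm\zeta_t\rangle\|_{\psi_2}\le K_\zeta$ for every unit vector $\bbm u\in\mathbb R^d$. No independence across $t$ is needed, so the lemma applies directly to VAR innovations.

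\textbf{Step 1: one fixed $t$.} Let $\mathcal N$ be a $1/2$-net of the unit sphere $S^{d-1}$ with $|\mathcal N|\le 5^d$. For any $\bbm x$ we have $\|\bbm x\|_2\le 2\max_{\bbm u\in\mathcal N}\langle \bbm u,\bbm x\rangle$. Each $\langle\bbm u,\bbm\zeta_t\rangle$ is sub-Gaussian with parameter of order $K_\zeta$, so
\[
\mathbb P\bigl(\langle \bbm u,\bbm\zeta_t\rangle > s\bigr)\le 2\exp(-c s^2/K_\zeta^2).
\]
A union bound over the net then gives
\[
\mathbb P\bigl(\|\bbm\zeta_t\|_2> 2s\bigr)\le 2\cdot 5^d\exp(-c s^2/K_\zeta^2).
\]

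\textbf{Step 2: union bound over $t$.} Summing the Step 1 bound over $t\le T'\le T$ bounds the failure probability of the maximum by
\[
2T\,5^d\exp(-c s^2/K_\zeta^2)=2\exp\bigl(\log T+d\log 5-c s^2/K_\zeta^2\bigr).
\]
Choose $s=C_1K_\zeta\sqrt{d+\log T}$ with $C_1$ large enough that $cC_1^2\ge \log 5+2$. Then the exponent is at most $-(d+\log T)\le -\log T$, so the failure probability is at most $2\exp(-\log T)$.

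\textbf{Step 3: absorb constants.} Since the lemma is stated up to a generic constant $C$ (the same symbol is used in two places, as usual in this paper), the bound $2\exp(-\log T)$ can be written as $\exp(-C\log T)$ for $T\ge 3$ with some $C\in(0,1)$. Replacing $2C_1$ by a larger constant where needed then yields \eqref{eq:max-eps-subg-highprob-sqrtds}.

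The only point needing care is this constant bookkeeping: a single symbol $C$ must serve both as the multiplier in the threshold and in the exponent. I would fix the threshold constant large and the exponent constant small, and note that the statement allows this. Everything else is the standard net argument, so I expect no real obstacle.
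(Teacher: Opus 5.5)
Your proposal is correct and follows essentially the same route as the paper. Both arguments use a $1/2$-net of $S^{d-1}$ of size $5^d$, a union bound over the net to control $\|\bbm\zeta_t\|_2$ for a fixed $t$, and a second union bound over $t\le T'\le T$; the paper first fixes the per-$t$ failure probability at $2\exp(-c_2\log T)$ and then sums over $t$, while you fold both union bounds into a single exponent. Your explicit constant bookkeeping is a bit cleaner than the paper's closing ``adjust constants'' step, and, as you note, neither version needs independence across $t$.
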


	% ============================================================
	\begin{lemma}[High-probability spectral-norm bound for an $i.i.d.$ Gaussian matrix]\label{lem:gaussian-opnorm-hp-strong}
	Let $\bm Z\in\mathbb R^{d\times pd}$ have $i.i.d.$ entries $Z_{ij}\sim \mathcal N(0,\sigma^2)$.
	Then there exist constants $c,C>0$ such that
	\[
		\mathbb P\left(\|\bm Z\|_{\op}\leq c\sigma\bigl(\sqrt{pd}+\sqrt{\log T}\bigr)\right) \geq 1-\exp(-C\log T).
	\]
	\end{lemma}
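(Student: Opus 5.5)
The statement is Lemma~\ref{lem:gaussian-opnorm-hp-strong}: a high-probability operator-norm bound for a $d\times pd$ matrix $\bm Z$ with $i.i.d.$ $\mathcal N(0,\sigma^2)$ entries. I would prove it in two steps: first bound the expectation of $\|\bm Z\|_{\op}$, then use Gaussian concentration of this Lipschitz function. After rescaling to $\sigma=1$, Gordon's inequality (Slepian--Gordon comparison) gives
\[
\mathbb E\|\bm Z\|_{\op}\leq \sqrt d+\sqrt{pd}\leq 2\sqrt{pd}.
\]
If I want to avoid citing Gordon, an $\epsilon$-net argument gives the same order up to a constant. Take $1/4$-nets $\mathcal N_1$ of $S^{d-1}$ and $\mathcal N_2$ of $S^{pd-1}$, of cardinalities at most $9^d$ and $9^{pd}$. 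Then
\[
\|\bm Z\|_{\op}\leq 2\max_{\bbm u\in\mathcal N_1,\bbm v\in\mathcal N_2}\bbm u^\top\bm Z\bbm v,
\]
and each $\bbm u^\top\bm Z\bbm v\sim\mathcal N(0,1)$. A union bound then yields the tail
\[
\mathbb P\bigl(\|\bm Z\|_{\op}>C(\sqrt{pd}+s)\bigr)\leq e^{-s^2},
\]
since $(d+pd)\log 9\lesssim pd$ is absorbed into the constant in front of $\sqrt{pd}$.

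For the concentration step, the map $\bm Z\mapsto\|\bm Z\|_{\op}$ is $1$-Lipschitz with respect to the Frobenius norm, i.e.\ the Euclidean norm on the $pd^2$ entries. The Gaussian concentration inequality for Lipschitz functions (Tsirelson--Ibragimov--Sudakov) therefore gives
\[
\mathbb P\bigl(\|\bm Z\|_{\op}\geq \mathbb E\|\bm Z\|_{\op}+s\bigr)\leq \exp(-s^2/2).
\]
Either route leaves us with a tail of the form $\exp(-cs^2)$ beyond $C\sqrt{pd}$.

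To conclude, choose $s=\sqrt{2C'\log T}$, rescale back by $\sigma$, and adjust constants. This gives
\[
\|\bm Z\|_{\op}\leq c\sigma\bigl(\sqrt{pd}+\sqrt{\log T}\bigr)
\]
with probability at least $1-\exp(-C\log T)$, where $c$ is allowed to depend on $C$. No real obstacle is expected. The only points needing care are bookkeeping: checking that the net cardinality is absorbed into the $\sqrt{pd}$ term, which holds since $d\leq pd$, and that the constants $c$ and $C$ in the statement can be chosen jointly.
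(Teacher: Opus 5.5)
Your proposal is correct; it differs from the paper mainly in what is proved and what is cited. The paper's proof is a one-line appeal to Theorem~4.4.5 of Vershynin for matrices with independent mean-zero sub-Gaussian entries, $\mathbb P(\|\mathbf G\|_{\mathrm{op}}>c_0(\sqrt d+\sqrt{pd}+t))\le 2e^{-C_0t^2}$, followed by absorbing $\sqrt d\le\sqrt{pd}$ and setting $t=\sqrt{\log T}$.

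Your $\varepsilon$-net route is essentially the proof of that cited theorem, written out for Gaussian entries. There each $\mathbf u^\top\mathbf Z\mathbf v$ is exactly $\mathcal N(0,1)$, so no sub-Gaussian bookkeeping is needed. The argument is self-contained and would carry over verbatim to sub-Gaussian entries, which is the generality the paper's citation provides.

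Your primary route, Gordon's inequality plus Tsirelson--Ibragimov--Sudakov concentration, is genuinely different. It uses Gaussianity essentially: the comparison inequality and dimension-free Lipschitz concentration are not available in this form for general sub-Gaussian entries. In exchange it gives the sharp leading term $\sqrt d+\sqrt{pd}$ with constant $1$ and the explicit tail $e^{-s^2/2}$, with no net cardinality to absorb.

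Since the lemma only asserts an order-of-magnitude bound with unspecified constants, either route suffices. Your choice $s\asymp\sqrt{\log T}$, with $c$ allowed to depend on $C$, handles the constants correctly.
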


	% ============================================================
	\begin{lemma}[Adapted from \cite{wei2016guarantees}, Lemma~4.1]\label{lem:normal-second-order}
		Let $\bm A\in\mathbb R^{d\times pd}$ have rank $r$.
		Let $\bm A^*\in\mathbb R^{d\times pd}$ have rank $r$ with $\sigma_r(\bm A^*)>0$. Then
		\[
		\bigl\|\mathcal P_{\mathcal T_r^\perp(\bm A)}(\bm A-\bm A^*)\bigr\|_{\F} = \bigl\|\mathcal P_{\mathcal T_r^\perp(\bm A)}(\bm A^*)\bigr\|_{\F} \leq \frac{1}{\sigma_r(\bm A^*)}\|\bm A-\bm A^*\|_{\op}\|\bm A-\bm A^*\|_{\F},
		\]
		where $\mathcal T_r^\perp(\bm A)$ denotes the orthogonal complement of the tangent space $\mathcal T_r(\bm A)$ at $\bm A$.
	\end{lemma}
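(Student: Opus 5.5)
The first step is to reduce the statement to controlling one matrix. Let $\bm U\bm\Sigma\bm V^\top$ be the compact SVD of $\bm A$, so that $\mathcal P_{\mathcal T_r^\perp(\bm A)}(\bm B)=(\bm I-\bm U\bm U^\top)\bm B(\bm I-\bm V\bm V^\top)$. The range of $\bm A$ is the column space of $\bm U$, so $(\bm I-\bm U\bm U^\top)\bm A=\bm 0$. Hence $\mathcal P_{\mathcal T_r^\perp(\bm A)}(\bm A)=\bm 0$. By linearity of the projection this gives the stated equality $\mathcal P_{\mathcal T_r^\perp(\bm A)}(\bm A-\bm A^*)=-\mathcal P_{\mathcal T_r^\perp(\bm A)}(\bm A^*)$. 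It therefore remains to bound $\|(\bm I-\bm U\bm U^\top)\bm A^*(\bm I-\bm V\bm V^\top)\|_{\F}$.

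The key step uses the SVD $\bm A^*=\bm U^*\bm\Sigma^*\bm V^{*\top}$, with $\bm\Sigma^*$ invertible since $\sigma_r(\bm A^*)>0$. Write $\bm A^*=\bm A^*\bm V^*(\bm\Sigma^*)^{-1}\bm U^{*\top}\bm A^*$, which holds because $\bm A^*\bm V^*(\bm\Sigma^*)^{-1}\bm U^{*\top}=\bm U^*\bm U^{*\top}$. Next note that
\[
(\bm I-\bm U\bm U^\top)\bm A^*=(\bm I-\bm U\bm U^\top)(\bm A^*-\bm A)
\quad\text{and}\quad
\bm A^*(\bm I-\bm V\bm V^\top)=(\bm A^*-\bm A)(\bm I-\bm V\bm V^\top),
\]
since $\bm A(\bm I-\bm V\bm V^\top)=\bm 0$. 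Substituting both identities into the factorization gives
\[
\mathcal P_{\mathcal T_r^\perp(\bm A)}(\bm A^*)=(\bm I-\bm U\bm U^\top)(\bm A^*-\bm A)\,\bm V^*(\bm\Sigma^*)^{-1}\bm U^{*\top}\,(\bm A^*-\bm A)(\bm I-\bm V\bm V^\top).
\]

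To conclude, apply $\|\bm X\bm Y\|_{\F}\le\|\bm X\|_{\op}\|\bm Y\|_{\F}$. Put the operator norm on the left factor $(\bm I-\bm U\bm U^\top)(\bm A^*-\bm A)$, which is at most $\|\bm A-\bm A^*\|_{\op}$ because projections are contractions. The middle factor has operator norm $1/\sigma_r(\bm A^*)$. The right factor has Frobenius norm at most $\|\bm A-\bm A^*\|_{\F}$. Multiplying the three bounds gives exactly the stated inequality.

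I expect the main obstacle to be spotting the pseudo-inverse sandwich $\bm A^*=\bm A^*(\bm A^*)^{+}\bm A^*$, which is what makes the perturbation appear twice. Once that identity is in hand, the rest is a direct norm computation.
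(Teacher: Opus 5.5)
Your proof is correct and is essentially the argument behind the cited source. The paper gives no proof of its own and simply adapts \cite{wei2016guarantees}, Lemma~4.1, whose proof uses the same ingredients: the factorization $\bm A^*=\bm A^*\bm V^*(\bm\Sigma^*)^{-1}\bm U^{*\top}\bm A^*$, together with $(\bm I-\bm U\bm U^\top)\bm A=\bm 0$ and $\bm A(\bm I-\bm V\bm V^\top)=\bm 0$, so that $\bm A^*-\bm A$ appears on both sides before splitting the norms as $\|\cdot\|_{\op}\cdot\|\cdot\|_{\op}\cdot\|\cdot\|_{\F}$ to obtain the factor $1/\sigma_r(\bm A^*)$.
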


	% ============================================================
	\begin{lemma}\label{lem:tangent-contraction-op}
		Let $\bm S\in\mathbb R^{pd\times pd}$ be symmetric. Fix any rank-$r$ matrix $\bm A\in\mathbb R^{d\times pd}$. Then for every $\bm M\in\mathcal T_r(\bm A)$,
		\begin{align}\label{eq:tangent-contraction-core-op}
			\Big\|\bm M-2\rho\,\mathcal P_{\mathcal T_r(\bm A)}\!\big(\bm M\bm S\big)\Big\|_{\F} \leq \|\bm I_{pd}-2\rho\,\bm S\|_{\op}\,\|\bm M\|_{\F}.
		\end{align}
	\end{lemma}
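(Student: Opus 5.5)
We need to prove Lemma \ref{lem:tangent-contraction-op}: for $\bm M\in\mathcal T_r(\bm A)$, $\|\bm M-2\rho\mathcal P_{\mathcal T}(\bm M\bm S)\|_\F\le \|\bm I-2\rho\bm S\|_\op\|\bm M\|_\F$.

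The plan is to use two facts: $\mathcal P_{\mathcal T_r(\bm A)}$ is an orthogonal projection in the Frobenius inner product, and $\bm M$ already lies in $\mathcal T_r(\bm A)$. Because $\bm M\in\mathcal T_r(\bm A)$, we have $\mathcal P_{\mathcal T_r(\bm A)}(\bm M)=\bm M$. Since the projection is linear, this gives
\[
\bm M-2\rho\,\mathcal P_{\mathcal T_r(\bm A)}(\bm M\bm S)=\mathcal P_{\mathcal T_r(\bm A)}\bigl(\bm M-2\rho\bm M\bm S\bigr)=\mathcal P_{\mathcal T_r(\bm A)}\bigl(\bm M(\bm I_{pd}-2\rho\bm S)\bigr).
\]
To justify $\mathcal P_{\mathcal T_r(\bm A)}(\bm M)=\bm M$, write $\bm M=\bm U\bm R^\top+\bm L\bm V^\top$ and substitute into the formula $\bm U\bm U^\top\bm B+\bm B\bm V\bm V^\top-\bm U\bm U^\top\bm B\bm V\bm V^\top$, using $\bm U^\top\bm U=\bm I_r$ and $\bm V^\top\bm V=\bm I_r$. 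The cross terms cancel and return $\bm M$.

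Next, we check that $\mathcal P_{\mathcal T_r(\bm A)}$ is a Frobenius orthogonal projection. Set $\bm P_U=\bm U\bm U^\top$ and $\bm P_V=\bm V\bm V^\top$. Then
\[
\mathcal P(\bm B)=\bm B-(\bm I-\bm P_U)\bm B(\bm I-\bm P_V).
\]
The map $\bm B\mapsto(\bm I-\bm P_U)\bm B(\bm I-\bm P_V)$ is self-adjoint and idempotent with respect to $\langle\bm X,\bm Y\rangle=\tr(\bm X^\top\bm Y)$, so it is an orthogonal projection. Hence $\mathcal P$ is the complementary orthogonal projection, and in particular it is non-expansive: $\|\mathcal P(\bm B)\|_\F\le\|\bm B\|_\F$.

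It remains to bound $\|\bm M(\bm I_{pd}-2\rho\bm S)\|_\F$. We use the standard inequality $\|\bm X\bm Y\|_\F\le\|\bm X\|_\F\|\bm Y\|_\op$. One way to see it is row by row: each row $\bbm m_i^\top$ of $\bm M$ satisfies $\|\bbm m_i^\top\bm Y\|_2\le\|\bm Y\|_\op\|\bbm m_i\|_2$, and summing the squares over $i$ gives the claim. Applying this with $\bm X=\bm M$ and $\bm Y=\bm I_{pd}-2\rho\bm S$, and combining with the non-expansiveness of $\mathcal P$, yields \eqref{eq:tangent-contraction-core-op}.

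Symmetry of $\bm S$ is not needed for the inequality itself. It only matters downstream, where $\|\bm I-2\rho\bm S\|_\op=\max_i|1-2\rho\lambda_i(\bm S)|$ is computed from the eigenvalues. No step is a real obstacle; the only point needing care is the identity $\mathcal P(\bm M)=\bm M$ together with linearity, since the projection acts only on $\bm M\bm S$ and not on $\bm M$ in the original expression.
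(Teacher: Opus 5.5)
Your proof is correct, and it is shorter than the paper's. The paper first shows that $\bm M\mapsto\mathcal P_{\mathcal T_r(\bm A)}(\bm M\bm S)$ is self-adjoint on $\mathcal T_r(\bm A)$, which is where symmetry of $\bm S$ is used. It then expands $\bm M$ in an orthonormal eigenbasis $\{\bm U_j\}$ of that map, reducing the claim to $\max_j|1-2\rho\lambda_j|\le\|\bm I_{pd}-2\rho\bm S\|_{\op}$. Only at that point does it write $|1-2\rho\lambda_j|=\|\mathcal P_{\mathcal T_r(\bm A)}(\bm U_j(\bm I_{pd}-2\rho\bm S))\|_{\F}$ and apply non-expansiveness of the projection together with $\|\bm X\bm Y\|_{\F}\le\|\bm X\|_{\F}\|\bm Y\|_{\op}$.

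That last step is exactly your argument, applied to a unit eigenvector instead of to $\bm M$ itself. Applying it directly to $\bm M$ makes the self-adjointness and the eigen-expansion unnecessary. As you observe, it also removes the need for $\bm S$ to be symmetric. Your form $\mathcal P_{\mathcal T_r(\bm A)}(\bm M(\bm I_{pd}-2\rho\bm S))$ is also exactly how the lemma is used for Term I(b) in the proof of Theorem~\ref{thm:common-A0-error-bounds}.

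The spectral route does not give a sharper constant either. For symmetric $\bm S$ the inequality is already tight: take $\bm M=\bm u_1\bm w^\top$, where $\bm u_1$ is a left singular vector of $\bm A$ and $\bm w$ is an eigenvector of $\bm S$ whose eigenvalue $\lambda$ maximizes $|1-2\rho\lambda|$. Then $\bm M\bm S=\lambda\bm M\in\mathcal T_r(\bm A)$, so equality holds. The paper's detour is therefore only conceptual: it identifies the contraction factor as the norm of a self-adjoint operator on the tangent space.
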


	% ============================================================
	\begin{lemma}[Matrix perturbation; Adapted from \cite{shen2025compu}, Lemma 14]\label{lem:matrix-pert-lesssim}
		Suppose $\bm M\in\mathbb R^{d_1\times d_2}$ has rank $r$ and admits the singular value decomposition $\bm M=\bm U\bm\Sigma\bm V^\top$, $\bm\Sigma=\mathrm{diag}(\sigma_1,\sigma_2,\ldots,\sigma_r)$, and $\sigma_1\geq \sigma_2\geq \cdots \geq \sigma_r>0$. For any $\bm M'\in\mathbb R^{d_1\times d_2}$ satisfying $\|\bm M'-\bm M\|_{\F}\leq \sigma_r/8$,
		\[
			\bigl\|\mathrm{SVD}_r(\bm M')-\bm M\bigr\|_{\F} \leq \|\bm M'-\bm M\|_{\F} + \frac{40\|\bm M'-\bm M\|_{\op}\|\bm M'-\bm M\|_{\F}}{\sigma_r}.
		\]
	\end{lemma}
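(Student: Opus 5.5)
The plan is to split $\mathrm{SVD}_r(\bm M')-\bm M$ into a first-order part and a second-order remainder, as in the standard perturbation argument for truncated SVDs. Write $\bm E=\bm M'-\bm M$, and let $\bm U',\bm V'$ be the leading $r$ left and right singular vectors of $\bm M'$. Since $\|\bm E\|_{\op}\le\|\bm E\|_{\F}\le\sigma_r/8$, Weyl's inequality gives $\sigma_r(\bm M')\ge 7\sigma_r/8$ and $\sigma_{r+1}(\bm M')\le\sigma_r/8$. This gap makes the rank-$r$ truncation well defined. It also puts the spectral projectors $\bm P_{U'}=\bm U'\bm U'^\top$ and $\bm P_{V'}=\bm V'\bm V'^\top$ in the Davis--Kahan/Wedin regime, which gives
\[
\|\bm P_{U'}-\bm U\bm U^\top\|_{\op}\lesssim \|\bm E\|_{\op}/\sigma_r \quad\text{and}\quad \|(\bm I-\bm P_{U'})\bm U\|_{\F}\lesssim \|\bm E\|_{\F}/\sigma_r,
\]
and the same bounds on the right side.

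Next I would use the identity $\mathrm{SVD}_r(\bm M')=\bm P_{U'}\bm M'\bm P_{V'}=\bm M'-(\bm I-\bm P_{U'})\bm M'(\bm I-\bm P_{V'})$. Substituting $\bm M'=\bm M+\bm E$ yields
\[
\mathrm{SVD}_r(\bm M')-\bm M=\bm E-(\bm I-\bm P_{U'})\bm M(\bm I-\bm P_{V'})-(\bm I-\bm P_{U'})\bm E(\bm I-\bm P_{V'}).
\]
I would regroup this as $\mathcal P_{\mathcal T}$-type terms plus remainders, so that the first-order term carries Frobenius norm at most $\|\bm E\|_{\F}$ with constant exactly one. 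The natural way is to write
\[
\mathrm{SVD}_r(\bm M')-\bm M=\bm E-(\bm I-\bm P_{U'})\bm M'(\bm I-\bm P_{V'}),
\]
and then bound $\|\mathrm{SVD}_r(\bm M')-\bm M\|_{\F}\le\|\bm E\|_{\F}+\|(\bm I-\bm P_{U'})\bm M'(\bm I-\bm P_{V'})\|_{\F}$. Done naively, this crude triangle step costs a term of order $\|\bm E\|_{\F}$, not a second-order term. The key is that the tail part $(\bm I-\bm P_{U'})\bm M'(\bm I-\bm P_{V'})$ is itself small in the right sense.

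Specifically, I would write $(\bm I-\bm P_{U'})\bm M'(\bm I-\bm P_{V'})=(\bm I-\bm P_{U'})\bm M(\bm I-\bm P_{V'})+(\bm I-\bm P_{U'})\bm E(\bm I-\bm P_{V'})$ and note two things.

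First, $(\bm I-\bm P_{U'})\bm M(\bm I-\bm P_{V'})=(\bm I-\bm P_{U'})\bm U\bm\Sigma\bm V^\top(\bm I-\bm P_{V'})$ has Frobenius norm at most $\sigma_1\|(\bm I-\bm P_{U'})\bm U\|_{\op}\|(\bm I-\bm P_{V'})\bm V\|_{\F}$. This is awkward because of the factor $\sigma_1$. Instead I would use the standard refinement: the eigenvector equations give $(\bm I-\bm P_{U'})\bm U=(\bm I-\bm P_{U'})\bm E\bm V\bm\Sigma^{-1}+\ldots$, which yields a bound of the form $\|\bm E\|_{\op}\|\bm E\|_{\F}/\sigma_r$ with no $\sigma_1$.

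Second, the piece $(\bm I-\bm P_{U'})\bm E(\bm I-\bm P_{V'})$ is not second order, but it is orthogonal to the first-order correction in the appropriate sense. So rather than separating it, I would apply Pythagoras: $\bm E-(\bm I-\bm P_{U'})\bm E(\bm I-\bm P_{V'})$ is a projection of $\bm E$, with Frobenius norm at most $\|\bm E\|_{\F}$. That leaves only $(\bm I-\bm P_{U'})\bm M(\bm I-\bm P_{V'})$ as the remainder.

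The main obstacle is therefore the sharp second-order bound
\[
\|(\bm I-\bm P_{U'})\bm M(\bm I-\bm P_{V'})\|_{\F}\le C\|\bm E\|_{\op}\|\bm E\|_{\F}/\sigma_r
\]
with an explicit constant at most $40$. I would first try the route of Lemma~\ref{lem:normal-second-order}, applying it with $\bm A=\mathrm{SVD}_r(\bm M')$ (after the sinΘ bounds), or more directly the identity
\[
(\bm I-\bm P_{U'})\bm U\bm\Sigma=(\bm I-\bm P_{U'})(\bm M'-\bm E)\bm V=(\bm I-\bm P_{U'})\bm M'\bm V-(\bm I-\bm P_{U'})\bm E\bm V.
\]
From this, $(\bm I-\bm P_{U'})\bm M'\bm V=(\bm I-\bm P_{U'})\bm M'(\bm I-\bm P_{V'})\bm V$ has norm at most $\sigma_{r+1}(\bm M')\,\|(\bm I-\bm P_{V'})\bm V\|\le\|\bm E\|_{\op}\|(\bm I-\bm P_{V'})\bm V\|$. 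Combining this with the Wedin bounds and $\sigma_r(\bm M')\ge7\sigma_r/8$ gives the product bound. Tracking the numerical factors (each at most about $8/7$ or $2$), I expect them to stay comfortably below $40$.
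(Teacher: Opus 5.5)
Your proof is correct, and it takes a genuinely different route from the paper, which gives no argument at all: Lemma~\ref{lem:matrix-pert-lesssim}, constant $40$ included, is simply imported from \cite{shen2025compu}, Lemma~14. Your argument is self-contained.

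Write $\mathcal T=\mathcal T_r(\mathrm{SVD}_r(\bm M'))$. Then your identity reads $\mathrm{SVD}_r(\bm M')-\bm M=\mathcal P_{\mathcal T}(\bm E)-\mathcal P_{\mathcal T^\perp}(\bm M)$, and the two pieces are Frobenius-orthogonal. So the first-order coefficient is exactly one, and everything reduces to the second-order term. Your one-sided identity is precisely what removes $\sigma_1$. It gives $\|(\bm I-\bm P_{U'})\bm U\bm\Sigma\|_{\op}\le\sigma_{r+1}(\bm M')+\|\bm E\|_{\op}\le 2\|\bm E\|_{\op}$, hence $\|\mathcal P_{\mathcal T^\perp}(\bm M)\|_{\F}\le 2\|\bm E\|_{\op}\|(\bm I-\bm P_{V'})\bm V\|_{\F}$.

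The constant tracking you left vague closes without quoting Wedin. Set $a=\|(\bm I-\bm P_{U'})\bm U\|_{\F}$ and $b=\|(\bm I-\bm P_{V'})\bm V\|_{\F}$. The same identity applied on each side gives $\sigma_r b\le\sigma_{r+1}(\bm M')a+\|\bm E\|_{\F}$ and $\sigma_r a\le\sigma_{r+1}(\bm M')b+\|\bm E\|_{\F}$. Since $\sigma_{r+1}(\bm M')\le\sigma_r/8$, this yields $\max(a,b)\le\tfrac{8}{7}\|\bm E\|_{\F}/\sigma_r$, and the constant becomes $16/7$ instead of $40$.

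The route you mentioned first is shorter still and needs no $\sin\Theta$ bound. Apply Lemma~\ref{lem:normal-second-order} with $\bm A=\mathrm{SVD}_r(\bm M')$ and $\bm A^*=\bm M$, and use $\|\mathrm{SVD}_r(\bm M')-\bm M\|_{\op}\le\sigma_{r+1}(\bm M')+\|\bm E\|_{\op}\le 2\|\bm E\|_{\op}$. For $x=\|\mathrm{SVD}_r(\bm M')-\bm M\|_{\F}$ this gives the self-bounding inequality $x\le\|\bm E\|_{\F}+2\|\bm E\|_{\op}x/\sigma_r$. Because $\|\bm E\|_{\op}\le\sigma_r/8$, it follows that $x\le\|\bm E\|_{\F}+\tfrac{8}{3}\|\bm E\|_{\op}\|\bm E\|_{\F}/\sigma_r$.

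The citation buys brevity. Your argument makes the lemma verifiable within the paper's own toolkit and gives a much sharper constant.
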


	% ============================================================
	\begin{lemma}\label{lem:var-cov-conc}
		Suppose Assumptions~\ref{assump:stationarity} and \ref{assump:subg} hold. Denote $\bm X_k = [\bbm x_{k,T_k}, \bbm x_{k,T_k-1}, \ldots, \bbm x_{k,1}]^\top\in\mathbb R^{T_k\times pd}$ and define $\bbm \Sigma_{x,k} = \mathbb E\bbm x_{k,t}\bbm x_{k,t}^\top$. If $T_k\gtrsim p^2 d$, then there exists a constant $C>0$ such that, with probability at least $1-\exp(-Cpd)$,
		\begin{align}\label{eq:var-cov-op-conc}
			\left\|\frac{1}{T_k}\bm X_k\bm X_k^\top - \bbm \Sigma_{x,k}\right\|_{\op} \lesssim C_{\epsilon,\mathcal A}^{(k)}\sigma^2p\sqrt{\frac{d}{T_k}},
		\end{align}
		where $C_{\epsilon,\mathcal A}^{(k)} = \lambda_{\max}(\bbm{\Sigma}_{\epsilon,k})\mu_{\min}^{-1}(\mathcal{A}_k)$.
	\end{lemma}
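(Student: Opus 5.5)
The plan is to reduce the operator-norm bound to a quadratic-form bound for each fixed direction, using a Hanson--Wright inequality, and then take a covering-net argument over the unit sphere of $\mathbb R^{pd}$. I read the statistic as the $pd\times pd$ sample covariance $T_k^{-1}\bm X_k^\top\bm X_k$, which is the matrix comparable with $\bbm\Sigma_{x,k}$.

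\textbf{Step 1 (reduction).} Fix a unit vector $\bbm v=(\bbm v_1^\top,\ldots,\bbm v_p^\top)^\top\in\mathbb R^{pd}$. Then
\[
\bbm v^\top\bigl(T_k^{-1}\bm X_k^\top\bm X_k-\bbm\Sigma_{x,k}\bigr)\bbm v=T_k^{-1}\bigl(\|\bm X_k\bbm v\|_2^2-\mathbb E\|\bm X_k\bbm v\|_2^2\bigr).
\]
Here $\bm X_k\bbm v=(z_{T_k},\ldots,z_1)^\top$ with $z_t=\sum_{j=1}^p\bbm v_j^\top\bbm y_{k,t-j}$. By Assumption~\ref{assump:stationarity}, $\bbm y_{k,t}$ has a causal MA$(\infty)$ representation in the $\bbm\xi_{k,s}$, so $\bm X_k\bbm v=\bm R_{\bbm v}\bbm\xi$. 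In this representation $\bbm\xi$ is the stacked innovation sequence, viewed as an element of $\ell_2^+$ or truncated with a vanishing remainder, and $\bm R_{\bbm v}$ is a bounded linear operator.

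\textbf{Step 2 (norms of $\bm R_{\bbm v}$).} I will bound the norms of $\bm R_{\bbm v}$ through the spectral density of the scalar process $z_t$.
\begin{itemize}
\item The spectral density is $f_z(\theta)=\bbm v(\theta)^\dagger f_y(\theta)\bbm v(\theta)$, where $\bbm v(\theta)=\sum_j\bbm v_je^{\mathrm i j\theta}$.
\item Since $\|f_y\|\le \lambda_{\max}(\bbm\Sigma_{\epsilon,k})/(2\pi\mu_{\min}(\mathcal A_k))$ and $\|\bbm v(\theta)\|_2^2\le(\sum_j\|\bbm v_j\|_2)^2\le p$, we get $\sup_\theta 2\pi f_z(\theta)\le pC_{\epsilon,\mathcal A}^{(k)}$.
\item The operator norm satisfies $\|\bm R_{\bbm v}\|_{\op}^2\le\lambda_{\max}(\mathrm{Toep}(f_z))\le pC_{\epsilon,\mathcal A}^{(k)}$.
\item The Frobenius norm satisfies $\|\bm R_{\bbm v}\|_{\F}^2=T_k\mathbb E z_t^2\le T_kpC_{\epsilon,\mathcal A}^{(k)}$.
\item Consequently $\|\bm R_{\bbm v}^\top\bm R_{\bbm v}\|_{\F}^2\le T_k(pC_{\epsilon,\mathcal A}^{(k)})^2$.
\end{itemize}

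\textbf{Step 3 (Hanson--Wright).} Assumption~\ref{assump:subg} gives independent $\sigma^2$-sub-Gaussian coordinates, so Hanson--Wright applies:
\[
\mathbb P\bigl(|\bbm\xi^\top\bm R_{\bbm v}^\top\bm R_{\bbm v}\bbm\xi-\mathbb E(\cdot)|>T_k\eta\bigr)\le2\exp\Bigl\{-c\min\Bigl(\frac{T_k\eta^2}{\sigma^4p^2C^2},\frac{T_k\eta}{\sigma^2pC}\Bigr)\Bigr\},
\]
with $C=C_{\epsilon,\mathcal A}^{(k)}$.

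\textbf{Step 4 (net and union bound).} Take a $1/4$-net $\mathcal N$ of the unit sphere in $\mathbb R^{pd}$, with $|\mathcal N|\le9^{pd}$. The standard argument gives $\|\bm M\|_{\op}\le2\max_{\bbm v\in\mathcal N}|\bbm v^\top\bm M\bbm v|$ for symmetric $\bm M$. The condition $T_k\gtrsim p^2d$ ensures the sub-Gaussian branch of the minimum is active. Choosing $\eta$ so that the exponent is of order $pd$ and taking a union bound over $\mathcal N$ yields a failure probability of $\exp(-Cpd)$.

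\textbf{Main obstacle.} The crude computation in Step~4 gives $\eta\asymp\sigma^2pC_{\epsilon,\mathcal A}^{(k)}\sqrt{pd/T_k}$, which carries an extra $\sqrt p$ compared with the claimed $p\sqrt{d/T_k}$. I would try to remove it by exploiting the block structure of $\bbm x_{k,t}$. Each block $T_k^{-1}\sum_t\bbm y_{k,t-i}\bbm y_{k,t-j}^\top$ is a $d\times d$ lagged cross-covariance. I would bound each block separately with a net on $S^{d-1}\times S^{d-1}$, of cardinality $e^{Cd}$. A polarization identity reduces each block to quadratic forms in the single process $\bbm y_{k,t}$, whose spectral density is bounded by $C_{\epsilon,\mathcal A}^{(k)}$ with no factor $p$. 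This gives each block a deviation of order $\sigma^2C\sqrt{d/T_k}$. Summing over the $p\times p$ blocks via $\|\bm M\|_{\op}\le p\max_{i,j}\|\bm M_{ij}\|_{\op}$ gives the stated $C\sigma^2p\sqrt{d/T_k}$. The union bound over the $p^2$ blocks stays of order $\exp(-Cd+2\log p)$; to reach $\exp(-Cpd)$ the per-block deviation parameter is chosen proportionally larger. I expect this to be absorbed by the condition $T_k\gtrsim p^2d$ and the constant, and checking that bookkeeping is the delicate step.
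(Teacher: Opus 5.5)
As written, Steps 1--4 give only $\sigma^2C_{\epsilon,\mathcal A}^{(k)}p\sqrt{pd/T_k}$, and the repair you propose does not close the gap. However, the extra $\sqrt p$ comes from a loose bound in Step 2, not from an intrinsic obstacle.

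\textbf{The error in Step 2.} Your bound $\|\bm R_{\bbm v}\|_{\F}^2\le T_kpC_{\epsilon,\mathcal A}^{(k)}$ uses the supremum of $\|\bbm v(\theta)\|_2^2$. But $\|\bm R_{\bbm v}\|_{\F}^2=T_k\mathbb E z_t^2=T_k\int_{-\pi}^{\pi}f_z(\theta)\,d\theta$ involves only its average, which equals $\|\bbm v\|_2^2=1$ by Parseval. Equivalently, $\mathbb E z_t^2=\bbm v^\top\bbm\Sigma_{x,k}\bbm v\le C_{\epsilon,\mathcal A}^{(k)}$ by Lemma~\ref{lem:Sigma-x-pd}. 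Only the operator norm carries the factor $p$.

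\textbf{The corrected argument.} With this fix, $\|\bm R_{\bbm v}^\top\bm R_{\bbm v}\|_{\F}^2\le\|\bm R_{\bbm v}\|_{\op}^2\|\bm R_{\bbm v}\|_{\F}^2\le T_kp(C_{\epsilon,\mathcal A}^{(k)})^2$. The sub-Gaussian branch in Step 3 becomes $T_k\eta^2/(\sigma^4p(C_{\epsilon,\mathcal A}^{(k)})^2)$. Taking $\eta=c_0C_{\epsilon,\mathcal A}^{(k)}\sigma^2p\sqrt{d/T_k}$ gives exponent $\min\{c_0^2pd,\,c_0\sqrt{dT_k}\}\gtrsim pd$ under $T_k\gtrsim p^2d$, which beats $|\mathcal N|\le 9^{pd}$.

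The corrected argument is exactly the paper's proof: a $1/4$-net plus Lemma~\ref{lemma:HW-uXTXu}. There $\|\bm Q\|_{\op}\le C_{\epsilon,\mathcal A}p$, and $\|\bm Q\|_{\F}\le\lambda_{\max}(\bbm\Sigma_\epsilon)\|\bm H_u\|_{\op}\|\bm H_u\|_{\F}\le C_{\epsilon,\mathcal A}\sqrt{pT}$, because $\|\bm H_u\|_{\F}^2=T\sum_m\|\bbm c_m\|_2^2\le T/\mu_{\min}(\mathcal A)$. Your reading of the statistic as $T_k^{-1}\bm X_k^\top\bm X_k$ is the right one.

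\textbf{Why the block-wise fix fails.} It cannot reach the stated probability level. Write $C=C_{\epsilon,\mathcal A}^{(k)}$ as you do. A polarized block form (unit $\bbm a$ at lag $i$, $\pm\bbm b$ at lag $j$) has $\|\bm Q\|_{\op}\lesssim C$ and $\|\bm Q\|_{\F}\lesssim C\sqrt{T_k}$. So a per-block deviation $\eta$ fails with exponent $\min\{T_k\eta^2/(\sigma^4C^2),\,T_k\eta/(\sigma^2C)\}$.
\begin{itemize}
\item For a union over $p^2$ blocks and $e^{O(d)}$ net points to fail with probability $e^{-cpd}$, this exponent must be $\gtrsim pd$. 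That forces $\eta\gtrsim\sigma^2C\sqrt{pd/T_k}$, and then $p\max_{i,j}\|\bm M_{ij}\|_{\op}$ is back to $\sigma^2Cp\sqrt{pd/T_k}$.
\item Keeping $\eta\asymp\sigma^2C\sqrt{d/T_k}$ gives the right rate, but only with probability $1-p^2e^{-cd}$.
\end{itemize}
The condition $T_k\gtrsim p^2d$ only selects the sub-Gaussian branch of the minimum. It cannot absorb a $\sqrt p$ in the rate, so the bookkeeping you flagged as delicate does break. Once Step 2 is corrected, no block decomposition is needed.
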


	% ============================================================
	\begin{lemma}\label{lem:RSC-var}
		Suppose Assumptions~\ref{assump:stationarity} and \ref{assump:subg} hold. Denote $\bm X_k = [\bbm x_{k,T_k}, \bbm x_{k,T_k-1}, \ldots, \bbm x_{k,1}]^\top\in\mathbb R^{T_k\times pd}$ and $\bm E_k = [\bbm\epsilon_{k,T_k}, \bbm\epsilon_{k,T_k-1}, \ldots, \bbm\epsilon_{k,1}]^\top\in\mathbb R^{T_k\times d}$. If $T_k\gtrsim(C_{\epsilon,\mathcal A}^{(k)}/C_{\mathrm{RSC}}^{(k)} \vee 1)^2p^2d$ with $C_{\epsilon,\mathcal A}^{(k)} = \lambda_{\max}(\bbm{\Sigma}_{\epsilon,k})\mu_{\min}^{-1}(\mathcal{A}_k)$ and $C_{\mathrm{RSC}}^{(k)} = c_{\epsilon,\mathcal A}^{(k)}/2 = \lambda_{\min}(\bbm{\Sigma}_{\epsilon,k})\mu_{\max}^{-1}(\mathcal{A}_k)/2$, then there exists a constant $C>0$ such that, with probability at least $1-\exp(-C pd)$,
		\begin{align}\label{eq:RSC-goal}
			\frac{1}{T_k}\sum_{t=1}^{T_k}\|\bm\Delta\bbm x_{k,t}\|_2^2 \geq C_{\mathrm{RSC}}^{(k)}\|\bm\Delta\|_{\mathrm F}^2, \quad \forall \bm\Delta\in\mathbb R^{d\times pd}.
		\end{align}
	\end{lemma}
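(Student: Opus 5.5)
The plan is to reduce the uniform restricted strong convexity statement to an eigenvalue bound for the sample covariance of the lagged design. Write $\widehat{\bbm\Sigma}_{x,k}=T_k^{-1}\sum_{t=1}^{T_k}\bbm x_{k,t}\bbm x_{k,t}^\top$, which is the $pd\times pd$ matrix $T_k^{-1}\bm X_k^\top\bm X_k$; this is the object bounded in Lemma~\ref{lem:var-cov-conc}. For every $\bm\Delta\in\mathbb R^{d\times pd}$ we have the exact identity
\[
\frac{1}{T_k}\sum_{t=1}^{T_k}\|\bm\Delta\bbm x_{k,t}\|_2^2=\tr\bigl(\bm\Delta\widehat{\bbm\Sigma}_{x,k}\bm\Delta^\top\bigr).
\]
Hence the quadratic form is bounded below by $\lambda_{\min}(\widehat{\bbm\Sigma}_{x,k})\|\bm\Delta\|_\F^2$, and this bound holds simultaneously for all $\bm\Delta$. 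It therefore suffices to show that $\lambda_{\min}(\widehat{\bbm\Sigma}_{x,k})\geq C_{\mathrm{RSC}}^{(k)}=c_{\epsilon,\mathcal A}^{(k)}/2$ with probability at least $1-\exp(-Cpd)$. Because the reduction goes through a single eigenvalue, no union bound or covering argument over $\bm\Delta$ is needed.

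\emph{Step 1: population lower bound.} I will show $\lambda_{\min}(\bbm\Sigma_{x,k})\geq c_{\epsilon,\mathcal A}^{(k)}=\lambda_{\min}(\bbm\Sigma_{\epsilon,k})/\mu_{\max}(\mathcal A_k)$. Under Assumption~\ref{assump:stationarity}, $\{\bbm y_{k,t}\}$ has spectral density
\[
f(\theta)=(2\pi)^{-1}\mathcal A_k^{-1}(e^{-i\theta})\bbm\Sigma_{\epsilon,k}\mathcal A_k^{-\dagger}(e^{-i\theta}),
\]
whose eigenvalues are at least $(2\pi)^{-1}\lambda_{\min}(\bbm\Sigma_{\epsilon,k})/\mu_{\max}(\mathcal A_k)$ for every $\theta$. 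The matrix $\bbm\Sigma_{x,k}$ is the $p\times p$ block Toeplitz autocovariance matrix of $\bbm y_{k,t}$. The standard inequality $\lambda_{\min}(\bbm\Sigma_{x,k})\geq 2\pi\operatorname{ess\,inf}_\theta\lambda_{\min}(f(\theta))$ (Proposition~2.3 of \citet{basu2015regularized}) then gives the claimed bound. To see that inequality directly, take a unit vector $\bbm v=(\bbm v_1^\top,\ldots,\bbm v_p^\top)^\top$ and write
\[
\bbm v^\top\bbm\Sigma_{x,k}\bbm v=\int_{-\pi}^{\pi}\bbm g(\theta)^\dagger f(\theta)\bbm g(\theta)\,d\theta,\qquad \bbm g(\theta)=\sum_{j=1}^p\bbm v_je^{-ij\theta}.
\]
Parseval gives $\int_{-\pi}^{\pi}\|\bbm g(\theta)\|_2^2\,d\theta=2\pi$, which yields the bound.

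\emph{Step 2: concentration.} By Weyl's inequality,
\[
\lambda_{\min}(\widehat{\bbm\Sigma}_{x,k})\geq\lambda_{\min}(\bbm\Sigma_{x,k})-\|\widehat{\bbm\Sigma}_{x,k}-\bbm\Sigma_{x,k}\|_{\op}.
\]
Lemma~\ref{lem:var-cov-conc} bounds the deviation by $C'C_{\epsilon,\mathcal A}^{(k)}\sigma^2p\sqrt{d/T_k}$ with probability at least $1-\exp(-Cpd)$, provided $T_k\gtrsim p^2d$. Under the sample-size condition $T_k\gtrsim (C_{\epsilon,\mathcal A}^{(k)}/C_{\mathrm{RSC}}^{(k)}\vee1)^2p^2d$, with the hidden constant chosen large enough to absorb $C'$ and $\sigma^2$, this deviation is at most $C_{\mathrm{RSC}}^{(k)}=c_{\epsilon,\mathcal A}^{(k)}/2$. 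The same condition also implies $T_k\gtrsim p^2d$, so Lemma~\ref{lem:var-cov-conc} applies. Combining with Step 1 gives $\lambda_{\min}(\widehat{\bbm\Sigma}_{x,k})\geq c_{\epsilon,\mathcal A}^{(k)}-c_{\epsilon,\mathcal A}^{(k)}/2=C_{\mathrm{RSC}}^{(k)}$, which is \eqref{eq:RSC-goal}.

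\emph{Main obstacle.} I expect the delicate step to be Step 1, specifically the claim that the spectral lower bound transfers to the stacked $pd$-dimensional lag vector without losing a factor of $p$. The Parseval computation above is what I would rely on for this. Everything else is bookkeeping of constants, chiefly making sure $\sigma^2$ is absorbed into the sample-size constant so the stated threshold is met.
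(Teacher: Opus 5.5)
Your proposal is correct and follows the same route as the paper. You reduce to $\lambda_{\min}\bigl(T_k^{-1}\bm X_k^\top\bm X_k\bigr)$ via the trace identity, lower-bound $\lambda_{\min}(\bbm\Sigma_{x,k})$ by $c_{\epsilon,\mathcal A}^{(k)}$, and use Weyl's inequality to absorb the deviation from Lemma~\ref{lem:var-cov-conc} into $C_{\mathrm{RSC}}^{(k)}$. The one cosmetic difference is that you derive the population bound through the spectral density, whereas the paper cites Lemma~\ref{lem:Sigma-x-pd}; that lemma's proof is the same Parseval computation carried out on the VMA coefficients $\bbm c_m$.
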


	% ============================================================
	\begin{lemma}\label{lem:var-XtE-infty-bound}
		Suppose Assumptions~\ref{assump:stationarity} and \ref{assump:subg} hold. Denote $\bm X_k = [\bbm x_{k,T_k}, \bbm x_{k,T_k-1}, \ldots, \bbm x_{k,1}]^\top\in\mathbb R^{T_k\times pd}$ and $\bm E_k = [\bbm\epsilon_{k,T_k}, \bbm\epsilon_{k,T_k-1}, \ldots, \bbm\epsilon_{k,1}]^\top\in\mathbb R^{T_k\times d}$. If $T_k\gtrsim p\log(pd)$, then there exists a constant $C>0$ such that, with probability at least $1-\exp(-C\log(pd))$,
		\begin{align}\label{eq:var-XtE-infty-bound-simplified}
			\Bigl\|\frac{1}{T_k}\bm X_k^{\top}\bm E_k\Bigr\|_{\infty}\lesssim C_{\epsilon,\mathcal A}^{(k)}\sigma^2\sqrt{\frac{\log(pd)}{T_k}},
		\end{align}
		where $C_{\epsilon,\mathcal A}^{(k)} = \lambda_{\max}(\bbm{\Sigma}_{\epsilon,k})\mu_{\min}^{-1}(\mathcal{A}_k)$.
	\end{lemma}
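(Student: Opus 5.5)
We prove Lemma~\ref{lem:var-XtE-infty-bound} entrywise and then take a union bound. Write $\bbm x_{k,t}=[\bbm y_{k,t-1}^\top,\ldots,\bbm y_{k,t-p}^\top]^\top$. A generic entry of $T_k^{-1}\bm X_k^\top\bm E_k$ is
\[
T_k^{-1}\sum_{t=1}^{T_k} x_{k,t,i}\,\epsilon_{k,t,j},\qquad i\in[pd],\ j\in[d].
\]
The key structural fact is that $x_{k,t,i}$ depends only on $\{\bbm\epsilon_{k,s}\}_{s\le t-1}$ and the initial values. It is therefore independent of $\bbm\epsilon_{k,t}$, so the sum is a martingale in $t$.

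Rather than a martingale Bernstein argument, I would use the polarization trick of \citet{basu2015regularized}. Write $\epsilon_{k,t,j}=\bbm e_j^\top\bbm\epsilon_{k,t}$ and $x_{k,t,i}=\bbm e_i^\top\bbm x_{k,t}$. Stack $\bbm z_{k,t}=(\bbm x_{k,t}^\top,\bbm\epsilon_{k,t}^\top)^\top$, which is itself a stationary linear process driven by the sub-Gaussian innovations. For unit vectors $\bbm u=(\bbm e_i,\bm 0)$ and $\bbm v=(\bm 0,\bbm e_j)$, the entry equals $\bbm u^\top\bigl(T_k^{-1}\sum_t\bbm z_{k,t}\bbm z_{k,t}^\top\bigr)\bbm v$. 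The cross covariance $\mathbb E\, x_{k,t,i}\epsilon_{k,t,j}=0$ by the independence noted above. Polarization then gives
\[
2\bbm u^\top\widehat{\bm\Gamma}\bbm v=\bigl[(\bbm u+\bbm v)^\top\widehat{\bm\Gamma}(\bbm u+\bbm v)-\mathbb E\bigr]-\bigl[\bbm u^\top\widehat{\bm\Gamma}\bbm u-\mathbb E\bigr]-\bigl[\bbm v^\top\widehat{\bm\Gamma}\bbm v-\mathbb E\bigr],
\]
so it suffices to control the deviation of a quadratic form $T_k^{-1}\sum_t(\bbm w^\top\bbm z_{k,t})^2$ for a fixed unit direction $\bbm w$.

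For a fixed direction I would apply Hanson--Wright. Write $\bbm w^\top\bbm z_{k,t}$ as a linear functional of the infinite innovation sequence $\{\bbm\xi_{k,s}\}$ through the MA($\infty$) representation guaranteed by Assumption~\ref{assump:stationarity}. The vector $(\bbm w^\top\bbm z_{k,t})_{t\le T_k}$ is then $\bm Q\bbm\xi$ for an operator $\bm Q$. Its covariance $\bm Q\bm Q^\top$ has operator norm bounded by $2\pi$ times the maximal spectral density of the stacked process. By the standard spectral-density bound (Proposition~2.3 of \citet{basu2015regularized}), this is $\lesssim\lambda_{\max}(\bm\Sigma_{\epsilon,k})/\mu_{\min}(\mathcal A_k)+\lambda_{\max}(\bm\Sigma_{\epsilon,k})\lesssim C_{\epsilon,\mathcal A}^{(k)}$, where $\mu_{\min}\le$ const is absorbed in the usual way. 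Hanson--Wright with the independent $\sigma^2$-sub-Gaussian coordinates from Assumption~\ref{assump:subg} yields
\[
\mathbb P\Bigl(\bigl|T_k^{-1}\textstyle\sum_t(\bbm w^\top\bbm z_{k,t})^2-\mathbb E\bigr|>2\pi C_{\epsilon,\mathcal A}^{(k)}\sigma^2\eta\Bigr)\le 2\exp\{-cT_k\min(\eta,\eta^2)\}.
\]
The main technical care lies in this step: the lag structure inflates the spectral density of $\bbm x_{k,t}$ by at most a factor of order $p$ in general. However, because each $\bbm w$ is a coordinate vector or a sum of two of them, a single coordinate process's spectral density suffices. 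This avoids the extra $p$ factor, which otherwise appears in Lemma~\ref{lem:var-cov-conc} and is the reason the condition there is $T_k\gtrsim p^2d$.

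Finally, I would take $\eta=C\sqrt{\log(pd)/T_k}$. This is at most $1$ once $T_k\gtrsim\log(pd)$, and the stated condition $T_k\gtrsim p\log(pd)$ covers this with room to spare. A union bound over the $pd\cdot d\le (pd)^2$ entries and the three polarization terms gives failure probability $6(pd)^2\exp(-cC^2\log(pd))\le\exp(-C'\log(pd))$ for $C$ large. This delivers \eqref{eq:var-XtE-infty-bound-simplified}.
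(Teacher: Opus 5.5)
Your argument is correct, but it is organized differently from the paper's.

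The paper does not polarize. It writes each entry $\bbm e_i^\top\bm X_k^\top\bm E_k\bbm e_j$ directly as a non-symmetric quadratic form $\bm Z_T^\top\bm Q_{u,v}\bm Z_T$ in the infinite innovation sequence and symmetrizes it. It then applies the infinite-dimensional Hanson--Wright inequality (Lemmas~\ref{lemma:HW-uXEv} and~\ref{lemma:infinite-HW}) with $\|\bm Q_{u,v}\|_{\F}\lesssim C_{\epsilon,\mathcal A}^{(k)}\sqrt{T_k}$ and $\|\bm Q_{u,v}\|_{\op}\lesssim C_{\epsilon,\mathcal A}^{(k)}\sqrt p$. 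The same threshold choice and a union bound over the $pd^2$ entries finish the proof.

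That is one Hanson--Wright bound per entry, taken from a lemma stated uniformly in $\bbm u$. The uniformity is what costs the $\sqrt p$, through $\|\bm U_p(\theta)\|_2\le\sqrt p\|\bbm u\|_2$, and hence the hypothesis $T_k\gtrsim p\log(pd)$. Your route needs three quadratic-form bounds per entry. Because each direction involves only one lagged coordinate of $\bbm y_{k,t}$ and one innovation coordinate, it works under the weaker condition $T_k\gtrsim\log(pd)$. The gain really comes from using coordinate directions: for $\bbm u=\bbm e_i$ the paper's own bound also has $\|\bm U_p(\theta)\|_2=1$.

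Two steps you pass over quickly deserve a line each.

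\emph{Infinite-dimensional Hanson--Wright.} With sub-Gaussian but non-Gaussian innovations, $(\bbm w^\top\bbm z_{k,t})_{t\le T_k}$ is a linear image of infinitely many independent coordinates. The finite-dimensional Gaussian covariance-square-root device of \citet{basu2015regularized} therefore does not apply. You need the truncation version of Hanson--Wright, as in Lemma~\ref{lemma:infinite-HW}.

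\emph{Absorbing the innovation variance.} Absorbing $\lambda_{\max}(\bm\Sigma_{\epsilon,k})$ into $C_{\epsilon,\mathcal A}^{(k)}$ uses $\mu_{\min}(\mathcal A_k)\le 1$. This follows from $\bm I_d=\bm\Psi_{k,0}=(2\pi)^{-1}\int_{-\pi}^{\pi}\mathcal A_k^{-1}(e^{i\theta})\,d\theta$. That identity gives $1\le\max_{|z|=1}\|\mathcal A_k^{-1}(z)\|_{\op}=\mu_{\min}(\mathcal A_k)^{-1/2}$ by Lemma~\ref{lemma:ar-polynomial}. The paper relies on the same fact implicitly when it turns $\lambda_{\max}(\bm\Sigma_{\epsilon})\sqrt{p/\mu_{\min}(\mathcal A)}$ into $C_{\epsilon,\mathcal A}\sqrt p$.
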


	% ============================================================
	\begin{lemma}\label{lem:var-XtE-op-bound}
		Suppose Assumptions~\ref{assump:stationarity} and \ref{assump:subg} hold. Denote $\bm X_k = [\bbm x_{k,T_k}, \bbm x_{k,T_k-1}, \ldots, \bbm x_{k,1}]^\top\in\mathbb R^{T_k\times pd}$ and $\bm E_k = [\bbm\epsilon_{k,T_k}, \bbm\epsilon_{k,T_k-1}, \ldots, \bbm\epsilon_{k,1}]^\top\in\mathbb R^{T_k\times d}$. If $T_k\gtrsim p^2 d$, then there exists a constant $C>0$ such that, with probability at least $1-\exp(-Cpd)$,
		\begin{align}\label{eq:var-XtE-op-bound-simplified}
			\left\|\frac{1}{T_k}\bm X_k^{\top}\bm E_k\right\|_{\op} \lesssim C_{\epsilon,\mathcal A}^{(k)}\sigma^2\sqrt{\frac{p d}{T_k}},
		\end{align}
		where $C_{\epsilon,\mathcal A}^{(k)} = \lambda_{\max}(\bbm{\Sigma}_{\epsilon,k})\mu_{\min}^{-1}(\mathcal{A}_k)$.
	\end{lemma}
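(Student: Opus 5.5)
The statement is Lemma~\ref{lem:var-XtE-op-bound}. I would prove it by an $\varepsilon$-net reduction combined with a martingale (or Hanson--Wright type) concentration bound for bilinear forms of the stationary VAR process.

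\textbf{Reduction to fixed directions.} Write
\[
\left\|\tfrac{1}{T_k}\bm X_k^\top\bm E_k\right\|_{\op}=\sup_{\bbm u\in\mathbb S^{pd-1},\,\bbm v\in\mathbb S^{d-1}}\tfrac{1}{T_k}\sum_{t=1}^{T_k}(\bbm u^\top\bbm x_{k,t})(\bbm v^\top\bbm\epsilon_{k,t}).
\]
Take $1/4$-nets $\mathcal N_1$ of $\mathbb S^{pd-1}$ and $\mathcal N_2$ of $\mathbb S^{d-1}$, of cardinalities at most $9^{pd}$ and $9^{d}$. The standard argument shows the supremum is at most a constant times the maximum over $\mathcal N_1\times\mathcal N_2$. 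It therefore suffices to show, for each fixed pair $(\bbm u,\bbm v)$, that
\[
\mathbb P\Big(\Big|\tfrac{1}{T_k}\sum_t (\bbm u^\top\bbm x_{k,t})(\bbm v^\top\bbm\epsilon_{k,t})\Big|>\tau\Big)\le 2\exp(-c\,T_k\min\{\tau^2/b^2,\tau/b\}),
\]
where $b\asymp C_{\epsilon,\mathcal A}^{(k)}\sigma^2$, and then take a union bound.

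\textbf{Fixed-direction bound.} Use the polarization identity $ab=\tfrac14[(a+b)^2-(a-b)^2]$ with $a_t=\bbm u^\top\bbm x_{k,t}$ and $b_t=\bbm v^\top\bbm\epsilon_{k,t}$. Since $\mathbb E a_tb_t=0$ ($\bbm x_{k,t}$ depends only on past innovations), each of the two sums is a centered quadratic form in the stacked vector $\bbm\xi=(\bbm\xi_{k,1-L},\dots,\bbm\xi_{k,T_k})$ under the causal MA$(\infty)$ representation. Each quadratic form can be written as $\bbm\xi^\top \bm Q\bbm\xi-\mathbb E(\cdot)$, and Hanson--Wright applies with independent $\sigma^2$-sub-Gaussian coordinates (Assumption~\ref{assump:subg}). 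The needed norms are $\|\bm Q\|_{\F}^2\lesssim T_k\|\bm Q\|_{\op}^2$ and
\[
\|\bm Q\|_{\op}\lesssim \lambda_{\max}(\bbm\Sigma_{\epsilon,k})\,\mu_{\min}^{-1}(\mathcal A_k)=C_{\epsilon,\mathcal A}^{(k)}.
\]
The operator-norm bound follows because the block Toeplitz operator induced by the transfer function $\mathcal A_k^{-1}(z)$ on $\ell_2$ has norm $\max_{|z|=1}\|\mathcal A_k^{-1}(z)\|_{\op}\le\mu_{\min}^{-1/2}(\mathcal A_k)$, and the same factor comes from the $\bbm\epsilon$ part. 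The stacking of $p$ lags introduces at most a factor of order $p$ in the operator norm. This is absorbed by the sample-size condition $T_k\gtrsim p^2d$, which also accounts for the $p$ in Lemma~\ref{lem:var-cov-conc}; alternatively, one can bound $\|\bbm u^\top\bbm x_{k,t}\|$ using $\lambda_{\max}(\bbm\Sigma_{x,k})\le C_{\epsilon,\mathcal A}^{(k)}$, as in Basu and Michailidis. A truncation to a finite past $L\to\infty$ handles the infinite MA representation.

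\textbf{Union bound and choice of $\tau$.} Set $\tau=C' C_{\epsilon,\mathcal A}^{(k)}\sigma^2\sqrt{pd/T_k}$. When $T_k\gtrsim pd$ we have $\tau/b\le 1$, so the sub-Gaussian regime applies and each fixed-pair probability is at most $2\exp(-cC'^2pd)$. Multiplying by $9^{pd+d}$ and choosing $C'$ large gives failure probability at most $\exp(-Cpd)$. The condition $T_k\gtrsim p^2d$ is stronger than needed here and guarantees the regime.

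\textbf{Main obstacle.} I expect the hard step to be the clean operator-norm bound on $\bm Q$: the lagged stacking correlates $\bbm x_{k,t}$ with $\bbm\epsilon_{k,t}$ through shifted Toeplitz blocks, and the constant must come out as $C_{\epsilon,\mathcal A}^{(k)}$ without an extra dimension factor. My first attempt would be to avoid explicit Toeplitz norms. The idea is to bound $\sum_t a_t^2$ and $\sum_t b_t^2$ separately via Lemma~\ref{lem:var-cov-conc}, then apply a self-normalized martingale inequality to $\sum_t a_tb_t$, conditioning on $a_t\in\mathcal F_{t-1}$ with $b_t$ independent $\sigma^2\lambda_{\max}(\bbm\Sigma_{\epsilon,k})$-sub-Gaussian. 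On the event from Lemma~\ref{lem:var-cov-conc}, $\sum_t a_t^2\lesssim T_kC_{\epsilon,\mathcal A}^{(k)}$. The conditional sub-Gaussian MGF then gives the fixed-pair tail directly, which sidesteps Hanson--Wright.
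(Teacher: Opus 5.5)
Your primary route has a real gap. After polarization you apply Hanson--Wright to $\sum_t(a_t+b_t)^2$ and $\sum_t(a_t-b_t)^2$ separately. Both sums contain $\sum_t a_t^2$, whose fluctuations are not of order $\sqrt{T_k}$ uniformly in $\bbm u$. Take $\bm A_k=\bm 0$, $\bbm\Sigma_{\epsilon,k}=\bm I_d$, Gaussian innovations (so $C_{\epsilon,\mathcal A}^{(k)}=\sigma=1$), and $\bbm u=p^{-1/2}(\bbm w^\top,\ldots,\bbm w^\top)^\top$ with $\|\bbm w\|_2=1$. Then $a_t$ is an MA$(p-1)$ with autocovariances $(1-|h|/p)_+$, so $\mathrm{Var}(\sum_t a_t^2)\asymp pT_k$. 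The required deviation $T_k\tau\asymp\sqrt{pdT_k}$ is therefore only $\asymp\sqrt d$ standard deviations. Since Hanson--Wright is sharp for Gaussian chaos, each sum gets a tail of order $e^{-cd}$, not the $e^{-cpd}$ needed against $9^{pd+d}$ net points. Rescaling $a_t$ against $b_t$ in the polarization identity only improves this to $e^{-c\sqrt p\,d}$.

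Concretely, your claims $\|\bm Q\|_{\op}\lesssim C_{\epsilon,\mathcal A}^{(k)}$ and $\|\bm Q\|_{\F}^2\lesssim T_k\|\bm Q\|_{\op}^2$ fail for the polarized forms. The lost power of $p$ sits in $\|\bm Q\|_{\F}$. The sub-Gaussian exponent $t^2/\|\bm Q\|_{\F}^2$ does not improve with $T_k$, so the condition $T_k\gtrsim p^2d$ cannot absorb it. The cancellation of $\sum_t a_t^2$ between the two sums is essential.

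The paper keeps that cancellation by applying its infinite-dimensional Hanson--Wright inequality (Lemmas~\ref{lemma:infinite-HW} and \ref{lemma:HW-uXEv}) directly to the symmetrized bilinear form $\bm Z_T^\top\bm Q_{u,v}\bm Z_T$. There $\|\bm Q_{u,v}\|_{\F}\le\lambda_{\max}(\bbm\Sigma_{\epsilon,k})\|\bm H_u\|_{\F}\|\bbm v\|_2$ carries no $p$. The operator norm $\|\bm Q_{u,v}\|_{\op}\lesssim\lambda_{\max}(\bbm\Sigma_{\epsilon,k})\sqrt{p/\mu_{\min}(\mathcal A_k)}$ carries a $\sqrt p$ that affects only the sub-exponential branch. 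At $t\asymp C_{\epsilon,\mathcal A}^{(k)}\sigma^2\sqrt{pdT_k}$ that branch is of order $\sqrt{dT_k}$, which dominates $pd$ exactly when $T_k\gtrsim p^2d$. So, contrary to your remark, that condition is not stronger than needed in the Hanson--Wright route.

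The self-normalized martingale argument you sketch at the end is sound, and it is a genuinely different proof. Let $\mathcal F_{t-1}=\sigma(\bbm\epsilon_{k,s}:s\le t-1)$. Then $a_t$ is $\mathcal F_{t-1}$-measurable, and $b_t$ is independent of $\mathcal F_{t-1}$ with variance proxy $\sigma^2\lambda_{\max}(\bbm\Sigma_{\epsilon,k})$. Hence $\exp\{\eta\sum_{t\le n}a_tb_t-\frac12\eta^2\sigma^2\lambda_{\max}(\bbm\Sigma_{\epsilon,k})\sum_{t\le n}a_t^2\}$ is a supermartingale.

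The key point is that the event of Lemma~\ref{lem:var-cov-conc}, where $T_k\gtrsim p^2d$ is used, bounds $\sum_t a_t^2\lesssim T_kC_{\epsilon,\mathcal A}^{(k)}\sigma^2$ simultaneously for all unit $\bbm u$. It is therefore intersected once rather than union-bounded over the net, and each net pair then has a purely sub-Gaussian tail. This gives the lemma with constant of order $\sigma^2\sqrt{\lambda_{\max}(\bbm\Sigma_{\epsilon,k})C_{\epsilon,\mathcal A}^{(k)}}\le C_{\epsilon,\mathcal A}^{(k)}\sigma^2$. The inequality holds because $\mu_{\min}(\mathcal A_k)\le1$, by the maximum modulus principle applied to $\mathcal A_k^{-1}$, which equals $\bm I_d$ at $z=0$.

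Compared with the paper, your martingale route bypasses the MA$(\infty)$ quadratic-form representation, the Toeplitz operator bounds and the infinite-dimensional Hanson--Wright lemma, and it needs no two-regime bookkeeping. The paper's route instead produces an unconditional fixed-direction tail, which it reuses for the $\ell_\infty$ and pooled bounds (Lemmas~\ref{lem:var-XtE-infty-bound} and \ref{lem:var-XtE-op-bound-pooled}). Make the martingale argument the proof and drop the polarization route.
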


	% ============================================================
	\begin{lemma}\label{lem:var-cov-conc-unified}
		Suppose Assumptions~\ref{assump:stationarity} and \ref{assump:subg} hold. For each client $k\in[K]$, denote $\bm X_k = [\bbm x_{k,T_k}, \bbm x_{k,T_k-1}, \ldots, \bbm x_{k,1}]^\top\in\mathbb R^{T_k\times pd}$ and $\bm S_k = T_k^{-1}\bm X_k^\top \bm X_k \in \mathbb R^{pd\times pd}$. Define the pooled sample covariance and its expectation as
		\[
			\bm S_{\rm pool} = \sum_{k=1}^K \frac{T_k}{T}\bm S_k = \frac{1}{T}\sum_{k=1}^K \bm X_k^\top \bm X_k,
			\ \text{ and } \ 
			\bbm\Sigma_{x,{\rm pool}} = \sum_{k=1}^K\frac{T_k}{T}\bbm\Sigma_{x,k},
		\]
		where $T=\sum_{k=1}^K T_k$. Let $C_{\epsilon,\mathcal A}^{\max}=\max_{k\in[K]} C_{\epsilon,\mathcal A}^{(k)}$ with $C_{\epsilon,\mathcal A}^{(k)} = \lambda_{\max}(\bbm{\Sigma}_{\epsilon,k})\mu_{\min}^{-1}(\mathcal{A}_k)$. If $T\gtrsim p^2 d$, then there exists a constant $C>0$ such that with probability at least $1-\exp{(-Cpd)}$,
		\[
			\left\|\bm S_{\rm pool}-\bbm\Sigma_{x,{\rm pool}}\right\|_{\op} \lesssim C_{\epsilon,\mathcal A}^{\max}\sigma^2p\sqrt{\frac{d}{T}}.
		\]
	\end{lemma}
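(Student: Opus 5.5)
The approach is to reduce the pooled statement to the single-client concentration of Lemma~\ref{lem:var-cov-conc}, and then to refine the argument so that the rate is governed by the total sample size $T$ rather than by $\min_k T_k$.

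The naive route writes
\[
\bm S_{\rm pool}-\bbm\Sigma_{x,{\rm pool}}=\sum_{k=1}^K \frac{T_k}{T}\bigl(\bm S_k-\bbm\Sigma_{x,k}\bigr)
\]
and applies the triangle inequality. Each summand is at most $\frac{T_k}{T}C_{\epsilon,\mathcal A}^{\max}\sigma^2p\sqrt{d/T_k}$, so the sum is at most $C_{\epsilon,\mathcal A}^{\max}\sigma^2 p\sqrt d\,T^{-1}\sum_k\sqrt{T_k}$. This only yields $\sqrt{K}\sqrt{d/T}$, and it needs $T_k\gtrsim p^2d$ for every $k$. It is therefore insufficient, and I would instead redo the concentration argument directly on the pooled quadratic form, exploiting independence across clients.

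\emph{Step 1 (discretization).} Fix a $1/4$-net $\mathcal N$ of the unit sphere in $\mathbb R^{pd}$, with $|\mathcal N|\le 9^{pd}$. By the standard argument, $\|\bm M\|_{\op}\le 2\max_{\bbm v\in\mathcal N}|\bbm v^\top\bm M\bbm v|$ for symmetric $\bm M$, so it suffices to control
\[
\bbm v^\top(\bm S_{\rm pool}-\bbm\Sigma_{x,{\rm pool}})\bbm v=\frac1T\sum_{k}\sum_{t}\bigl\{(\bbm v^\top\bbm x_{k,t})^2-\mathbb E(\bbm v^\top\bbm x_{k,t})^2\bigr\}
\]
for each fixed $\bbm v$.

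\emph{Step 2 (Gaussian-chaos representation).} Under Assumption~\ref{assump:stationarity}, each $\bbm x_{k,t}$ has the causal moving-average form $\sum_{j\ge0}\bm\Psi_{k,j}\bbm\epsilon_{k,t-j}$. Stacking the innovations $\bbm\xi_{k,t}$ of all clients into one long vector $\bbm\xi$ of independent $\sigma^2$-sub-Gaussian entries (using Assumption~\ref{assump:subg} and independence across $k$) gives $\sum_{k,t}(\bbm v^\top\bbm x_{k,t})^2=\bbm\xi^\top\bm Q_{\bbm v}\bbm\xi$. Here $\bm Q_{\bbm v}=\mathrm{blockdiag}_k(\bm Q_{k,\bbm v})$ is block-diagonal, because the clients' processes are independent. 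This is exactly the structure used in the proof of Lemma~\ref{lem:var-cov-conc}: for each $k$, that proof bounds $\|\bm Q_{k,\bbm v}\|_{\op}\lesssim C_{\epsilon,\mathcal A}^{(k)}p$ via the spectral density bound $\lambda_{\max}(\bbm\Sigma_{\epsilon,k})/\mu_{\min}(\mathcal A_k)$, the factor $p$ arising from the $p$ lagged blocks, and $\|\bm Q_{k,\bbm v}\|_{\F}^2\lesssim (C_{\epsilon,\mathcal A}^{(k)}p)^2T_k$.

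\emph{Step 3 (Hanson--Wright on the pooled form).} The block-diagonal structure gives $\|\bm Q_{\bbm v}\|_{\op}=\max_k\|\bm Q_{k,\bbm v}\|_{\op}\lesssim C_{\epsilon,\mathcal A}^{\max}p$ and $\|\bm Q_{\bbm v}\|_{\F}^2=\sum_k\|\bm Q_{k,\bbm v}\|_{\F}^2\lesssim (C_{\epsilon,\mathcal A}^{\max}p)^2T$. Hanson--Wright then yields
\[
\mathbb P\bigl(|\bbm v^\top(\bm S_{\rm pool}-\bbm\Sigma_{x,{\rm pool}})\bbm v|>\eta\bigr)\le 2\exp\Bigl[-c\min\Bigl(\frac{T^2\eta^2}{\sigma^4(C_{\epsilon,\mathcal A}^{\max}p)^2T},\frac{T\eta}{\sigma^2C_{\epsilon,\mathcal A}^{\max}p}\Bigr)\Bigr].
\]
Take $\eta=C\,C_{\epsilon,\mathcal A}^{\max}\sigma^2p\sqrt{d/T}$ with $C$ large. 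The first exponent is then of order $C^2d$, which exceeds a constant multiple of $pd$ once the factor $p$ is correctly absorbed. The second exponent is $C\sqrt{dT}\ge C' pd$ precisely when $T\gtrsim p^2d$; this is where the sample-size condition enters.

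\emph{Step 4 (union bound).} A union bound over the $9^{pd}$ net points leaves probability at least $1-\exp(-Cpd)$, and Step 1 then delivers the claimed operator-norm bound.

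The main obstacle is Step 2. One must check that the per-client bounds on $\|\bm Q_{k,\bbm v}\|_{\op}$ and $\|\bm Q_{k,\bbm v}\|_{\F}$ from the proof of Lemma~\ref{lem:var-cov-conc} hold with constants uniform in $k$, so that they depend only on $C_{\epsilon,\mathcal A}^{\max}$. One must also track the $p$-versus-$\sqrt p$ bookkeeping so that the exponent matches the $pd$ net entropy under exactly $T\gtrsim p^2d$. If the lemma's proof used a different route, such as a spectral-density argument in the style of \citet{basu2015regularized}, I would replicate that route with the pooled block-Toeplitz covariance; its spectral norm is again bounded by $\max_k$ of the per-client spectral-density bounds, thanks to block-diagonality across clients.
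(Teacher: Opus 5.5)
Your overall strategy is correct, but it takes a different route from the paper, and one $p$-bookkeeping step in Step 3 fails as written and needs the sharper bound given below.

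\textbf{Comparison of routes.} The paper does not reopen the chaos representation. For each net point it sets $Z_k(\bbm u)=\bbm u^\top\bm X_k^\top\bm X_k\bbm u-\mathbb E[\bbm u^\top\bm X_k^\top\bm X_k\bbm u]$. It reads off the per-client two-regime tail from Lemma~\ref{lemma:HW-uXTXu}, with $\nu_k^2=(C_{\epsilon,\mathcal A}^{\max})^2\sigma^4pT_k$ and $b_k=C_{\epsilon,\mathcal A}^{\max}\sigma^2p$. It converts this tail into a local sub-exponential mgf bound via Lemma~\ref{lem:bernstein-tail-mgf-equiv}. It then applies Bernstein's inequality to the independent sum $\sum_k Z_k(\bbm u)$, with effective parameters $\sum_k\nu_k^2$ and $\max_k b_k$.

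You instead stack the innovations into a block-diagonal chaos and apply Lemma~\ref{lemma:infinite-HW} once. The union of the $K$ semi-infinite index sets is still countable, so that lemma applies verbatim. The identities $\|\bm Q_{\bbm v}\|_{\F}^2=\sum_k\|\bm Q_{k,\bbm v}\|_{\F}^2$ and $\|\bm Q_{\bbm v}\|_{\op}=\max_k\|\bm Q_{k,\bbm v}\|_{\op}$ reproduce exactly those two parameters. Your route is shorter and avoids the tail-to-mgf equivalence lemma altogether. The paper's route is more modular: it uses the per-client tail bound as a black box and never touches $\bm Q$ again. (The chaos is sub-Gaussian rather than Gaussian, which is all Hanson--Wright needs.)

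\textbf{The step to fix.} You bound $\|\bm Q_{k,\bbm v}\|_{\F}^2\lesssim (C_{\epsilon,\mathcal A}^{(k)}p)^2T_k$. With that bound, at $\eta=C\,C_{\epsilon,\mathcal A}^{\max}\sigma^2p\sqrt{d/T}$ the first Hanson--Wright exponent is only of order $C^2d$. That cannot absorb the $9^{pd}$ union bound unless $C\gtrsim\sqrt p$, which would degrade the rate to $p^{3/2}\sqrt{d/T}$. So the phrase ``once the factor $p$ is correctly absorbed'' is not justified by the bound you wrote.

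The correct Frobenius bound carries one factor of $p$, not two. Lemma~\ref{lemma:equivalence-representation-of-quadratic-form} gives $\|\bm Q_{k,\bbm v}\|_{\F}\le\lambda_{\max}(\bbm\Sigma_{\epsilon,k})\|\bm H_u\|_{\op}\|\bm H_u\|_{\F}\le C_{\epsilon,\mathcal A}^{(k)}\sqrt{pT_k}$ for unit $\bbm v$. The reason is that $\|\bm H_u\|_{\F}^2=T_k\sum_m\|\bbm c_m\|_2^2\le T_k/\mu_{\min}(\mathcal A_k)$ by Parseval with no $p$; only $\|\bm H_u\|_{\op}$ picks up the $\sqrt p$, through Cauchy--Schwarz on $\bm U_p(\theta)$.

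With $\|\bm Q_{\bbm v}\|_{\F}^2\lesssim(C_{\epsilon,\mathcal A}^{\max})^2pT$, the first exponent becomes of order $C^2pd$. The second exponent is $C\sqrt{dT}\gtrsim pd$ precisely under $T\gtrsim p^2d$. These match the paper's $c_0^2pd$ and $c_0\sqrt{dT}$, and your union bound then closes as claimed.
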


	% ============================================================
	\begin{lemma}\label{lem:var-XtE-op-bound-pooled}
		Suppose Assumptions~\ref{assump:stationarity} and \ref{assump:subg} hold. For each client $k\in[K]$, denote $\bm X_k = [\bbm x_{k,T_k}, \bbm x_{k,T_k-1}, \ldots, \bbm x_{k,1}]^\top\in\mathbb R^{T_k\times pd}$, $\bm E_k = [\bbm\epsilon_{k,T_k}, \bbm\epsilon_{k,T_k-1}, \ldots, \bbm\epsilon_{k,1}]^\top\in\mathbb R^{T_k\times d}$ and $\bm R_k = T_k^{-1}\bm X_k^\top \bm E_k$. Define the pooled cross term as $\bm R_{\rm pool} = \sum_{k=1}^K (T_k/T) \bm R_k$.
		Let $C_{\epsilon,\mathcal A}^{\max}=\max_{k\in[K]} C_{\epsilon,\mathcal A}^{(k)}$ with $C_{\epsilon,\mathcal A}^{(k)}=\lambda_{\max}(\bbm{\Sigma}_{\epsilon,k})\mu_{\min}^{-1}(\mathcal{A}_k)$. If $T\gtrsim p^2 d$, then there exists a constant $C>0$ such that with probability at least $1-\exp(-Cpd)$,
		\[
			\left\|\bm R_{\rm pool}\right\|_{\op} \lesssim C_{\epsilon,\mathcal A}^{\max}\sigma^2\sqrt{\frac{p d}{T}}.
		\]
	\end{lemma}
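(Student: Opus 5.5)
We reduce the pooled cross term to a single VAR-type process by stacking the clients. We then reuse the argument that gives Lemma~\ref{lem:var-XtE-op-bound}, with $T_k$ replaced by $T$ and $C_{\epsilon,\mathcal A}^{(k)}$ replaced by $C_{\epsilon,\mathcal A}^{\max}$.

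First write
\[
\bm R_{\rm pool}=\frac1T\sum_{k=1}^K\bm X_k^\top\bm E_k=\frac1T\sum_{k=1}^K\sum_{t=1}^{T_k}\bbm x_{k,t}\bbm\epsilon_{k,t}^\top ,
\]
so that $\|\bm R_{\rm pool}\|_{\op}=\sup_{\bbm u\in\mathbb S^{pd-1},\bbm v\in\mathbb S^{d-1}}T^{-1}\sum_{k,t}(\bbm u^\top\bbm x_{k,t})(\bbm v^\top\bbm\epsilon_{k,t})$. Take $1/4$-nets $\mathcal N_1\subset\mathbb S^{pd-1}$ and $\mathcal N_2\subset\mathbb S^{d-1}$ of cardinality at most $9^{pd}$ and $9^d$. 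The usual net argument then gives $\|\bm R_{\rm pool}\|_{\op}\le 2\max_{\bbm u\in\mathcal N_1,\bbm v\in\mathcal N_2}|\bbm u^\top\bm R_{\rm pool}\bbm v|$. It therefore suffices to show, for fixed unit $(\bbm u,\bbm v)$ and $s\asymp\sqrt{pd/T}$, a tail bound of the form $\exp(-cT\min(s^2,s))$ for the event $|\bbm u^\top\bm R_{\rm pool}\bbm v|>C C_{\epsilon,\mathcal A}^{\max}\sigma^2 s$. A union bound over $81^{pd}$ pairs then leaves probability $1-\exp(-Cpd)$, provided $T\gtrsim p^2d$ is large enough that $s\le1$.

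For fixed $(\bbm u,\bbm v)$ we use the polarization identity $ab=\frac14\{(a+b)^2-(a-b)^2\}$ with $a_{k,t}=\bbm u^\top\bbm x_{k,t}$ and $b_{k,t}=\bbm v^\top\bbm\epsilon_{k,t}$. Since $\bbm x_{k,t}$ depends only on innovations before time $t$, we have $\mathbb E\,a_{k,t}b_{k,t}=0$. Hence it suffices to control the deviations of the quadratic forms $T^{-1}\sum_{k,t}(a_{k,t}\pm b_{k,t})^2$ from their means. Stack all innovations across clients into one long vector $\bbm\xi=(\bbm\xi_{1,\cdot}^\top,\ldots,\bbm\xi_{K,\cdot}^\top)^\top$. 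By Assumption~\ref{assump:subg} it has independent $\sigma^2$-sub-Gaussian coordinates, including across clients. Each $a_{k,t}\pm b_{k,t}$ is a linear functional of the $k$-th block only, via the moving-average representation of client $k$'s VAR. So $\sum_{k,t}(a_{k,t}\pm b_{k,t})^2=\bbm\xi^\top\bm Q\bbm\xi$, where $\bm Q=\mathrm{diag}(\bm Q_1,\ldots,\bm Q_K)$ is block diagonal.

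Hanson--Wright then gives the tail $2\exp\{-c\min(\tau^2/(\sigma^4\|\bm Q\|_\F^2),\tau/(\sigma^2\|\bm Q\|_{\op}))\}$. This is where the main obstacle sits: bounding the norms of $\bm Q$ using the spectral density, exactly as in the proof of Lemma~\ref{lem:var-XtE-op-bound}. For each block, $\|\bm Q_k\|_{\op}\lesssim\lambda_{\max}(\bm\Sigma_{\epsilon,k})/\mu_{\min}(\mathcal A_k)+\lambda_{\max}(\bm\Sigma_{\epsilon,k})\lesssim C_{\epsilon,\mathcal A}^{(k)}$. Here we use that the $\ell_2\to\ell_2^+$ operator norm of the Toeplitz/moving-average map is bounded by the supremum of the spectral density, which is at most $\mu_{\min}^{-1}(\mathcal A_k)$. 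We also have $\|\bm Q_k\|_\F^2\le T_k\|\bm Q_k\|_{\op}^2$, since $\bm Q_k$ has rank at most $T_k$. Because $\bm Q$ is block diagonal, $\|\bm Q\|_{\op}\le C_{\epsilon,\mathcal A}^{\max}$ and $\|\bm Q\|_\F^2\le T (C_{\epsilon,\mathcal A}^{\max})^2$. The last bound is exactly what makes the total sample size $T$, rather than any single $T_k$, appear.

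Setting $\tau=C_{\epsilon,\mathcal A}^{\max}\sigma^2 Ts$ yields the required tail $\exp(-cT\min(s^2,s))$. The sample-size condition $T\gtrsim p^2d$ ensures $s\lesssim1$, so the quadratic regime dominates. It also allows the choice $s= C'\sqrt{pd/T}$ with $C'$ large enough to absorb the $pd\log81$ entropy term, which completes the bound.
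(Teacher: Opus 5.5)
Your cross-client stacking into a single block-diagonal Hanson--Wright is fine. It needs the infinite-dimensional version, Lemma~\ref{lemma:infinite-HW}, because the moving-average representation reaches into the infinite past. It is also slightly cleaner than the paper's route, which applies Lemma~\ref{lemma:HW-uXEv} client by client, converts each two-regime tail into a sub-exponential mgf via Lemma~\ref{lem:bernstein-tail-mgf-equiv}, and then sums with Bernstein's inequality.

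\textbf{The gap.} The gap is the polarization step together with the claimed bounds $\|\bm Q_k\|_{\op}\lesssim C_{\epsilon,\mathcal A}^{(k)}$ and $\|\bm Q_k\|_\F^2\lesssim T_k(C_{\epsilon,\mathcal A}^{(k)})^2$, which are false. The map from innovations to $(\bbm u^\top\bbm x_{k,t})_t$ is the moving-average filter composed with the lag filter $\sum_{j=1}^p\bbm u_j^\top L^j$. The symbol of that lag filter, $\bm U_p(\theta)=\sum_j\bbm u_j e^{-ij\theta}$, can have norm $\sqrt p\,\|\bbm u\|_2$. This is why the paper only gets $\|\bm H_u\|_{\op}\le\sqrt{p/\mu_{\min}(\mathcal A)}\,\|\bbm u\|_2$, and that bound is sharp.

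\textbf{Counterexample.} Take white noise ($\bm A_{k,j}=\bm 0$, $\bbm\Sigma_{\epsilon,k}=\bm I_d$, so $C_{\epsilon,\mathcal A}^{(k)}=1$), $\bbm u=p^{-1/2}(\bbm w^\top,\ldots,\bbm w^\top)^\top$, and $\bbm v\perp\bbm w$. Then $a_t=p^{-1/2}\sum_{j\le p}\bbm w^\top\bbm\epsilon_{t-j}$ has autocovariance $(1-|h|/p)_+$. Hence $\|\bm Q_k\|_{\op}\asymp p$ and $\mathrm{Var}\bigl(\sum_t(a_t\pm b_t)^2\bigr)\asymp\|\bm Q_k\|_\F^2\asymp pT_k$. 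This is not slack in Hanson--Wright: each polarized quadratic form genuinely fluctuates on the scale $\sqrt{pT}$. The $a_t^2$ fluctuations cancel only in the difference, and bounding the two forms separately cannot see that cancellation.

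\textbf{Consequence.} With the correct norms, your per-direction exponent is $\min\{Ts^2/p,\;Ts/p\}$. Beating the $9^{pd+d}$ union bound then forces $s\asymp p\sqrt{d/T}$. That is the covariance rate of Lemma~\ref{lem:var-cov-conc-unified}, a factor $\sqrt p$ worse than the statement.

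\textbf{The fix.} Drop polarization and apply Hanson--Wright to the symmetrized bilinear form itself, as in Lemma~\ref{lemma:equivalence-representation-of-uXEv}: $\bbm u^\top\bm X_k^\top\bm E_k\bbm v=\bm Z_k^\top\bm M_k\bm Z_k$ with $\bm M_k=(\bm I_\infty\otimes\bbm\Sigma_{\epsilon,k}^{1/2})^\top\bm H_u^\top\bm D_v(\bm I_\infty\otimes\bbm\Sigma_{\epsilon,k}^{1/2})$.
\begin{itemize}
\item The Frobenius norm factors as $\|\bm H_u\|_\F\|\bm D_v\|_{\op}$, and by Parseval $\|\bm H_u\|_\F^2=T_k\sum_m\|\bbm c_m\|_2^2\le T_k/\mu_{\min}(\mathcal A_k)$. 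No factor of $p$ appears, because the average of $\|\bm U_p(\theta)\|_2^2$ over $\theta$ is exactly $\|\bbm u\|_2^2$.
\item Only the operator norm carries $\sqrt p$.
\item Your block-diagonal stacking then gives $\|\bm M\|_\F^2\lesssim T(C_{\epsilon,\mathcal A}^{\max})^2$ and $\|\bm M\|_{\op}\lesssim\sqrt p\,C_{\epsilon,\mathcal A}^{\max}$, so the exponent is $\min\{Ts^2,\;Ts/\sqrt p\}$.
\item With $s\asymp\sqrt{pd/T}$ this equals $\min\{pd,\sqrt{Td}\}$, which is $\gtrsim pd$ exactly when $T\gtrsim p^2d$.
\end{itemize}
So the sample-size condition is genuinely needed to control the linear regime, whose scale carries $\sqrt p$. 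It is not needed merely to ensure $s\lesssim1$, as you wrote; that mismatch is itself a symptom of the missing $\sqrt p$.
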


	% ============================================================
	\begin{lemma}\label{lem:bernstein-tail-mgf-equiv}
		Let $X$ be a mean-zero random variable. Fix $\nu>0$ and $b>0$. The following two conditions are equivalent.
		\begin{enumerate}[label=(\roman*)]
			\item \textbf{(Two-regime Bernstein tail)} There exists a constant $c>0$ such that for all $t>0$,
			\begin{align}\label{eq:bern-tail}
				\mathbb P\big(|X|\geq t\big)\leq 2\exp\!\left(-c\min\Big\{\frac{t^2}{\nu^2},\frac{t}{b}\Big\}\right).
			\end{align}
			\item \textbf{(Local sub-exponential mgf)} $X$ is sub-exponential with Bernstein parameters $(\nu,b)$, i.e., there exist constants $c,C>0$ such that for all $|\lambda|\leq c/b$,
			\begin{align}\label{eq:bern-mgf}
				\mathbb E\exp(\lambda X)\leq \exp\!\big(C\lambda^2\nu^2\big).
			\end{align}
		\end{enumerate}
	\end{lemma}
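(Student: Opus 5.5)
We prove (i)$\Rightarrow$(ii) by integrating the tail to control moments and then summing the Taylor series of the mgf. We prove (ii)$\Rightarrow$(i) by a Chernoff bound with a carefully chosen $\lambda$.

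\emph{Direction (ii)$\Rightarrow$(i).} Fix $t>0$. Markov's inequality gives, for $0<\lambda\le c/b$,
\[
\mathbb P(X\ge t)\le \exp(-\lambda t + C\lambda^2\nu^2).
\]
Take $\lambda=\min\{t/(2C\nu^2),\,c/b\}$ and split into two cases.
\begin{itemize}
\item If $t/(2C\nu^2)\le c/b$, the exponent is at most $-t^2/(4C\nu^2)$.
\item Otherwise $\lambda=c/b$ and $C\lambda^2\nu^2\le \lambda t/2$, so the exponent is at most $-ct/(2b)$.
\end{itemize}
In both cases the exponent is at most $-c'\min\{t^2/\nu^2,\,t/b\}$. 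Applying the same argument to $-X$ (allowed since $|\lambda|\le c/b$ covers negative $\lambda$) and taking a union bound yields the factor $2$.

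\emph{Direction (i)$\Rightarrow$(ii).} Since $\mathbb E X=0$,
\[
\mathbb E e^{\lambda X}=1+\sum_{k\ge2}\frac{\lambda^k\,\mathbb E X^k}{k!}.
\]
Bound the moments by
\[
\mathbb E|X|^k=\int_0^\infty k t^{k-1}\,\mathbb P(|X|\ge t)\,dt
\le 2k\int_0^\infty t^{k-1}\bigl(e^{-ct^2/\nu^2}+e^{-ct/b}\bigr)\,dt,
\]
using $e^{-c\min(a,a')}\le e^{-ca}+e^{-ca'}$. Evaluating the Gamma integrals gives
\[
\mathbb E|X|^k\le C^k\bigl(\nu^k\,\Gamma(k/2+1)+b^k\,k!\bigr).
\]
Substituting, the $k$-th term of the series is bounded by
\[
(C|\lambda|\nu)^k\,\frac{\Gamma(k/2+1)}{k!}+(C|\lambda| b)^k.
\]

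The main work is summing these from $k=2$ and recovering a bound of order $\lambda^2\nu^2$.

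\emph{First (Gaussian) part.} Use $\Gamma(k/2+1)/k!\le C^k/\Gamma(k/2+1)$. The series $\sum_{k\ge 2}(C|\lambda|\nu)^k/\Gamma(k/2+1)$ is bounded by $C'\lambda^2\nu^2 e^{C''\lambda^2\nu^2}$, the standard sub-Gaussian summation obtained by splitting into even and odd $k$.

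\emph{Second (exponential) part.} For $|\lambda|\le c/b$ with $c$ small, $\sum_{k\ge2}(C|\lambda| b)^k\le 2C^2\lambda^2 b^2$. To absorb this into $\lambda^2\nu^2$ we need $b\lesssim\nu$. This is the main obstacle: the equivalence as literally stated can fail when $b\gg\nu$.

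I would handle it by the usual convention that Bernstein parameters satisfy $b\lesssim\nu$. In the regime $b>\nu$, both sides are effectively governed by $\max(\nu,b)$, and (ii) holds with $\nu$ replaced by $\nu\vee b$. The constants $c,C$ are allowed to absorb this normalization, and I would state that assumption explicitly. With it, $\mathbb E e^{\lambda X}\le 1+C\lambda^2\nu^2 e^{C\lambda^2\nu^2}\le \exp(C'\lambda^2\nu^2)$ for $|\lambda|\le c/b$, which proves (ii).
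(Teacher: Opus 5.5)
Your route matches the paper's: for (i)$\Rightarrow$(ii), integrate the two-regime tail to get Gamma-type moment bounds, then sum the mgf's Taylor series, with an even/odd split for the Gaussian part and a geometric series for the exponential part when $|\lambda|\le c/b$. For (ii)$\Rightarrow$(i), you give an explicit Chernoff argument where the paper only cites Wainwright, and your two-case choice of $\lambda$ is correct.

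The obstacle you flag is genuine, and the paper's own proof does not overcome it. Its final step asserts that, since $|\lambda|\le c_0/b$, the factor $\exp(2(eC)^2\lambda^2b^2)$ ``is absorbed into $\exp(C\lambda^2\nu^2)$.'' But $|\lambda|\le c_0/b$ only bounds $\lambda^2b^2$ by a constant, not by a multiple of $\lambda^2\nu^2$. A symmetric Laplace variable with scale $b$ shows the literal statement fails. It satisfies (i) for every $\nu$, because $\min\{t^2/\nu^2,t/b\}\le t/b$. Yet $\mathbb E e^{\lambda X}=(1-\lambda^2b^2)^{-1}=1+\lambda^2b^2+O(\lambda^4)$, so (ii) forces $b^2\le C\nu^2$.

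Your repair is the right one: either assume $b\lesssim\nu$ explicitly, or conclude (ii) with $\nu\vee b$ in place of $\nu$. One caveat: under the first option, $C$ depends on the bound for $b/\nu$ rather than being absolute. In the paper's two applications, Lemmas~\ref{lem:var-cov-conc-unified} and~\ref{lem:var-XtE-op-bound-pooled}, the condition $b_k\lesssim\nu_k$ amounts to $T_k\gtrsim p$ for every client.
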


	% ============================================================
	\begin{lemma}[Gaussian mechanism for propagation DP]\label{lem:gaussian-mech-prop}
		Fix a client $k\in[K]$ and consider a deterministic mapping $\bbm g:\mathcal Z^{T_k}\to\mathbb R^{d\times pd}$. Assume that for all neighboring datasets $\mathcal D_k\sim\mathcal D_k'$ with $\|\bbm g(\mathcal D_k)-\bbm g(\mathcal D_k')\|_{\F} \leq \Delta$. Define the randomized mechanism $M:\mathcal Z^{T_k}\to\mathbb R^{d\times pd}$ by $M(\mathcal D_k) = \bbm g(\mathcal D_k) + \bm Z_k$, where $\bm Z_k\in\mathbb R^{d\times pd}$ has independent Gaussian entries $\mathcal N(0,\sigma^2)$ and is independent of $\mathcal D_k$.
		Then $M$ is $(\varepsilon,\delta)$-DP if
		\begin{align}\label{eq:gaussian-sigma-choice}
		\sigma^2 \geq \frac{2\Delta^2 \log(1.25/\delta)}{\varepsilon^2}.
		\end{align}
	\end{lemma}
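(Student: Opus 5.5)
The statement to prove is Lemma~\ref{lem:gaussian-mech-prop}: for a deterministic map $\bbm g$ with Frobenius sensitivity $\Delta$ over neighboring datasets, the mechanism $M(\mathcal D_k)=\bbm g(\mathcal D_k)+\bm Z_k$ is $(\varepsilon,\delta)$-DP when $\sigma^2\geq 2\Delta^2\log(1.25/\delta)/\varepsilon^2$. The plan follows the classical Gaussian-mechanism argument of \citet{dwork2014algorithmic}, applied after vectorizing. Identify $\mathbb R^{d\times pd}$ with $\mathbb R^{pd^2}$ via $\mathrm{vec}(\cdot)$. The Frobenius norm then becomes the Euclidean norm, and $\mathrm{vec}(\bm Z_k)\sim\mathcal N(\bm 0,\sigma^2\bm I_{pd^2})$. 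The neighboring relation (in particular the selective relation $\overset{\mathcal I}{\sim}$, including propagation through the VAR dynamics) enters only through the hypothesis $\|\bbm g(\mathcal D_k)-\bbm g(\mathcal D_k')\|_{\F}\leq\Delta$. The problem is therefore a statement about two Gaussians $\mathcal N(\bbm a,\sigma^2\bm I)$ and $\mathcal N(\bbm b,\sigma^2\bm I)$ with $\|\bbm a-\bbm b\|_2\leq\Delta$, and the dimension plays no role.

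\textbf{Step 1: privacy loss.} Fix neighbors with $\bbm a=\mathrm{vec}\,\bbm g(\mathcal D_k)$ and $\bbm b=\mathrm{vec}\,\bbm g(\mathcal D_k')$, and set $\bbm v=\bbm a-\bbm b$. At output $\bbm o=\bbm a+\bbm z$ the log-density ratio is
\[
L(\bbm z)=\frac{\|\bbm z+\bbm v\|_2^2-\|\bbm z\|_2^2}{2\sigma^2}=\frac{2\langle\bbm z,\bbm v\rangle+\|\bbm v\|_2^2}{2\sigma^2}.
\]
Under $\bbm z\sim\mathcal N(\bm 0,\sigma^2\bm I)$, $\langle\bbm z,\bbm v\rangle$ is distributed as $\sigma\|\bbm v\|_2\,\xi$ with $\xi\sim\mathcal N(0,1)$. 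So the problem reduces to one dimension, and the worst case is $\|\bbm v\|_2=\Delta$.

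\textbf{Step 2: tail bound.} Define the bad set $B=\{\bbm z: L(\bbm z)>\varepsilon\}$. Event $B$ occurs exactly when $\xi>\sigma\varepsilon/\Delta-\Delta/(2\sigma)$. Using $\mathbb P(\xi>x)\leq e^{-x^2/2}/(x\sqrt{2\pi})$ together with $\sigma=c\Delta/\varepsilon$ and $c^2\geq 2\log(1.25/\delta)$, one shows $\mathbb P(B)\leq\delta$. As in Dwork--Roth Theorem~A.1, this requires checking that the threshold $x$ is large enough, which in the regime $\varepsilon\leq 1$ follows from the elementary inequality $c^2-\varepsilon\geq 2\log(1.25/\delta)$-type manipulations. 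I expect this constant-chasing to be the main obstacle. First I would try importing Theorem~A.1 of \citet{dwork2014algorithmic} directly, noting that the $\ell_2$-sensitivity there is exactly our $\Delta$ after vectorization.

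\textbf{Step 3: split.} For any measurable $S$,
\[
\mathbb P(M(\mathcal D_k)\in S)\leq\mathbb P(M(\mathcal D_k)\in S,\ \bbm z\notin B)+\mathbb P(B).
\]
On $B^c$ the density ratio is at most $e^{\varepsilon}$, so the first term is at most $e^{\varepsilon}\mathbb P(M(\mathcal D_k')\in S)$. Since the second term is at most $\delta$, this gives the DP inequality. Swapping the roles of the two datasets gives the symmetric inequality. Independence of $\bm Z_k$ from $\mathcal D_k$ justifies conditioning on the data throughout, so the probabilities are over the mechanism only, as required by Definition~\ref{def:federated-selective-dp}.
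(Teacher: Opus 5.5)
Your reduction is correct and, on the sign of the privacy loss, more careful than the paper's. Under $M(\mathcal D_k)$ the privacy loss is $L=\langle\bbm z,\bbm v\rangle/\sigma^2+\|\bbm v\|_2^2/(2\sigma^2)$, with mean $+\|\bbm v\|_2^2/(2\sigma^2)$, so the bad event is $\xi>\varepsilon\sigma/\Delta-\Delta/(2\sigma)$, as you write. The paper instead asserts $L(M(\mathcal D_k))\sim\mathcal N(-\Delta_k^2/(2\sigma^2),\Delta_k^2/\sigma^2)$, which is the law of the loss under $\mathcal D_k'$. That gives the threshold $\varepsilon\sigma/\Delta+\Delta/(2\sigma)$ and lets the paper conclude for every $\varepsilon>0$.

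The genuine gap in your proposal is that you slip in ``the regime $\varepsilon\leq 1$'' without noticing that the lemma contains no such hypothesis. You then treat what remains as constant-chasing, but it is not: the lemma as stated is false for large $\varepsilon$. Take neighbors with $\|\bbm v\|_2=\Delta$, set $\sigma=c\Delta/\varepsilon$ with $c=\sqrt{2\log(1.25/\delta)}$, and let $S$ be the set of outputs at which the log density ratio exceeds $\varepsilon$. Then
\[
\mathbb P\bigl(M(\mathcal D_k)\in S\bigr)-e^{\varepsilon}\,\mathbb P\bigl(M(\mathcal D_k')\in S\bigr)=\Phi\Bigl(\frac{\varepsilon}{2c}-c\Bigr)-e^{\varepsilon}\,\Phi\Bigl(-\frac{\varepsilon}{2c}-c\Bigr).
\]
The subtracted term is at most $\tfrac12\exp\{\varepsilon/2-\varepsilon^2/(8c^2)-c^2/2\}\to0$, so the difference tends to $1>\delta$ as $\varepsilon\to\infty$. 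For $\delta=0.01$ and $\varepsilon=20$ it is already about $0.48$.

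The correct statement must therefore bound $\varepsilon$, as Dwork and Roth's Theorem~A.1 does, and you should state that hypothesis explicitly rather than absorb it into the proof. Downstream this costs nothing, since the lemma is invoked per iteration with budget $\varepsilon/N_g$ and $N_g\asymp\log T$.

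With the sign right, the step you flag as the main obstacle is short and needs no Mills-ratio bookkeeping. For $x=c-\varepsilon/(2c)$ one has $x^2\geq c^2-\varepsilon$. For $\delta<1/2$ and $\varepsilon\leq 2\log 2.5$ (which covers $\varepsilon\le 1$) one also has $x\geq 0$, so
\[
\mathbb P(\xi>x)\leq\tfrac12 e^{-x^2/2}\leq\tfrac12 e^{\varepsilon/2}\,\frac{\delta}{1.25}=\frac{e^{\varepsilon/2}}{2.5}\,\delta\leq\delta .
\]
The factor $e^{\varepsilon/2}$ is exactly what the paper's sign error removes, and it is why the $1/\varepsilon$ noise calibration cannot hold for all $\varepsilon$. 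The rest of your plan is sound: vectorization, reduction to one dimension, the worst case $\|\bbm v\|_2=\Delta$, and the split over $B$ and $B^{c}$.
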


	\section{Proofs of Main Results}\label{append:proofs of main results}

	\begin{proof}[\textbf{Proof of Proposition \ref{thm:local-model-error-upper-bound}}]
		Fix an arbitrary client $k\in[K]$. By construction, both $(\widetilde{\bm A}_{0,k},\widetilde{\bbm \Delta}_k)$ and $(\bm A_0^*,\bbm \Delta_k^*)$ satisfy the constraint $\|\bbm \Delta_{k}\|_\op\leq \zeta$ and hence are feasible for \eqref{eq:local-estimator}. Therefore, by the optimality of $(\widetilde{\bm A}_{0,k},\widetilde{\bbm \Delta}_k)$, we have
		\begin{align*}
			&\frac{1}{T_k}\sum_{t=1}^{T_k}
			\Bigl\|\bbm y_{k,t}-\bigl(\widetilde{\bm A}_{0,k}+\widetilde{\bbm \Delta}_k\bigr)\bbm x_{k,t}\Bigr\|_{2}^{2}
			+\lambda_k\|\widetilde{\bm A}_{0,k}\|_{*}
			+\omega_k\|\widetilde{\bbm \Delta}_k\|_{1} \notag \\
			&\qquad \qquad\qquad\qquad\qquad\qquad \leq
			\frac{1}{T_k}\sum_{t=1}^{T_k}
			\Bigl\|\bbm y_{k,t}-\bigl(\bm A_0^*+\bbm \Delta_k^*\bigr)\bbm x_{k,t}\Bigr\|_{2}^{2}
			+\lambda_k\|\bm A_0^*\|_{*}
			+\omega_k\|\bbm \Delta_k^*\|_{1}.
		\end{align*}
		Rearranging the terms yields
		\begin{align*}
			&\frac{1}{T_k}\sum_{t=1}^{T_k}\left(
			\Bigl\|\bbm y_{k,t}-\bigl(\widetilde{\bm A}_{0,k}+\widetilde{\bbm \Delta}_k\bigr)\bbm x_{k,t}\Bigr\|_{2}^{2} - \Bigl\|\bbm y_{k,t}-\bigl(\bm A_0^*+\bbm \Delta_k^*\bigr)\bbm x_{k,t}\Bigr\|_{2}^{2}\right) \notag \\
			&\qquad \qquad\qquad\qquad\qquad\qquad \leq
			\lambda_k\left(\|\bm A_0^*\|_{*} - \|\widetilde{\bm A}_{0,k}\|_{*}\right)
			+\omega_k\left(\|\bbm \Delta_k^*\|_{1} - \|\widetilde{\bbm \Delta}_k\|_{1}\right).
		\end{align*}
		Considering $\|\bbm a\|_2^2 - \|\bbm b\|_2^2 = \|\bbm a - \bbm b\|_2^2 + 2\langle\bbm a - \bbm b,\bbm b\rangle$ for any vectors $\bbm a$, $\bbm b$, and note that $\bbm y_{k,t} = (\bm A_0^* + \bbm \Delta_k^*)\bbm x_{k,t} + \bbm \epsilon_{k,t}$, we have
		\begin{align}\label{eq:local-rearrage-loss-real}
			&\frac{1}{T_k}\sum_{t=1}^{T_k}\left(\left\|\Bigl(\bigl(\widetilde{\bm A}_{0,k}+\widetilde{\bbm \Delta}_k\bigr)-\bigl(\bm A_0^* + \bbm \Delta_k^*\bigr)\Bigr)\bbm x_{k,t}\right\|_2^2 - 2 \langle\bigl(\widetilde{\bm A}_{0,k} - \bm A_0^*\bigr)\bbm x_{k,t},\bbm \epsilon_{k,t}\rangle - 2 \langle\bigl(\widetilde{\bbm \Delta}_{k} - \bbm \Delta_k^*\bigr)\bbm x_{k,t},\bbm \epsilon_{k,t}\rangle\right) \notag \\
			&\qquad \qquad\qquad\qquad\qquad\qquad \leq
			\lambda_k\left(\|\bm A_0^*\|_{*} - \|\widetilde{\bm A}_{0,k}\|_{*}\right)
			+\omega_k\left(\|\bbm \Delta_k^*\|_{1} - \|\widetilde{\bbm \Delta}_k\|_{1}\right).
		\end{align}
		For the inner product terms on the left-hand side, we have 
		\[	
			\langle\bigl(\widetilde{\bm A}_{0,k} - \bm A_0^*\bigr)\bbm x_{k,t},\bbm \epsilon_{k,t}\rangle = \tr\left(\bbm \epsilon_{k,t}^\top(\widetilde{\bm A}_{0,k} - \bm A_0^*)\bbm x_{k,t}\right) = \tr\left((\widetilde{\bm A}_{0,k} - \bm A_0^*)\bbm x_{k,t}\bbm \epsilon_{k,t}^\top\right) = \langle\widetilde{\bm A}_{0,k} - \bm A_0^*, \bbm \epsilon_{k,t}\bbm x_{k,t}^\top\rangle,
		\]
		and similarly, $\langle\bigl(\widetilde{\bbm \Delta}_{k} - \bbm \Delta_k^*\bigr)\bbm x_{k,t},\bbm \epsilon_{k,t}\rangle = \langle\widetilde{\bbm \Delta}_{k} - \bbm \Delta_k^*, \bbm \epsilon_{k,t}\bbm x_{k,t}^\top\rangle$. Then, rearranging the terms in \eqref{eq:local-rearrage-loss-real} and applying the H\"older's inequality yields
		\begin{align}\label{eq:local-model-error-basic-inequality}
			&\frac{1}{T_k}\sum_{t=1}^{T_k}\left(\left\|\Bigl(\bigl(\widetilde{\bm A}_{0,k}+\widetilde{\bbm \Delta}_k\bigr)-\bigl(\bm A_0^* + \bbm \Delta_k^*\bigr)\Bigr)\bbm x_{k,t}\right\|_2^2\right) \leq \lambda_k\left(\|\bm A_0^*\|_{*} - \|\widetilde{\bm A}_{0,k}\|_{*}\right)
			+\omega_k\left(\|\bbm \Delta_k^*\|_{1} - \|\widetilde{\bbm \Delta}_k\|_{1}\right) \notag \\
			&\qquad\qquad +
			2\left(\|\widetilde{\bm A}_{0,k} - \bm A_0^*\|_{*}\left\|\frac{1}{T_k}\sum_{t=1}^{T_k}\bbm x_{k,t}\bbm \epsilon_{k,t}^\top\right\|_{\op} + \|\widetilde{\bbm \Delta}_{k} - \bbm \Delta_k^*\|_{1}\left\|\frac{1}{T_k}\sum_{t=1}^{T_k}\bbm x_{k,t}\bbm \epsilon_{k,t}^\top\right\|_{\infty}\right).
		\end{align}
		To bound the right-hand side of \eqref{eq:local-model-error-basic-inequality}, we begin by deriving the constrained space for the estimators $\widetilde{\bm A}_{0,k}$ and $\widetilde{\bbm \Delta}_k$. To this end, we define the coordinate subspace induced by the large entries of $\bbm \Delta_k^*$. Specifically, for a given threshold $\kappa>0$, let
		\begin{align}\label{eq:S-space}
			\mathcal S_\kappa = \Bigl\{\bm S\in\mathbb R^{d\times pd}:S_{ij}=0 \text{ for all }(i,j)\in[d]\times[pd] \text{ such that }|(\bbm \Delta_k^*)_{ij}|<\kappa\Bigr\},
		\end{align}
		and let $\overline{\mathcal S}_\kappa^{\perp}$ denote the closed orthogonal complement of $\mathcal S_\kappa$, namely
		\begin{align}\label{eq:S-complement-space}
			\overline{\mathcal S}_\kappa^{\perp} = \Bigl\{\bm S\in\mathbb R^{d\times pd}:S_{ij}=0 \text{ for all }(i,j)\in[d]\times[pd] \text{ such that }|(\bbm \Delta_k^*)_{ij}|\geq\kappa\Bigr\}.
		\end{align}
		For any $\bm S\in\mathbb R^{d\times pd}$, we write $\bm S_{\mathcal S_\kappa}$ for the restriction of $\bm S$ to the coordinates with $|(\bbm \Delta_k^*)_{ij}|\geq\kappa$, i.e., $(\bm S_{\mathcal S_\kappa})_{ij}=S_{ij}$ if $|(\bbm \Delta_k^*)_{ij}|\geq \kappa$ and $(\bm S_{\mathcal S_\kappa})_{ij}=0$ otherwise, and define $\bm S_{\overline{\mathcal S}_\kappa^{\perp}}$ analogously.

		We also introduce the low-rank model subspace associated with $\bm A_0^*$. Let $\bm A_0^*=\bm U^*\bm \Sigma^*\bm V^{*\top}$ be an SVD and denote by $\bm U_r^*\in\mathbb R^{d\times r}$ and $\bm V_r^*\in\mathbb R^{pd\times r}$ the matrices collecting the first $r$ left and right singular vectors of $\bm A_0^*$, respectively. Define
		\[
			\mathcal L_r = \Bigl\{\bm M\in\mathbb R^{d\times pd}: \mathrm{col}(\bm M)\subseteq \mathrm{col}(\bm U_r^*), \text{ and } \mathrm{col}(\bm M^\top)\subseteq \mathrm{col}(\bm V_r^*)\Bigr\},
		\]
		and let $\overline{\mathcal L}_r^{\perp}$ denote the closed orthogonal complement of $\mathcal L_r$, that is,
		\[
			\overline{\mathcal L}_r^{\perp} = \Bigl\{\bm M\in\mathbb R^{d\times pd}: \mathrm{col}(\bm M)\perp \mathrm{col}(\bm U_r^*), \text{ and } \mathrm{col}(\bm M^\top)\perp \mathrm{col}(\bm V_r^*)\Bigr\}.
		\]
		For any $\bm M\in\mathbb R^{d\times pd}$, we write $\bm M_{\mathcal L_r}$ for the orthogonal projection of $\bm M$ onto $\mathcal L_r$, and write $\bm M_{\overline{\mathcal L}_r^{\perp}}$ for the orthogonal projection onto $\overline{\mathcal L}_r^{\perp}$. Then, by the decomposability of the nuclear norm and the $\ell_1$ norm, for the first two terms on the right-hand side of \eqref{eq:local-model-error-basic-inequality}, we have
		\begin{align}\label{eq:decomposable-norms}
			&\lambda_k\bigl(\|\bm A_0^*\|_{*}-\|\widetilde{\bm A}_{0,k}\|_{*}\bigr)
			+\omega_k\bigl(\|\bbm \Delta_k^*\|_{1}-\|\widetilde{\bbm \Delta}_k\|_{1}\bigr) \notag \\
			&\quad=
			\lambda_k\Bigl(\|\bm A_0^*\|_{*}-\bigl\|(\widetilde{\bm A}_{0,k}-\bm A_0^*)+ \bm A_0^*\bigr\|_{*}\Bigr)
			+\omega_k\Bigl(\|\bbm \Delta_k^*\|_{1}-\bigl\|(\widetilde{\bbm \Delta}_k-\bbm \Delta_k^*)+\bbm \Delta_k^*\bigr\|_{1}\Bigr) \notag \\
			&\quad=
			\lambda_k\Bigl(\bigl\|(\bm A_0^*)_{\mathcal L_r}+(\bm A_0^*)_{\overline{\mathcal L}_r^{\perp}}\bigr\|_{*}
			-\bigl\|(\widetilde{\bm A}_{0,k}-\bm A_0^*)_{\mathcal L_r}
			+(\widetilde{\bm A}_{0,k}-\bm A_0^*)_{\overline{\mathcal L}_r^{\perp}}
			+(\bm A_0^*)_{\mathcal L_r}+(\bm A_0^*)_{\overline{\mathcal L}_r^{\perp}}\bigr\|_{*}
			\Bigr) \notag \\
			&\qquad\quad
			+\omega_k\Bigl(\bigl\|(\bbm \Delta_k^*)_{\mathcal S_\kappa}+(\bbm \Delta_k^*)_{\mathcal S_\kappa^{\perp}}\bigr\|_{1}
			-\bigl\|(\widetilde{\bbm \Delta}_k-\bbm \Delta_k^*)_{\mathcal S_\kappa}
			+(\widetilde{\bbm \Delta}_k-\bbm \Delta_k^*)_{\mathcal S_\kappa^{\perp}}
			+(\bbm \Delta_k^*)_{\mathcal S_\kappa}+(\bbm \Delta_k^*)_{\mathcal S_\kappa^{\perp}}\bigr\|_{1}
			\Bigr) \notag \\
			&\quad\leq
			\lambda_k\Bigl(\bigl\|(\bm A_0^*)_{\mathcal L_r}+(\bm A_0^*)_{\overline{\mathcal L}_r^{\perp}}\bigr\|_{*}
			-\bigl\|(\bm A_0^*)_{\mathcal L_r}
			+(\widetilde{\bm A}_{0,k}-\bm A_0^*)_{\overline{\mathcal L}_r^{\perp}}\bigr\|_{*}
			+\bigl\|(\bm A_0^*)_{\overline{\mathcal L}_r^{\perp}}
			+(\widetilde{\bm A}_{0,k}-\bm A_0^*)_{\mathcal L_r}\bigr\|_{*}
			\Bigr) \notag \\
			&\qquad\quad
			+\omega_k\Bigl(\bigl\|(\bbm \Delta_k^*)_{\mathcal S_\kappa}+(\bbm \Delta_k^*)_{\mathcal S_\kappa^{\perp}}\bigr\|_{1}
			-\bigl\|(\bbm \Delta_k^*)_{\mathcal S_\kappa}
			+(\widetilde{\bbm \Delta}_k-\bbm \Delta_k^*)_{\mathcal S_\kappa^{\perp}}\bigr\|_{1}
			+\bigl\|(\bbm \Delta_k^*)_{\mathcal S_\kappa^{\perp}}
			+(\widetilde{\bbm \Delta}_k-\bbm \Delta_k^*)_{\mathcal S_\kappa}\bigr\|_{1}
			\Bigr) \notag \\
			&\quad=
			\lambda_k\Bigl(
			\bigl\|(\widetilde{\bm A}_{0,k}-\bm A_0^*)_{\mathcal L_r}\bigr\|_{*}
			-\bigl\|(\widetilde{\bm A}_{0,k}-\bm A_0^*)_{\overline{\mathcal L}_r^{\perp}}\bigr\|_{*}
			+2\bigl\|(\bm A_0^*)_{\overline{\mathcal L}_r^{\perp}}\bigr\|_{*}\Bigr) \notag \\
			&\qquad\quad
			+\omega_k\Bigl(
			\bigl\|(\widetilde{\bbm \Delta}_k-\bbm \Delta_k^*)_{\mathcal S_\kappa}\bigr\|_{1}
			-\bigl\|(\widetilde{\bbm \Delta}_k-\bbm \Delta_k^*)_{\mathcal S_\kappa^{\perp}}\bigr\|_{1}
			+2\bigl\|(\bbm \Delta_k^*)_{\mathcal S_\kappa^{\perp}}\bigr\|_{1}\Bigr).
		\end{align}
		The last equality follows from the fact that the nuclear norm and the $\ell_1$ norm are both decomposable with respect to the defined subspaces, i.e., $\|\bm M_{\mathcal L_r} + \bm N_{\overline{\mathcal L}_r^{\perp}}\|_{*} = \|\bm M_{\mathcal L_r}\|_{*} + \|\bm N_{\overline{\mathcal L}_r^{\perp}}\|_{*}$ and $\|\bm M_{\mathcal S_\kappa} + \bm N_{\mathcal S_\kappa^{\perp}}\|_{1} = \|\bm M_{\mathcal S_\kappa}\|_{1} + \|\bm N_{\mathcal S_\kappa^{\perp}}\|_{1}$. 
		
		Next, we bound the deviation terms. By Lemmas~\ref{lem:var-XtE-infty-bound} and \ref{lem:var-XtE-op-bound}, there exist constants $C_{\op},C_{\infty}>0$ such that, with probability at least $1 - C\exp(-C pd) - C\exp(-C\log(pd))$,
		\[
			\left\|\frac{1}{T_k}\sum_{t=1}^{T_k}\bbm x_{k,t}\bbm\epsilon_{k,t}^{\top}\right\|_{\op} \leq C_{\op} C_{\epsilon,\mathcal A}^{(k)}\sigma^2\sqrt{\frac{pd}{T_k}},
			\ \text{ and } \ 
			\left\|\frac{1}{T_k}\sum_{t=1}^{T_k}\bbm x_{k,t}\bbm\epsilon_{k,t}^{\top}\right\|_{\infty} \leq C_{\infty} C_{\epsilon,\mathcal A}^{(k)}\sigma^2\sqrt{\frac{\log(pd)}{T_k}}.
		\]
		Consistent with the prescribed rates of $\lambda_k$ and $\omega_k$, we choose the tuning parameters with sufficiently large constants, namely $\lambda_k \geq 4C_{\op}C_{\epsilon,\mathcal A}^{(k)}\sigma^2\sqrt{pd/T_k}$ and $\omega_k \geq 4C_{\infty}\, C_{\epsilon,\mathcal A}^{(k)}\sigma^2\sqrt{\log(pd)/T_k}$. Then, on the same event,
		\begin{align}\label{eq:deviation-bounds-choice-of-lambda-omega}
			\left\|\frac{1}{T_k}\sum_{t=1}^{T_k}\bbm x_{k,t}\bbm\epsilon_{k,t}^{\top}\right\|_{\op}\leq \frac{\lambda_k}{4},
			\ \text{ and } \ 
			\left\|\frac{1}{T_k}\sum_{t=1}^{T_k}\bbm x_{k,t}\bbm\epsilon_{k,t}^{\top}\right\|_{\infty}\leq \frac{\omega_k}{4}.
		\end{align}
		For the last two terms on the right-hand side of \eqref{eq:local-model-error-basic-inequality}, using these bounds and the decomposability of the norms again, we have, with probability at least $1 - C\exp(-C pd) - C\exp(-C\log(pd))$,
		\begin{align}\label{eq:local-model-error-deviation-bound}
			&2\left(\|\widetilde{\bm A}_{0,k} - \bm A_0^*\|_{*}\left\|\frac{1}{T_k}\sum_{t=1}^{T_k}\bbm x_{k,t}\bbm \epsilon_{k,t}^\top\right\|_{\op}
			+\|\widetilde{\bbm \Delta}_{k} - \bbm \Delta_k^*\|_{1}\left\|\frac{1}{T_k}\sum_{t=1}^{T_k}\bbm x_{k,t}\bbm \epsilon_{k,t}^\top\right\|_{\infty}\right) \\
			&\quad\leq 
			\frac{\lambda_k}{2}\left(\|(\widetilde{\bm A}_{0,k} - \bm A_0^*)_{\mathcal L_r}\|_{*} + \|(\widetilde{\bm A}_{0,k} - \bm A_0^*)_{\overline{\mathcal L}_r^{\perp}}\|_{*}\right) + \frac{\omega_k}{2}\left(\|(\widetilde{\bbm \Delta}_{k} - \bbm \Delta_k^*)_{\mathcal S_\kappa}\|_{1} + \|(\widetilde{\bbm \Delta}_{k} - \bbm \Delta_k^*)_{\mathcal S_\kappa^{\perp}}\|_{1}\right) \notag.
		\end{align}
		Combining \eqref{eq:local-model-error-basic-inequality}, \eqref{eq:decomposable-norms}, and \eqref{eq:local-model-error-deviation-bound} yields
		\begin{align*}
			0 &\leq
			\frac{1}{T_k}\sum_{t=1}^{T_k}\left(\left\|\Bigl(\bigl(\widetilde{\bm A}_{0,k}+\widetilde{\bbm \Delta}_k\bigr)-\bigl(\bm A_0^* + \bbm \Delta_k\bigr)\Bigr)\bbm x_{k,t}\right\|_2^2\right) \\
			&\leq
			\lambda_k\Bigl(\frac{3}{2}\|(\widetilde{\bm A}_{0,k}-\bm A_0^*)_{\mathcal L_r}\|_{*}
			-\frac{1}{2}\|(\widetilde{\bm A}_{0,k}-\bm A_0^*)_{\overline{\mathcal L}_r^{\perp}}\|_{*}
			+2\|(\bm A_0^*)_{\overline{\mathcal L}_r^{\perp}}\|_{*}\Bigr) \\
			&\quad +
			\omega_k\Bigl(\frac{3}{2}\|(\widetilde{\bbm \Delta}_k-\bbm \Delta_k^*)_{\mathcal S_\kappa}\|_{1}
			-\frac{1}{2}\|(\widetilde{\bbm \Delta}_k-\bbm \Delta_k^*)_{\mathcal S_\kappa^{\perp}}\|_{1}
			+2\|(\bbm \Delta_k^*)_{\mathcal S_\kappa^{\perp}}\|_{1}\Bigr).
		\end{align*}
		Rearranging the terms, we obtain the following cone-type inequality
		\begin{align}\label{eq:cone-constraint-Crk-correct}
		&\lambda_k\bigl\|\bigl(\widetilde{\bm A}_{0,k}-\bm A_0^*\bigr)_{\overline{\mathcal L}_r^{\perp}}\bigr\|_{*}
		+\omega_k\bigl\|\bigl(\widetilde{\bbm \Delta}_k-\bbm \Delta_k^*\bigr)_{\mathcal S_\kappa^{\perp}}\bigr\|_{1}\notag\\
		&\leq \lambda_k\left(3\bigl\|\bigl(\widetilde{\bm A}_{0,k}-\bm A_0^*\bigr)_{\mathcal L_r}\bigr\|_{*}
		+4\bigl\|\bigl(\bm A_0^*\bigr)_{\overline{\mathcal L}_r^{\perp}}\bigr\|_{*}\right)
		+\omega_k\left(3\bigl\|\bigl(\widetilde{\bbm \Delta}_k-\bbm \Delta_k^*\bigr)_{\mathcal S_\kappa}\bigr\|_{1}
		+4\bigl\|\bigl(\bbm \Delta_k^*\bigr)_{\mathcal S_\kappa^{\perp}}\bigr\|_{1}\right).
		\end{align}
		Next, we derive the upper bound for the estimation error in this constrained space. By just plugging in \eqref{eq:deviation-bounds-choice-of-lambda-omega} into the right-hand side of \eqref{eq:local-model-error-basic-inequality}, we have
		\begin{align}\label{eq:local-model-error-upper-bound-before-RSC}
			&\frac{1}{T_k}\sum_{t=1}^{T_k}\left(\left\|\Bigl(\bigl(\widetilde{\bm A}_{0,k}+\widetilde{\bbm \Delta}_k\bigr)-\bigl(\bm A_0^* + \bbm \Delta_k^*\bigr)\Bigr)\bbm x_{k,t}\right\|_2^2\right) \notag\\
			&\leq \lambda_k\left(\|\bm A_0^*\|_{*} - \|\widetilde{\bm A}_{0,k}\|_{*}\right) + \omega_k\left(\|\bbm \Delta_k^*\|_{1} - \|\widetilde{\bbm \Delta}_k\|_{1}\right) + \frac{\lambda_k}{2}\left(\|\widetilde{\bm A}_{0,k} - \bm A_0^*\|_{*}\right) + \frac{\omega_k}{2}\left(\|\widetilde{\bbm \Delta}_{k} - \bbm \Delta_k^*\|_{1}\right) \notag\\
			&\leq \frac{3\lambda_k}{2}\left(\|\widetilde{\bm A}_{0,k} - \bm A_0^*\|_{*}\right) + \frac{3\omega_k}{2}\left(\|\widetilde{\bbm \Delta}_{k} - \bbm \Delta_k^*\|_{1}\right).
		\end{align}
		Then, we derive the lower bound for the left-hand side of \eqref{eq:local-model-error-upper-bound-before-RSC}. This can be achieved by two steps. First, by Lemma~\ref{lem:RSC-var}, there exists a constant $C_{\mathrm{RSC}}^{(k)} = c_{\epsilon,\mathcal A}^{(k)}/2>0$ such that, given the sample size condition $T_k\gtrsim(C_{\epsilon,\mathcal A}^{(k)}/C_{\mathrm{RSC}}^{(k)} \vee 1)^2p^2d$, with probability at least $1 - C\exp(-C pd)$,
		\begin{align}\label{eq:RSC-var-lower-bound}
			\frac{1}{T_k}\sum_{t=1}^{T_k}\left\|\Bigl(\bigl(\widetilde{\bm A}_{0,k}+\widetilde{\bbm \Delta}_k\bigr)-\bigl(\bm A_0^* + \bbm \Delta_k^*\bigr)\Bigr)\bbm x_{k,t}\right\|_2^2 \geq C_{\mathrm{RSC}}^{(k)}\left\|\bigl(\widetilde{\bm A}_{0,k}+\widetilde{\bbm \Delta}_k\bigr)-\bigl(\bm A_0^* + \bbm \Delta_k^*\bigr)\right\|_{\mathrm F}^2.
		\end{align}
		Second, noting that $\|\bm M +\bm N\|_{\mathrm F}^2 \geq \|\bm M\|_{\mathrm F}^2 + \|\bm N\|_{\mathrm F}^2 - 2|\langle \bm M,\bm N\rangle|$ for any matrices $\bm M,\bm N$ of the same dimension, we have
		\begin{align}\label{eq:F-norm-decomposition}
			&C_{\mathrm{RSC}}^{(k)}\bigl\|\bigl(\widetilde{\bm A}_{0,k} + \widetilde{\bbm \Delta}_k\bigr) - \bigl(\bm A_0^* + \bbm \Delta_k^*\bigr)\bigr\|_{\mathrm F}^2 \notag\\ 
			&\geq C_{\mathrm{RSC}}^{(k)}\left(\bigl\|\widetilde{\bm A}_{0,k} - \bm A_0^*\bigr\|_{\mathrm F}^2 + \bigl\|\widetilde{\bbm \Delta}_k - \bbm \Delta_k^*\bigr\|_{\mathrm F}^2\right) - 2C_{\mathrm{RSC}}^{(k)}|\langle \widetilde{\bm A}_{0,k} - \bm A_0^*, \widetilde{\bbm \Delta}_k - \bbm \Delta_k^*\rangle|.
		\end{align}
		By the H\"older's inequality, we have 
		\begin{align*}
			2C_{\mathrm{RSC}}^{(k)}|\langle \widetilde{\bm A}_{0,k} - \bm A_0^*, \widetilde{\bbm \Delta}_k - \bbm \Delta_k^*\rangle| &\leq 2C_{\mathrm{RSC}}^{(k)}\bigl\|\widetilde{\bm A}_{0,k} - \bm A_0^*\bigr\|_{*}\bigl\|\widetilde{\bbm \Delta}_k - \bbm \Delta_k^*\bigr\|_{\op} \\
			&\leq 2C_{\mathrm{RSC}}^{(k)}\|\widetilde{\bm A}_{0,k} - \bm A_0^*\|_{*}\left(\bigl\|\widetilde{\bbm \Delta}_k\bigr\|_{\op} + \bigl\|\bbm \Delta_k^*\bigr\|_{\op}\right).
		\end{align*}
		Since $\widetilde{\bbm \Delta}_{k}$ is feasible for \eqref{eq:local-estimator}, we have $\|\widetilde{\bbm \Delta}_{k}\|_{\op}\leq \zeta$. Moreover, $\|\bbm \Delta_k^*\|_{\op}\leq \zeta$. Then, by the prescribed choice of the tuning parameter $\lambda_k \gtrsim C_{\mathrm{RSC}}^{(k)}\zeta$ as in Proposition \ref{thm:local-model-error-upper-bound}, we can pick a constant $C>0$ such that
		\begin{align}\label{eq:inner-product-bound}
			2C_{\mathrm{RSC}}^{(k)}|\langle \widetilde{\bm A}_{0,k} - \bm A_0^*, \widetilde{\bbm \Delta}_k - \bbm \Delta_k^*\rangle| \leq 4C_{\mathrm{RSC}}^{(k)}\zeta \|\widetilde{\bm A}_{0,k} - \bm A_0^*\|_{*} \leq C \lambda_k \|\widetilde{\bm A}_{0,k} - \bm A_0^*\|_{*}.
		\end{align}
		Consequently, combining \eqref{eq:inner-product-bound} with \eqref{eq:F-norm-decomposition} yields
		\begin{align}\label{eq:F-norm-lower-bound}
			&C_{\mathrm{RSC}}^{(k)}\bigl\|\bigl(\widetilde{\bm A}_{0,k}+\widetilde{\bbm \Delta}_k\bigr)-\bigl(\bm A_0^* + \bbm \Delta_k\bigr)\bigr\|_{\mathrm F}^2 \geq C_{\mathrm{RSC}}^{(k)}\left(\bigl\|\widetilde{\bm A}_{0,k} - \bm A_0^*\bigr\|_{\mathrm F}^2 + \bigl\|\widetilde{\bbm \Delta}_k - \bbm \Delta_k^*\bigr\|_{\mathrm F}^2\right) - C \lambda_k \|\widetilde{\bm A}_{0,k} - \bm A_0^*\|_{*} \notag\\
			& \quad \geq C_{\mathrm{RSC}}^{(k)}\left(\bigl\|\widetilde{\bm A}_{0,k} - \bm A_0^*\bigr\|_{\mathrm F}^2 + \bigl\|\widetilde{\bbm \Delta}_k - \bbm \Delta_k^*\bigr\|_{\mathrm F}^2\right) - C \left(\lambda_k\bigl\|\widetilde{\bm A}_{0,k} - \bm A_0^*\bigr\|_{*} + \omega_k\bigl\|\widetilde{\bbm \Delta}_k - \bbm \Delta_k^*\bigr\|_{1}\right).
		\end{align}
		Combining \eqref{eq:RSC-var-lower-bound} and \eqref{eq:F-norm-lower-bound}, we have, with probability at least $1 - C\exp(-C pd)$,
		\begin{align}\label{eq:local-model-error-lower-bound-after-RSC}
			&\frac{1}{T_k}\sum_{t=1}^{T_k}\left(\left\|\Bigl(\bigl(\widetilde{\bm A}_{0,k}+\widetilde{\bbm \Delta}_k\bigr)-\bigl(\bm A_0^* + \bbm \Delta_k^*\bigr)\Bigr)\bbm x_{k,t}\right\|_2^2\right) + C \left(\lambda_k\bigl\|\widetilde{\bm A}_{0,k} - \bm A_0^*\bigr\|_{*} + \omega_k\bigl\|\widetilde{\bbm \Delta}_k - \bbm \Delta_k^*\bigr\|_{1}\right) \notag\\
			&\quad \geq C_{\mathrm{RSC}}^{(k)}\left(\bigl\|\widetilde{\bm A}_{0,k} - \bm A_0^*\bigr\|_{\mathrm F}^2 + \bigl\|\widetilde{\bbm \Delta}_k - \bbm \Delta_k^*\bigr\|_{\mathrm F}^2\right).
		\end{align}
		Finally, combining \eqref{eq:local-model-error-upper-bound-before-RSC} and \eqref{eq:local-model-error-lower-bound-after-RSC} yields
		\begin{align}\label{eq:RSC-applied}
			&C_{\mathrm{RSC}}^{(k)}
			\Bigl(\bigl\|\widetilde{\bm A}_{0,k} - \bm A_0^*\bigr\|_{\F}^2
			+\bigl\|\widetilde{\bbm \Delta}_k - \bbm \Delta_k^*\bigr\|_{\F}^2\Bigr) \leq
			\Bigl(\frac{3}{2}+C\Bigr)
			\Bigl(\lambda_k\bigl\|\widetilde{\bm A}_{0,k} - \bm A_0^*\bigr\|_{*}
			+\omega_k\bigl\|\widetilde{\bbm \Delta}_k - \bbm \Delta_k^*\bigr\|_{1}\Bigr) \notag\\
			&\quad \leq
			C'\Bigl[
			\lambda_k\Bigl(\bigl\|(\widetilde{\bm A}_{0,k}-\bm A_0^*)_{\mathcal L_r}\bigr\|_{*}
			+\bigl\|(\widetilde{\bm A}_{0,k}-\bm A_0^*)_{\overline{\mathcal L}_r^{\perp}}\bigr\|_{*}\Bigr)
			+\omega_k\Bigl(\bigl\|(\widetilde{\bbm \Delta}_k-\bbm \Delta_k^*)_{\mathcal S_\kappa}\bigr\|_{1}
			+\bigl\|(\widetilde{\bbm \Delta}_k-\bbm \Delta_k^*)_{\mathcal S_\kappa^{\perp}}\bigr\|_{1}\Bigr)
			\Bigr] \notag\\
			&\quad \overset{(i)}{\leq}
			4C'\Bigl(
			\lambda_k\bigl\|(\widetilde{\bm A}_{0,k}-\bm A_0^*)_{\mathcal L_r}\bigr\|_{*}
			+\omega_k\bigl\|(\widetilde{\bbm \Delta}_k-\bbm \Delta_k^*)_{\mathcal S_\kappa}\bigr\|_{1}
			+\lambda_k\bigl\|(\bm A_0^*)_{\overline{\mathcal L}_r^{\perp}}\bigr\|_{*}
			+\omega_k\bigl\|(\bbm \Delta_k^*)_{\mathcal S_\kappa^{\perp}}\bigr\|_{1}
			\Bigr) \notag\\
			&\quad \overset{(ii)}{\leq}
			4C'\Bigl(
			\lambda_k\sqrt{2r}\,\bigl\|\widetilde{\bm A}_{0,k}-\bm A_0^*\bigr\|_{\F} 
			+\omega_k\bigl\|(\widetilde{\bbm \Delta}_k-\bbm \Delta_k^*)_{\mathcal S_\kappa}\bigr\|_{1}
			+\omega_k\bigl\|(\bbm \Delta_k^*)_{\mathcal S_\kappa^{\perp}}\bigr\|_{1}\Bigr).
		\end{align}
		Inequality $(i)$ follows from the cone-type bound \eqref{eq:cone-constraint-Crk-correct}. Inequality $(ii)$ follows from the rank condition $\mathrm{rank}((\bm A_0^*)_{\mathcal{L}_r}) = \mathrm{rank}(\bm A_0^*) = r$ and the restriction $\mathrm{rank}((\widetilde{\bm A}_{0,k})_{\mathcal{L}_r})\leq r$, together with the fact that $\|\bm M\|_{*}\leq \sqrt{r}\,\|\bm M\|_{\F}$ for any matrix $\bm M$ with $\mathrm{rank}(\bm M)\leq r$. Moreover, since $\bm A_0^*\in\mathcal L_r$, we have $(\bm A_0^*)_{\overline{\mathcal L}_r^{\perp}}=\bm 0$. Considering the threshold $\kappa$ of the sparse space $\mathcal S_\kappa=\{\bm S\in\mathbb R^{d\times pd}:S_{ij}=0 \text{ for all }(i,j)\in[d]\times[pd] \text{ such that }|(\bbm \Delta_k^*)_{ij}|<\kappa\}$, by the definition of $\mathbb B_q(s_q)$ in Section \ref{sec:theoretical_analysis}, we have $s_q \geq |\mathcal S_\kappa| \kappa^q$, and thus $|\mathcal S_\kappa| \leq s_q \kappa^{-q}$. 
		Then, it follows that $\|(\widetilde{\bbm \Delta}_k-\bbm \Delta_k^*)_{\mathcal S_\kappa}\|_{1} \leq \sqrt{|\mathcal{S}_\kappa|}\|(\widetilde{\bbm \Delta}_k-\bbm \Delta_k^*)_{\mathcal S_\kappa}\|_{\F} \leq \sqrt{s_q}\kappa^{-q/2}\|\widetilde{\bbm \Delta}_k-\bbm \Delta_k^*\|_{\F}$.
		Besides, 
		\[
			\|(\bbm \Delta_k^*)_{\overline{\mathcal S}_\kappa^{\perp}}\|_{1} = \sum_{(i,j)\in\overline{\mathcal S}_\kappa^{\perp}} |(\bbm \Delta_k^*)_{ij}| \leq \sum_{(i,j)\in\overline{\mathcal S}_\kappa^{\perp}} |(\bbm \Delta_k^*)_{ij}|^q \kappa^{1-q} \leq s_q \kappa^{1-q}.
		\]
		To conclude the derivation, we translate the preceding inequality \eqref{eq:RSC-applied} into an explicit error bound. For brevity, let $x_A=\|\widetilde{\bm A}_{0,k}-\bm A_0^*\|_{\F}\geq 0$ and $x_\Delta=\|\widetilde{\bbm\Delta}_k-\bbm\Delta_k^*\|_{\F}\geq 0$. Consequently, \eqref{eq:RSC-applied} further implies
		\begin{align}\label{eq:two-var-start}
			x_A^2+x_\Delta^2 \leq \frac{4C'}{C_{\mathrm{RSC}}^{(k)}}\Bigl(\lambda_k\sqrt{2r}x_A + \omega_k\sqrt{s_q}\kappa^{-q/2}x_\Delta + \omega_k s_q \kappa^{1-q}\Bigr).
		\end{align}
		Using $ab \leq (a^2+b^2)/2$, we further obtain that for any $\eta>0$,
		\[
			\lambda_k\sqrt{2r}x_A  = \sqrt{\eta}x_A \cdot \frac{\lambda_k\sqrt{2r}}{\sqrt{\eta}} \leq \frac{\eta}{2}x_A^2+\frac{\lambda_k^2(2r)}{2\eta},
			\ \text{ and } \
			\omega_k\sqrt{s_q}\kappa^{-q/2}x_\Delta = \sqrt{\eta}x_\Delta \cdot \frac{\omega_k\sqrt{s_q}\kappa^{-q/2}}{\sqrt{\eta}}\leq \frac{\eta}{2}x_\Delta^2+\frac{\omega_k^2 s_q\kappa^{-q}}{2\eta}.
		\]
		Choosing $\eta = C_{\mathrm{RSC}}^{(k)}/(4C')$ and substituting these two inequalities into \eqref{eq:two-var-start} yields
		\[
			x_A^2+x_\Delta^2 \leq \frac12(x_A^2+x_\Delta^2) + \frac 12\Bigl(\frac{4C'}{C_{\mathrm{RSC}}^{(k)}}\Bigr)^2\Bigl(\lambda_k^2(2r) + \omega_k^2 s_q\kappa^{-q}\Bigr) + \frac{4C'}{C_{\mathrm{RSC}}^{(k)}}\omega_k s_q\kappa^{1-q}.
		\]
		Rearranging the terms and absorbing absolute constants into $\lesssim$, we have
		\[
			x_A^2+x_\Delta^2 \lesssim r\left(\frac{\lambda_k}{C_{\mathrm{RSC}}^{(k)}}\right)^2 + s_q\left(\frac{\omega_k}{C_{\mathrm{RSC}}^{(k)}}\right)^2\kappa^{-q} + s_q\frac{\omega_k}{C_{\mathrm{RSC}}^{(k)}}\kappa^{1-q}.
		\]
		Then, choosing $\kappa \asymp \omega_k/C_{\mathrm{RSC}}^{(k)}$ and returning $x_A$ and $x_\Delta$ to their matrix notations, we have
		\[
			\bigl\|\widetilde{\bm A}_{0,k}-\bm A_0^*\bigr\|_{\F}^2 + \bigl\|\widetilde{\bbm\Delta}_k-\bbm\Delta_k^*\bigr\|_{\F}^2 \lesssim r\left(\frac{\lambda_k}{C_{\mathrm{RSC}}^{(k)}}\right)^2 + s_q \left(\frac{\omega_k}{C_{\mathrm{RSC}}^{(k)}}\right)^{2-q}.
		\]
		Furthermore, using the inequalities $(a+b)^2 \leq 2(a^2+b^2)$ and $a^2+b^2\leq (a+b)^2$ for $a,b\geq 0$, we have
		\[
			\bigl\|\widetilde{\bm A}_{0,k}-\bm A_0^*\bigr\|_{\F} + \bigl\|\widetilde{\bbm\Delta}_k-\bbm\Delta_k^*\bigr\|_{\F} \lesssim \sqrt{r}\frac{\lambda_k}{C_{\mathrm{RSC}}^{(k)}} + \sqrt{s_q}\left(\frac{\omega_k}{C_{\mathrm{RSC}}^{(k)}}\right)^{1-q/2}.
		\]
		Combining the above bound with the sample size conditions as well as the probability with respect to the deviation bounds \eqref{eq:deviation-bounds-choice-of-lambda-omega} and the RSC bound \eqref{eq:RSC-var-lower-bound}. This concludes the proof.
	\end{proof}

	\begin{proof}[\textbf{Proof of Theorem \ref{thm:common-A0-error-bounds}}]
		The proof proceeds in three parts. In Part~I, we derive the upper bound for $\psi_k^{(n)} = \bigl\|\mathcal P_{\mathcal T_r(\bm A_0^{(n)})}\bigl(\bm G_k(\bm A_0^{(n)};\mathcal D_k)\bigr)-\mathcal P_{\mathcal T_r(\bm A_0^{(n)})}\bigl(\bm G_k(\bm A_0^{(n)};\mathcal D_k')\bigr)\bigr\|_{\F}$ with respect to $\mathcal D_k = \{\{\bbm y_{k,t}\}_{t=1-p}^{0},\{\bbm\epsilon_{k,t}\}_{t=1}^{T_k}\}$ and $\mathcal D'_k = \{\{\bbm y'_{k,t}\}_{t=1-p}^{0},\{\bbm\epsilon'_{k,t}\}_{t=1}^{T_k}\}$. In Part~II, we prove by induction that both $\psi_k^{(n)}$ and the estimation error $\|\bm A_0^{(n)}-\bm A_0^*\|_{\F}$ remain uniformly bounded with high probability over all $n < N_g$. In Part~III, we select a suitable iteration number $N_g$ to derive the final error bound.

		\noindent
		\textbf{Part I (Sensitivity characterization).}
		Consider $\mathcal D_k = \{\{\bbm y_{k,t}\}_{t=1-p}^{0},\{\bbm\epsilon_{k,t}\}_{t=1}^{T_k}\}$ and the neighboring local data set $\mathcal D_k'$ with respect to the sensitive index set $\mathcal I$ at client $k$, i.e., $\mathcal D_k \overset{\mathcal I}{\sim} \mathcal D_k'$. Fix a sensitive index set $\mathcal I\subset[d]$ and let $\mathcal I^{\mathsf c}:=[d]\setminus\mathcal I$. For any vector $\bbm v\in\mathbb R^{d}$, denote by $\bbm v_{\mathcal I}\in\mathbb R^{|\mathcal I|}$ the subvector obtained by restricting $\bbm v$ to the coordinates indexed by $\mathcal I$, and similarly for $\bbm v_{\mathcal I^{\mathsf c}}$. There exists $t_0\in[T_k]$ such that $\bbm\epsilon_{k,t,\mathcal I}=\bbm\epsilon'_{k,t,\mathcal I}$ for all $t\neq t_0$ and $\bbm\epsilon_{k,t_0,\mathcal I}\neq \bbm\epsilon'_{k,t_0,\mathcal I}$, where $\bbm\epsilon'_{k,t_0,\mathcal I}$ is an $i.i.d.$ copy of $\bbm\epsilon_{k,t_0,\mathcal I}$. Moreover, $\bbm\epsilon_{k,t,\mathcal I^{\mathsf c}}=\bbm\epsilon'_{k,t,\mathcal I^{\mathsf c}}$ for all $t\in[T_k]$. The perturbed trajectory is generated by the same VAR model with the same coefficient $\bm A_k$ and the same future innovations, $\bbm y'_{k,t} = \bm A_k\bbm x'_{k,t} + \bbm\epsilon'_{k,t}$ for $t\geq 1$, using the same initial values as $\{\bbm y_{k,t}\}$ for $t < t_0$.
		
		Let $\bbm y_{k,t}^\Delta=\bbm y'_{k,t}-\bbm y_{k,t}$ and $\bbm x_{k,t}^\Delta=\bbm x'_{k,t}-\bbm x_{k,t}$. Without loss of generality, we assume the first $|\mathcal I|$ coordinates of $\bbm\epsilon_{k,t}$ and $\bbm\epsilon'_{k,t}$ are the sensitive coordinates, and the remaining coordinates are non-sensitive, i.e., $\mathcal I=\{1,\ldots,|\mathcal I|\}$. Then, the primitive discrepancy is the selective innovation difference $\bbm \epsilon_{k,t_0,\mathcal I}^\Delta= \bbm\epsilon'_{k,t_0,\mathcal I}-\bbm\epsilon_{k,t_0,\mathcal I}\in\mathbb R^{|\mathcal I|}$, and thus $\bbm\epsilon_{k,t_0}^\Delta = \bbm\epsilon_{k,t_0}' - \bbm\epsilon_{k,t_0} = (\bbm\epsilon_{k,t_0,\mathcal I}^\Delta,\bm 0_{\mathcal I^{\mathsf c}})$. Consequently, $\bbm\epsilon'_{k,t}-\bbm\epsilon_{k,t}=\bm 0$ for $t\neq t_0$ and $\bbm\epsilon'_{k,t_0}-\bbm\epsilon_{k,t_0}=\bbm\epsilon_{k,t_0}^\Delta$.

		Consider a fixed $k$, under Assumption \ref{assump:stationarity}, there are no roots in the closed unit disk $\{|z|\leq 1\}$, hence every root $z_0$ satisfies $|z_0|>1$. Therefore, the quantity $\varrho_* = \inf\{|z|: \det\mathcal A_k(z)=0\}$ is well-defined, equals $+\infty$ when $\det\mathcal A_k(z)$ has no roots, and otherwise equals the minimum modulus among the finitely many roots; in particular $\varrho_*>1$. Fix any radius $\varrho\in(1,\varrho_*)$. Define the uniform transfer-function bound
		\begin{align}\label{eq:Car-def}
			C_{\mathcal{A}}(\varrho) = \sup_{k\in[K]}\sup_{|z|=\varrho}\bigl\|\mathcal A_k^{-1}(z)\bigr\|_{\mathrm{op}} = \left(\inf_{k\in[K]}\inf_{|z|=\varrho}\lambda_{\min}\big(\mathcal A_k^{\dagger}(z)\mathcal A_k(z)\big)\right)^{-\frac{1}{2}} < \infty,
		\end{align}
		which is finite because $z\mapsto \mathcal A_k^{-1}(z)$ is continuous on the compact circle $\{|z|=\varrho\}$.

		Note that Algorithm~\ref{algorithmTL} only enforces differential privacy on each client’s local dataset through the gradient-transfer step. Therefore, only $\|\mathcal P_{\mathcal T_r(\bm A_0^{(n)})}\!\bigl(\bm G_k(\bm A_0^{(n)};\mathcal D_k)\bigr)-\mathcal P_{\mathcal T_r(\bm A_0^{(n)})}\bigl(\bm G_k(\bm A_0^{(n)};\mathcal D_k')\bigr)\|_{\F}$ are related to sensitivities, where $\bm G_k(\bm A_0^{(n)};\mathcal D_k)$ and $\bm G_k(\bm A_0^{(n)};\mathcal D_k')$ denote the corresponding full gradients computed at client $k$ from $\mathcal D_k$ and $\mathcal D_k'$. Hereafter, we derive an upper bound for $\bigl\|\mathcal P_{\mathcal T_r(\bm A_0^{(n)})}\bigl(\bm G_k(\bm A_0^{(n)};\mathcal D_k)\bigr)-\mathcal P_{\mathcal T_r(\bm A_0^{(n)})}\bigl(\bm G_k(\bm A_0^{(n)};\mathcal D_k')\bigr)\bigr\|_{\F}$ in terms of the estimation error $\|\bm A_0^{(n)}-\bm A_k^*\|_{\F}$. To control the discrepancy induced by the difference of $\mathcal{D}_k$ and $\mathcal{D}_k'$, we use a geometric bound on the impulse-response coefficients associated with $\mathcal A_k^{-1}(z)$ in Lemma~\ref{lem:psi-geom}. Specifically, set $\vartheta=\varrho^{-1}$. For any $|z|<\varrho_*$, the inverse lag polynomial admits the power-series expansion $\mathcal A_k^{-1}(z)=\sum_{h\geq 0}\bm\Psi_{k,h}z^h$, and the corresponding impulse-response coefficients $\{\bm\Psi_{k,h}\}_{h\geq 0}$ satisfy the geometric decay bound:
		\begin{align}\label{eq:Psi-op-bound}
			\|\bm\Psi_{k,h}\|_{\op}\leq C_{\mathcal A}(\varrho)\vartheta^h, \ \text{ for all } \ h\geq 0.
		\end{align}

		Before the proof, we first provide high-probability upper bounds for $\|\bbm \epsilon_{k,t}\|_2$ and $\|\bbm x_{k,t}\|_2$ that will be repeatedly used later. By Assumption \ref{assump:subg} and Lemma~\ref{lemma:subg-vector}, $\bbm \epsilon_{k,t}$ is a mean-zero sub-Gaussian random vector with sub-Gaussian constant $\sigma \sqrt{\lambda_{\max}(\bbm \Sigma_{\epsilon,k})}$. Moreover, by Assumption \ref{assump:subg} and Lemma~\ref{lemma:stacked-subg-vector}, the stacked lag vector $\bbm x_{k,t}=(\bbm y_{k,t-1}^\top,\ldots,\bbm y_{k,t-p}^\top)^\top$ is also sub-Gaussian with sub-Gaussian constant $\sigma\sqrt{C_{\epsilon,\mathcal{A}}^{(k)}}$, where $C_{\epsilon,\mathcal{A}}^{(k)} = \lambda_{\max}(\bbm \Sigma_{\epsilon,k})/\mu_{\min}(\mathcal A_k)$.
		We can now utilize these sub-Gaussian properties to derive uniform-in-time $\ell_2$-norm bounds. Applying Lemma~\ref{lem:subg-l2-max-sqrtds} to $\{\bbm \epsilon_{k,t}\}_{t=1}^{T_k}$ with dimension $d$ and sub-Gaussian constant $K_\zeta=\sigma\sqrt{\lambda_{\max}(\bbm \Sigma_{\epsilon,k})}$, we obtain that for any $T_k\leq T$ there exists a constant $C>0$ such that
		\begin{align}\label{eq:eps-l2-max-hp}
			\mathbb P\Big(\max_{1\leq t\leq T_k}\|\bbm \epsilon_{k,t}\|_2 \leq C \sigma\sqrt{\lambda_{\max}(\bbm \Sigma_{\epsilon,k})}\sqrt{d+\log T}\Big)\geq 1-\exp(-C\log T).
		\end{align}
		Similarly, applying Lemma~\ref{lem:subg-l2-max-sqrtds} to $\{\bbm x_{k,t}\}_{t=1}^{T_k}$ with dimension $pd$ and sub-Gaussian constant $K_\zeta=\sigma\sqrt{C_{\epsilon,\mathcal{A}}^{(k)}}$ yields
		\begin{align}\label{eq:x-l2-max-hp}
			\mathbb P\Big(\max_{1\leq t\leq T_k}\|\bbm x_{k,t}\|_2 \leq C \sigma\sqrt{C_{\epsilon,\mathcal{A}}^{(k)}}\sqrt{pd+\log T}\Big) \geq 1-\exp(-C\log T).
		\end{align}

		For $\bbm\epsilon_{k,t_0}^\Delta=(\bbm\epsilon_{k,t_0,\mathcal I}^\Delta,\bm 0_{\mathcal I^{\mathsf c}})$, we now derive a high-probability upper bound for $\|\bbm\epsilon_{k,t_0}^\Delta\|_2$. Note that $\bbm\epsilon_{k,t_0,\mathcal I}^\Delta = \bbm\epsilon'_{k,t_0,\mathcal I}-\bbm\epsilon_{k,t_0,\mathcal I}$ where $\bbm\epsilon'_{k,t_0,\mathcal I}$ is an $i.i.d.$ copy of $\bbm\epsilon_{k,t_0,\mathcal I}$. Since $\|\bbm\epsilon_{k,t_0}^\Delta\|_2=\|\bbm\epsilon_{k,t_0,\mathcal I}^\Delta\|_2$, it suffices to control the $m$-dimensional vector $\bbm\epsilon_{k,t_0,\mathcal I}^\Delta$.
		For any $\bbm v\in\mathbb S^{|\mathcal I|-1}$ and $\bbm v_{aug} = (\bbm v^\top,\bm 0_{\mathcal I^{\mathsf c}}^\top)^\top\in\mathbb S^{d-1}$, so that $\|\langle \bbm v,\bbm\epsilon_{k,t_0,\mathcal I}\rangle\|_{\psi_2} = \|\langle \bbm v_{aug},\bbm\epsilon_{k,t_0}\rangle\|_{\psi_2} \leq \sigma\sqrt{\lambda_{\max}(\bbm\Sigma_{\epsilon,k})}$. Then, by the sub-additivity of the Orlicz norm,
		\[
			\bigl\|\langle \bbm v,\bbm\epsilon_{k,t_0,\mathcal I}^\Delta\rangle\bigr\|_{\psi_2}
			=
			\bigl\|\langle \bbm v,\bbm\epsilon'_{k,t_0,\mathcal I}\rangle-\langle \bbm v,\bbm\epsilon_{k,t_0,\mathcal I}\rangle\bigr\|_{\psi_2}
			\leq
			\bigl\|\langle \bbm v,\bbm\epsilon'_{k,t_0,\mathcal I}\rangle\bigr\|_{\psi_2}
			+
			\bigl\|\langle \bbm v,\bbm\epsilon_{k,t_0,\mathcal I}\rangle\bigr\|_{\psi_2}
			\leq
			2\sigma\sqrt{\lambda_{\max}(\bbm\Sigma_{\epsilon,k})}.
		\]
		Therefore, $\bbm\epsilon_{k,t_0,\mathcal I}^\Delta$ is a mean-zero sub-Gaussian random vector in $\mathbb R^{|\mathcal I|}$ with vector sub-Gaussian constant $2\sigma\sqrt{\lambda_{\max}(\bbm\Sigma_{\epsilon,k})}$. By the standard $1/2$-net argument (as in the proof of Lemma~\ref{lemma:subg-vector}), there exists a constant $C>0$ such that
		\begin{align}\label{eq:deltaepsI-l2-hp}
			\mathbb P\Big(\|\bbm\epsilon_{k,t_0}^\Delta\|_2 = \|\bbm\epsilon_{k,t_0,\mathcal I}^\Delta\|_2 \leq C\sigma\sqrt{\lambda_{\max}(\bbm\Sigma_{\epsilon,k})}\sqrt{|\mathcal I|+\log T}\Big) \geq 1-\exp(-C\log T).
		\end{align}

		Let $\Lambda_{\epsilon,\max} = \max_{k\in[K]}\lambda_{\max}(\bbm\Sigma_{\epsilon,k})$. For each client $k\in[K]$, define the event
		\begin{align}\label{eq:Ek-def-Gamma}
			\mathcal E_k = \Bigl\{\max_{1\leq t\leq T_k}\|\bbm x_{k,t}\|_2 \leq \Gamma_x,\ \max_{1\leq t\leq T_k}\|\bbm \epsilon_{k,t}\|_2 \leq \Gamma_\epsilon,\ \|\bbm\epsilon^\Delta_{k,t_0}\|_2 \leq \Gamma_{\Delta\epsilon}\Bigr\},
		\end{align}
		where $\Gamma_x = C\sigma\sqrt{C_{\epsilon,\mathcal A}^{\max}}\sqrt{pd+\log T}$, $\Gamma_\epsilon = C\sigma\sqrt{\Lambda_{\epsilon,\max}}\sqrt{d+\log T}$ and $\Gamma_{\Delta\epsilon} = C\sigma\sqrt{\Lambda_{\epsilon,\max}}\sqrt{|\mathcal I| +\log T}$. Then, by a union bound over the three high-probability bounds above, there exists a constant $C>0$ such that
		\begin{align}\label{eq:prob-Ek}
			\mathbb P(\mathcal E_k)\geq 1-\exp(-C\log T).
		\end{align}

		Hereafter, we derive upper bounds for $\psi_k^{(n)}$ conditional on the event $\mathcal E_k$. We begin by deriving the upper bounds for $\|\bbm x_{k,t}^\Delta\|_2$. Recall that $\mathcal D_k$ and $\mathcal D_k'$ differ only on the sensitive coordinates $\mathcal I$ at time $t_0$, i.e., $\bbm\epsilon'_{k,t}=\bbm\epsilon_{k,t}$ for all $t\neq t_0$ and $\bbm\epsilon'_{k,t_0}-\bbm\epsilon_{k,t_0}=\bbm\epsilon_{k,t_0}^\Delta=(\bbm\epsilon_{k,t_0,\mathcal I}^\Delta,\bm 0_{\mathcal I^{\mathsf c}})$. Consequently, the induced trajectory difference $\bbm y_{k,t}^\Delta = \bbm y'_{k,t}-\bbm y_{k,t}$ has the following three cases:
		\begin{enumerate}[label=(\roman*)]
			\item \emph{Pre-change ($t<t_0$).} Since the innovations coincide up to time $t_0-1$, the two trajectories coincide as well,
			hence $\bbm y_{k,t}^\Delta=\bm 0$.
			\item \emph{Change time ($t=t_0$).} The only discrepancy is the replaced innovation, so
			$\bbm y_{k,t_0}^\Delta=\bbm\epsilon_{k,t_0}^\Delta$.
			\item \emph{Post-change ($t>t_0$).} The discrepancy propagates through the VAR($p$) process.
		\end{enumerate}
		To formalize (iii), we write the VAR($p$) model for client $k$ via the lag polynomial $\mathcal A_k(L)=\bm I_d-\sum_{j=1}^p \bm A_{k,j}L^j$:
		\[
			\mathcal A_k(L)\bbm y_{k,t}=\bbm\epsilon_{k,t},
			\ \text{ and } \ 
			\mathcal A_k(L)\bbm y'_{k,t}=\bbm\epsilon'_{k,t}.
		\]
		By the stationarity condition in Assumption \ref{assump:stationarity}, $\mathcal A_k^{-1}(z)=\sum_{h\geq0}\bbm\Psi_{k,h}z^h$ for $|z|<\varrho_*$. Applying the inverse linear filter $\mathcal A_k^{-1}(L)$ yields
		\begin{align}\label{eq:yDelta-Psi-epsDelta}
			\bbm y_{k,t}^\Delta
			= \sum_{h\geq 0}\bbm\Psi_{k,h}\bigl(\bbm\epsilon'_{k,t-h}-\bbm\epsilon_{k,t-h}\bigr)
			= \sum_{h\geq 0}\bbm\Psi_{k,h}\bbm\epsilon_{k,t_0}^\Delta\,\mathds 1\{t-h=t_0\}.
		\end{align}

		Considering that $\mathds 1\{t-h=t_0\}=1$ holds if and only if $h=t-t_0$. Therefore,
		\begin{align}\label{eq:yDelta-Psi-epsDelta-simplified}
			\sum_{h\geq 0}\bbm\Psi_{k,h}\bbm\epsilon_{k,t_0}^\Delta\,\mathds 1\{t-h=t_0\}
			=\sum_{h\geq 0}\bbm\Psi_{k,h}\bbm\epsilon_{k,t_0}^\Delta\,\mathds 1\{h=t-t_0\}
			=
			\begin{cases}
			\bbm\Psi_{k,t-t_0}\bbm\epsilon_{k,t_0}^\Delta, & t\geq t_0,\\
			\bm 0, & t<t_0.
			\end{cases}
		\end{align}
		Then, combining \eqref{eq:yDelta-Psi-epsDelta} and \eqref{eq:yDelta-Psi-epsDelta-simplified} and writing $t=t_0+h$, we obtain for all $h\geq0$, $\bbm y_{k,t_0+h}^\Delta=\bbm\Psi_{k,h}\bbm\epsilon_{k,t_0}^\Delta$. Therefore, using the impulse-response bound \eqref{eq:Psi-op-bound},
		\begin{align}\label{eq:e-prop}
		\|\bbm y_{k,t_0+h}^\Delta\|_2 \leq \|\bbm\Psi_{k,h}\|_{\op}\|\bbm\epsilon_{k,t_0}^\Delta\|_2 \leq C_{\mathcal A}(\varrho)\vartheta^h\|\bbm\epsilon_{k,t_0}^\Delta\|_2.
		\end{align}

		Next, we derive the upper bounds for $\|\bbm x_{k,t_0+h}^\Delta\|_2$. Recall $\bbm x_{k,t_0+h}^\Delta = (\bbm y_{k,t_0+h-1}^{\Delta\top},\ldots,\bbm y_{k,t_0+h-p}^{\Delta\top})^\top$ and note that $\bbm y_{k,t}^\Delta=\bm 0$ for $t<t_0$. These together with \eqref{eq:e-prop} imply that
		\begin{align}\label{eq:xDelta-sum}
			\|\bbm x_{k,t_0+h}^\Delta\|_2^2 &= \sum_{j=1}^{p}\|\bbm y_{k,t_0+h-j}^\Delta\|_2^2 = \sum_{j=1}^{\min(p,h)}\|\bbm y_{k,t_0+h-j}^\Delta\|_2^2 \leq \sum_{j=1}^{\min(p,h)}\Bigl(C_{\mathcal A}(\varrho)\vartheta^{h-j}\|\bbm\epsilon_{k,t_0}^\Delta\|_2\Bigr)^2 \notag \\
			&= C_{\mathcal A}^2(\varrho)\|\bbm\epsilon_{k,t_0}^\Delta\|_2^2\sum_{j=1}^{\min(p,h)}\vartheta^{2(h-j)}.
		\end{align}
		Now we bound the geometric sum in \eqref{eq:xDelta-sum}. For $1\leq h\leq p$, then $\min(p,h)=h$ and
		\[
			\sum_{j=1}^{h}\vartheta^{2(h-j)}=\sum_{\ell=0}^{h-1}\vartheta^{2\ell}\leq \sum_{\ell=0}^{p-1}\vartheta^{2\ell} = \frac{1-\vartheta^{2p}}{1-\vartheta^2}.
		\]
		For $1 \leq p < h$, then $\min(p,h)=p$ and
		\[
			\sum_{j=1}^{p}\vartheta^{2(h-j)}=\vartheta^{2(h-p)}\sum_{\ell=0}^{p-1}\vartheta^{2\ell} \leq \vartheta^{2(h-p)}\frac{1-\vartheta^{2p}}{1-\vartheta^2}.
		\]
		Denote $x_+ = \max(x,0)$ for any $x\in\mathbb R$. Combining the two cases yields
		\begin{align}\label{eq:geom-sum-bound}
			\sum_{j=1}^{\min(p,h)}\vartheta^{2(h-j)}\leq \vartheta^{2(h-p)_+}\frac{1-\vartheta^{2p}}{1-\vartheta^2}.
		\end{align}
		Plugging \eqref{eq:geom-sum-bound} into \eqref{eq:xDelta-sum} and taking square roots gives
		\begin{align}\label{eq:d-prop}
			\|\bbm x_{k,t_0+h}^\Delta\|_2 \leq C_{\mathcal A}(\varrho)\Bigl(\frac{1-\vartheta^{2p}}{1-\vartheta^2}\Bigr)^{1/2}\vartheta^{(h-p)_+}\|\bbm\epsilon_{k,t_0}^\Delta\|_2.
		\end{align}
		For $t\leq t_0$, note that $\bbm x_{k,t}^\Delta=\bm 0$, since $\bbm x_{k,t} = \bbm x_{k,t}'$.

		Recall that for each $t$, the unprojected gradient of the squared loss $1/2\|\bbm y_{k,t}-\bm A\bbm x_{k,t}\|_2^2$ with respect to $\bm A$ is $(\bm A\bbm x_{k,t}-\bbm y_{k,t})\bbm x_{k,t}^\top$. Thus, with $\bm A=\bm A_0^{(n)}$, denote $\bm G_{k,t}^{(n)}=(\bm A_0^{(n)}\bbm x_{k,t}-\bbm y_{k,t})\bbm x_{k,t}^{\top}$ and $\bm G_{k,t}^{(n)'}=(\bm A_0^{(n)}\bbm x'_{k,t}-\bbm y'_{k,t}){\bbm x'}_{k,t}^{\top}$. Then, expanding the product yields
		\begin{align}\label{eq:gt-tildegt-identity}
			\bm G_{k,t}^{(n)}-\bm G_{k,t}^{(n)'} = -(\bm A_0^{(n)}\bbm x_{k,t}-\bbm y_{k,t})\bbm x_{k,t}^{\Delta\top} - (\bm A_0^{(n)}\bbm x_{k,t}^{\Delta}-\bbm y_{k,t}^{\Delta})\bbm x_{k,t}^{\top} - (\bm A_0^{(n)}\bbm x_{k,t}^{\Delta}-\bbm y_{k,t}^{\Delta})\bbm x_{k,t}^{\Delta\top}.
		\end{align}

		By the additivity of the projection operator $\mathcal P_{\mathcal T_r(\bm A_0^{(n)})}(\cdot)$ and the sub-additivity of the Frobenius norm $\|\cdot\|_{\F}$, we have
		\begin{align*}
			\psi_k^{(n)}&=\bigl\|\mathcal P_{\mathcal T_r(\bm A_0^{(n)})}\!\bigl(\bm G_k(\bm A_0^{(n)};\mathcal D_k)\bigr) - \mathcal P_{\mathcal T_r(\bm A_0^{(n)})}\bigl(\bm G_k(\bm A_0^{(n)};\mathcal D_k')\bigr)\bigr\|_{\F} = \Bigl\|\mathcal P_{\mathcal T_r(\bm A_0^{(n)})}\Bigl(\frac{1}{T_k}\sum_{t=1}^{T_k}(\bm G_{k,t}^{(n)}-\bm G_{k,t}^{(n)'})\Bigr)\Bigr\|_{\F} \\
			&\quad\leq \frac{1}{T_k}\sum_{t=1}^{T_k}\bigl\|\mathcal P_{\mathcal T_r(\bm A_0^{(n)})}(\bm G_{k,t}^{(n)}) - \mathcal P_{\mathcal T_r(\bm A_0^{(n)})}(\bm G_{k,t}^{(n)'})\bigr\|_{\F} = \frac{1}{T_k}\sum_{t=t_0}^{T_k}\bigl\|\mathcal P_{\mathcal T_r(\bm A_0^{(n)})}(\bm G_{k,t}^{(n)}) - \mathcal P_{\mathcal T_r(\bm A_0^{(n)})}(\bm G_{k,t}^{(n)'})\bigr\|_{\F},
		\end{align*}
		where the last equality holds because $\bbm y_{k,t}=\bbm y_{k,t}'$ and $\bbm x_{k,t}=\bbm x_{k,t}'$ for all $t<t_0$, and thus $\bm G_{k,t} = \bm G'_{k,t}$ for all $t<t_0$.
		Next, we derive a Frobenius-norm bound for the projected gradient discrepancy $\|\mathcal P_{\mathcal T_r(\bm A_0^{(n)})}(\bm G_{k,t}^{(n)})-\mathcal P_{\mathcal T_r(\bm A_0^{(n)})}(\bm G_{k,t}^{(n)'})\|_{\F}$, which will be summed over $t$ in the subsequent step. Since $\mathcal P_{\mathcal T_r(\bm A_0^{(n)})}(\cdot)$ is a non-expansive orthogonal projection, i.e., $\|\mathcal P_{\mathcal T_r(\bm A_0^{(n)})}(\bm M)\|_{\F}\leq \|\bm M\|_{\F}$ for all $\bm M \in \mathbb{R}^{d\times pd}$. Hence $\|\mathcal P_{\mathcal T_r(\bm A_0^{(n)})}(\bm G_{k,t})-\mathcal P_{\mathcal T_r(\bm A_0^{(n)})}(\bm G'_{k,t})\|_{\F} \leq \|\bm G_{k,t}^{(n)}-\bm G_{k,t}^{(n)'}\|_{\F}$. Therefore, it suffices to bound $\|\bm G_{k,t}^{(n)}-\bm G_{k,t}^{(n)'}\|_{\F}$ for each fixed time $t$. We treat two cases separately: the injection time $t=t_0$, where the innovation is replaced, and the propagation regime $t>t_0$, where the discrepancy evolves through the VAR process. 

		At $t=t_0$, we have $\bbm x_{k,t_0}^\Delta=\bm 0$ and $\bbm y_{k,t_0}^\Delta=\bbm\epsilon_{k,t_0}^\Delta$. Plugging into \eqref{eq:gt-tildegt-identity}, only the second term remains and simplifies to $\bm G_{k,t_0}^{(n)}-\bm G_{k,t_0}^{(n)'} = -(\bm A_0^{(n)}\bm 0-\bbm\epsilon_{k,t_0}^\Delta)\bbm x_{k,t_0}^\top = \bbm\epsilon_{k,t_0}^\Delta\bbm x_{k,t_0}^\top$. Recall that for any vectors $\bbm u,\bbm v$, $\|\bbm u\bbm v^\top\|_\F=\|\bbm u\|_2\|\bbm v\|_2$, then we have
		\begin{align}\label{eq:t0-diff}
			\|\mathcal P_{\mathcal T_r(\bm A_0^{(n)})}(\bm G_{k,t_0}^{(n)})-\mathcal P_{\mathcal T_r(\bm A_0^{(n)})}(\bm G_{k,t_0}^{(n)'})\|_{\F} \leq \|\bm G_{k,t_0}^{(n)}-\bm G_{k,t_0}^{(n)'}\|_{\F} = \|\bbm\epsilon_{k,t_0}^\Delta\|_2\|\bbm x_{k,t_0}\|_2 \leq \Gamma_{\Delta\epsilon}\Gamma_x.
		\end{align}
		The last inequality holds because event $\mathcal E_k$ in \eqref{eq:Ek-def-Gamma} ensures $\|\bbm x_{k,t_0}\|_2\leq \Gamma_x$ and $\|\bbm\epsilon_{k,t_0}^\Delta\|_2 \leq \Gamma_{\Delta\epsilon}$.

		For $t>t_0$, the neighboring datasets differ only at time $t_0$, $\bbm\epsilon_{k,t}=\bbm\epsilon_{k,t}'$. Therefore, $\bbm y_{k,t}^\Delta = \bbm y'_{k,t}-\bbm y_{k,t} = (\bm A_k^*\bbm x'_{k,t}+\bbm\epsilon_{k,t}')-(\bm A_k^*\bbm x_{k,t}+\bbm\epsilon_{k,t}) = \bm A_k^*\bbm x_{k,t}^\Delta$ with $\bbm x_{k,t}^\Delta=\bbm x'_{k,t}-\bbm x_{k,t}$. Consequently, we have $\bm A_0^{(n)}\bbm x_{k,t}-\bbm y_{k,t} = (\bm A_0^{(n)}-\bm A_k^*)\bbm x_{k,t}-\bbm\epsilon_{k,t}$ and $\bm A_0^{(n)}\bbm x_{k,t}^\Delta-\bbm y_{k,t}^\Delta = (\bm A_0^{(n)}-\bm A_k^*)\bbm x_{k,t}^\Delta$. Recall that for any vectors $\bbm u,\bbm v$, $\|\bbm u\bbm v^\top\|_\F=\|\bbm u\|_2\|\bbm v\|_2$ and $\|\bbm u\bbm v^\top\|_{\op}=\|\bbm u\|_2\|\bbm v\|_2$. Moreover, for any matrix $\bm M$ and vector $\bbm v$, $\|\bm M\bbm v\|_2\leq \|\bm M\|_{\op}\|\bbm v\|_2$. Plugging the above identities into \eqref{eq:gt-tildegt-identity}, using the linearity of the projection operator, and then applying the sub-additivity of the Frobenius norm yields
		\begin{align}\label{eq:gt-tildegt-bound}
			&\|\mathcal P_{\mathcal T_r(\bm A_0^{(n)})}(\bm G_{k,t}^{(n)})
			-\mathcal P_{\mathcal T_r(\bm A_0^{(n)})}(\bm G_{k,t}^{(n)'})\|_\F 
			=
			\|\mathcal P_{\mathcal T_r(\bm A_0^{(n)})}
			(\bm G_{k,t}^{(n)}-\bm G_{k,t}^{(n)'})\|_\F \notag\\
			&=
			\Bigl\|
			\mathcal P_{\mathcal T_r(\bm A_0^{(n)})}
			\Bigl(
			\bbm\epsilon_{k,t}\bbm x_{k,t}^{\Delta\top}
			-(\bm A_0^{(n)}-\bm A_k^*)\bbm x_{k,t}\bbm x_{k,t}^{\Delta\top}
			-(\bm A_0^{(n)}-\bm A_k^*)\bbm x_{k,t}^{\Delta}\bbm x_{k,t}^{\top}
			-(\bm A_0^{(n)}-\bm A_k^*)\bbm x_{k,t}^{\Delta}\bbm x_{k,t}^{\Delta\top}
			\Bigr)
			\Bigr\|_\F \notag\\
			&\leq
			\bigl\|
			\mathcal P_{\mathcal T_r(\bm A_0^{(n)})}
			(\bbm\epsilon_{k,t}\bbm x_{k,t}^{\Delta\top})
			\bigr\|_\F 
			+
			\bigl\|
			\mathcal P_{\mathcal T_r(\bm A_0^{(n)})}
			((\bm A_0^{(n)}-\bm A_k^*)\bbm x_{k,t}\bbm x_{k,t}^{\Delta\top})
			\bigr\|_\F \notag\\
			&\quad+
			\bigl\|
			\mathcal P_{\mathcal T_r(\bm A_0^{(n)})}
			((\bm A_0^{(n)}-\bm A_k^*)\bbm x_{k,t}^{\Delta}\bbm x_{k,t}^{\top})
			\bigr\|_\F
			+
			\bigl\|
			\mathcal P_{\mathcal T_r(\bm A_0^{(n)})}
			((\bm A_0^{(n)}-\bm A_k^*)\bbm x_{k,t}^{\Delta}\bbm x_{k,t}^{\Delta\top})
			\bigr\|_\F \notag\\
			&\leq
			\|\bbm\epsilon_{k,t}\bbm x_{k,t}^{\Delta\top}\|_\F
			+
			\sqrt{2r}
			\bigl\|
			(\bm A_0^{(n)}-\bm A_k^*)\bbm x_{k,t}\bbm x_{k,t}^{\Delta\top}
			\bigr\|_{\op} 
			+
			\sqrt{2r}
			\bigl\|
			(\bm A_0^{(n)}-\bm A_k^*)\bbm x_{k,t}^{\Delta}\bbm x_{k,t}^{\top}
			\bigr\|_{\op} \notag\\
			&\quad+
			\sqrt{2r}
			\bigl\|
			(\bm A_0^{(n)}-\bm A_k^*)\bbm x_{k,t}^{\Delta}\bbm x_{k,t}^{\Delta\top}
			\bigr\|_{\op} \notag\\
			&\leq
			\|\bbm\epsilon_{k,t}\|_2\|\bbm x_{k,t}^{\Delta}\|_2
			+
			2\sqrt{2r}
			\|\bm A_0^{(n)}-\bm A_k^*\|_{\op}
			\|\bbm x_{k,t}\|_2
			\|\bbm x_{k,t}^{\Delta}\|_2 
			+
			\sqrt{2r}
			\|\bm A_0^{(n)}-\bm A_k^*\|_{\op}
			\|\bbm x_{k,t}^{\Delta}\|_2^2 \notag\\
			&=
			\Bigl(
			2\sqrt{2r}
			\|\bm A_0^{(n)}-\bm A_k^*\|_{\op}
			\|\bbm x_{k,t}\|_2
			+
			\|\bbm\epsilon_{k,t}\|_2
			\Bigr)
			\|\bbm x_{k,t}^{\Delta}\|_2
			+
			\sqrt{2r}
			\|\bm A_0^{(n)}-\bm A_k^*\|_{\op}
			\|\bbm x_{k,t}^{\Delta}\|_2^2 .
		\end{align}

		Conditional on the event $\mathcal E_k$ in \eqref{eq:Ek-def-Gamma}, we have $\|\bbm x_{k,t}\|_2\leq \Gamma_x$ and $\|\bbm\epsilon_{k,t}\|_2\leq \Gamma_\epsilon$ for all $t\in[T_k]$. Then, substituting these bounds into \eqref{eq:gt-tildegt-bound} and summing over $t$ from $t_0+1$ to $T_k$ yields
		\begin{align}\label{eq:gt-tildegt-bound-Ek}
			&\sum_{t=t_0+1}^{T_k}\bigl\|\mathcal P_{\mathcal T_r(\bm A_0^{(n)})}(\bm G_{k,t}^{(n)}) - \mathcal P_{\mathcal T_r(\bm A_0^{(n)})}(\bm G_{k,t}^{(n)'})\bigr\|_\F \notag\\
			& \leq \Bigl(2\sqrt{2r}\|\bm A_0^{(n)}-\bm A_k^*\|_{\op}\Gamma_x+\Gamma_\epsilon\Bigr)\sum_{t=t_0+1}^{T_k}\|\bbm x_{k,t}^{\Delta}\|_2 + \sqrt{2r}\|\bm A_0^{(n)}-\bm A_k^*\|_{\op}\sum_{t=t_0+1}^{T_k}\|\bbm x_{k,t}^{\Delta}\|_2^2.
		\end{align}
		Thus, it suffices to bound $\sum_{t=t_0+1}^{T_k}\|\bbm x_{k,t}^\Delta\|_2$ and $\sum_{t=t_0+1}^{T_k}\|\bbm x_{k,t}^\Delta\|_2^2$. 
		By \eqref{eq:d-prop} and $\|\bbm\epsilon_{k,t_0}^\Delta\|_2\leq \Gamma_{\Delta\epsilon}$ implied by the event $\mathcal E_k$ in \eqref{eq:Ek-def-Gamma}, for $t=t_0+h$ with $h\geq 1$, we have
		\begin{align*}
			\|\bbm x_{k,t_0+h}^\Delta\|_2 \leq C_{\mathcal A}(\varrho)\Bigl(\frac{1-\vartheta^{2p}}{1-\vartheta^2}\Bigr)^{1/2}\Gamma_{\Delta\epsilon}\vartheta^{(h-p)_+},
		\end{align*}
		where $\vartheta=\varrho^{-1}\in(0,1)$. Consequently,
		\begin{align*}
			&\sum_{t=t_0+1}^{T_k}\|\bbm x_{k,t}^\Delta\|_2 = \sum_{h=1}^{T_k-t_0}\|\bbm x_{k,t_0+h}^\Delta\|_2 \leq C_{\mathcal A}(\varrho)\Bigl(\frac{1-\vartheta^{2p}}{1-\vartheta^2}\Bigr)^{\frac{1}{2}}\Gamma_{\Delta\epsilon} \sum_{h=1}^{T_k-t_0}\vartheta^{(h-p)_+},\\
			&\sum_{t=t_0+1}^{T_k}\|\bbm x_{k,t}^\Delta\|_2^2 = \sum_{h=1}^{T_k-t_0}\|\bbm x_{k,t_0+h}^\Delta\|_2^2 \leq C_{\mathcal A}^2(\varrho)\Bigl(\frac{1-\vartheta^{2p}}{1-\vartheta^2}\Bigr)\Gamma_{\Delta\epsilon}^2 \sum_{h=1}^{T_k-t_0}\vartheta^{2(h-p)_+}.
		\end{align*}
		Using the deterministic geometric-sum bounds, we have
		\[
			\sum_{h=1}^{T_k-t_0}\vartheta^{(h-p)_+} \leq p+\frac{\vartheta}{1-\vartheta} \leq p+\frac{1}{1-\vartheta},
			\ \text{ and } \ 
			\sum_{h=1}^{T_k-t_0}\vartheta^{2(h-p)_+} \leq p+\frac{\vartheta^2}{1-\vartheta^2} \leq p+\frac{1}{1-\vartheta^2}.
		\]
		Plugging these into \eqref{eq:gt-tildegt-bound-Ek} yields, conditional on $\mathcal E_k$ in \eqref{eq:Ek-def-Gamma},
		\begin{align}\label{eq:gt-tildegt-bound-Ek-final}
			&\sum_{t=t_0+1}^{T_k}\bigl\|\mathcal P_{\mathcal T_r(\bm A_0^{(n)})}(\bm G_{k,t}) - \mathcal P_{\mathcal T_r(\bm A_0^{(n)})}(\bm G'_{k,t})\bigr\|_\F \notag\\
			& \leq \Bigl(2\sqrt{2r}\|\bm A_0^{(n)}-\bm A_k^*\|_{\op}\Gamma_x+\Gamma_\epsilon\Bigr)C_{\mathcal A}(\varrho)\Bigl(\frac{1-\vartheta^{2p}}{1-\vartheta^2}\Bigr)^{1/2}\Gamma_{\Delta\epsilon}\Bigl(p + \frac{1}{1-\vartheta}\Bigr) \\
			&\qquad+ \sqrt{2r}\|\bm A_0^{(n)}-\bm A_k^*\|_{\op} C_{\mathcal A}^2(\varrho)\Bigl(\frac{1-\vartheta^{2p}}{1-\vartheta^2}\Bigr)\Gamma_{\Delta\epsilon}^2\Bigl(p+\frac{1}{1-\vartheta^2}\Bigr) \notag.
		\end{align}
		Finally, combining the bounds \eqref{eq:t0-diff} and \eqref{eq:gt-tildegt-bound-Ek-final} yields, conditional on $\mathcal E_k$ in \eqref{eq:Ek-def-Gamma},
		\begin{align}\label{eq:xi-final-explicit}
			\psi_k^{(n)} &\leq \frac{1}{T_k}\Biggl[\Gamma_x\Gamma_{\Delta\epsilon} + \Bigl(2\sqrt{2r}\|\bm A_0^{(n)}-\bm A_k^*\|_{\op}\Gamma_x + \Gamma_\epsilon\Bigr)C_{\mathcal A}(\varrho)\Bigl(\frac{1-\vartheta^{2p}}{1-\vartheta^2}\Bigr)^{1/2}\Gamma_{\Delta\epsilon}\Bigl(p + \frac{1}{1-\vartheta}\Bigr)\notag\\
			&\qquad + \sqrt{2r}\|\bm A_0^{(n)}-\bm A_k^*\|_{\op} C_{\mathcal A}^2(\varrho)\Bigl(\frac{1-\vartheta^{2p}}{1-\vartheta^2}\Bigr)\Gamma_{\Delta\epsilon}^2\Bigl(p+\frac{1}{1-\vartheta^2}\Bigr)\Biggr].
		\end{align}
		For convenience, define the two deterministic factors
		\begin{align*}
			\kappa_1(\varrho)&= C_{\mathcal A}(\varrho)\Bigl(\frac{1-\vartheta^{2p}}{1-\vartheta^2}\Bigr)^{1/2}\Bigl(p+\frac{1}{1-\vartheta}\Bigr), \ \text{ and } \ \kappa_2(\varrho)= C_{\mathcal A}^2(\varrho)\Bigl(\frac{1-\vartheta^{2p}}{1-\vartheta^2}\Bigr)\Bigl(p+\frac{1}{1-\vartheta^2}\Bigr). 
		\end{align*}
		Then, conditional on $\mathcal E_k$ in \eqref{eq:Ek-def-Gamma}, \eqref{eq:xi-final-explicit} can be rewritten as
		\begin{align}\label{eq:xi-final-kappa}
		\psi_k^{(n)} \leq \frac{1}{T_k}\Bigl[\Gamma_x\Gamma_{\Delta\epsilon} + \kappa_1(\varrho)\Gamma_{\Delta\epsilon}\Gamma_\epsilon + \sqrt{2r}\|\bm A_0^{(n)}-\bm A_k^*\|_{\op}\Bigl(2\kappa_1(\varrho)\Gamma_x\Gamma_{\Delta\epsilon} + \kappa_2(\varrho)\Gamma_{\Delta\epsilon}^2\Bigr)\Bigr].
		\end{align}

		\noindent
		\textbf{Part II (High-probability bounds via induction).}
		Next, we establish high-probability control of the estimation error $\|\bm A_0^{(n)}-\bm A_0^*\|_{\F}$ together with the sensitivity $\psi_k^{(n)}$ by mathematical induction. Fix an iteration $n\in\{0,\cdots,N_g-1\}$ and assume that 
		\begin{align}\label{eq:induction-hypothesis}
			\|\bm A_0^{(n)}-\bm A_0^*\|_{\F}\leq R, \ \text{with} \ R=\frac{c_{\epsilon,\mathcal A}^{\min}(3C_{\epsilon,\mathcal A}^{\max}+c_{\epsilon,\mathcal A}^{\min})}{20(6C_{\epsilon,\mathcal A}^{\max}+c_{\epsilon,\mathcal A}^{\min})^2}\sigma_r(\bm A_0^*).
		\end{align}

		Conditional on the high-probability event $\mathcal E = \bigcap_{k=1}^K\mathcal E_k$, we will show that $\|\bm A_0^{(n+1)}-\bm A_0^*\|_{\F}$ satisfies the same upper bound as well. 
		Since this part does not need the neighboring-dataset definition of differential privacy, we abbreviate $\bm G_k(\bm A_0^{(n)};\mathcal D_k)$ as $\bm G_k(\bm A_0^{(n)})$ for notational simplicity. 
		
		Note that
		\begin{align}\label{eq:A0-update-decomp}
			\bm A_0^{(n)} - \rho \sum_{k=1}^{K} \bm Z_k^{(n)} - \bm A_0^* = \bm A_0^{(n)} - \rho \sum_{k=1}^{K}\frac{T_k}{T}\mathcal P_{\mathcal T_r(\bm A_0^{(n)})}\!\Big(\bm G_k(\bm A_0^{(n)}) + \bm W_k^{(n)}\Big) - \bm A_0^*.
		\end{align}
		Moreover, using $\bm Y_k = \bm A_k^* \bm X_k + \bm E_k$, the local gradient admits the decomposition
		\begin{align}\label{eq:Gk-error-decomp}
			\bm G_k(\bm A_0^{(n)}) &= \frac{2}{T_k}(\bm A_0^{(n)}\bm X_k - \bm Y_k)\bm X_k^\top = \frac{2}{T_k}\left((\bm A_0^{(n)} - \bm A_k^*)\bm X_k - \bm E_k\right)\bm X_k^\top \notag\\
			& = \frac{2}{T_k}(\bm A_0^{(n)} - \bm A_0^*)\bm X_k \bm X_k^\top + \frac{2}{T_k}(\bm A_0^* - \bm A_k^*)\bm X_k \bm X_k^\top - \frac{2}{T_k}\bm E_k\bm X_k^\top.
		\end{align}
		Thus, combining \eqref{eq:A0-update-decomp} and \eqref{eq:Gk-error-decomp}, we have
		\begin{align}\label{eq:A0-one-step-expanded}
			&\Bigl\|\bm A_0^{(n)}-\rho\sum_{k=1}^K \bm Z_k^{(n)}-\bm A_0^*\Bigr\|_{\F} = \Bigl\|\bm A_0^{(n)}-\rho\sum_{k=1}^K\frac{T_k}{T}\,\mathcal P_{\mathcal T_r(\bm A_0^{(n)})}\!\Big(\bm G_k(\bm A_0^{(n)})+\bm W_k^{(n)}\Big) - \bm A_0^*\Bigr\|_{\F} \notag\\
			&= \Biggl\|\bm A_0^{(n)}-\bm A_0^* - \rho\sum_{k=1}^K\frac{T_k}{T}\,\mathcal P_{\mathcal T_r(\bm A_0^{(n)})}\!\Biggl(\frac{2}{T_k}(\bm A_0^{(n)}-\bm A_0^*)\bm X_k\bm X_k^\top + \frac{2}{T_k}(\bm A_0^*-\bm A_k^*)\bm X_k\bm X_k^\top - \frac{2}{T_k}\bm E_k\bm X_k^\top + \bm W_k^{(n)}\Biggr)\Biggr\|_{\F} \notag\\
			&\leq \underbrace{\Bigl\|(\bm A_0^{(n)}-\bm A_0^*) - 2\rho\mathcal P_{\mathcal T_r(\bm A_0^{(n)})}\bigl((\bm A_0^{(n)}-\bm A_0^*)\bm S_{\mathrm{pool}}\bigr)\Bigr\|_{\F}}_{\textnormal{Term I}} + 2\rho\underbrace{\Bigl\|\mathcal P_{\mathcal T_r(\bm A_0^{(n)})}\Bigl(\frac{1}{T}\sum_{k=1}^K(\bm A_0^*-\bm A_k^*)\bm X_k\bm X_k^\top\Bigr)\Bigr\|_{\F}}_{\textnormal{Term II}} \notag\\
			&\quad + 2\rho\underbrace{\Bigl\|\mathcal P_{\mathcal T_r(\bm A_0^{(n)})}\bigl(\bm R_{\mathrm{pool}}^{\top}\bigr)\Bigr\|_{\F}}_{\textnormal{Term III}} + \rho\underbrace{\Bigl\|\mathcal P_{\mathcal T_r(\bm A_0^{(n)})}\Bigl(\sum_{k=1}^K\frac{T_k}{T}\bm W_k^{(n)}\Bigr)\Bigr\|_{\F}}_{\textnormal{Term IV}}.
		\end{align}
		where $\bm S_{\mathrm{pool}} = T^{-1}\sum_{k=1}^K \bm X_k\bm X_k^\top$ and $\bm R_{\mathrm{pool}} = T^{-1}\sum_{k=1}^K \bm X_k\bm E_k^\top$. Next, we bound each of the four terms in turn. 
		
		For \textbf{term I}, we further decompose the expression and upper bound each component separately. In particular, we first rewrite the following quantity into a sum of terms:
		\begin{align*}
			\textnormal{Term I}
			&= \Bigl\|(\bm A_0^{(n)}-\bm A_0^*) - 2\rho\mathcal P_{\mathcal T_r(\bm A_0^{(n)})}\bigl((\bm A_0^{(n)} - \bm A_0^*)\bm S_{\mathrm{pool}}\bigr)\Bigr\|_{\F}\\
			&= \Bigl\|\mathcal P_{\mathcal T_r(\bm A_0^{(n)})}(\bm A_0^{(n)}-\bm A_0^*) + \mathcal P_{\mathcal T^\perp_r(\bm A_0^{(n)})}(\bm A_0^{(n)}-\bm A_0^*) - 2\rho\mathcal P_{\mathcal T_r(\bm A_0^{(n)})}\bigl((\bm A_0^{(n)}-\bm A_0^*)\bm S_{\mathrm{pool}}\bigr)\Bigr\|_{\F}\\
			&\leq \Bigl\|\mathcal P_{\mathcal T^\perp_r(\bm A_0^{(n)})}(\bm A_0^{(n)}-\bm A_0^*)\Bigr\|_{\F} + \Bigl\|\mathcal P_{\mathcal T_r(\bm A_0^{(n)})}\bigl((\bm A_0^{(n)}-\bm A_0^*)\bigr) - 2\rho\mathcal P_{\mathcal T_r(\bm A_0^{(n)})}\bigl((\bm A_0^{(n)}-\bm A_0^*)\bm S_{\mathrm{pool}}\bigr)\Bigr\|_{\F}\\
			&\leq \Bigl\|\mathcal P_{\mathcal T^\perp_r(\bm A_0^{(n)})}(\bm A_0^{(n)}-\bm A_0^*)\Bigr\|_{\F} + \Bigl\|\mathcal P_{\mathcal T_r(\bm A_0^{(n)})}(\bm A_0^{(n)}-\bm A_0^*) - 2\rho\mathcal P_{\mathcal T_r(\bm A_0^{(n)})}\bigl(\mathcal P_{\mathcal T_r(\bm A_0^{(n)})}(\bm A_0^{(n)}-\bm A_0^*) \bm S_{\mathrm{pool}}\bigr)\Bigr\|_{\F}\\
			&\qquad + 2\rho\Bigl\|\mathcal P_{\mathcal T_r(\bm A_0^{(n)})}\bigl(\mathcal P_{\mathcal T^\perp_r(\bm A_0^{(n)})}(\bm A_0^{(n)}-\bm A_0^*) \bm S_{\mathrm{pool}}\bigr)\Bigr\|_{\F}\\
			&= \underbrace{\Bigl\|\mathcal P_{\mathcal T^\perp_r(\bm A_0^{(n)})}(\bm A_0^{(n)}-\bm A_0^*)\Bigr\|_{\F}}_{\textnormal{Term I(a)}} + \underbrace{\Bigl\|\mathcal P_{\mathcal T_r(\bm A_0^{(n)})}\bigl(\mathcal P_{\mathcal T_r(\bm A_0^{(n)})}(\bm A_0^{(n)}-\bm A_0^*)(\bm I_{pd} - 2\rho\bm S_{\mathrm{pool}})\bigr)\Bigr\|_{\F}}_{\textnormal{Term I(b)}}\\
			&\qquad + 2\rho\underbrace{\Bigl\|\mathcal P_{\mathcal T_r(\bm A_0^{(n)})}\bigl(\mathcal P_{\mathcal T^\perp_r(\bm A_0^{(n)})}(\bm A_0^{(n)}-\bm A_0^*) \bm S_{\mathrm{pool}}\bigr)\Bigr\|_{\F}}_{\textnormal{Term I(c)}}.
		\end{align*}
		The last equality holds because $\mathcal P_{\mathcal T_r(\bm A_0^{(n)})}(\bm A_0^{(n)}-\bm A_0^*)\in \mathcal T_r(\bm A_0^{(n)})$, and the projection $\mathcal P_{\mathcal T_r(\bm A_0^{(n)})}(\cdot)$ acts as the identity on its range, i.e., for any $\bm M\in \mathcal T_r(\bm A_0^{(n)})$, $\mathcal P_{\mathcal T_r(\bm A_0^{(n)})}(\bm M)=\bm M$.

		For term I(a), by Lemma~\ref{lem:normal-second-order}, we have
		\[
			\textnormal{Term I(a)} = \Bigl\|\mathcal P_{\mathcal T^\perp_r(\bm A_0^{(n)})}(\bm A_0^{(n)}-\bm A_0^*)\Bigr\|_{\F} \leq \frac{1}{\sigma_r(\bm A_0^*)}\|\bm A_0^{(n)} - \bm A_0^*\|_{\op}\|\bm A_0^{(n)} - \bm A_0^*\|_{\F}.
		\]

		For term I(b), note that $\mathcal P_{\mathcal T_r(\bm A_0^{(n)})}(\bm A_0^{(n)}-\bm A_0^*) \in \mathcal T_r(\bm A_0^{(n)})$. Applying Lemma~\ref{lem:tangent-contraction-op} with $\bm A=\bm A_0^{(n)}$, $\bm S=\bm S_{\rm pool}$ and $\bm M=\mathcal P_{\mathcal T_r(\bm A_0^{(n)})}(\bm A_0^{(n)}-\bm A_0^*)$ together with the non-expansiveness of the projection $\mathcal P_{\mathcal T_r(\bm A_0^{(n)})}(\cdot)$ in Frobenius norm, we have
		\begin{align*}
			\textnormal{Term I(b)} &= \Bigl\|\mathcal P_{\mathcal T_r(\bm A_0^{(n)})}\Bigl(\mathcal P_{\mathcal T_r(\bm A_0^{(n)})}(\bm A_0^{(n)}-\bm A_0^*)(\bm I_{pd}-2\rho\bm S_{\rm pool})\Bigr)\Bigr\|_{\F}\\
			& \leq \|\bm I_{pd}-2\rho \bm S_{\rm pool}\|_{\op}\Bigl\|\mathcal P_{\mathcal T_r(\bm A_0^{(n)})}(\bm A_0^{(n)}-\bm A_0^*)\Bigr\|_{\F} \leq \|\bm I_{pd}-2\rho \bm S_{\rm pool}\|_{\op}\|\bm A_0^{(n)}-\bm A_0^*\|_{\F}
		\end{align*}

		For term I(c), by the non-expansiveness of the projection $\mathcal P_{\mathcal T_r(\bm A_0^{(n)})}(\cdot)$ in Frobenius norm, $\|\bm M\bm N \|_\F \leq \|\bm M\|_\F \|\bm N\|_\op$ and Lemma~\ref{lem:normal-second-order}, we have
		\begin{align*}
			\textnormal{Term I(c)} &= \Bigl\|\mathcal P_{\mathcal T_r(\bm A_0^{(n)})}\Bigl(\mathcal P_{\mathcal T_r^\perp(\bm A_0^{(n)})}(\bm A_0^{(n)}-\bm A_0^*)\,\bm S_{\mathrm{pool}}\Bigr)\Bigr\|_{\F} \leq \Bigl\|\mathcal P_{\mathcal T_r^\perp(\bm A_0^{(n)})}(\bm A_0^{(n)}-\bm A_0^*)\bm S_{\mathrm{pool}}\Bigr\|_{\F}\\
			&\leq \Bigl\|\mathcal P_{\mathcal T_r^\perp(\bm A_0^{(n)})}(\bm A_0^{(n)}-\bm A_0^*)\Bigr\|_{\F} \|\bm S_{\mathrm{pool}}\|_{\op} \leq \frac{\|\bm S_{\mathrm{pool}}\|_{\op}}{\sigma_r(\bm A_0^*)} \|\bm A_0^{(n)}-\bm A_0^*\|_{\op} \|\bm A_0^{(n)}-\bm A_0^*\|_{\F}.
		\end{align*}
		\noindent
		Therefore, combining the three components yields
		\begin{align}\label{eq:term1-bound}
			\textnormal{Term I} &\leq \textnormal{Term I(a)}+\textnormal{Term I(b)}+2\rho\textnormal{Term I(c)} \notag\\
			&\leq \|\bm I_{pd}-2\rho\bm S_{\rm pool}\|_{\op}\|\bm A_0^{(n)}-\bm A_0^*\|_{\F} + \frac{1+2\rho\|\bm S_{\rm pool}\|_{\op}}{\sigma_r(\bm A_0^*)} \|\bm A_0^{(n)}-\bm A_0^*\|_{\op}\|\bm A_0^{(n)}-\bm A_0^*\|_{\F}\notag\\
			&= \Bigl(\|\bm I_{pd}-2\rho\bm S_{\rm pool}\|_{\op} + \frac{1+2\rho\|\bm S_{\rm pool}\|_{\op}}{\sigma_r(\bm A_0^*)}\|\bm A_0^{(n)}-\bm A_0^*\|_{\op}\Bigr)\|\bm A_0^{(n)}-\bm A_0^*\|_{\F}. 
		\end{align}

		For \textbf{term~II}, by the sub-additivity of the Frobenius norm, the non-expansiveness of the projection $\mathcal P_{\mathcal T_r(\bm A_0^{(n)})}(\cdot)$ in Frobenius norm and the inequality $\|\bm M\bm N\|_{\F}\leq \|\bm M\|_{\F}\|\bm N\|_{\op}$, we have
		\begin{align}\label{eq:eq:term2-ini-bound}
			\textnormal{Term II}
			&=
			\Bigl\|\mathcal P_{\mathcal T_r(\bm A_0^{(n)})}\Bigl(\frac{1}{T}\sum_{k=1}^K(\bm A_0^*-\bm A_k^*)\bm X_k\bm X_k^\top\Bigr)\Bigr\|_{\F} = \Bigl\|\mathcal P_{\mathcal T_r(\bm A_0^{(n)})}\Bigl(\frac{1}{T}\sum_{k=1}^K\bbm \Delta_k^*\bm X_k\bm X_k^\top\Bigr)\Bigr\|_{\F}\\
			&\leq
			\sqrt{2r}\Bigl\|\frac{1}{T}\sum_{k=1}^K\bbm \Delta_k^*\bm X_k\bm X_k^\top\Bigr\|_{\op} \leq
			\frac{\sqrt{2r}}{T}\sum_{k=1}^K\|\bbm \Delta_k^*\|_{\op}\,\|\bm X_k\bm X_k^\top\|_{\op} = \sqrt{2r}\sum_{k=1}^K\frac{T_k}{T}\|\bbm \Delta_k^*\|_{\op}\,\left\|\bm S_k\right\|_{\op},\notag
		\end{align}
		where $\bbm \Delta_k^* = \bm A_k^* - \bm A_0^*$ and $\bm S_k = T_k^{-1}\bm X_k\bm X_k^\top$. Moreover, note that $\bm S_k\succeq 0$ for each $k$ and $\bm S_{\mathrm{pool}}=T^{-1}\sum_{k=1}^K \bm X_k\bm X_k^\top=\sum_{k=1}^K(T_k/T)\bm S_k$. Therefore, for each fixed $k$ we have $\bm S_{\mathrm{pool}} \succeq (T_k/T)\bm S_k$, which implies
		\begin{align}\label{eq:Sk-op-upper-bound}
			\|\bm S_k\|_{\op}=\lambda_{\max}(\bm S_k) \leq \frac{T}{T_k}\lambda_{\max}(\bm S_{\mathrm{pool}}) = \frac{T}{T_k}\|\bm S_{\mathrm{pool}}\|_{\op}.
		\end{align}
		By Lemma~\ref{lem:var-cov-conc-unified}, with probability at least $1-\exp(-Cpd)$,
		\begin{align}\label{eq:Spool-op-bound}
			\|\bm S_{\mathrm{pool}}\|_{\op} \leq \|\bbm\Sigma_{x,{\rm pool}}\|_{\op} + \|\bm S_{\mathrm{pool}}-\bbm\Sigma_{x,{\rm pool}}\|_{\op} \lesssim \|\bbm\Sigma_{x,{\rm pool}}\|_{\op} + C_{\epsilon,\mathcal A}^{\max}\sigma^2p\sqrt{\frac{d}{T}}.
		\end{align}
		For $\bbm\Sigma_{x,{\rm pool}} \succeq 0$, by Lemma~\ref{lem:Sigma-x-pd}, we have $\lambda_{\max}(\bbm\Sigma_{x,k})\leq C_{\epsilon,\mathcal A}^{(k)}\leq C_{\epsilon,\mathcal A}^{\max}$ for each client $k$. Since $\bbm\Sigma_{x,{\rm pool}}=\sum_{k=1}^K (T_k/T)\bbm\Sigma_{x,k}\succeq 0$, it follows that
		\begin{align}\label{eq:Sigma-pool-op-bound}
			\|\bbm\Sigma_{x,{\rm pool}}\|_{\op} = \lambda_{\max}(\bbm\Sigma_{x,{\rm pool}}) = \sum_{k=1}^K\frac{T_k}{T}\lambda_{\max}(\bbm\Sigma_{x,k}) \leq C_{\epsilon,\mathcal A}^{\max}.
		\end{align}
		Combining \eqref{eq:Sk-op-upper-bound}, \eqref{eq:Spool-op-bound} and \eqref{eq:Sigma-pool-op-bound}, and given $T\gtrsim p^2 d$, we conclude that with probability at least $1-\exp(-Cpd)$,
		\begin{align}\label{eq:Spool-op-bound-absorbed}
			\|\bm S_k\|_{\op} \leq \frac{T}{T_k}\|\bm S_{\mathrm{pool}}\|_{\op} \lesssim \frac{T}{T_k}\left(C_{\epsilon,\mathcal A}^{\max} + C_{\epsilon,\mathcal A}^{\max}\sigma^2p\sqrt{\frac{d}{T}}\right) \lesssim C_{\epsilon,\mathcal A}^{\max}\frac{T}{T_k}.
		\end{align}
		Plugging \eqref{eq:Spool-op-bound-absorbed} into \eqref{eq:eq:term2-ini-bound} and noting that $\|\bbm \Delta_k^*\|_\op \leq \zeta$ for any $k\in[K]$ yields that, with probability at least $1-\exp(-Cpd)$,
		\begin{align}\label{eq:term2-final-bound-hp-absorbed}
			\textnormal{Term II} = \Bigl\|\mathcal P_{\mathcal T_r(\bm A_0^{(n)})}\Bigl(\frac{1}{T}\sum_{k=1}^K(\bm A_0^*-\bm A_k^*)\bm X_k\bm X_k^\top\Bigr)\Bigr\|_{\F} \leq \sqrt{2r}\sum_{k=1}^{K}\frac{T_k}{T}\|\bbm \Delta_k^*\|_\op \|\bm S_k\|_\op \lesssim \sqrt{r}KC_{\epsilon,\mathcal A}^{\max}\zeta.
		\end{align}

		For \textbf{Term III}, note that the tangent space consists of matrices of rank at most $2r$. Then, by the non-expansiveness of the projection $\mathcal P_{\mathcal T_r(\bm A_0^{(n)})}(\cdot)$ in operator norm and the inequality $\|\bm M\|_{\F}\leq \sqrt{\mathrm{rank}(\bm M)}\|\bm M\|_{\op}$, we have
		\begin{align}\label{eq:tangent-proj-F-op}
			\textnormal{Term III} = \bigl\|\mathcal P_{\mathcal T_r(\bm A_0^{(n)})}(\bm R_{\rm pool}^\top)\bigr\|_{\F} \leq \sqrt{2r}\bigl\|\mathcal P_{\mathcal T_r(\bm A_0^{(n)})}(\bm R_{\rm pool}^\top)\bigr\|_{\op} \leq \sqrt{2r}\|\bm R_{\rm pool}\|_{\op},
		\end{align}
		This together with Lemma~\ref{lem:var-XtE-op-bound-pooled}, and given $T\gtrsim p^2 d$, there exists a constant $C>0$ such that, with probability at least $1-\exp(-Cpd)$,
		\begin{align}\label{eq:term3-hp-bound}
			\textnormal{Term III} = \Bigl\|\mathcal P_{\mathcal T_r(\bm A_0^{(n)})}(\bm R_{\rm pool}^\top)\Bigr\|_{\F} \leq \sqrt{2r}\|\bm R_{\rm pool}\|_{\op} \lesssim \sqrt{r}C_{\epsilon,\mathcal A}^{\max}\sigma^2\sqrt{\frac{p d}{T}}.
		\end{align}

		For \textbf{Term IV}, note that $(\bm W_k^{(n)})_{ij} \overset{i.i.d.}{\sim} \mathcal N(0,(\sigma_k^{(n)})^2)$ with $\sigma_k^{(n)} = \psi_k^{(n)}\sqrt{2\log(1.25/\delta)}/\epsilon$. Recall \eqref{eq:xi-final-kappa}, we then derive the upper bound for $\psi_k^{(n)}$.
		By the induction hypothesis in \eqref{eq:induction-hypothesis}, we have $\|\bm A_0^{(n)}-\bm A_0^*\|_{\op} \leq \|\bm A_0^{(n)}-\bm A_0^*\|_{\F}\leq \sigma_r(\bm A_0^*)$. Moreover, note that $\|\bm A_k^*-\bm A_0^*\|_{\op} = \|\bm \Delta_k^*\|_{\op} \leq \zeta$, for all $k\in[K]$. Therefore, by the sub-additiveness of Frobenius norm, for any $k\in[K]$, $\|\bm A_0^{(n)}-\bm A_k^*\|_{\op} \leq \|\bm A_0^{(n)}-\bm A_0^*\|_{\op}+\|\bm A_0^*-\bm A_k^*\|_{\op} \leq \sigma_r(\bm A_0^*)+\zeta$. Then, $\psi_k^{(n)} \leq (\alpha_0 + \sqrt{2r}(\sigma_r(\bm A_0^*)+\zeta)\alpha_1)/T_k$ with $\alpha_0 = \Gamma_x\Gamma_{\Delta\epsilon}+\kappa_1(\varrho)\Gamma_{\Delta\epsilon}\Gamma_\epsilon$ and $\alpha_1 = 2\kappa_1(\varrho)\Gamma_x\Gamma_{\Delta\epsilon}+\kappa_2(\varrho)\Gamma_{\Delta\epsilon}^2$. Recall that $\Gamma_x=C\sigma\sqrt{C_{\epsilon,\mathcal A}^{\max}}\sqrt{pd+\log T}$, $\Gamma_\epsilon=C\sigma\sqrt{\Lambda_{\epsilon,\max}}\sqrt{d+\log T}$, and $\Gamma_{\Delta\epsilon}=C\sigma\sqrt{\Lambda_{\epsilon,\max}}\sqrt{|\mathcal I| +\log T}$. By direct substitution of $(\Gamma_x,\Gamma_\epsilon,\Gamma_{\Delta\epsilon})$ and absorbing constants into $C$, we have
		\begin{align}\label{eq:alpha0-alpha1-sqrtform}
			\alpha_0 &\leq C\sigma^2\Bigl[\sqrt{C_{\epsilon,\mathcal A}^{\max}\Lambda_{\epsilon,\max}}\sqrt{pd+\log T}\sqrt{|\mathcal I| +\log T} + \kappa_1(\varrho)\Lambda_{\epsilon,\max}\sqrt{d+\log T}\sqrt{|\mathcal I| +\log T}\Bigr]\notag\\
			&\leq C\sigma^2\Lambda_{\epsilon,\max}\,p\,\sqrt{C_{\epsilon,\mathcal A}^{\max}}\sqrt{pd+\log T}\sqrt{|\mathcal I| +\log T} =: \bar \alpha,\\[0.3em]
			\alpha_1 &\leq C\sigma^2\Bigl[\kappa_1(\varrho)\sqrt{C_{\epsilon,\mathcal A}^{\max}\Lambda_{\epsilon,\max}}\sqrt{pd+\log T}\sqrt{|\mathcal I| +\log T} + \kappa_2(\varrho)\Lambda_{\epsilon,\max}(|\mathcal I| +\log T)\Bigr]\notag\\
			&\leq C\sigma^2\Lambda_{\epsilon,\max}\,p\,\sqrt{C_{\epsilon,\mathcal A}^{\max}}\sqrt{pd+\log T}\sqrt{|\mathcal I| +\log T} =: \bar \alpha,
		\end{align}
		where we use $\kappa_1(\varrho)\lesssim p$, $\kappa_2(\varrho)\lesssim p$ and $|\mathcal I| \leq d \leq pd$. Let $\xi_k^{(n)} = \sqrt{r}(\sigma_r(\bm A_0^*)+\zeta)\bar \alpha/T_k$. Then,
		\begin{align}\label{eq:gradient-xi-upper-bound}
		\psi_k^{(n)} \leq \frac{C}{T_k}\Bigl(\alpha_0 + \sqrt{r}\bigl(\sigma_r(\bm A_0^*)+\zeta\bigr)\alpha_1\Bigr) \leq \frac{C}{T_k}\sqrt{r}\bigl(\sigma_r(\bm A_0^*)+\zeta\bigr)\,\bar\alpha = C\xi_k^{(n)}.
		\end{align}
		
		By the non-expansiveness of the projection $\mathcal P_{\mathcal T_r(\bm A_0^{(n)})}(\cdot)$ in Frobenius norm and the inequality $\|\bm M\|_{\F}\leq \sqrt{\mathrm{rank}(\bm M)}\|\bm M\|_{\op}$, we have
		\begin{align}\label{eq:term4-step1}
			\textnormal{Term IV} &= \Bigl\|\mathcal P_{\mathcal T_r(\bm A_0^{(n)})}\!\Bigl(\sum_{k=1}^K\frac{T_k}{T}\bm W_k^{(n)}\Bigr)\Bigr\|_{\F} \leq \sqrt{2r}\Bigl\|\sum_{k=1}^K\frac{T_k}{T}\bm W_k^{(n)}\Bigr\|_{\op}.
		\end{align}

		For each client $k$, $\bm W_k^{(n)}$ has $i.i.d.$ entries distributed as $\mathcal N(0,(\sigma_k^{(n)})^2)$.
		To achieve $(\varepsilon/N_g,\delta/N_g)$-DP for each client $k\in[K]$, by Lemma \ref{lem:gaussian-mech-prop}, it suffices to set $\sigma_k^{(n)} = \xi_k^{(n)}\sqrt{2\log(1.25N_g/\delta)}/\varepsilon$. Denote $\widetilde{\bm W}^{(n)} = \sum_{k=1}^K(T_k/T)\bm W_k^{(n)}\in\mathbb R^{d\times pd}$. Since $\{\bm W_k^{(n)}\}_{k=1}^K$ are independent across $k$, for each fixed $(i,j)$, we have $\widetilde {\bm W}_{ij}^{(n)} = \sum_{k=1}^K(T_k/T)(\bm W_k^{(n)})_{ij} \sim \mathcal N(0, (\widetilde\sigma^{(n)})^2)$, with $(\widetilde\sigma^{(n)})^2 = \sum_{k=1}^K(T_k/T)^2(\sigma_k^{(n)})^2$.
		Hence $\widetilde{\bm W}^{(n)}$ has $i.i.d.$ entries following $\mathcal N(0,(\widetilde\sigma^{(n)})^2)$. Applying Lemma~\ref{lem:gaussian-opnorm-hp-strong} to $\widetilde{\bm W}^{(n)}$, there exists a constant $C>0$ such that with probability at least $1-\exp(-C\log T)$,
		\begin{align}\label{eq:term4-op-bound}
			\|\widetilde{\bm W}^{(n)}\|_{\op}\leq C\,\widetilde\sigma^{(n)}\Bigl(\sqrt{pd}+\sqrt{\log T}\Bigr).
		\end{align}
		Note that the aggregated noise scale satisfies
		\begin{align}\label{eq:aggregated-noise-scale}
			(\widetilde\sigma^{(n)})^2 = \sum_{k=1}^K\Bigl(\frac{T_k}{T}\Bigr)^2(\sigma_k^{(n)})^2 = \frac{2N_g^2}{\varepsilon^2}\log\Bigl(\frac{1.25N_g}{\delta}\Bigr)\sum_{k=1}^K\Bigl(\frac{T_k}{T}\Bigr)^2(\xi_k^{(n)})^2.
		\end{align}
		Recall that $\xi_k^{(n)} = \sqrt{r}(\sigma_r(\bm A_0^*)+\zeta)\bar \alpha/T_k$. Then we have
		\begin{align}\label{eq:xi-sum-plugged-beta}
			\Biggl(\sum_{k=1}^K\Bigl(\frac{T_k}{T}\Bigr)^2(\xi_k^{(n)})^2\Biggr)^{1/2}\leq \frac{1}{T}\Biggl(\sum_{k=1}^K\biggl(\sqrt{r}\bigl(\sigma_r(\bm A_0^*) + \zeta\bigr)\bar \alpha\biggr)^2\Biggr)^{1/2} = \frac{\sqrt {rK}}{T}\bigl(\bigl(\sigma_r(\bm A_0^*) + \zeta\bigr)\bar \alpha\bigr).
		\end{align}
		Then, coditional on $\mathcal E = \bigcap_{k=1}^K \mathcal E_k$, combining \eqref{eq:term4-step1}, \eqref{eq:term4-op-bound}, \eqref{eq:aggregated-noise-scale} and \eqref{eq:xi-sum-plugged-beta} yields that, with probability at least $1-\exp(-C\log T)$,
		\begin{align}\label{eq:term4-final-hp-tight-final-beta}
			\textnormal{Term IV} &\leq C2\sqrt{rK}\Bigl(\sqrt{pd}+\sqrt{\log T}\Bigr) \frac{N_g}{\varepsilon}\sqrt{\log\Bigl(\frac{1.25N_g}{\delta}\Bigr)} \cdot \frac{\sqrt{r}}{T}\bigl(\sigma_r(\bm A_0^*) + \zeta\bigr)\bar \alpha \notag\\
			&\leq C\sqrt{C_{\epsilon,\mathcal A}^{\max}}\sigma^2r(\sigma_r(\bm A_0^*)+\zeta)p\frac{N_g}{\varepsilon}\Lambda_{\epsilon,\max}\sqrt{\log\Bigl(\frac{1.25N_g}{\delta}\Bigr)}\frac{\sqrt{K(|\mathcal I| +\log T)}(pd+\log T)}{T} \notag\\
			&=: \mathsf{Error}_{\mathrm{DP}}^{(n)}.
		\end{align}

		Denote $\alpha_n(\rho) = \|\bm I_{pd}-2\rho\bm S_{\rm pool}\|_{\op} + (1+2\rho\|\bm S_{\rm pool}\|_{\op})e_n$ with $e_n=\|\bm A_0^{(n)}-\bm A_0^*\|_{\op}/\sigma_r(\bm A_0^*)$. Thus, combining the bounds for \textbf{Term I}--\textbf{Term IV} in \eqref{eq:term1-bound}, \eqref{eq:term2-final-bound-hp-absorbed}, \eqref{eq:term3-hp-bound}, and \eqref{eq:term4-final-hp-tight-final-beta}, we conclude that on an event of probability at least $1-\exp(-Cpd)-\exp(-C\log T)$,
		\begin{align}\label{eq:four-terms-combined}
			&\Bigl\|\bm A_0^{(n)}-\rho\sum_{k=1}^K \bm Z_k^{(n)}-\bm A_0^*\Bigr\|_{\F} = \textnormal{Term I} + 2\rho\textnormal{Term II} + 2\rho\textnormal{Term III} + \rho\textnormal{Term IV} \notag\\
			&\leq \alpha_n(\rho)\|\bm A_0^{(n)}-\bm A_0^*\|_{\F} + C\rho \sqrt{r}KC_{\epsilon,\mathcal A}^{\max}\zeta + C\rho\sqrt{r}\,C_{\epsilon,\mathcal A}^{\max}\sigma^2\sqrt{\frac{pd}{T}} \notag\\
			&\quad + C\rho \sqrt{C_{\epsilon,\mathcal A}^{\max}}\sigma^2r(\sigma_r(\bm A_0^*)+\zeta)p\frac{N_g}{\varepsilon}\Lambda_{\epsilon,\max}\sqrt{\log\Bigl(\frac{1.25N_g}{\delta}\Bigr)}\frac{\sqrt{K(|\mathcal I| +\log T)}(pd+\log T)}{T}.
		\end{align}
		\noindent
		Set $\rho^*=2/(c_{\epsilon,\mathcal A}^{\min}+3C_{\epsilon,\mathcal A}^{\max})$. Then, we show that $\alpha_n(\rho^*) < 1$ for any $n\in \{1,\cdots,N_g-1\}$. By Lemma~\ref{lem:var-cov-conc-unified}, given $T\gtrsim p^2d$, there exist constants $c_0, C>0$, with probability at least $1-\exp(-Cpd)$,
		\begin{align}\label{eq:S-pool-op-error-bound}
			\bigl\|\bm S_{\rm pool}-\bbm\Sigma_{x,{\rm pool}}\bigr\|_{\op} \leq c_0\,C_{\epsilon,\mathcal A}^{\max}\sigma^2p\sqrt{\frac{d}{T}} =: \delta_T.
		\end{align}

		By Lemma~\ref{lem:Sigma-x-pd}, for each client $k\in[K]$, we have $c_{\epsilon,\mathcal A}^{(k)} \leq \lambda_{\min}(\bbm\Sigma_{x,k}) \leq \lambda_{\max}(\bbm\Sigma_{x,k}) \leq C_{\epsilon,\mathcal A}^{(k)}$. Recall that $\bbm\Sigma_{x,{\rm pool}}=\sum_{k=1}^K (T_k/T)\bbm\Sigma_{x,k}$. Therefore,
		\begin{align}\label{eq:Sigma-pool-eig-bounds}
			c_{\epsilon,\mathcal A}^{\min} \leq \sum_{k=1}^K\frac{T_k}{T}\,c_{\epsilon,\mathcal A}^{(k)} \leq \lambda_{\min}(\bbm\Sigma_{x,{\rm pool}}) \leq \lambda_{\max}(\bbm\Sigma_{x,{\rm pool}}) \leq \sum_{k=1}^K\frac{T_k}{T}\,C_{\epsilon,\mathcal A}^{(k)} \leq C_{\epsilon,\mathcal A}^{\max}.
		\end{align}
		Next, since both $\bm S_{\rm pool}$ and $\bbm\Sigma_{x,{\rm pool}}$ are symmetric, Weyl's inequality \citep{weyl1912asymptotische} together with \eqref{eq:S-pool-op-error-bound} implies that, with probability at least $1-\exp(-Cpd)$,
		\begin{align}\label{eq:Weyl-pool}
			\lambda_{\min}(\bbm\Sigma_{x,{\rm pool}})-\delta_T \leq \lambda_{\min}(\bm S_{\rm pool}) \leq \lambda_{\max}(\bm S_{\rm pool}) \leq \lambda_{\max}(\bbm\Sigma_{x,{\rm pool}})+\delta_T.
		\end{align}
		Combining \eqref{eq:Sigma-pool-eig-bounds} and \eqref{eq:Weyl-pool} yields that, with probability at least $1-\exp(-Cpd)$,
		\begin{align}\label{eq:Spool-eig-bounds-nice}
			c_{\epsilon,\mathcal A}^{\min}-\delta_T \leq \lambda_{\min}(\bm S_{\rm pool}) \leq \lambda_{\max}(\bm S_{\rm pool}) \leq C_{\epsilon,\mathcal A}^{\max}+\delta_T.
		\end{align}
		Finally, taking $T\gtrsim p^2d$ sufficiently large so that $\delta_T\leq \tfrac12 c_{\epsilon,\mathcal A}^{\min}$ and
		$\delta_T\leq \tfrac12 C_{\epsilon,\mathcal A}^{\max}$. Then, \eqref{eq:Spool-eig-bounds-nice} simplifies to
		\begin{align}\label{eq:Spool-op-bound-clean}
			\frac12\,c_{\epsilon,\mathcal A}^{\min} \leq \lambda_{\min}(\bm S_{\rm pool}) \leq \lambda_{\max}(\bm S_{\rm pool}) \leq \frac32\,C_{\epsilon,\mathcal A}^{\max}.
		\end{align}

		Donote $\lambda(\bm S_{\rm pool}) = \{\lambda_1(\bm S_{\rm pool}),\cdots,\lambda_{pd}(\bm S_{\rm pool})\}$ as the set of eigenvalues of $\bm S_{\rm pool}$.
		Hence, for any $\rho>0$, $\|\bm I_{pd}-2\rho\,\bm S_{\rm pool}\|_{\op} = \max_{\lambda\in\lambda(\bm S_{\rm pool})}|1-2\rho\lambda| \leq \max\{|1-\rho c_{\epsilon,\mathcal A}^{\min}|,\ |1-3\rho C_{\epsilon,\mathcal A}^{\max}|\}$, and also $1+2\rho\|\bm S_{\rm pool}\|_{\op}\leq 1+3\rho C_{\epsilon,\mathcal A}^{\max}$. Note that if $\rho^* \leq 1/(3C_{\epsilon,\mathcal A}^{\max})$, then $|1-3\rho^* C_{\epsilon,\mathcal A}^{\max}| = 1-3\rho^* C_{\epsilon,\mathcal A}^{\max} \leq 1-\rho^* c_{\epsilon,\mathcal A}^{\min} = |1-\rho^* c_{\epsilon,\mathcal A}^{\min}|$. If $\rho^* > 1/(3C_{\epsilon,\mathcal A}^{\max})$, then $|1-3\rho^* C_{\epsilon,\mathcal A}^{\max}| = 3\rho^* C_{\epsilon,\mathcal A}^{\max}-1$, and since $\rho^* (c_{\epsilon,\mathcal A}^{\min}+3C_{\epsilon,\mathcal A}^{\max})=2$, we have $1-\rho^* c_{\epsilon,\mathcal A}^{\min} = 3\rho^* C_{\epsilon,\mathcal A}^{\max}-1 >0$. Therefore, in either case, $\|\bm I_{pd}-2\rho^* \bm S_{\rm pool}\|_{\op} \leq 1-\rho^* c_{\epsilon,\mathcal A}^{\min}$.
		With the choice $\rho^*=2/(c_{\epsilon,\mathcal A}^{\min}+3C_{\epsilon,\mathcal A}^{\max})$, we have
		\begin{align}\label{eq:Iminus2rhoSpool-op-exact-rho-part}
			\|\bm I_{pd}-2\rho^* \bm S_{\rm pool}\|_{\op} \leq \frac{3C_{\epsilon,\mathcal A}^{\max} - c_{\epsilon,\mathcal A}^{\min}}{3C_{\epsilon,\mathcal A}^{\max} + c_{\epsilon,\mathcal A}^{\min}}.
		\end{align}
		Moreover, under the induction hypothesis, we have
		\begin{align}\label{eq:e_n}
			e_n = \frac{\|\bm A_0^{(n)}-\bm A_0^*\|_{\op}}{\sigma_r(\bm A_0^*)}
			\leq \frac{\|\bm A_0^{(n)}-\bm A_0^*\|_{\F}}{\sigma_r(\bm A_0^*)}
			\leq \frac{c_{\epsilon,\mathcal A}^{\min}(3C_{\epsilon,\mathcal A}^{\max}+c_{\epsilon,\mathcal A}^{\min})}{20(6C_{\epsilon,\mathcal A}^{\max}+c_{\epsilon,\mathcal A}^{\min})^2}.
		\end{align}
		For $1+2\rho^*\|\bm S_{\rm pool}\|_{\op}$, using $\|\bm S_{\rm pool}\|_{\op}\leq \frac32 C_{\epsilon,\mathcal A}^{\max}$, we have
		\begin{align}\label{eq:one-plus-2rhoSpool-rhostar-rho-part}
			1+2\rho^*\|\bm S_{\rm pool}\|_{\op} \leq 1+3\rho^* C_{\epsilon,\mathcal A}^{\max} = \frac{c_{\epsilon,\mathcal A}^{\min}+9C_{\epsilon,\mathcal A}^{\max}}{c_{\epsilon,\mathcal A}^{\min}+3C_{\epsilon,\mathcal A}^{\max}}.
		\end{align}
		Then, plugging \eqref{eq:Iminus2rhoSpool-op-exact-rho-part}, \eqref{eq:e_n} and \eqref{eq:one-plus-2rhoSpool-rhostar-rho-part} into $\alpha_n(\rho^*)$ yields
		\begin{align}\label{eq:alpha-rhostar-bound-rho-part}
			\alpha_n(\rho^*) &\leq \frac{3C_{\epsilon,\mathcal A}^{\max} - c_{\epsilon,\mathcal A}^{\min}}{3C_{\epsilon,\mathcal A}^{\max} + c_{\epsilon,\mathcal A}^{\min}} + \frac{c_{\epsilon,\mathcal A}^{\min}(3C_{\epsilon,\mathcal A}^{\max}+c_{\epsilon,\mathcal A}^{\min})}{20(6C_{\epsilon,\mathcal A}^{\max}+c_{\epsilon,\mathcal A}^{\min})^2} \cdot \frac{c_{\epsilon,\mathcal A}^{\min}+9C_{\epsilon,\mathcal A}^{\max}}{c_{\epsilon,\mathcal A}^{\min}+3C_{\epsilon,\mathcal A}^{\max}}\notag\\
            &\leq \frac{3C_{\epsilon,\mathcal A}^{\max} - c_{\epsilon,\mathcal A}^{\min}}{3C_{\epsilon,\mathcal A}^{\max} + c_{\epsilon,\mathcal A}^{\min}} + \frac{c_{\epsilon,\mathcal A}^{\min}}{c_{\epsilon,\mathcal A}^{\min}+9C_{\epsilon,\mathcal A}^{\max}} \cdot \frac{c_{\epsilon,\mathcal A}^{\min}+9C_{\epsilon,\mathcal A}^{\max}}{c_{\epsilon,\mathcal A}^{\min}+3C_{\epsilon,\mathcal A}^{\max}} = \frac{3C_{\epsilon,\mathcal A}^{\max}}{3C_{\epsilon,\mathcal A}^{\max} + c_{\epsilon,\mathcal A}^{\min}} < 1,
		\end{align}
		Recall that $\|\bm A_0^{(n)}-\rho^*\sum_{k=1}^K \bm Z_k^{(n)}-\bm A_0^*\|_{\F}\leq \alpha_n(\rho)\|\bm A_0^{(n)}-\bm A_0^*\|_{\F} + \mathsf{Rem}^{(n)}(\rho)$, where $\mathsf{Rem}^{(n)}(\rho) = C\rho K C_{\epsilon,\mathcal A}^{\max}\zeta + C\rho\sqrt{r}C_{\epsilon,\mathcal A}^{\max}\sigma^2\sqrt{pd/T} + \rho\mathsf{Error}_{\mathrm{DP}}^{(n)}$.
		Note that Theorem \ref{thm:common-A0-error-bounds} implies that $C\rho \sqrt{r}K C_{\epsilon,\mathcal A}^{\max}\allowbreak \zeta \leq c_{\epsilon,\mathcal A}^{\min}R/4(3C_{\epsilon,\mathcal A}^{\max} + c_{\epsilon,\mathcal A}^{\min})$. Moreover, provided $T \gtrsim\Bigl(\sqrt{C_{\epsilon,\mathcal A}^{\max}K|\mathcal I|}\sigma^2\sigma_r(\bm A_0^*)p\Lambda_{\epsilon,\max}(\sigma_r(\bm A_0^*)\allowbreak+\phi(s_q)) \vee \bigl(C_{\epsilon,\mathcal A}^{\max}\sigma^2\bigr)^2\Bigr)rpd$, we have $C\rho^*\sqrt{r}C_{\epsilon,\mathcal A}^{\max}\sigma^2\sqrt{pd/T} + \rho^*\mathsf{Error}_{\mathrm{DP}}^{(n)} < c_{\epsilon,\mathcal A}^{\min}R/4(3C_{\epsilon,\mathcal A}^{\max} + c_{\epsilon,\mathcal A}^{\min})$, uniformly for $n\leq N_g-1$.
		Consequently, conditional on $\mathcal E = \bigcap_{k=1}^K \mathcal E_k$ and under the induction hypothesis $\|\bm A_0^{(n)}-\bm A_0^*\|_{\F}\leq R$, by the choice of $\rho^*=2/(c_{\epsilon,\mathcal A}^{\min}+3C_{\epsilon,\mathcal A}^{\max})$, we have
		\begin{align*}
			\Bigl\|\bm A_0^{(n)}-\rho^*\sum_{k=1}^K \bm Z_k^{(n)}-\bm A_0^*\Bigr\|_{\op}
			&\leq \Bigl\|\bm A_0^{(n)}-\rho^*\sum_{k=1}^K \bm Z_k^{(n)}-\bm A_0^*\Bigr\|_{\F} \leq \alpha_n(\rho^*)\|\bm A_0^{(n)}-\bm A_0^*\|_{\F}+\mathsf{Rem}^{(n)}(\rho^*)\\
            &\leq \frac{6C_{\epsilon,\mathcal A}^{\max} + c_{\epsilon,\mathcal A}^{\min}}{2(3C_{\epsilon,\mathcal A}^{\max} + c_{\epsilon,\mathcal A}^{\min})}R < R < \frac{\sigma_r(\bm A_0^*)}{8}.
		\end{align*}
        This together with Lemma~\ref{lem:matrix-pert-lesssim} by setting $\bm M' = \bm A_0^{(n)}-\rho^*\sum_{k=1}^K \bm Z_k^{(n)}$ and $\bm M = \bm A_0^*$, we have
		\begin{align}\label{eq:final-upper-bound}
			&\|\bm A_0^{(n+1)}-\bm A_0^*\|_{\F} =\Bigl\|\mathrm{SVD}_r\Bigl(\bm A_0^{(n)} - \rho^*\sum_{k=1}^K \bm Z_k^{(n)}\Bigr) - \bm A_0^*\Bigr\|_{\F} \notag\\
			&\leq \Bigl\|\bm A_0^{(n)} - \rho^*\sum_{k=1}^K \bm Z_k^{(n)} - \bm A_0^*\Bigr\|_{\F} + \frac{40\Bigl\|\bm A_0^{(n)}-\rho^*\sum_{k=1}^K \bm Z_k^{(n)}-\bm A_0^*\Bigr\|_{\op}\Bigl\|\bm A_0^{(n)} - \rho^*\sum_{k=1}^K \bm Z_k^{(n)} - \bm A_0^*\Bigr\|_{\F}}{\sigma_r(\bm A_0^*)} \notag\\
			&\leq \frac{6C_{\epsilon,\mathcal A}^{\max} + c_{\epsilon,\mathcal A}^{\min}}{2(3C_{\epsilon,\mathcal A}^{\max} + c_{\epsilon,\mathcal A}^{\min})}R + \frac{40}{\sigma_r(\bm A_0^*)}\left(\frac{6C_{\epsilon,\mathcal A}^{\max} + c_{\epsilon,\mathcal A}^{\min}}{2(3C_{\epsilon,\mathcal A}^{\max} + c_{\epsilon,\mathcal A}^{\min})}R\right)^2\\
            &= \frac{6C_{\epsilon,\mathcal A}^{\max} + c_{\epsilon,\mathcal A}^{\min}}{2(3C_{\epsilon,\mathcal A}^{\max} + c_{\epsilon,\mathcal A}^{\min})}R
            + 10\frac{(6C_{\epsilon,\mathcal A}^{\max} + c_{\epsilon,\mathcal A}^{\min})^2}{(3C_{\epsilon,\mathcal A}^{\max} + c_{\epsilon,\mathcal A}^{\min})^2}\cdot\frac{c_{\epsilon,\mathcal A}^{\min}(3C_{\epsilon,\mathcal A}^{\max}+c_{\epsilon,\mathcal A}^{\min})}{20(6C_{\epsilon,\mathcal A}^{\max}+c_{\epsilon,\mathcal A}^{\min})^2}R \notag\\
            &= \frac{6C_{\epsilon,\mathcal A}^{\max} + c_{\epsilon,\mathcal A}^{\min}}{2(3C_{\epsilon,\mathcal A}^{\max} + c_{\epsilon,\mathcal A}^{\min})}R + \frac{c_{\epsilon,\mathcal A}^{\min}}{2(3C_{\epsilon,\mathcal A}^{\max}+c_{\epsilon,\mathcal A}^{\min})}R = R. \notag
            \end{align}
        This completes the induction.

		Note that all high probability events needed in the induction holds uniformly for any iteration $n$. Hence, these events can be fixed throughout the induction, and the probability does not accumulate with the number of iterations. For clarity, we still state the corresponding probabilistic inequalities at each step to indicate where they are invoked.
		
		\noindent
		\textbf{Part III (Privacy guarantee and final error bound).}
		We first translate the sensitivity control into a privacy guarantee for the whole federated Stage~I procedure, and then combine the induction recursion to obtain the final estimation error bound. The preceding sensitivity bound is established on the high-probability event $\mathcal E_k$. Conditional on $\mathcal E_k$, for every realized dataset satisfying this event, the pre-noise gradient-transfer message of client $k$ has sensitivity at most $\xi_k^{(n)}$ at iteration $n$. Therefore, by the Gaussian mechanism, releasing the noisy gradient-transfer message at iteration $n$ is $(\varepsilon/N_g,\delta/N_g)$-DP for client $k\in[K]$ conditional on $\mathcal E_k$. By the basic composition theorem for differential privacy, e.g., \citet[Theorem~3.1.6]{dwork2014algorithmic}, the collection of $N_g$ noisy gradient-transfer messages is $(\varepsilon,\delta)$-DP for client $k\in[K]$ conditional on $\mathcal E_k$. Since the remaining steps of the algorithm are post-processing of these messages, the entire Stage~I algorithm is also conditional $(\varepsilon,\delta)$-DP for client $k\in[K]$ on $\mathcal E_k$. Equivalently, because $\mathbb P(\mathcal E_k^c)\leq \exp(-C\log T)$ by \eqref{eq:prob-Ek}, the realized Stage~I federated algorithm satisfies selective federated conditional $(\varepsilon,\delta)$-DP for client $k\in[K]$, with probability at least $1-\exp(-C\log T)$ over the VAR data-generating process.

		To conclude, we choose an appropriate rate of the iteration number $N_g$ to derive the final error bound. Note that $\alpha_n(\rho^*)<1$ for all $n\in[N_g]$ under the induction hypothesis, and thus $\alpha(\rho^*) = \sup_{n\in[N_g]} \alpha_n(\rho^*)<1$.
		Then, by the non-expansiveness of $\mathrm{SVD}_r(\cdot)$ in Frobenius norm and \eqref{eq:four-terms-combined} under \eqref{eq:final-upper-bound}, we have, with probability at least $1-\exp(-Cpd)-\exp(-C\log T)$,
		\begin{align*}
			&\|\bm A_0^{(N_g)}-\bm A_0^*\|_{\F}
			= \Bigl\|\mathrm{SVD}_r\Bigl(\bm A_0^{(N_g-1)} - \rho^*\sum_{k=1}^K \bm Z_k^{(N_g-1)}\Bigr) - \bm A_0^*\Bigr\|_{\F} \\
			&\lesssim \Bigl\|\bm A_0^{(N_g-1)} - \rho^*\sum_{k=1}^K \bm Z_k^{(N_g-1)} - \bm A_0^*\Bigr\|_{\F} \\
			&\leq \alpha(\rho^*)\|\bm A_0^{(N_g-1)}-\bm A_0^*\|_{\F}
			+ C\rho^*\sqrt{r}C_{\epsilon,\mathcal A}^{\max}K\zeta
			+ C\rho^*\sigma^2C_{\epsilon,\mathcal A}^{\max}\sqrt{r}\sqrt{\frac{pd}{T}} \\
			&\quad + C\rho^* \sigma^2r(\sigma_r(\bm A_0^*)+\zeta)p\sqrt{C_{\epsilon,\mathcal A}^{\max}}\Lambda_{\epsilon,\max}\frac{N_g}{\varepsilon}\sqrt{\log\Bigl(\frac{1.25N_g}{\delta}\Bigr)}\frac{\sqrt{K(|\mathcal I| +\log T)}(pd+\log T)}{T} \\
			&\leq \alpha^{N_g}(\rho^*)\|\bm A_0^{(0)}-\bm A_0^*\|_{\F}
			+ \frac{C\rho^*}{1-\alpha(\rho^*)}\sqrt{r} C_{\epsilon,\mathcal A}^{\max}K\zeta
			+ \frac{C\rho^*}{1-\alpha(\rho^*)}\sigma^2C_{\epsilon,\mathcal A}^{\max}\sqrt{r}\sqrt{\frac{pd}{T}} \\
			&\quad+ \frac{C\rho^*}{1-\alpha(\rho^*)}\sigma^2r(\sigma_r(\bm A_0^*)+\zeta)p\sqrt{C_{\epsilon,\mathcal A}^{\max}}\Lambda_{\epsilon,\max}\frac{N_g}{\varepsilon}\sqrt{\log\Bigl(\frac{1.25N_g}{\delta}\Bigr)}\frac{\sqrt{K(|\mathcal I| +\log T)}(pd+\log T)}{T}.
		\end{align*}

		Here the factor $\rho^*/(1-\alpha(\rho^*))$ comes from the geometric recursion. Since $\rho^*=2/(c_{\epsilon,\mathcal A}^{\min}+3C_{\epsilon,\mathcal A}^{\max})$ and $1-\alpha(\rho^*)$ is of order $c_{\epsilon,\mathcal A}^{\min}/(c_{\epsilon,\mathcal A}^{\min}+3C_{\epsilon,\mathcal A}^{\max})$, we have $\rho^*/(1-\alpha(\rho^*))\lesssim 1/c_{\epsilon,\mathcal A}^{\min}$. Now let $N_g \asymp \log(T)$, we have $\alpha^{N_g}(\rho^*)\|\bm A_0^{(0)}-\bm A_0^*\|_{\F} \asymp 1/T$ which can be absorbed by the last error term, indicating that there is no improvement by running the algorithm after $O(\log T)$ iterations. Thus, given $\sigma_k^{(n)}\asymp \xi_k^{(n)}\log T \sqrt{\log(1.25\log T/\delta)}/\varepsilon$, with 
		\[
			\xi_k^{(n)}\asymp \frac{\sqrt{C_{\epsilon,\mathcal A}^{\max}r}\sigma^2\bigl(\sigma_r(\bm A_0^*)+\zeta\bigr)p\Lambda_{\epsilon,\max}\sqrt{pd+\log T}\sqrt{|\mathcal I|+\log T}}{T_k},
		\]
		Denote $\kappa_{\epsilon,\mathcal A} = C_{\epsilon,\mathcal A}^{\max}/c_{\epsilon,\mathcal A}^{\min}$.
		Then, conditional on the event $\mathcal E = \bigcap_{k=1}^K \mathcal E_k$, with probability at least $1-\exp(-Cpd)-\exp(-C\log T)$, 
		\begin{align*}
			\|\widehat{\bm A}_0-\bm A_0^*\|_{\F}
			\lesssim \kappa_{\epsilon,\mathcal A}\Biggl(
			&\underbrace{\sigma^2\sqrt{r}\sqrt{\frac{pd}{T}}}_{\text{statistical estimation error }(\mathsf{Error}_{\mathrm{stat}})}
			+ \underbrace{\sqrt{r}K\zeta}_{\text{client heterogeneity error }(\mathsf{Error}_{\mathrm{h}})}\\
			& + \underbrace{\frac{\sigma^2r\bigl(\sigma_r(\bm A_0^*)+\zeta\bigr)p\Lambda_{\epsilon,\max}}{\sqrt{C_{\epsilon,\mathcal A}^{\max}}}\frac{\log T}{\varepsilon}\sqrt{\log\Bigl(\frac{1.25\log T}{\delta}\Bigr)} \frac{\sqrt{K(|\mathcal I|+\log T)}(pd+\log T)}{T}}_{\text{DP Gaussian-noise error }(\mathsf{Error}_{\mathrm{DP}})}
			\Biggr).
		\end{align*}
		To make the statistical and privacy-induced stochastic errors negligible, it is enough to impose a pooled sample-size condition that dominates the covariance-concentration requirement and the Gaussian-noise contribution. 
		Specifically, it suffices to require $T\gtrsim \{(C_{\epsilon,\mathcal A}^{\max}\sigma^2)^2 \vee \sigma^2(\sigma_r(\bm A_0^*)+\zeta)p\Lambda_{\epsilon,\max}/(\varepsilon\sqrt{C_{\epsilon,\mathcal A}^{\max}})\sqrt{K|\mathcal I|\log(1.25/\delta)}\}rpd$.
		Moreover, by \eqref{eq:prob-Ek} and the Bonferroni inequality, we have
		\[
			\mathbb P(\mathcal E) = \mathbb P\Bigl(\bigcap_{k=1}^K \mathcal E_k\Bigr) \geq \sum_{k=1}^K \mathbb P(\mathcal E_k) - (K-1) \geq 1-K\exp(-C\log T).
		\]
		Combining the above arguments, we conclude that with probability at least $1-\exp(-Cpd)-K\exp(-C\log T)$,
		\[
			\|\widehat{\bm A}_0-\bm A_0^*\|_{\F} \lesssim \kappa_{\epsilon,\mathcal A}\left(\mathsf{Error}_{\mathrm{stat}} + \mathsf{Error}_{\mathrm{h}} + \mathsf{Error}_{\mathrm{DP}}\right),
		\]
		where the three error terms are defined in the preceding display with the condition-number factor $\kappa_{\epsilon,\mathcal A}=C_{\epsilon,\mathcal A}^{\max}/c_{\epsilon,\mathcal A}^{\min}$.
		This concludes the proof. 
	\end{proof}

	\begin{proof}[\textbf{Proof of Theorem \ref{thm:personalized-error-bounds}}]
		Fix an arbitrary client $k\in[K]$. Note that the total estimation error can be decomposed into an optimization (algorithmic) error and a statistical error:
		\begin{equation*}
			\|\widehat{\bbm \Delta}_k - \bbm \Delta_k^*\|_{\F} \leq \|\widehat{\bbm \Delta}_k - \widehat{\bbm \Delta}_k^{\mathrm{opt}}\|_{\F} + \|\widehat{\bbm \Delta}_k^{\mathrm{opt}} - \bbm \Delta_k^*\|_{\F}.
		\end{equation*}
		Accordingly, we first establish a bound for the statistical error $\|\widehat{\bbm \Delta}_k^{\mathrm{opt}} - \bbm \Delta_k^*\|_{\F}$, and then control the optimization error $\|\widehat{\bbm \Delta}_k - \widehat{\bbm \Delta}_k^{\mathrm{opt}}\|_{\F}$ using the convergence guarantee of FISTA.\\
	\textbf{Part I (Statistical error).} By the optimality of $\widehat{\bbm \Delta}_k^{\mathrm{opt}}$ in \eqref{eq:stage2-opt}, we have
		\begin{align*}
			&\frac{1}{T_k} \sum_{t=1}^{T_k}\left\|\bbm{y}_{k,t} - (\widehat{\bm A}_0 + \widehat{\bm\Delta}_k^{\mathrm{opt}})\bbm{x}_{k,t}\right\|_2^2 + \varpi_k \|\widehat{\bm\Delta}_k^{\mathrm{opt}}\|_1 \leq \frac{1}{T_k} \sum_{t=1}^{T_k}\left\|\bbm{y}_{k,t} - (\widehat{\bm A}_0 + \bbm \Delta_k^*)\bbm{x}_{k,t}\right\|_2^2 + \varpi_k \|\bbm \Delta_k^*\|_1.\\
			\Rightarrow \ &\frac{1}{T_k} \sum_{t=1}^{T_k}\left\|\bbm{y}_{k,t} - (\widehat{\bm A}_0 + \widehat{\bm\Delta}_k^{\mathrm{opt}})\bbm{x}_{k,t}\right\|_2^2 - \frac{1}{T_k} \sum_{t=1}^{T_k}\left\|\bbm{y}_{k,t} - (\widehat{\bm A}_0 + \bbm \Delta_k^*)\bbm{x}_{k,t}\right\|_2^2  \leq \varpi_k \left(\|\bbm \Delta_k^*\|_1 - \|\widehat{\bm\Delta}_k^{\mathrm{opt}}\|_1\right).
		\end{align*}
		Note that $\|\bbm a\|_2^2 - \|\bbm b\|_2^2 = \|\bbm a - \bbm b\|_2^2 + 2\langle\bbm a - \bbm b,\bbm b\rangle$, then we have 
		\begin{align*}
			&\frac{1}{T_k} \sum_{t=1}^{T_k}\left\|\bbm{y}_{k,t} - (\widehat{\bm A}_0 + \widehat{\bm\Delta}_k^{\mathrm{opt}})\bbm{x}_{k,t}\right\|_2^2 - \frac{1}{T_k} \sum_{t=1}^{T_k}\left\|\bbm{y}_{k,t} - (\widehat{\bm A}_0 + \bbm \Delta_k^*)\bbm{x}_{k,t}\right\|_2^2\\
			=\ &\frac{1}{T_k}\sum_{t=1}^{T_k}\|(\widehat{\bbm \Delta}_k^{\mathrm{opt}} - \bbm \Delta_k^*)\bbm x_{k,t}\|_2^2 - \frac{2}{T_k}\sum_{t=1}^{T_k}\langle(\widehat{\bbm \Delta}_k^{\mathrm{opt}} - \bbm \Delta_k^*)\bbm x_{k,t},\bbm y_{k,t} - (\widehat{\bm A}_0+\bbm \Delta_k^*)\bbm x_{k,t}\rangle\\
			=\ &\frac{1}{T_k}\sum_{t=1}^{T_k}\|(\widehat{\bbm \Delta}_k^{\mathrm{opt}} - \bbm \Delta_k^*)\bbm x_{k,t}\|_2^2 - \frac{2}{T_k}\sum_{t=1}^{T_k}\langle(\widehat{\bbm \Delta}_k^{\mathrm{opt}} - \bbm \Delta_k^*)\bbm x_{k,t},\bbm\epsilon_{k,t}\rangle -\frac{2}{T_k}\sum_{t=1}^{T_k}\langle(\widehat{\bbm \Delta}_k^{\mathrm{opt}} - \bbm \Delta_k^*)\bbm x_{k,t}, (\widehat{\bm A}_0 - \bm A_0^*)\bbm x_{k,t}\rangle.
		\end{align*}
		This together with the above optimality inequality yields
		\begin{align}\label{eq:stage2-basic-ineq-pre}
			\frac{1}{T_k}\sum_{t=1}^{T_k}\|(\widehat{\bbm \Delta}_k^{\mathrm{opt}}-\bbm\Delta_k^*)\bbm x_{k,t}\|_2^2
			&\leq \varpi_k\bigl(\|\bbm\Delta_k^*\|_1-\|\widehat{\bbm\Delta}_k^{\mathrm{opt}}\|_1\bigr) + \frac{2}{T_k}\sum_{t=1}^{T_k}\Bigl\langle(\widehat{\bbm \Delta}_k^{\mathrm{opt}}-\bbm \Delta_k^*)\bbm x_{k,t},\bbm\epsilon_{k,t}\Bigr\rangle \notag\\
			&\quad + \frac{2}{T_k}\sum_{t=1}^{T_k}\Bigl\langle(\widehat{\bbm \Delta}_k^{\mathrm{opt}}-\bbm \Delta_k^*)\bbm x_{k,t},(\widehat{\bm A}_0-\bm A_0^*)\bbm x_{k,t}\Bigr\rangle.
		\end{align}
		For the first inner-product term in the right-hand side of \eqref{eq:stage2-basic-ineq-pre}, by H\"older's inequality, we have
		\begin{align*}
			\frac{1}{T_k}\sum_{t=1}^{T_k}\Bigl\langle(\widehat{\bbm \Delta}_k^{\mathrm{opt}}-\bbm \Delta_k^*)\bbm x_{k,t},\bbm\epsilon_{k,t}\Bigr\rangle = \Bigl\langle \widehat{\bbm \Delta}_k^{\mathrm{opt}}-\bbm \Delta_k^*,\frac{1}{T_k}\sum_{t=1}^{T_k}\bbm x_{k,t}\bbm\epsilon_{k,t}^\top\Bigr\rangle \leq \|\widehat{\bbm \Delta}_k^{\mathrm{opt}}-\bbm \Delta_k^*\|_1 \Bigl\|\frac{1}{T_k}\sum_{t=1}^{T_k}\bbm x_{k,t}\bbm\epsilon_{k,t}^\top\Bigr\|_\infty,
		\end{align*}
		For the second inner-product term in the right-hand side of \eqref{eq:stage2-basic-ineq-pre}, by the Young's inequality, i.e., $\langle \bbm a,\bbm b\rangle \leq \frac{1}{4}\|\bbm a\|_2^2 + 4\|\bbm b\|_2^2$, we have
		\begin{align*}
			\frac{1}{T_k}\sum_{t=1}^{T_k}\Bigl\langle(\widehat{\bbm \Delta}_k^{\mathrm{opt}}-\bbm \Delta_k^*)\bbm x_{k,t},(\widehat{\bm A}_0-\bm A_0^*)\bbm x_{k,t}\Bigr\rangle \leq \frac{1}{T_k}\sum_{t=1}^{T_k}\left(\frac{1}{4}\|(\widehat{\bbm \Delta}_k^{\mathrm{opt}}-\bbm \Delta_k^*)\bbm x_{k,t}\|_2^2 + 4\|(\widehat{\bm A}_0-\bm A_0^*)\bbm x_{k,t}\|_2^2\right).
		\end{align*}
		Plugging the above bounds for the two inner-product terms into \eqref{eq:stage2-basic-ineq-pre}, we have
		\begin{align*}
			\frac{1}{T_k}\sum_{t=1}^{T_k}\|(\widehat{\bbm \Delta}_k^{\mathrm{opt}}-\bbm\Delta_k^*)\bbm x_{k,t}\|_2^2 &\leq \varpi_k\bigl(\|\bbm\Delta_k^*\|_1-\|\widehat{\bbm\Delta}_k^{\mathrm{opt}}\|_1\bigr) + 2\|\widehat{\bbm \Delta}_k^{\mathrm{opt}}-\bbm \Delta_k^*\|_1\Bigl\|\frac{1}{T_k}\sum_{t=1}^{T_k}\bbm x_{k,t}\bbm\epsilon_{k,t}^\top\Bigr\|_\infty \\
			&\quad + \frac{2}{T_k}\sum_{t=1}^{T_k}\left(\frac{1}{4}\|(\widehat{\bbm \Delta}_k^{\mathrm{opt}}-\bbm \Delta_k^*)\bbm x_{k,t}\|_2^2 + 4\|(\widehat{\bm A}_0-\bm A_0^*)\bbm x_{k,t}\|_2^2\right).
		\end{align*}
		Rearranging the terms and moving the quadratic term in $\widehat{\bbm \Delta}_k^{\mathrm{opt}}-\bbm\Delta_k^*$ to the left-hand side yields
		\begin{align}\label{eq:stage2-basic-ineq-absorb}
			\frac{1}{2T_k}\sum_{t=1}^{T_k}\|(\widehat{\bbm \Delta}_k^{\mathrm{opt}}-\bbm\Delta_k^*)\bbm x_{k,t}\|_2^2 &\leq \varpi_k\bigl(\|\bbm\Delta_k^*\|_1-\|\widehat{\bbm\Delta}_k^{\mathrm{opt}}\|_1\bigr) + 2\|\widehat{\bbm \Delta}_k^{\mathrm{opt}}-\bbm \Delta_k^*\|_1\Bigl\|\frac{1}{T_k}\sum_{t=1}^{T_k}\bbm x_{k,t}\bbm\epsilon_{k,t}^\top\Bigr\|_\infty \notag\\
			&\quad + \frac{8}{T_k}\sum_{t=1}^{T_k}\|(\widehat{\bm A}_0-\bm A_0^*)\bbm x_{k,t}\|_2^2.
		\end{align}

		Recall the predefined subspace $\mathcal S_\kappa$ and its closed orthogonal complement $\overline{\mathcal S}_\kappa^{\perp}$ in \eqref{eq:S-space} and \eqref{eq:S-complement-space}, i.e.,
		\[
			\mathcal S_\kappa=\{\bm S\in\mathbb R^{d\times pd}: S_{ij}=0 \text{ if } |(\bbm\Delta_k^*)_{ij}|<\kappa\},
			\ \text{and} \ 
			\overline{\mathcal S}_\kappa^{\perp}=\{\bm S\in\mathbb R^{d\times pd}: S_{ij}=0 \text{ if } |(\bbm\Delta_k^*)_{ij}|\geq \kappa\}.
		\]
		Then, for the first term in \eqref{eq:stage2-basic-ineq-absorb}, using the decomposability of the entrywise $\ell_1$ norm with respect to the pair $\bigl(\mathcal S_\kappa,\overline{\mathcal S}_\kappa^{\perp}\bigr)$, we have
		\begin{align}\label{eq:l1-decomp-stage2-nolett}
			&\|\widehat{\bbm\Delta}_k^{\mathrm{opt}}\|_1 - \|\bbm\Delta_k^*\|_1 =\bigl\|(\widehat{\bbm\Delta}_k^{\mathrm{opt}}-\bbm\Delta_k^*) + \bbm\Delta_k^*\bigr\|_1 - \|\bbm\Delta_k^*\|_1 \notag\\
			&=\bigl\|(\widehat{\bbm\Delta}_k^{\mathrm{opt}}-\bbm\Delta_k^*)_{\mathcal S_\kappa} + (\widehat{\bbm\Delta}_k^{\mathrm{opt}}-\bbm\Delta_k^*)_{\overline{\mathcal S}_\kappa^{\perp}} + (\bbm\Delta_k^*)_{\mathcal S_\kappa} + (\bbm\Delta_k^*)_{\overline{\mathcal S}_\kappa^{\perp}}\bigr\|_1 - \bigl\|(\bbm\Delta_k^*)_{\mathcal S_\kappa}+(\bbm\Delta_k^*)_{\overline{\mathcal S}_\kappa^{\perp}}\bigr\|_1 \notag\\
			&\geq\bigl\|(\widehat{\bbm\Delta}_k^{\mathrm{opt}}-\bbm\Delta_k^*)_{\overline{\mathcal S}_\kappa^{\perp}} + (\bbm\Delta_k^*)_{\mathcal S_\kappa}\bigr\|_1 - \bigl\|(\widehat{\bbm\Delta}_k^{\mathrm{opt}}-\bbm\Delta_k^*)_{\mathcal S_\kappa} + (\bbm\Delta_k^*)_{\overline{\mathcal S}_\kappa^{\perp}}\bigr\|_1 - \bigl\|(\bbm\Delta_k^*)_{\mathcal S_\kappa}\bigr\|_1 - \bigl\|(\bbm\Delta_k^*)_{\overline{\mathcal S}_\kappa^{\perp}}\bigr\|_1 \notag\\
			&=\bigl\|(\widehat{\bbm\Delta}_k^{\mathrm{opt}}-\bbm\Delta_k^*)_{\overline{\mathcal S}_\kappa^{\perp}}\bigr\|_1 - 2\bigl\|(\bbm\Delta_k^*)_{\overline{\mathcal S}_\kappa^{\perp}}\bigr\|_1 - \bigl\|(\widehat{\bbm\Delta}_k^{\mathrm{opt}}-\bbm\Delta_k^*)_{\mathcal S_\kappa}\bigr\|_1.
		\end{align}
		By Lemma \ref{lem:var-XtE-infty-bound}, there exists a constant $C_{\infty}>0$ such that, given $T_k\gtrsim p\log(pd)$, with probability at least $1-\exp(-C\log(pd))$, 
		\[
			\Bigl\|\frac{1}{T_k}\sum_{t=1}^{T_k}\bbm x_{k,t}\bbm\epsilon_{k,t}^\top\Bigr\|_\infty \leq C_{\infty}C_{\epsilon,\mathcal A}^{(k)}\sigma^2\sqrt{\frac{\log(pd)}{T_k}}.
		\]
		Given the rate of $\varpi_k$ in Thoerem \ref{thm:personalized-error-bounds}, we choose $\varpi_k \geq 4C_{\infty}C_{\epsilon,\mathcal A}^{(k)}\sigma^2\sqrt{\log(pd)/{T_k}}$. Then, with probability at least $1 - C\exp(-C\log(pd))$, we have
		\begin{align}\label{eq:xe-infty-upper-bound}
			\Bigl\|\frac{1}{T_k}\sum_{t=1}^{T_k}\bbm x_{k,t}\bbm\epsilon_{k,t}^\top\Bigr\|_\infty \leq \frac{\varpi_k}{4}.
		\end{align}
		Then, combining \eqref{eq:stage2-basic-ineq-absorb}, \eqref{eq:l1-decomp-stage2-nolett} and \eqref{eq:xe-infty-upper-bound}, with probability at least $1 - C\exp(-C\log(pd))$,
		\begin{align*}
			0&\leq \frac{1}{2T_k}\sum_{t=1}^{T_k}\|(\widehat{\bbm \Delta}_k^{\mathrm{opt}}-\bbm\Delta_k^*)\bbm x_{k,t}\|_2^2\\
			&\leq \varpi_k\bigl(\|\bbm\Delta_k^*\|_1-\|\widehat{\bbm\Delta}_k^{\mathrm{opt}}\|_1\bigr)
			+ \frac{\varpi_k}{2}\|\widehat{\bbm \Delta}_k^{\mathrm{opt}}-\bbm \Delta_k^*\|_1
			+ \frac{8}{T_k}\sum_{t=1}^{T_k}\|(\widehat{\bm A}_0-\bm A_0^*)\bbm x_{k,t}\|_2^2\\
			&\leq \varpi_k\Bigl(\bigl\|(\widehat{\bbm\Delta}_k^{\mathrm{opt}}-\bbm\Delta_k^*)_{\mathcal S_\kappa}\bigr\|_1
			+ 2\bigl\|(\bbm\Delta_k^*)_{\overline{\mathcal S}_\kappa^{\perp}}\bigr\|_1
			- \bigl\|(\widehat{\bbm\Delta}_k^{\mathrm{opt}}-\bbm\Delta_k^*)_{\overline{\mathcal S}_\kappa^{\perp}}\bigr\|_1\Bigr)\\
			&\quad
			+ \frac{\varpi_k}{2}\Bigl(\bigl\|(\widehat{\bbm\Delta}_k^{\mathrm{opt}}-\bbm\Delta_k^*)_{\mathcal S_\kappa}\bigr\|_1
			+ \bigl\|(\widehat{\bbm\Delta}_k^{\mathrm{opt}}-\bbm\Delta_k^*)_{\overline{\mathcal S}_\kappa^{\perp}}\bigr\|_1\Bigr)
			+ \frac{8}{T_k}\sum_{t=1}^{T_k}\|(\widehat{\bm A}_0-\bm A_0^*)\bbm x_{k,t}\|_2^2\\
			&\leq \frac{\varpi_k}{2}\Bigl(3\bigl\|(\widehat{\bbm\Delta}_k^{\mathrm{opt}}-\bbm\Delta_k^*)_{\mathcal S_\kappa}\bigr\|_1
			+ 4\bigl\|(\bbm\Delta_k^*)_{\overline{\mathcal S}_\kappa^{\perp}}\bigr\|_1
			- \bigl\|(\widehat{\bbm\Delta}_k^{\mathrm{opt}}-\bbm\Delta_k^*)_{\overline{\mathcal S}_\kappa^{\perp}}\bigr\|_1\Bigr)
			+ \frac{8}{T_k}\sum_{t=1}^{T_k}\|(\widehat{\bm A}_0-\bm A_0^*)\bbm x_{k,t}\|_2^2.
		\end{align*}
		Then, we have the cone-type inequality
		\begin{align}\label{eq:cone-stage2}
			\bigl\|(\widehat{\bbm\Delta}_k^{\mathrm{opt}}-\bbm\Delta_k^*)_{\overline{\mathcal S}_\kappa^{\perp}}\bigr\|_1 \leq 3\bigl\|(\widehat{\bbm\Delta}_k^{\mathrm{opt}}-\bbm\Delta_k^*)_{\mathcal S_\kappa}\bigr\|_1 + 4\bigl\|(\bbm\Delta_k^*)_{\overline{\mathcal S}_\kappa^{\perp}}\bigr\|_1 + \frac{2}{\varpi_k}\cdot\frac{8}{T_k}\sum_{t=1}^{T_k}\|(\widehat{\bm A}_0-\bm A_0^*)\bbm x_{k,t}\|_2^2.
		\end{align}
		Consequently, plugging \eqref{eq:xe-infty-upper-bound} and \eqref{eq:cone-stage2} into \eqref{eq:stage2-basic-ineq-absorb} yields
		\begin{align}\label{eq:upper-bound-stage2-comp-before-RSC}
			&\frac{1}{2T_k}\sum_{t=1}^{T_k}\|(\widehat{\bbm \Delta}_k^{\mathrm{opt}}-\bbm\Delta_k^*)\bbm x_{k,t}\|_2^2 \notag\\
			\leq\ &\varpi_k\bigl(\|\bbm\Delta_k^*\|_1-\|\widehat{\bbm\Delta}_k^{\mathrm{opt}}\|_1\bigr) + 2\|\widehat{\bbm \Delta}_k^{\mathrm{opt}}-\bbm \Delta_k^*\|_1\Bigl\|\frac{1}{T_k}\sum_{t=1}^{T_k}\bbm x_{k,t}\bbm\epsilon_{k,t}^\top\Bigr\|_\infty + \frac{8}{T_k}\sum_{t=1}^{T_k}\|(\widehat{\bm A}_0-\bm A_0^*)\bbm x_{k,t}\|_2^2 \notag\\
			\leq\ &\varpi_k\bigl(\|\bbm\Delta_k^*\|_1-\|\widehat{\bbm\Delta}_k^{\mathrm{opt}}\|_1\bigr) + \frac{\varpi_k}{2}\|\widehat{\bbm \Delta}_k^{\mathrm{opt}}-\bbm \Delta_k^*\|_1 + \frac{8}{T_k}\sum_{t=1}^{T_k}\|(\widehat{\bm A}_0-\bm A_0^*)\bbm x_{k,t}\|_2^2 \notag\\
			\leq\ &\frac{3\varpi_k}{2}\|\widehat{\bbm \Delta}_k^{\mathrm{opt}}-\bbm \Delta_k^*\|_1 + \frac{8}{T_k}\sum_{t=1}^{T_k}\|(\widehat{\bm A}_0-\bm A_0^*)\bbm x_{k,t}\|_2^2 \notag\\
			=\ &\frac{3\varpi_k}{2}\Bigl(\bigl\|(\widehat{\bbm\Delta}_k^{\mathrm{opt}}-\bbm\Delta_k^*)_{\mathcal S_\kappa}\bigr\|_1 + \bigl\|(\widehat{\bbm\Delta}_k^{\mathrm{opt}}-\bbm\Delta_k^*)_{\overline{\mathcal S}_\kappa^{\perp}}\bigr\|_1\Bigr) + \frac{8}{T_k}\sum_{t=1}^{T_k}\|(\widehat{\bm A}_0-\bm A_0^*)\bbm x_{k,t}\|_2^2 \notag\\
			\leq\ &\frac{3\varpi_k}{2}\Bigl(\bigl\|(\widehat{\bbm\Delta}_k^{\mathrm{opt}}-\bbm\Delta_k^*)_{\mathcal S_\kappa}\bigr\|_1 + 3\bigl\|(\widehat{\bbm\Delta}_k^{\mathrm{opt}}-\bbm\Delta_k^*)_{\mathcal S_\kappa}\bigr\|_1 + 4\bigl\|(\bbm\Delta_k^*)_{\overline{\mathcal S}_\kappa^{\perp}}\bigr\|_1 + \frac{2}{\varpi_k}\cdot\frac{8}{T_k}\sum_{t=1}^{T_k}\|(\widehat{\bm A}_0-\bm A_0^*)\bbm x_{k,t}\|_2^2\Bigr) \notag\\
			&\quad+ \frac{8}{T_k}\sum_{t=1}^{T_k}\|(\widehat{\bm A}_0-\bm A_0^*)\bbm x_{k,t}\|_2^2 \notag\\
			=\ &6\varpi_k\Bigl(\bigl\|(\widehat{\bbm\Delta}_k^{\mathrm{opt}}-\bbm\Delta_k^*)_{\mathcal S_\kappa}\bigr\|_1 + \bigl\|(\bbm\Delta_k^*)_{\overline{\mathcal S}_\kappa^{\perp}}\bigr\|_1\Bigr) + \frac{32}{T_k}\sum_{t=1}^{T_k}\|(\widehat{\bm A}_0-\bm A_0^*)\bbm x_{k,t}\|_2^2.
		\end{align}

		Then, we derive the lower bound for $(2T_k)^{-1}\|(\widehat{\bbm \Delta}_k^{\mathrm{opt}}-\bbm\Delta_k^*)\bbm x_{k,t}\|_2^2$. Given the sample size condition $T_k\gtrsim(C_{\epsilon,\mathcal A}^{(k)}/C_{\mathrm{RSC}}^{(k)} \vee 1)^2p^2d$, by Lemma~\ref{lem:RSC-var}, with probability at least $1 - C\exp(-C pd)$, the following RSC condition holds:
		\begin{align}\label{eq:upper-bound-stage2-comp}
			\frac{1}{2T_k}\sum_{t=1}^{T_k}\|(\widehat{\bbm \Delta}_k^{\mathrm{opt}}-\bbm\Delta_k^*)\bbm x_{k,t}\|_2^2 \geq \frac{C_{\mathrm{RSC}}^{(k)}}{2}\|\widehat{\bbm \Delta}_k^{\mathrm{opt}}-\bbm\Delta_k^*\|_{\mathrm F}^2.
		\end{align}
		Then, combining \eqref{eq:upper-bound-stage2-comp-before-RSC} and \eqref{eq:upper-bound-stage2-comp} yields
		\begin{align}\label{eq:upper-bound-stage2-comp-after-RSC}
			&\frac{C_{\mathrm{RSC}}^{(k)}}{2}\|\widehat{\bbm \Delta}_k^{\mathrm{opt}}-\bbm\Delta_k^*\|_{\mathrm F}^2 \leq 6\varpi_k\Bigl(\bigl\|(\widehat{\bbm\Delta}_k^{\mathrm{opt}}-\bbm\Delta_k^*)_{\mathcal S_\kappa}\bigr\|_1 + \bigl\|(\bbm\Delta_k^*)_{\overline{\mathcal S}_\kappa^{\perp}}\bigr\|_1\Bigr) + \frac{32}{T_k}\sum_{t=1}^{T_k}\|(\widehat{\bm A}_0-\bm A_0^*)\bbm x_{k,t}\|_2^2.
		\end{align}
		Considering the threshold $\kappa$ of the sparse space $\mathcal S_\kappa=\{\bm S\in\mathbb R^{d\times pd}:S_{ij}=0 \text{ for all }(i,j)\in[d]\times[pd] \text{ such that }|(\bbm \Delta_k^*)_{ij}|<\kappa\}$, by the definition of $\mathbb B_q(s_q)$, we have $s_q \geq |\mathcal S_\kappa| \kappa^q$, and thus $|\mathcal S_\kappa| \leq s_q \kappa^{-q}$. Then, it follows that $\|(\widehat{\bbm \Delta}_k^{\mathrm{opt}}-\bbm \Delta_k^*)_{\mathcal S_\kappa}\|_{1} \leq \sqrt{|\mathcal{S}_\kappa|}\|(\widehat{\bbm \Delta}_k^{\mathrm{opt}}-\bbm \Delta_k^*)_{\mathcal S_\kappa}\|_{\F} \leq \sqrt{s_q}\kappa^{-q/2}\|\widehat{\bbm \Delta}_k^{\mathrm{opt}}-\bbm \Delta_k^*\|_{\F}$.
		Besides, 
		\[
			\|(\bbm \Delta_k^*)_{\overline{\mathcal S}_\kappa^{\perp}}\|_{1} = \sum_{(i,j)\in\overline{\mathcal S}_\kappa^{\perp}} |(\bbm \Delta_k^*)_{ij}| \leq \sum_{(i,j)\in\overline{\mathcal S}_\kappa^{\perp}} |(\bbm \Delta_k^*)_{ij}|^q \kappa^{1-q} \leq s_q \kappa^{1-q}.
		\]
		Then, it follows from \eqref{eq:upper-bound-stage2-comp-after-RSC} that
		\begin{align}\label{eq:upper-bound-before-yongs}
			\|\widehat{\bbm \Delta}_k^{\mathrm{opt}}-\bbm\Delta_k^*\|_{\mathrm F}^2 \leq \frac{12\varpi_k}{C_{\mathrm{RSC}}^{(k)}}\Bigl(\sqrt{s_q}\kappa^{-q/2}\|\widehat{\bbm \Delta}_k^{\mathrm{opt}} - \bbm \Delta_k^*\|_{\F} + s_q\kappa^{1-q}\Bigr) + \frac{64}{C_{\mathrm{RSC}}^{(k)}T_k}\sum_{t=1}^{T_k}\|(\widehat{\bm A}_0-\bm A_0^*)\bbm x_{k,t}\|_2^2.
		\end{align}
		Note that $ab \leq (a^2 + b^2)/2$ for any $a,b\geq 0$, we have
		\begin{align}\label{eq:yongs-delta-inequality}
			12\varpi_k\sqrt{s_q}\kappa^{-q/2}\|\widehat{\bbm \Delta}_k^{\mathrm{opt}}-\bbm\Delta_k^*\|_{\mathrm F} = \sqrt{C_{\mathrm{RSC}}^{(k)}}\|\widehat{\bbm \Delta}_k^{\mathrm{opt}}-\bbm\Delta_k^*\|_{\mathrm F} \cdot \frac{12\varpi_k\sqrt{s_q}\kappa^{-q/2}}{\sqrt{C_{\mathrm{RSC}}^{(k)}}}\notag\\
			\leq \frac{C_{\mathrm{RSC}}^{(k)}}{2}\|\widehat{\bbm \Delta}_k^{\mathrm{opt}}-\bbm\Delta_k^*\|_{\mathrm F}^2 + 72\ \frac{\varpi_k^2 s_q\kappa^{-q}}{C_{\mathrm{RSC}}^{(k)}}.
		\end{align}
		Then, substituting \eqref{eq:yongs-delta-inequality} into \eqref{eq:upper-bound-before-yongs} and rearranging terms, we have
		\begin{align*}
			\|\widehat{\bbm \Delta}_k^{\mathrm{opt}}-\bbm\Delta_k^*\|_{\mathrm F}^2 \leq 144\ \frac{\varpi_k^2s_q\kappa^{-q}}{(C_{\mathrm{RSC}}^{(k)})^2} + 24\ \frac{\varpi_ks_q\kappa^{1-q}}{C_{\mathrm{RSC}}^{(k)}} + \frac{128}{C_{\mathrm{RSC}}^{(k)}T_k}\sum_{t=1}^{T_k}\|(\widehat{\bm A}_0-\bm A_0^*)\bbm x_{k,t}\|_2^2.
		\end{align*}
		Then, choosing $\kappa \asymp \varpi_k/C_{\mathrm{RSC}}^{(k)}$ yields
		\begin{align}\label{eq:upper-bound-before-ao-bound}
			\|\widehat{\bbm \Delta}_k^{\mathrm{opt}}-\bbm\Delta_k^*\|_{\F}^2 \leq C s_q\left(\frac{\varpi_k}{C_{\mathrm{RSC}}^{(k)}}\right)^{2-q} + \frac{128}{C_{\mathrm{RSC}}^{(k)}T_k}\sum_{t=1}^{T_k}\|(\widehat{\bm A}_0-\bm A_0^*)\bbm x_{k,t}\|_2^2.
		\end{align}
		It remains to control the last term in \eqref{eq:upper-bound-before-ao-bound} involving $\widehat{\bm A}_0-\bm A_0^*$. Note that
		\begin{align*}
			&\frac{1}{T_k}\sum_{t=1}^{T_k}\|(\widehat{\bm A}_0-\bm A_0^*)\bbm x_{k,t}\|_2^2 = \frac{1}{T_k}\sum_{t=1}^{T_k}\bbm x_{k,t}^\top(\widehat{\bm A}_0-\bm A_0^*)^\top(\widehat{\bm A}_0-\bm A_0^*)\bbm x_{k,t}\\
			=\ &\tr\left((\widehat{\bm A}_0-\bm A_0^*)\Bigl(\frac{1}{T_k}\sum_{t=1}^{T_k}\bbm x_{k,t}\bbm x_{k,t}^\top\Bigr)(\widehat{\bm A}_0-\bm A_0^*)^\top\right) \leq \|\widehat{\bm A}_0-\bm A_0^*\|_{\F}^{2}\left\|\frac{1}{T_k}\sum_{t=1}^{T_k}\bbm x_{k,t}\bbm x_{k,t}^\top\right\|_{\op},
		\end{align*}
		where the inequality follows from $\tr(\bm U\bm M\bm U^\top)\leq \|\bm U\|_{\F}^2\|\bm M\|_{\op}$. Moreover, combining $\|\bbm\Sigma_{x,k}\|_{\op}\leq C_{\epsilon,\mathcal A}^{(k)}$ implied by Lemma~\ref{lem:Sigma-x-pd} given $T_k \gtrsim p^2d$ and Lemma \ref{lem:normal-second-order}, with probability at least $1-\exp(-Cpd)$, we have
		\[
			\left\|\frac{1}{T_k}\sum_{t=1}^{T_k}\bbm x_{k,t}\bbm x_{k,t}^\top\right\|_{\op} \lesssim \|\bbm\Sigma_{x,k}\|_{\op} + C_{\epsilon,\mathcal A}^{(k)}\sigma^2p\sqrt{\frac{d}{T_k}} \leq C' C_{\epsilon,\mathcal A}^{(k)}.
		\]

		Consequently, combining the above inequalities, $T_k^{-1}\sum_{t=1}^{T_k}\|(\widehat{\bm A}_0-\bm A_0^*)\bbm x_{k,t}\|_2^2 \leq C' C_{\epsilon,\mathcal A}^{(k)}\|\widehat{\bm A}_0-\bm A_0^*\|_{\F}^2$. Substituting this bound into \eqref{eq:upper-bound-before-ao-bound} and collecting the sample-size conditions and probability bounds from the preceding steps, provided $T_k \gtrsim (C_{\epsilon,\mathcal A}^{(k)}/C_{\mathrm{RSC}}^{(k)}\vee 1)^2p^2d \vee p\log(pd)$, with probability at least $1-C\exp(-C\log(pd))-C\exp(-Cpd)$, we have
		\[
			\|\widehat{\bbm \Delta}_k^{\mathrm{opt}}-\bbm\Delta_k^*\|_{\F}^2 \lesssim s_q\left(\frac{\varpi_k}{C_{\mathrm{RSC}}^{(k)}}\right)^{2-q} + \frac{C_{\epsilon,\mathcal A}^{(k)}}{C_{\mathrm{RSC}}^{(k)}} \|\widehat{\bm A}_0-\bm A_0^*\|_{\F}^{2}.
		\]

		Taking square roots and using $\sqrt{a+b}\leq \sqrt a+\sqrt b$, we further obtain
		\begin{align}\label{eq:final-delta-bound}
			\|\widehat{\bbm \Delta}_k^{\mathrm{opt}}-\bbm\Delta_k^*\|_{\F} \lesssim \left(\frac{C_{\epsilon,\mathcal A}^{(k)}}{C_{\mathrm{RSC}}^{(k)}}\right)^{1/2}\|\widehat{\bm A}_0-\bm A_0^*\|_{\F} + \sqrt{s_q}\left(\frac{\varpi_k}{C_{\mathrm{RSC}}^{(k)}}\right)^{1-q/2}=: \mathsf{Error}_{\Delta}^{(k)}.
		\end{align}

		\noindent
		\textbf{Part II (Optimization error).}
		Now, we turn to control the optimization error $\|\widehat{\bbm \Delta}_k-\widehat{\bbm \Delta}_k^{\mathrm{opt}}\|_{\F}$ based on the convergence guarantee of FISTA. To this end, we rewrite \eqref{eq:stage2-opt} in a compact matrix form. Denote $\bm Y_k = [\bbm y_{k,1},\ldots,\bbm y_{k,T_k}]^\top \in \mathbb R^{T_k\times d}$ and $\bm X_k = [\bbm x_{k,1},\ldots,\bbm x_{k,T_k}]^\top \in \mathbb R^{T_k\times pd}$. Then \eqref{eq:stage2-opt} is equivalent to
		\begin{align}\label{eq:stage2-opt-matrix}
			\widehat{\bm\Delta}_k^{\mathrm{opt}} \in \argmin_{\bm\Delta\in\mathbb R^{d\times pd}}\left\{\frac{1}{T_k}\bigl\|\bm Y_k-\bm X_k(\widehat{\bm A}_0+\bm\Delta)^\top\bigr\|_{\F}^{2} + \varpi_k\|\bm\Delta\|_{1}\right\}.
		\end{align}

		Denote $f_k(\bm\Delta) = T_k^{-1}\|\bm Y_k-\bm X_k(\widehat{\bm A}_0+\bm\Delta)^\top\|_{\F}^{2}$, and $g_k(\bm\Delta) = \varpi_k\|\bm\Delta\|_1$. 
		We first verify the Lipschitz continuity of the gradient of $f_k(\bm\Delta)$. A direct calculation yields
		\[
			\nabla f_k(\bm\Delta) = \frac{2}{T_k}\bigl(\bm\Delta+\widehat{\bm A}_0\bigr)\bm X_k^\top\bm X_k - \frac{2}{T_k}\bm Y_k^\top\bm X_k,
		\]
		and hence, for any $\bm\Delta_1,\bm\Delta_2\in\mathbb R^{d\times pd}$,
		\[
			\|\nabla f_k(\bm\Delta_1)-\nabla f_k(\bm\Delta_2)\|_{\F}
			=\frac{2}{T_k}\bigl\|(\bm\Delta_1-\bm\Delta_2)\bm X_k^\top\bm X_k\bigr\|_{\F}
			\leq \frac{2}{T_k}\|\bm X_k^\top\bm X_k\|_{\op}\,\|\bm\Delta_1-\bm\Delta_2\|_{\F}.
		\]
		Therefore, $\nabla f_k$ is Lipschitz continuous with constant $L_k=2T_k^{-1}\|\bm X_k^\top\bm X_k\|_{\op}$ conditional on $\mathcal{F}_{T_k}$.
		
		Although the analysis in \citet{beck2009fast} is presented for vector-valued variables, Remark~2.1 therein notes that the key inequalities and lemmas underlying the FISTA rate extend verbatim to any Hilbert space equipped with an inner product. Since $\mathbb R^{d\times pd}$ endowed with the Frobenius inner product is a finite-dimensional Hilbert space, the same convergence analysis applies to our matrix-valued optimization problem. Then it suffices to verify the conditions for applying the FISTA convergence guarantee.

		Since $g_k(\bbm\Delta)=\varpi_k\|\bbm\Delta\|_1$ is convex but nonsmooth and $f_k(\bbm\Delta)$ is continuously differentiable with an $L_k$-Lipschitz continuous gradient, the objective $F_k(\bbm\Delta)=f_k(\bbm\Delta)+g_k(\bbm\Delta)$ fits the standard FISTA framework. Note that $\{\bm\Delta_k^{(n)}\}_{n\geq0}$ are the iterates generated by Algorithm~\ref{algorithmTL-stage2} with stepsize $\eta\leq 1/L_k$. Then, by \citet[Theorem~4.4]{beck2009fast} together with Remark~2.1 therein, for any $n\geq1$,
		\begin{align*}
			F_k(\bm\Delta_k^{(n)})-F_k(\widehat{\bm\Delta}_k^{\mathrm{opt}}) \leq \frac{2L_k\,\|\bm\Delta_k^{(0)}-\widehat{\bm\Delta}_k^{\mathrm{opt}}\|_{\F}^{2}}{(n+1)^2}.
		\end{align*}
		In particular, under the initialization $\bm\Delta_k^{(0)}=\bm 0$ in Algorithm~\ref{algorithmTL-stage2} and $\widehat{\bbm\Delta}_k=\bm\Delta_k^{(N_l)}$, we have
		\begin{align}\label{eq:fista-rate-stage2-Nl}
			F_k(\widehat{\bbm\Delta}_k)-F_k(\widehat{\bm\Delta}_k^{\mathrm{opt}}) \leq \frac{2L_k\,\|\widehat{\bm\Delta}_k^{\mathrm{opt}}\|_{\F}^{2}}{(N_l+1)^2}.
		\end{align}

		Next, we relate $F_k(\widehat{\bm\Delta}_k)-F_k(\widehat{\bm\Delta}_k^{\mathrm{opt}})$ to $\|\widehat{\bm\Delta}_k-\widehat{\bm\Delta}_k^{\mathrm{opt}}\|_{\F}$. For any $\bbm\Delta\in\mathbb R^{d\times pd}$, a direct expansion of $f_k(\bm\Delta)$ at $\widehat{\bbm\Delta}_k^{\mathrm{opt}}$ gives
		\begin{align}\label{eq:f-exact-expansion}
			f_k(\bm\Delta) = f_k(\widehat{\bbm\Delta}_k^{\mathrm{opt}}) + \bigl\langle\nabla f_k(\widehat{\bbm\Delta}_k^{\mathrm{opt}}),\,\bbm\Delta - \widehat{\bbm\Delta}_k^{\mathrm{opt}}\bigr\rangle + \frac{1}{T_k} \sum_{t=1}^{T_k} \bigl\|(\bbm\Delta - \widehat{\bbm\Delta}_k^{\mathrm{opt}})\bbm x_{k,t}\bigr\|_2^2.
		\end{align}
		Moreover, $g_k(\bbm\Delta)=\varpi_k\|\bbm\Delta\|_1$ is convex but non-smooth. By the subgradient inequality of $\ell_1$ norm, for any $\bbm\Xi \in \mathbb{R}^{d\times pd}$ and any $\bm Z\in\partial\|\bbm\Xi\|_1$, $\|\bbm\Delta\|_1 \geq \|\bbm\Xi\|_1 + \langle \bm Z,\,\bbm\Delta-\bbm\Xi\rangle$ for any $\bbm\Delta\in \mathbb{R}^{d\times pd}$. Applying this inequality with $\bbm\Xi=\widehat{\bbm\Delta}_k^{\mathrm{opt}}$ and $\bm Z=\bm Z_k^{\mathrm{opt}}\in\partial\|\widehat{\bbm\Delta}_k^{\mathrm{opt}}\|_1$ yields
		\begin{align}\label{eq:g-subgrad-ineq}
			g_k(\bbm\Delta) \geq g_k(\widehat{\bbm\Delta}_k^{\mathrm{opt}}) + \bigl\langle\varpi_k \bm Z_k^{\mathrm{opt}},\bbm\Delta - \widehat{\bbm\Delta}_k^{\mathrm{opt}}\bigr\rangle, \quad \text{for any } \bbm\Delta\in\mathbb{R}^{d\times pd}.
		\end{align}
		Combining \eqref{eq:f-exact-expansion} and \eqref{eq:g-subgrad-ineq}, we obtain the following lower bound for $F_k=f_k+g_k$:
		\begin{align}\label{eq:F-combined-lower}
			F_k(\bm\Delta) \geq F_k(\widehat{\bbm\Delta}_k^{\mathrm{opt}}) + \bigl\langle \nabla f_k(\widehat{\bbm\Delta}_k^{\mathrm{opt}}) + \varpi_k \bm Z_k^{\mathrm{opt}}, \bbm\Delta - \widehat{\bbm\Delta}_k^{\mathrm{opt}} \bigr\rangle + \frac{1}{T_k} \sum_{t=1}^{T_k} \bigl\| (\bbm\Delta - \widehat{\bbm\Delta}_k^{\mathrm{opt}})\bbm x_{k,t} \bigr\|_2^2.
		\end{align}
		Since $\widehat{\bbm\Delta}_k^{\mathrm{opt}}$ minimizes the convex function $F_k$, the first-order optimality condition ensures that there exists $\bm Z_k^{\mathrm{opt}}\in\partial\|\widehat{\bbm\Delta}_k^{\mathrm{opt}}\|_1$ such that
		\begin{align}\label{eq:opt-condition-main}
			\nabla F_k(\widehat{\bbm\Delta}_k^{\mathrm{opt}}) = \nabla f_k(\widehat{\bbm\Delta}_k^{\mathrm{opt}}) + \varpi_k \bm Z_k^{\mathrm{opt}} = \bm 0.
		\end{align}
		Plugging \eqref{eq:opt-condition-main} into \eqref{eq:F-combined-lower} leads to
		\begin{align}\label{eq:quadratic-growth-final}
			\frac{1}{T_k} \sum_{t=1}^{T_k} \bigl\| (\bbm\Delta - \widehat{\bbm\Delta}_k^{\mathrm{opt}})\bbm x_{k,t} \bigr\|_2^2 \leq F_k(\bm\Delta) - F_k(\widehat{\bbm\Delta}_k^{\mathrm{opt}}),  \quad \text{for any } \bbm\Delta\in\mathbb{R}^{d\times pd}.
		\end{align}
		Finally, by taking $\bm\Delta=\widehat{\bm\Delta}_k$, the RSC condition in Lemma~\ref{lem:RSC-var} together with \eqref{eq:fista-rate-stage2-Nl} yields that, with probability at least $1-\exp(-Cpd)$,
		\begin{align}\label{eq:opt-quadratic-lower-bound-restate}
			C_{\mathrm{RSC}}^{(k)}\bigl\|\widehat{\bm\Delta}_k - \widehat{\bm\Delta}_k^{\mathrm{opt}}\bigr\|_{\F}^2 \leq \frac{1}{T_k} \sum_{t=1}^{T_k} \bigl\| (\widehat{\bm\Delta}_k - \widehat{\bm\Delta}_k^{\mathrm{opt}})\bbm x_{k,t} \bigr\|_2^2 \leq F_k(\widehat{\bm\Delta}_k) - F_k(\widehat{\bm\Delta}_k^{\mathrm{opt}}) \leq \frac{2L_k\|\widehat{\bm\Delta}_k^{\mathrm{opt}}\|_{\F}^{2}}{(N_l+1)^2}.
		\end{align}
		Denote the client-specific condition-number factor by $\kappa_{\epsilon,\mathcal A}^{(k)}=C_{\epsilon,\mathcal A}^{(k)}/c_{\epsilon,\mathcal A}^{(k)}$. Moreover, as implied by Lemma~\ref{lem:var-cov-conc} and Lemma~\ref{lem:Sigma-x-pd}, given $T_k\gtrsim p^2 d$, there exists a constant $C'>0$ such that, with probability at least $1-\exp(-Cpd)$,
		\[
			L_k = 2\left\|\frac{1}{T_k}\sum_{t=1}^{T_k}\bbm x_{k,t}\bbm x_{k,t}^{\top}\right\|_{\op} \leq 2C' C_{\epsilon,\mathcal A}^{(k)}.
		\]
		Then, substituting this bound into \eqref{eq:opt-quadratic-lower-bound-restate}, and noting that $\|\widehat{\bbm \Delta}_k^{\mathrm{opt}}\|_{\F}^2 \leq 2\|\widehat{\bbm \Delta}_k^{\mathrm{opt}}-\bbm\Delta_k^*\|_{\F}^2 + 2\|\bbm\Delta_k^*\|_{\F}^2$, $\|\bm\Delta_k^*\|_{\F}\leq \phi(s_q)$, $C_{\mathrm{RSC}}^{(k)}=1/2c_{\epsilon,\mathcal A}^{(k)} \leq C_{\epsilon,\mathcal A}^{(k)}$, we have, with probability at least $1-\exp(-Cpd)$,
		\begin{align}\label{eq:delta-gap-fista}
			\bigl\|\widehat{\bm\Delta}_k - \widehat{\bm\Delta}_k^{\mathrm{opt}}\bigr\|_{\F}^2 \lesssim \kappa_{\epsilon,\mathcal A}^{(k)}\cdot\frac{\|\widehat{\bm\Delta}_k^{\mathrm{opt}}-\bm\Delta_k^*\|_{\F}^2+\phi^2(s_q)}{(N_l+1)^2},
		\end{align}
		Recall \eqref{eq:final-delta-bound}, we have $\|\widehat{\bm\Delta}_k^{\mathrm{opt}}-\bm\Delta_k^*\|_{\F}^2 \lesssim (\mathsf{Error}_{\Delta}^{(k)})^2$, and hence \eqref{eq:delta-gap-fista} implies
		\[
			\|\widehat{\bm\Delta}_k-\widehat{\bm\Delta}_k^{\mathrm{opt}}\|_{\F}^2 \lesssim \kappa_{\epsilon,\mathcal A}^{(k)}\cdot\frac{(\mathsf{Error}_{\Delta}^{(k)})^2 + \phi^2(s_q)}{(N_l+1)^2}.
		\]

		Consequently, we have
		\[
			\|\widehat{\bm\Delta}_k-\bm\Delta_k^*\|_{\F}^2 \leq 2\|\widehat{\bm\Delta}_k-\widehat{\bm\Delta}_k^{\mathrm{opt}}\|_{\F}^2 + 2\|\widehat{\bm\Delta}_k^{\mathrm{opt}}-\bm\Delta_k^*\|_{\F}^2 \lesssim \kappa_{\epsilon,\mathcal A}^{(k)}\cdot\frac{(\mathsf{Error}_{\Delta}^{(k)})^2 + \phi^2(s_q)}{(N_l+1)^2} + (\mathsf{Error}_{\Delta}^{(k)})^2.
		\]
		Finally, taking $N_l \asymp \sqrt{\kappa_{\epsilon,\mathcal A}^{(k)}}\phi(s_q)/\mathsf{Error}_{\Delta}^{(k)}$,
		the optimization error is negligible relative to the statistical error, i.e.,
		\[
			\kappa_{\epsilon,\mathcal A}^{(k)}\cdot\frac{(\mathsf{Error}_{\Delta}^{(k)})^2+\phi^2(s_q)}{(N_l+1)^2} \lesssim (\mathsf{Error}_{\Delta}^{(k)})^2.
		\]
		By $a^2+b^2\leq(a+b)^2$ for any $a,b\geq0$, we have
		\[
			\|\widehat{\bm\Delta}_k-\bm\Delta_k^*\|_{\F} \lesssim \mathsf{Error}_{\Delta}^{(k)} \lesssim \sqrt{\kappa_{\epsilon,\mathcal A}^{(k)}}\|\widehat{\bm A}_0-\bm A_0^*\|_{\F} + \sqrt{s_q}\left(\frac{\varpi_k}{C_{\mathrm{RSC}}^{(k)}}\right)^{1-q/2}.
		\]
		To conclude the proof, we further control \textbf{Term~II} appearing in the bound for $\|\widehat{\bm A}_0-\bm A_0^*\|_{\F}$ in the proof of Theorem~\ref{thm:common-A0-error-bounds}. Given $T_k \gtrsim p^2 d$ for all $k\in[K]$, Lemma~\ref{lem:var-cov-conc} and Lemma~\ref{lem:Sigma-x-pd} imply that $\|\bm S_k\|_{\op} \leq \|\bm S_k-\mathbb E\bm S_k\|_{\op}+\|\mathbb E\bm S_k\|_{\op} \leq C' C_{\epsilon,\mathcal A}^{\max}$ with probability at least $1-\exp(-Cpd)$. Consequently, by applying this bound to \eqref{eq:term2-final-bound-hp-absorbed}, we have that, with probability at least $1-K\exp(-Cpd)$, Term~II in the proof of Theorem~\ref{thm:common-A0-error-bounds} admits the following tighter upper bound:
		\begin{align}\label{eq:term2-final-bound-hp-absorbed-Tk}
			\textnormal{Term II} = \Bigl\|\mathcal P_{\mathcal T_r(\bm A_0^{(n)})}\Bigl(\frac{1}{T}\sum_{k=1}^K(\bm A_0^*-\bm A_k^*)\bm X_k\bm X_k^\top\Bigr)\Bigr\|_{\F} \leq \sqrt{2r}\sum_{k=1}^{K}\frac{T_k}{T}\|\bbm \Delta_k^*\|_\op \|\bm S_k\|_\op \lesssim C_{\epsilon,\mathcal A}^{\max}\sqrt{r}\zeta.
		\end{align}
		Therefore, under the the event in \eqref{eq:term2-final-bound-hp-absorbed-Tk} and conditions of Theorem~\ref{thm:common-A0-error-bounds}, with probability at least $1-K\exp(-Cpd)-K\exp(-C\log T)$,  we have
		\[
			\|\widehat{\bm A}_0-\bm A_0^*\|_{\F} \leq \kappa_{\epsilon,\mathcal A}\left(\mathsf{Error}_{\mathrm{stat}} + \mathsf{Error}_{\mathrm{DP}} + \mathsf{Error}_{\mathrm{h}'}\right),
		\]
		where $\mathsf{Error}_{\mathrm{h}'} = \sqrt{r}\zeta$. Combining all high probability events required in the proof yields that, with probability at least $1-K\exp(-Cpd)-K\exp(-C\log T)$,
		\[
			\|\widehat{\bm A}_0-\bm A_0^*\|_{\F} + \|\widehat{\bm\Delta}_k-\bm\Delta_k^*\|_{\F}
			\lesssim
			\sqrt{\kappa_{\epsilon,\mathcal A}^{(k)}}\kappa_{\epsilon,\mathcal A}\left(\mathsf{Error}_{\mathrm{stat}} + \mathsf{Error}_{\mathrm{DP}} + \underbrace{C_{\epsilon,\mathcal A}^{\max}\sqrt{r}\zeta}_{\mathsf{Error}_{\mathrm{h}}'}\right)
			+ \underbrace{\sqrt{s_q}\left(\frac{\varpi_k}{C_{\mathrm{RSC}}^{(k)}}\right)^{1-q/2}}_{\mathsf{Error}_{\mathrm{p}}^{(k)}}.
		\]
		This completes the proof.
	\end{proof}

	\begin{proof}[\textbf{Proof of Theorem~\ref{thm:rank-selection-consistency}}]
		Fix an arbitrary client $k\in[K]$. Note that $r^* \leq \bar r$. By Proposition~\ref{thm:local-model-error-upper-bound}, under the stated conditions, we have
		\[
			\|\widetilde{\bm A}_{0,k}-\bm A_0^*\|_{\F} \lesssim \sqrt{\overline{r}}\frac{\lambda_k}{C_{\mathrm{RSC}}^{(k)}} + \sqrt{s_q}\Bigl(\frac{\omega_k}{C_{\mathrm{RSC}}^{(k)}}\Bigr)^{1-q/2} = o\bigl(c(d,T_k)\bigr)
		\]
		with probability tending to $1$ as $d,T_k\to\infty$. Then by Weyl's inequality \cite{weyl1912asymptotische}, we have
		\[
			\max_{1\leq j\leq pd}\bigl|\widetilde{\sigma}_{k,j}-\sigma_j(\bm A_0^*)\bigr| \leq \|\widetilde{\bm A}_{0,k}-\bm A_0^*\|_{\op} \leq \|\widetilde{\bm A}_{0,k}-\bm A_0^*\|_{\F} = o_p\bigl(c(d,T_k)\bigr).
		\]

		Note that $\sigma_{r^*}(\bm A_0^*)/\sigma_j(\bm A_0^*)\leq 1$. Then, under conditions in Theorem~\ref{thm:rank-selection-consistency}, we have
		\[
			c(d,T_k) \ll \min_{1 \leq j \leq r^* - 1} \frac{\sigma_{r^*}(\bm A_0^*) \sigma_{j+1}(\bm A_0^*)}{\sigma_{j}(\bm A_0^*)} \leq \min_{1 \leq j \leq r^* - 1} \sigma_{j+1}(\bm A_0^*) \leq \sigma_{r}(\bm A_0^*), \ \text{for all} \ r \leq r^*-1.
		\]
		Hence, $c(d,T_k) = o(\sigma_{r}(\bm A_0^*))$ for all $r \leq r^*-1$.

		Note that $\widetilde{\sigma}_{k,r} + c(d,T_k) = \sigma_{r}(\bm A_0^*) + \big(\widetilde{\sigma}_{k,r} - \sigma_{r}(\bm A_0^*)\big) + c(d,T_k)$. Then, for each $r \in [\bar r-1]$, we have the following cases:
		\begin{itemize}
			\item [(a)] For $r > r^*$, since $\sigma_r(\bm A_0^*) = 0$ and $\widetilde{\sigma}_{k,r} - \sigma_{r}(\bm A_0^*) = o_p(c(d,T_k))$, the term $c(d,T_k)$ dominates in $\widetilde{\sigma}_{k,r} + c(d,T_k)$, i.e., $\widetilde{\sigma}_{k,r} + c(d,T_k) = O_p(c(d,T_k))$;
			\item [(b)] For $r \leq r^*$, since $\widetilde{\sigma}_{k,r} - \sigma_{r}(\bm A_0^*) = o_p(c(d,T_k))$ and $c(d,T_k) = o(\sigma_r(\bm A_0^*))$, the term $\sigma_r(\bm A_0^*)$ dominates in $\widetilde{\sigma}_{k,r} + c(d,T_k)$, i.e., $\widetilde{\sigma}_{k,r} + c(d,T_k) = O_p(\sigma_r(\bm A_0^*))$.
		\end{itemize}

		\noindent
		Recall the rank estimator
		\[
			\widehat{r}_k = \argmin_{1 \leq r \leq \bar{r}-1}\widehat{R}_k(r),
			\ \text{with} \
			\widehat{R}_k(r) = \frac{\widetilde{\sigma}_{k,r+1} + c(d,T_k)}{\widetilde{\sigma}_{k,r} + c(d,T_k)},
		\]
		The above analysis in (a) and (b) implies the following limits in probability as $d,T_k \to \infty$:

		\noindent
		\textbf{Part I ($r>r^*$).} Both $\widetilde{\sigma}_{k,r+1}+c(d,T_k)$ and $\widetilde{\sigma}_{k,r}+c(d,T_k)$ are dominated by $c(d,T_k)$, and hence
		\[
			\widehat{R}_k(r)\xrightarrow{p} 1.
		\]

		\noindent
		\textbf{Part II ($r<r^*$).} Both $\widetilde{\sigma}_{k,r+1}+c(d,T_k)$ and $\widetilde{\sigma}_{k,r}+c(d,T_k)$ are dominated by $\sigma_{r+1}(\bm A_0^*)$ and $\sigma_r(\bm A_0^*)$, respectively, so that
		\[
			\widehat{R}_k(r)\xrightarrow{p} \frac{\sigma_{r+1}(\bm A_0^*)}{\sigma_r(\bm A_0^*)}.
		\]

		\noindent
		\textbf{Part III ($r=r^*$).} The numerator $\widetilde{\sigma}_{k,r^*+1}+c(d,T_k)$ is dominated by $c(d,T_k)$, while the denominator $\widetilde{\sigma}_{k,r^*}+c(d,T_k)$ is dominated by $\sigma_{r^*}(\bm A_0^*)$, and hence
		\[
			\widehat{R}_k(r^*)\xrightarrow{p} \frac{c(d,T_k)}{\sigma_{r^*}(\bm A_0^*)}.
		\]
		Moreover, the condition
		\[
			\frac{c(d,T_k)}{\sigma_{r^*}(\bm A_0^*)} \ll \min_{1\leq r\leq r^*-1}\frac{\sigma_{r+1}(\bm A_0^*)}{\sigma_r(\bm A_0^*)},
		\]
		in Theorem \ref{thm:rank-selection-consistency} implies $c(d,T_k)/\sigma_{r^*}(\bm A_0^*)\ll 1$. Therefore, $\widehat{R}_k(r^*)=o_p(1)$. Combining parts (I)–(III), it follows that, with probability tending to $1$, $\widehat{R}_k(r^*)$ is the unique minimizer of $\{\widehat{R}_k(r)\}_{r=1}^{\bar r-1}$, which yields $\widehat r_k=r^*$. This proves that $\mathbb P(\widehat r_k=r^*)\to 1$ as $d,T_k\to\infty$.
	\end{proof}

	\section{Proofs of Primary Lemmas}\label{append:proofs of primary lemmas}

	\begin{proof}[\textbf{Proof of Lemma \ref{lem:subg-l2-max-sqrtds}}]
		Following a standard property of the sub-Gaussian Orlicz norm: if a scalar random variable $X$ satisfies $\|X\|_{\psi_2}\leq K$, then there exists a constant $c_1>0$ such that for $u\geq 0$,
		\[
			\mathbb P(|X|\geq u)\leq 2\exp\Big(-c_1\frac{u^2}{K^2}\Big).
		\]
		Fix any $\bbm u\in\mathbb S^{d-1}$. Since $\|\bbm\zeta_{t}\|_{\psi_2} \leq K_\zeta$, we have $\|\langle \bbm u,\bbm\zeta_{t}\rangle\|_{\psi_2}\leq K_\zeta$. Hence, for any $u\geq 0$,
		\begin{align}\label{eq:fixed-direction-tail-psi2}
			\mathbb P\left(|\langle \bbm u,\bbm\zeta_{t}\rangle|\geq u\right) \leq 2\exp\Big(-c_1\frac{u^2}{K_\zeta^2}\Big).
		\end{align}
		Let $\mathcal N$ be a $1/2$-net of $\mathbb S^{d-1}$ under $\ell_2$-norm. It is well known that such a net exists with $|\mathcal N|\leq 5^d$ and that for any $\bbm x\in\mathbb R^d$, $\|\bbm x\|_2 \leq 2\max_{\bbm v\in\mathcal N} |\langle \bbm v,\bbm x\rangle|$. Therefore,
		\begin{align}\label{eq:l2-to-net}
			\mathbb P\big(\|\bbm\zeta_{t}\|_2\geq 2u\big)
			&\leq \mathbb P\Big(\max_{\bbm v\in\mathcal N}|\langle \bbm v,\bbm\zeta_{t}\rangle|\geq u\Big) \leq \sum_{\bbm v\in\mathcal N}\mathbb P\big(|\langle \bbm v,\bbm\zeta_{t}\rangle|\geq u\big)\notag\\
			&\leq 2|\mathcal N|\exp\!\Big(-c_1\frac{u^2}{K_\zeta^2}\Big) \leq 2\exp\Big(d\log 5 - c_1\frac{u^2}{K_\zeta^2}\Big).
		\end{align}
		Now set $u = K_\zeta\sqrt{(d\log 5 + c_2\log T)/c_1}$. Then, \eqref{eq:l2-to-net} implies that for any $t\in \mathbb{N}^+$,
		\[
			\mathbb P\big(\|\bbm\zeta_{t}\|_2\geq 2u\big) \leq 2\exp(-c_2\log T).
		\]
		Since $\sqrt{d\log 5+c_2\log T}\asymp \sqrt{d+\log T}$, absorbing constants into $C>0$ yields
		\begin{align}\label{eq:single-time-subg-l2}
			\mathbb P\big(\|\bbm\zeta_{t}\|_2\geq C K_\zeta\sqrt{d+\log T}\big)\leq 2\exp(-C\log T), \quad \text{for any } t\in \mathbb{N}^+.
		\end{align}
		Finally, by a union bound over $t=1,\ldots,T'$, we have
		\begin{align*}
			\mathbb P\Big(\max_{1\leq t\leq T'}\|\bbm\zeta_t\|_2\geq C K_\zeta\sqrt{d+\log T}\Big) &\leq \sum_{t=1}^{T'} \mathbb P\big(\|\bbm\zeta_t\|_2\geq C K_\zeta\sqrt{d+\log T}\big)\\
			&\leq 2T'\exp(-C\log T) \leq 2T\exp(-C\log T).
		\end{align*}
		Adjusting constants so that $2T\exp(-C\log T)\leq \exp(-C\log T)$, which proves \eqref{eq:max-eps-subg-highprob-sqrtds}.
	\end{proof}

	\begin{proof}[\textbf{Proof of Lemma \ref{lem:gaussian-opnorm-hp-strong}}]
		Write $\bm Z=\sigma\bm G$, where $\bm G$ has $i.i.d.$ $\mathcal N(0,1)$ entries.
		Then $\|\bm Z\|_{\op}=\sigma\|\bm G\|_{\op}$.
		Since each $G_{ij}$ is sub-Gaussian with $\|G_{ij}\|_{\psi_2}\leq C_0$ for a constant $c_0$,
		we may apply the bound for matrices with independent mean-zero sub-Gaussian entries, i.e., Theorem~4.4.5 of \cite{vershynin2018high}, which states that for $t\geq 0$,
		\[
			\mathbb P\left(\|\bm G\|_{\op} > c_0 \bigl(\sqrt d+\sqrt{pd}+t\bigr)\right)\leq 2\exp(-C_0 t^2).
		\]
		Absorbing $\sqrt d$ into $\sqrt{pd}$, we have
		\[
			\mathbb P\left(\|\bm G\|_{\op} > c(\sqrt{pd}+t)\right)\leq 2\exp(-C t^2).
		\]
		Multiplying the above inequality by $\sigma$ and setting $t=\sqrt{\log T}$ yields the desired bound.
	\end{proof}

	\begin{proof}[\textbf{Proof of Lemma \ref{lem:tangent-contraction-op}}]
		Since $\mathcal P_{\mathcal T_r(\bm A)}(\cdot)$ is the orthogonal projector onto $\mathcal T_r(\bm A)$, we have the orthogonal decomposition $\bm N=\mathcal P_{\mathcal T_r(\bm A)}(\bm N)+\mathcal P_{\mathcal T_r^\perp(\bm A)}(\bm N)$ for any $\bm M\in\mathcal T_r(\bm A)$ and any $\bm N\in\mathbb R^{d\times pd}$. Therefore,
		\begin{align}\label{eq:proj-innerprod}
			\langle \bm M,\bm N\rangle
			=\big\langle \bm M,\mathcal P_{\mathcal T_r(\bm A)}(\bm N)\big\rangle
			+\big\langle \bm M,\mathcal P_{\mathcal T_r^\perp(\bm A)}(\bm N)\big\rangle
			=\big\langle \bm M,\mathcal P_{\mathcal T_r(\bm A)}(\bm N)\big\rangle,
		\end{align}
		where the last equality holds because $\bm M\in\mathcal T_r(\bm A)$ is orthogonal to $\mathcal P_{\mathcal T_r^\perp(\bm A)}(\bm N)\in\mathcal T_r^\perp(\bm A)$. Since $\bm S$ is symmetric, for any $\bm M_1,\bm M_2\in\mathcal T_r(\bm A)$ we have
		\begin{align}\label{eq:self-adjoint-identity}
			\big\langle \bm M_1,\mathcal P_{\mathcal T_r(\bm A)}(\bm M_2\bm S)\big\rangle
			=\langle \bm M_1,\bm M_2\bm S\rangle
			=\langle \bm M_1\bm S,\bm M_2\rangle
			=\big\langle \mathcal P_{\mathcal T_r(\bm A)}(\bm M_1\bm S),\bm M_2\big\rangle,
		\end{align}
		where the first and last equalities follow from \eqref{eq:proj-innerprod}, while the middle equality uses the symmetry of $\bm S$. Consequently, the map $\bm M\mapsto \mathcal P_{\mathcal T_r(\bm A)}(\bm M\bm S)$ is self-adjoint on $(\mathcal T_r(\bm A),\langle\cdot,\cdot\rangle)$. Hence it admits an orthonormal eigenbasis $\{\bm U_j\}\subset\mathcal T_r(\bm A)$ with real eigenvalues $\{\lambda_j\}$ satisfying $\mathcal P_{\mathcal T_r(\bm A)}(\bm U_j\bm S)=\lambda_j\bm U_j$. For any $\bm M\in\mathcal T_r(\bm A)$, write $\bm M=\sum_j a_j\bm U_j$. Since $\{\bm U_j\}$ is orthonormal under $\langle\cdot,\cdot\rangle$, we have
		\begin{align}\label{eq:M-expansion-norm}
			\|\bm M\|_{\F}^2
			=\Big\langle \sum_j a_j\bm U_j,\sum_{\ell} a_{\ell}\bm U_{\ell}\Big\rangle
			=\sum_{j,\ell} a_j a_{\ell}\langle \bm U_j,\bm U_{\ell}\rangle
			=\sum_j a_j^2.
		\end{align}
		Moreover, we have
		\[
			\mathcal P_{\mathcal T_r(\bm A)}(\bm M\bm S)
			=\mathcal P_{\mathcal T_r(\bm A)}\!\Big(\Big(\sum_j a_j\bm U_j\Big)\bm S\Big)
			=\sum_j a_j\,\mathcal P_{\mathcal T_r(\bm A)}(\bm U_j\bm S)
			=\sum_j a_j\lambda_j\bm U_j,
		\]
		and therefore
		\begin{align*}
			\bm M-2\rho\,\mathcal P_{\mathcal T_r(\bm A)}(\bm M\bm S) = \sum_j a_j\bm U_j-2\rho\sum_j a_j\lambda_j\bm U_j = \sum_j a_j(1-2\rho\lambda_j)\bm U_j.
		\end{align*}
		Taking the Frobenius norm and expanding using orthonormality gives
		\begin{align}\label{eq:spectral-sum-bound}
			\Big\|\bm M-2\rho\,\mathcal P_{\mathcal T_r(\bm A)}(\bm M\bm S)\Big\|_{\F}^2
			&=\Big\langle \sum_j a_j(1-2\rho\lambda_j)\bm U_j,\ \sum_{\ell} a_{\ell}(1-2\rho\lambda_{\ell})\bm U_{\ell}\Big\rangle\notag\\
			&=\sum_{j,\ell} a_j a_{\ell}(1-2\rho\lambda_j)(1-2\rho\lambda_{\ell})\langle \bm U_j,\bm U_{\ell}\rangle\notag\\
			&=\sum_j a_j^2(1-2\rho\lambda_j)^2
			\leq \Big(\max_j|1-2\rho\lambda_j|\Big)^2\sum_j a_j^2.
		\end{align}
		By \eqref{eq:M-expansion-norm}, the right-hand side equals $\big(\max_j|1-2\rho\lambda_j|\big)^2\|\bm M\|_{\F}^2$.

		It remains to upper bound $\max_j|1-2\rho\lambda_j|$ by $\|\bm I_{pd}-2\rho\bm S\|_{\op}$. For each eigenpair $(\lambda_j,\bm U_j)$ with $\|\bm U_j\|_{\F}=1$, we have
		\begin{align}\label{eq:lambda-op-control}
			|1-2\rho\lambda_j|
			&=\bigl\|(1-2\rho\lambda_j)\bm U_j\bigr\|_{\F}
			=\bigl\|\bm U_j-2\rho\,\mathcal P_{\mathcal T_r(\bm A)}(\bm U_j\bm S)\bigr\|_{\F}\notag\\
			&=\bigl\|\mathcal P_{\mathcal T_r(\bm A)}\bigl(\bm U_j(\bm I_{pd}-2\rho\bm S)\bigr)\bigr\|_{\F}
			\leq \bigl\|\bm U_j(\bm I_{pd}-2\rho\bm S)\bigr\|_{\F}\notag\\
			&\leq \|\bm I_{pd}-2\rho\bm S\|_{\op}\,\|\bm U_j\|_{\F}
			=\|\bm I_{pd}-2\rho\bm S\|_{\op},
		\end{align}
		where we used the non-expansiveness of orthogonal projections in Frobenius norm and $\|\bm U_j\bm B\|_{\F}\leq \|\bm U_j\|_{\F}\|\bm B\|_{\op}$ for any conformable matrix $\bm B$. Thus, we have $\max_j|1-2\rho\lambda_j|\leq \|\bm I_{pd}-2\rho\bm S\|_{\op}$. Plugging the eigen bound into \eqref{eq:spectral-sum-bound} and using \eqref{eq:M-expansion-norm} yields
		\[
			\Big\|\bm M-2\rho\,\mathcal P_{\mathcal T_r(\bm A)}(\bm M\bm S)\Big\|_{\F}^2
			\leq \|\bm I_{pd}-2\rho\bm S\|_{\op}^2\,\|\bm M\|_{\F}^2,
		\]
		which concludes the proof.
	\end{proof}

	\begin{proof}[\textbf{Proof of Lemma \ref{lem:var-cov-conc}}]
		Fix any client $k\in[K]$. Let $\mathbb S^{pd-1}$ be the Euclidean unit sphere in $\mathbb R^{pd}$. Fix any $\bbm v\in\mathbb S^{pd-1}$ and choose $\bbm u\in\mathcal N$ such that $\|\bbm v-\bbm u\|_2\leq \varepsilon$. Then, for any symmetric matrix $\bm M$,
		\[
			\bbm v^{\top}\bm M\bbm v-\bbm u^{\top}\bm M\bbm u = (\bbm v-\bbm u)^{\top}\bm M(\bbm v-\bbm u)+2(\bbm v-\bbm u)^{\top}\bm M\bbm u.
		\]
		Therefore, for any $\bbm v\in\mathbb S^{pd-1}$ and the corresponding $\bbm u\in\mathcal N$,
		\begin{align*}
			\big|\bbm v^{\top}\bm M\bbm v\big| &\leq \big|\bbm u^{\top}\bm M\bbm u\big| + \big|(\bbm v-\bbm u)^{\top}\bm M(\bbm v-\bbm u)\big|+2\big|(\bbm v-\bbm u)^{\top}\bm M\bbm u\big| \\
			&\leq \big|\bbm u^{\top}\bm M\bbm u\big| + \|\bm M\|_{\op}\|\bbm v-\bbm u\|_2^2 + 2\|\bm M\|_{\op}\|\bbm v-\bbm u\|_2\|\bbm u\|_2 \\
			&\leq \big|\bbm u^{\top}\bm M\bbm u\big|+(2\varepsilon+\varepsilon^2)\|\bm M\|_{\op},
		\end{align*}
		where we used $\|\bbm u\|_2=1$ and $\|\bbm v-\bbm u\|_2\leq \varepsilon$. Taking the supremum over $\bbm v\in\mathbb S^{pd-1}$ yields
		\[
			\|\bm M\|_{\op} = \sup_{\bbm v\in\mathbb S^{pd-1}}\big|\bbm v^{\top}\bm M\bbm v\big| \leq \max_{\bbm u\in\mathcal N}\big|\bbm u^{\top}\bm M\bbm u\big|+(2\varepsilon+\varepsilon^2)\|\bm M\|_{\op}.
		\]
		Thus, after rearranging the terms, we have
		\[
			\|\bm M\|_{\op}\leq \frac{1}{1-2\varepsilon-\varepsilon^2}\max_{\bbm u\in\mathcal N}\big|\bbm u^{\top}\bm M\bbm u\big|.
		\]
		Then, take $\varepsilon=1/4$ and let $\mathcal N\subset\mathbb S^{pd-1}$ be a corresponding $1/4$-net with
		$|\mathcal N|\leq (1+2/\varepsilon)^{pd}=9^{pd}$. With $\varepsilon=1/4$, we have $1-2\varepsilon-\varepsilon^2=7/16$, so that $\|\bm M\|_{\op}\leq 16/7\max_{\bbm u\in\mathcal N}|\bbm u^{\top}\bm M\bbm u|$.
		Applying this bound to the symmetric matrix $1/T_k \bm X_k\bm X_k^\top - \bbm \Sigma_{x,k}$ gives
		\[
			\left\|\frac{1}{T_k}\bm X_k\bm X_k^\top - \bbm \Sigma_{x,k}\right\|_{\op} \leq \frac{16}{7}\max_{\bbm u\in\mathcal N}\left|\bbm u^{\top}\left(\frac{1}{T_k}\bm X_k\bm X_k^\top - \bbm \Sigma_{x,k}\right)\bbm u\right|.
		\]
		Fix any $\bbm u\in\mathcal N$ with $\|\bbm u\|_2=1$. Applying Lemma~\ref{lemma:HW-uXTXu} yields that for every $t>0$,
		\[
			\mathbb{P}\Big(\big|\bbm{u}^\top\bm{X}_k^\top\bm{X}_k\bbm{u}-\mathbb{E}[\bbm{u}^\top\bm{X}_k^\top\bm{X}_k\bbm{u}]\big|\geq t\Big)
			\leq 2\exp\Bigg(-c\min\Bigg\{\frac{t^2}{(C_{\epsilon,\mathcal A}^{(k)})^2\sigma^4pT_k},\frac{t}{C_{\epsilon,\mathcal A}^{(k)}\sigma^2p}\Bigg\}\Bigg).
		\]
		Note that
		\begin{align*}
			\mathbb P\left(\left|\bbm u^{\top}\left(\frac{1}{T_k}\bm X_k\bm X_k^\top - \bbm \Sigma_{x,k}\right)\bbm u\right|\geq t\right) &=
			\mathbb P\left(\left|\bbm u^\top\bm X_k^\top\bm X_k\bbm u-\mathbb E[\bbm u^\top\bm X_k^\top\bm X_k\bbm u]\right|\geq T_k t\right)\\
			&\leq 2\exp\Bigg(-c\min\Bigg\{\frac{T_kt^2}{(C_{\epsilon,\mathcal A}^{(k)})^2\sigma^4p},\frac{T_kt}{C_{\epsilon,\mathcal A}^{(k)}\sigma^2p}
			\Bigg\}\Bigg).
		\end{align*}
		We now choose $t = c_0C_{\epsilon,\mathcal A}^{(k)}\sigma^2p\sqrt{d/T_k}$ and hence $T_k t=c_0C_{\epsilon,\mathcal A}^{(k)}\sigma^2p\sqrt{dT_k}$ and $T_k t^2=c_0^2(C_{\epsilon,\mathcal A}^{(k)})^2\sigma^4p^2d$. With this choice, the two terms in the Bernstein minimum become
		\[
			\frac{T_kt^2}{(C_{\epsilon,\mathcal A}^{(k)})^2\sigma^4p} = \frac{c_0^2(C_{\epsilon,\mathcal A}^{(k)})^2\sigma^4p^2 d}{(C_{\epsilon,\mathcal A}^{(k)})^2\sigma^4p} = c_0^2pd,
			\ \text{ and } \ 
			\frac{T_kt}{C_{\epsilon,\mathcal A}^{(k)}\sigma^2p} = \frac{c_0C_{\epsilon,\mathcal A}^{(k)}\sigma^2p\sqrt{dT_k}}{C_{\epsilon,\mathcal A}^{(k)}\sigma^2p} = c_0\sqrt{dT_k}.
		\]
		Hence, given $T_k\geq c_1 p^2 d$, we have
		\[
			\mathbb P\left(\left|\bbm u^{\top}\Big(\frac{1}{T_k}\bm X_k\bm X_k^\top - \bbm \Sigma_{x,k}\Big)\bbm u\right| \geq c_0C_{\epsilon,\mathcal A}^{(k)}\sigma^2p\sqrt{\frac{d}{T_k}}\right) \leq 2\exp(-cpd),
		\]
		after adjusting absolute constants. Consequently, taking a union bound over $\bbm u\in\mathcal N$ with $|\mathcal N|\leq 9^{pd}$ yields
		\begin{align*}
			\mathbb P\left(\left\|\frac{1}{T_k}\bm X_k\bm X_k^\top - \bbm \Sigma_{x,k}\right\|_{\op} \geq \frac{16}{7}c_0C_{\epsilon,\mathcal A}^{(k)}\sigma^2p\sqrt{\frac{d}{T_k}}\right) &\leq \mathbb P\left(\max_{\bbm u\in\mathcal N}\left|\bbm u^{\top}\Big(\frac{1}{T_k}\bm X_k\bm X_k^\top - \bbm \Sigma_{x,k}\Big)\bbm u\right| \geq c_0C_{\epsilon,\mathcal A}^{(k)}\sigma^2p\sqrt{\frac{d}{T_k}}\right)\\
			&\leq 2|\mathcal N|\exp(-cpd) \leq 2\cdot 9^{pd}\exp(-cpd) \leq \exp(-Cpd),
		\end{align*}
		for some constant $C>0$. Absorbing the prefactor $16c_0/7$ into the absolute constant in the operator-norm bound concludes the proof.
	\end{proof}

	\begin{proof}[\textbf{Proof of Lemma \ref{lem:RSC-var}}]
		Fix any client $k\in[K]$. Note that
		\begin{align}
			\label{eq:RSC-identity}
			\frac{1}{T_k}\sum_{t=1}^{T_k}\|\bm\Delta\bbm x_{k,t}\|_2^2
			&= \frac{1}{T_k}\sum_{t=1}^{T_k}\bbm x_{k,t}^\top \bbm \Delta^\top \bbm \Delta \bbm x_{k,t}
			= \frac{1}{T_k}\sum_{t=1}^{T_k}\tr\left(\bbm x_{k,t}^\top\bm\Delta^{\top}\bm\Delta\bbm x_{k,t}\right) \notag \\
			&= \frac{1}{T_k}\sum_{t=1}^{T_k}\tr\left(\bm\Delta\bbm x_{k,t}\bbm x_{k,t}^\top\bm\Delta^{\top}\right)
			= \tr\left(\bm\Delta\left(\frac{1}{T_k}\bm X_k\bm X_k^\top\right)\bm\Delta^{\top}\right) \\
			&\geq \lambda_{\min}\left(\frac{1}{T_k}\bm X_k\bm X_k^\top\right)\|\bm\Delta\|_{\mathrm F}^2.\notag
		\end{align}
		Under the sample size condition in Lemma~\ref{lem:RSC-var}, i.e., $T_k\gtrsim (C_{\epsilon,\mathcal A}^{(k)}/C_{\mathrm{RSC}}^{(k)}\vee 1)^2p^2d$, we have in particular $T_k\gtrsim (C_{\epsilon,\mathcal A}^{(k)}/C_{\mathrm{RSC}}^{(k)})^2p^2d$, and hence there exists a constant $c_0>0$ such that
		\[
			c_0C_{\epsilon,\mathcal A}^{(k)}\sigma^2p\sqrt{\frac{d}{T_k}}\leq C_{\mathrm{RSC}}^{(k)}.
		\]
		Therefore, by Lemma~\ref{lem:var-cov-conc}, since $T_k\gtrsim (C_{\epsilon,\mathcal A}^{(k)}/C_{\mathrm{RSC}}^{(k)}\vee 1)^2p^2d$ and hence $T_k\gtrsim p^2d$, there exist a constant $C>0$ such that, with probability at least $1-\exp(-C pd)$,
		\begin{align}\label{eq:RSC-dev-bound}
			\left\|\frac{1}{T_k}\bm X_k\bm X_k^\top-\bbm \Sigma_{x,k}\right\|_{\op}
			\leq c_0\,C_{\epsilon,\mathcal A}^{(k)}\sigma^2p\sqrt{\frac{d}{T_k}}
			\leq C_{\mathrm{RSC}}^{(k)}.
		\end{align}
		Moreover, by Lemma~\ref{lem:Sigma-x-pd}, we have $\lambda_{\min}(\bbm \Sigma_{x,k})\geq c_{\epsilon,\mathcal A}^{(k)}=2C_{RSC}^{(k)}$. Combining this with \eqref{eq:RSC-dev-bound} in Weyl's inequality yields
		\[
			\lambda_{\min}\left(\frac{1}{T_k}\bm X_k\bm X_k^\top\right)
			\geq \lambda_{\min}(\bbm \Sigma_{x,k})-\left\|\frac{1}{T_k}\bm X_k\bm X_k^\top-\bbm \Sigma_{x,k}\right\|_{\op}
			\geq 2C_{\mathrm{RSC}}^{(k)}-C_{\mathrm{RSC}}^{(k)}=C_{\mathrm{RSC}}^{(k)}.
		\]
		Plugging this bound into \eqref{eq:RSC-identity} concludes the proof.
	\end{proof}

	\begin{proof}[\textbf{Proof of Lemma \ref{lem:var-XtE-infty-bound}}]
		Fix any client $k\in[K]$ and any entry $(i,j)$ of $\bm X_k^{\top}\bm E_k$. Taking $\bbm u=\bbm e_i^{(pd)}\in\mathbb R^{pd}$ and $\bbm v=\bbm e_j^{(d)}\in\mathbb R^{d}$ in Lemma~\ref{lemma:HW-uXEv} yields, for any $t>0$,
		\[
			\mathbb P\bigl(|(\bm X_k^{\top}\bm E_k)_{ij}|\geq t\bigr) = \mathbb P\left(\left|(\bbm e_i^{(pd)})^\top\bm X_k^{\top}\bm E_k\,\bbm e_j^{(d)}\right|\geq t\right) \leq 2\exp\left(-c\min\left\{\frac{t^2}{(C_{\epsilon,\mathcal A}^{(k)})^2\sigma^4T_k},\frac{t}{C_{\epsilon,\mathcal A}^{(k)}\sigma^2\sqrt{p}}\right\}\right).
		\]
		Then, setting $t=c_0T_kC_{\epsilon,\mathcal A}^{(k)}\sigma^2\sqrt{\log(pd)/T_k}$ yields
		\begin{align}\label{eq:entry-tail-XtEij-expd}
			\mathbb P\left(\left|\left(\frac{1}{T_k}\bm X_k^{\top}\bm E_k\right)_{ij}\right|\geq c_0C_{\epsilon,\mathcal A}^{(k)}\sigma^2\sqrt{\frac{\log(pd)}{T_k}}\right) \leq 2\exp\left(-c\min\left\{c_0^2\log(pd), c_0\sqrt{\frac{T_k\log(pd)}{p}}\right\}\right).
		\end{align}
		Hence, given $T_k\geq c_1p\log(pd)$, absorbing the constants $\log2$, $c_0$ and $c_0\sqrt{c_1}$ into $c$ yields
		\[
			\mathbb P\left(\left|\left(\frac{1}{T_k}\bm X_k^{\top}\bm E_k\right)_{ij}\right|\geq c_0C_{\epsilon,\mathcal A}^{(k)}\sigma^2\sqrt{\frac{\log(pd)}{T_k}}\right) \leq 2\exp(-c\log(pd)).
		\]
		Take the union bound over all $p d^2$ entries yields
		\begin{align*}
			& \mathbb P\left(\left\|\frac{1}{T_k}\bm X_k^{\top}\bm E_k\right\|_{\infty} \geq c_0C_{\epsilon,\mathcal A}^{(k)}\sigma^2\sqrt{\frac{\log(pd)}{T_k}}\right) \leq  \sum_{i=1}^{pd}\sum_{j=1}^{d}\mathbb P\left(\left|\left(\frac{1}{T_k}\bm X_k^{\top}\bm E_k\right)_{ij}\right| \geq c_0C_{\epsilon,\mathcal A}^{(k)}\sigma^2\sqrt{\frac{\log(pd)}{T_k}}\right)  \\
			&\ \ \leq 2p d^2 \exp(-c\log(pd)) = \exp\Bigl(\log 2 + \log d + \log (pd) - c\log(pd)\Bigr) \leq \exp(-C\log(pd)).
		\end{align*}
		for some constant $C>0$, which proves \eqref{eq:var-XtE-infty-bound-simplified}.
	\end{proof}

	\begin{proof}[\textbf{Proof of Lemma \ref{lem:var-XtE-op-bound}}]
		Fix any client $k\in[K]$. Let $\mathbb S^{pd-1}$ and $\mathbb S^{d-1}$ be the Euclidean unit spheres. Fix $\varepsilon=1/4$, and let $\mathcal N_u\subset\mathbb S^{pd-1}$ and $\mathcal N_v\subset\mathbb S^{d-1}$ be $\varepsilon$-nets with $|\mathcal N_u|\leq (1+2/\varepsilon)^{pd} = 9^{pd}$ and $|\mathcal N_v|\leq (1+2/\varepsilon)^{d} = 9^{d}$. A standard net argument implies
		\[
			\left\|\frac{1}{T_k}\bm X_k^{\top}\bm E_k\right\|_{\op}
			\leq \frac{1}{(1-\varepsilon)^2}\max_{\bbm u\in\mathcal N_u,\,\bbm v\in\mathcal N_v}
			\left|\bbm u^{\top}\left(\frac{1}{T_k}\bm X_k^{\top}\bm E_k\right)\bbm v\right|
			\leq 4\max_{\bbm u\in\mathcal N_u,\,\bbm v\in\mathcal N_v}
			\left|\bbm u^{\top}\left(\frac{1}{T_k}\bm X_k^{\top}\bm E_k\right)\bbm v\right|.
		\]
		Fix $(\bbm u,\bbm v)\in\mathcal N_u\times\mathcal N_v$. Applying Lemma~\ref{lemma:HW-uXEv} with $\|\bbm u\|_2=\|\bbm v\|_2=1$ yields, for any $t>0$,
		\[
			\mathbb P\left(\left|\bbm u^{\top}\bm X_k^{\top}\bm E_k\bbm v\right|\geq t\right) \leq 2\exp\Bigg(-c\min\Big\{\frac{t^2}{(C_{\epsilon,\mathcal A}^{(k)})^2\sigma^4T_k},\frac{t}{C_{\epsilon,\mathcal A}^{(k)}\sigma^2\sqrt{p}}\Big\}\Bigg).
		\]
		Then, setting $t=c_0T_kC_{\epsilon,\mathcal A}^{(k)}\sigma^2\sqrt{pd/T_k}$ yields
		\begin{align}\label{eq:net-pair-tail-op-expd}
			\mathbb P\left(\left|\bbm u^{\top}\left(\frac{1}{T_k}\bm X_k^{\top}\bm E_k\right)\bbm v\right|\geq c_0C_{\epsilon,\mathcal A}^{(k)}\sigma^2\sqrt{\frac{pd}{T_k}}\right) \leq 2\exp\Bigg(-c\min\Big\{c_0^2 p d,\ c_0\sqrt{T_k d}\Big\}\Bigg).
		\end{align}
		Hence, given $T_k\geq c_1p^2d$, absorbing the constants $\log2$, $c_0^2$ and $c_0\sqrt{c_1}$ into $c$ yields
		\[
			\mathbb P\left(\left|\bbm u^{\top}\left(\frac{1}{T_k}\bm X_k^{\top}\bm E_k\right)\bbm v\right|\geq c_0C_{\epsilon,\mathcal A}^{(k)}\sigma^2\sqrt{\frac{p d}{T_k}}\right) \leq 2\exp(-c pd).
		\]
		Consequently, taking a union bound over $\mathcal N_u \times\mathcal N_v$ with $|\mathcal N_u||\mathcal N_v|\leq 9^{pd+d}$ yields
		\begin{align*}
			&\mathbb P\left(\left\|\frac{1}{T_k}\bm X_k^{\top}\bm E_k\right\|_{\op}\geq 4c_0C_{\epsilon,\mathcal A}^{(k)}\sigma^2\sqrt{\frac{pd}{T_k}}\right) \leq \sum_{\bbm u\in\mathcal N_u}\sum_{\bbm v\in\mathcal N_v} \mathbb P\left(\left|\bbm u^{\top}\left(\frac{1}{T_k}\bm X_k^{\top}\bm E_k\right)\bbm v\right|\geq c_0C_{\epsilon,\mathcal A}^{(k)}\sigma^2\sqrt{\frac{pd}{T_k}}\right) \\
			&\ \ \leq 2\cdot 9^{pd+d}\exp(-c pd) = \exp\Bigl(\log 2 + (pd+d)\log 9 - c pd\Bigr) \leq \exp(-C pd).
		\end{align*}
		for some constant $C>0$, which proves \eqref{eq:var-XtE-op-bound-simplified}.
	\end{proof}

	\begin{proof}[\textbf{Proof of Lemma \ref{lem:var-cov-conc-unified}}]
		Fix any client $k\in[K]$. Let $\mathbb S^{pd-1}$ and $\mathbb S^{d-1}$ be the Euclidean unit spheres.
		Fix $\varepsilon=1/4$, and let $\mathcal N\subset \mathbb S^{pd-1}$ be an $\varepsilon$-net with $|\mathcal N|\leq (1+2/\varepsilon)^{pd}=9^{pd}$. Note that $\bm S_{\rm pool}-\bbm\Sigma_{x,{\rm pool}}$ is symmetric. Then, following the standard net argument as in the proof of Lemma \ref{lem:var-cov-conc} yields
		\begin{align}\label{eq:net-bound-pooled}
			\left\|\bm S_{\rm pool}-\bbm\Sigma_{x,{\rm pool}}\right\|_{\op}\leq \frac{16}{7}\max_{\bbm u\in\mathcal N}\left|\bbm u^\top\left(\bm S_{\rm pool}-\bbm\Sigma_{x,{\rm pool}}\right)\bbm u\right|.
		\end{align}
		Fix any $\bbm u\in\mathcal N$ with $\|\bbm u\|_2=1$ and define $Z_k(\bbm u)=\bbm u^\top\bm X_k^\top\bm X_k\bbm u-\mathbb E[\bbm u^\top\bm X_k^\top\bm X_k\bbm u]$ for each $k\in[K]$.
		Consequently, we have $\bbm u^\top(\bm S_{\rm pool}-\bbm\Sigma_{x,{\rm pool}})\bbm u=1/T\sum_{k=1}^K Z_k(\bbm u)$. Under Assumption~\ref{assump:subg}, the innovations $\{\bbm \epsilon_{k,t}\}$ are independent across clients $k$, hence $\{Z_k(\bbm u)\}_{k=1}^K$ are independent for each fixed $\bbm u$. Moreover, denote $C_{\epsilon,\mathcal A}^{\max} = \max_{k\in[K]} C_{\epsilon,\mathcal A}^{(k)}$, $\nu_k^2 = (C_{\epsilon,\mathcal A}^{\max})^2\sigma^4pT_k$ and $b_k = C_{\epsilon,\mathcal A}^{\max}\sigma^2p$. Then, by Lemma \ref{lemma:HW-uXTXu}, we have, for $t>0$ and any $k\in[K]$,
		\begin{align*}\label{eq:Zk-tail-unified}
			\mathbb P\big(|Z_k(\bbm u)|\geq t\big) &\leq 2\exp\Bigg(-c\min\Bigg\{\frac{t^2}{(C_{\epsilon,\mathcal A}^{(k)})^2\sigma^4pT_k},\frac{t}{C_{\epsilon,\mathcal A}^{(k)}\sigma^2p}\Bigg\}\Bigg)\\
			& \leq 2\exp\Bigg(-c\min\Bigg\{\frac{t^2}{(C_{\epsilon,\mathcal A}^{\max})^2\sigma^4pT_k},\frac{t}{C_{\epsilon,\mathcal A}^{\max}\sigma^2p}\Bigg\}\Bigg) = 2\exp\!\left(-c\min\Big\{\frac{t^2}{\nu_k^2},\frac{t}{b_k}\Big\}\right).
		\end{align*}
		Hence by Lemma \ref{lem:bernstein-tail-mgf-equiv}, $Z_k(\bbm u)$ is sub-exponential with Bernstein parameters $(\nu_k,b_k)$. Consequently, Bernstein’s inequality for sums of independent sub-exponential variables, e.g., Eq.~(2.18) in \cite{wainwright2019high}, yields that, for any $x>0$,
		\begin{equation}\label{eq:bern-sumZ}
			\mathbb P\left(\Big|\sum_{k=1}^K Z_k(\bbm u)\Big|\geq x\right) \leq 2\exp\Bigg(-c\min\Bigg\{\frac{x^2}{\sum_{k=1}^K \nu_k^2},\frac{x}{\max_{k\in[K]} b_k}\Bigg\}\Bigg).
		\end{equation}
		Noting that $\sum_{k=1}^K \nu_k^2=(C_{\epsilon,\mathcal A}^{\max})^2\sigma^4p\sum_{k=1}^K T_k = (C_{\epsilon,\mathcal A}^{\max})^2\sigma^4pT$ and $\max_k b_k=C_{\epsilon,\mathcal A}^{\max}\sigma^2p$, we choose $x=c_0C_{\epsilon,\mathcal A}^{\max}\sigma^2p\sqrt{dT}$. Then, the two terms in the Bernstein minimum become
		\[
			\frac{x^2}{\sum_{k=1}^K \nu_k^2} = \frac{c_0^2 (C_{\epsilon,\mathcal A}^{\max})^2\sigma^4 (p^2 d T)}{(C_{\epsilon,\mathcal A}^{\max})^2\sigma^4 (p T)} = c_0^2pd
			\ \text{ and } \ 
			\frac{x}{\max_k b_k} = \frac{c_0C_{\epsilon,\mathcal A}^{\max}\sigma^2p\sqrt{d T}}{C_{\epsilon,\mathcal A}^{\max}\sigma^2 p} = c_0\sqrt{dT}.
		\]
		Hence, given $T\geq c_1p^2d$, we have
		\[
			\mathbb P\left(\Big|\sum_{k=1}^K Z_k(\bbm u)\Big|\geq c_0C_{\epsilon,\mathcal A}^{\max}\sigma^2p\sqrt{dT}\right)\leq 2\exp(-cpd).
		\]
		Recall that $\bbm u^\top(\bm S_{\rm pool}-\bbm\Sigma_{x,{\rm pool}})\bbm u= T^{-1}\sum_{k=1}^K Z_k(\bbm u)$, then for any $\bbm u\in\mathcal N$, we have
		\[
			\mathbb P\left(\Big|\bbm u^\top(\bm S_{\rm pool}-\bbm\Sigma_{x,{\rm pool}})\bbm u\Big|\geq c_0C_{\epsilon,\mathcal A}^{\max}\sigma^2p\sqrt{\frac{d}{T}}\right) \leq 2\exp(-cpd).
		\]
		Consequently, taking a union bound over $\bbm u\in\mathcal N$ with $|\mathcal N|\leq 9^{pd}$ yields
		\[
			\mathbb P\left(\max_{\bbm u\in\mathcal N}\Big|\bbm u^\top(\bm S_{\rm pool}-\bbm\Sigma_{x,{\rm pool}})\bbm u\Big| \geq c_0C_{\epsilon,\mathcal A}^{\max}\sigma^2p\sqrt{\frac{d}{T}}\right) \leq 2|\mathcal N|\exp(-cpd) \leq 2\cdot 9^{pd}\exp(-cpd) \leq \exp(-Cpd).
		\]
		Combining this with the $\varepsilon$-net reduction \eqref{eq:net-bound-pooled} and absorbing the factor $16/7$ into the leading constant concludes the proof.
	\end{proof}

	\begin{proof}[\textbf{Proof of Lemma \ref{lem:var-XtE-op-bound-pooled}}]
		Fix $\varepsilon=1/4$, and let $\mathcal N_u\subset\mathbb S^{pd-1}$ and $\mathcal N_v\subset\mathbb S^{d-1}$ be $\varepsilon$-nets with $|\mathcal N_u|\leq (1+2/\varepsilon)^{pd} = 9^{pd}$ and $|\mathcal N_v|\leq (1+2/\varepsilon)^{d} = 9^{d}$. Then, following the standard net argument as in the proof of Lemma~\ref{lem:var-XtE-op-bound} yields
		\begin{align}\label{eq:net-bound-pooled-R}
			\left\|\bm R_{\rm pool}\right\|_{\op} \leq \frac{1}{(1-\varepsilon)^2} \max_{\bbm u\in\mathcal N_u, \bbm v\in\mathcal N_v}\left|\bbm u^\top \bm R_{\rm pool}\bbm v\right| \leq 4\max_{\bbm u\in\mathcal N_u, \bbm v\in\mathcal N_v}\left|\bbm u^\top \bm R_{\rm pool}\bbm v\right|.
		\end{align}
		Fix any $(\bbm u,\bbm v)\in\mathcal N_u\times\mathcal N_v$ with $\|\bbm u\|_2=\|\bbm v\|_2=1$ and define $Z_k(\bbm u,\bbm v)=\bbm u^\top\bm X_k^\top\bm E_k\bbm v$ for each $k\in[K]$. Consequently, we have $\bbm u^\top \bm R_{\rm pool}\bbm v=T^{-1}\sum_{k=1}^K Z_k(\bbm u,\bbm v)$. Moreover, denote $C_{\epsilon,\mathcal A}^{\max} = \max_{k\in[K]} C_{\epsilon,\mathcal A}^{(k)}$, $\nu_k^2 = (C_{\epsilon,\mathcal A}^{\max})^2\sigma^4T_k$ and $b_k = C_{\epsilon,\mathcal A}^{\max}\sigma^2\sqrt p$. Then, by Lemma~\ref{lemma:HW-uXEv}, we have, for $x>0$ and any $k\in[K]$,
		\begin{align*}\label{eq:Zk-tail-pooled-R}
			\mathbb P\big(|Z_k(\bbm u,\bbm v)|\geq x\big) &\leq 2\exp\Bigg(-c\min\Bigg\{\frac{x^2}{(C_{\epsilon,\mathcal A}^{(k)})^2\sigma^4T_k},\frac{x}{C_{\epsilon,\mathcal A}^{(k)}\sigma^2\sqrt p}\Bigg\}\Bigg)\\
			&\leq 2\exp\Bigg(-c\min\Bigg\{\frac{x^2}{(C_{\epsilon,\mathcal A}^{\max})^2\sigma^4T_k},\frac{x}{C_{\epsilon,\mathcal A}^{\max}\sigma^2\sqrt p}\Bigg\}\Bigg) = 2\exp\!\left(-c\min\Big\{\frac{x^2}{\nu_k^2},\frac{x}{b_k}\Big\}\right).
		\end{align*}

		Hence, by Lemma~\ref{lem:bernstein-tail-mgf-equiv}, $Z_k(\bbm u,\bbm v)$ is sub-exponential with Bernstein parameters $(\nu_k,b_k)$. 
		Note that under Assumption~\ref{assump:subg}, the innovations $\{\bbm \epsilon_{k,t}\}$ are independent across clients $k$, hence $\{Z_k(\bbm u,\bbm v)\}_{k=1}^K$ are independent for each fixed $(\bbm u,\bbm v)$.
		Consequently, Bernstein’s inequality for sums of independent sub-exponential variables, e.g., Eq.~(2.18) in \cite{wainwright2019high}, yields that, for any $x>0$,
		\begin{equation}\label{eq:bern-sumZ-pooled-R}
			\mathbb P\left(\Big|\sum_{k=1}^K Z_k(\bbm u,\bbm v)\Big|\geq x\right) \leq 2\exp\Bigg(-c\min\Bigg\{\frac{x^2}{\sum_{k=1}^K \nu_k^2},\frac{x}{\max_{k\in[K]} b_k}\Bigg\}\Bigg).
		\end{equation}
		Noting that $\sum_{k=1}^K \nu_k^2=(C_{\epsilon,\mathcal A}^{\max})^2\sigma^4\sum_{k=1}^K T_k = (C_{\epsilon,\mathcal A}^{\max})^2\sigma^4T$ and $\max_k b_k=C_{\epsilon,\mathcal A}^{\max}\sigma^2\sqrt p$, we choose $x=c_0C_{\epsilon,\mathcal A}^{\max}\sigma^2\sqrt{pdT}$. Then, the two terms in the Bernstein minimum become
		\[
			\frac{x^2}{\sum_{k=1}^K \nu_k^2} = \frac{c_0^2 (C_{\epsilon,\mathcal A}^{\max})^2\sigma^4 (p d T)}{(C_{\epsilon,\mathcal A}^{\max})^2\sigma^4 T} = c_0^2pd,
			\ \text{ and } \ 
			\frac{x}{\max_k b_k} = \frac{c_0C_{\epsilon,\mathcal A}^{\max}\sigma^2\sqrt{p d T}}{C_{\epsilon,\mathcal A}^{\max}\sigma^2\sqrt p} = c_0\sqrt{dT}.
		\]
		Hence, given $T\geq c_1p^2d$, we have
		\[
			\mathbb P\left(\Big|\sum_{k=1}^K Z_k(\bbm u,\bbm v)\Big| \geq c_0C_{\epsilon,\mathcal A}^{\max}\sigma^2\sqrt{pdT}\right)\leq 2\exp(-cpd).
		\]
		Recall that $\bbm u^\top \bm R_{\rm pool}\bbm v = T^{-1}\sum_{k=1}^K Z_k(\bbm u,\bbm v)$, then for any $(\bbm u,\bbm v)\in\mathcal N_u\times\mathcal N_v$, we have
		\[
			\mathbb P\left(\Big|\bbm u^\top \bm R_{\rm pool}\bbm v\Big| \geq c_0C_{\epsilon,\mathcal A}^{\max}\sigma^2\sqrt{\frac{pd}{T}}\right)\leq 2\exp(-cpd).
		\]
		Consequently, taking a union bound over $(\bbm u,\bbm v)\in\mathcal N_u\times\mathcal N_v$ with $|\mathcal N_u||\mathcal N_v|\leq 9^{pd+d}$ yields
		\[
			\mathbb P\left(\max_{\bbm u\in\mathcal N_u, \bbm v\in\mathcal N_v}\Big|\bbm u^\top \bm R_{\rm pool}\bbm v\Big| \geq c_0C_{\epsilon,\mathcal A}^{\max}\sigma^2\sqrt{\frac{pd}{T}}\right) \leq 2|\mathcal N_u||\mathcal N_v|\exp(-cpd) \leq 2\cdot 9^{pd+d}\exp(-cpd) \leq \exp(-Cpd),
		\]
		Combining this with the $\varepsilon$-net reduction \eqref{eq:net-bound-pooled-R} and absorbing the factor $4$ into the leading constant concludes the proof.
	\end{proof}

	\begin{proof}[\textbf{Proof of Lemma \ref{lem:bernstein-tail-mgf-equiv}}]
		We begin by \textbf{the proof of \eqref{eq:bern-tail} $\Rightarrow$ \eqref{eq:bern-mgf}.} Assume \eqref{eq:bern-tail}. To prove \eqref{eq:bern-mgf}, we first bound moments using the two-regime tail. For any integer $m\geq 1$, the tail integral formula gives
		$\mathbb E|X|^m=m\int_0^\infty t^{m-1}\mathbb P(|X|\geq t)dt$. Let $t_0 = \nu^2/b$ so that $t_0^2/\nu^2=t_0/b$. Splitting at $t_0$ and using $\min\{t^2/\nu^2,t/b\}=t^2/\nu^2$ for $t\leq t_0$ and $=t/b$ for $t\geq t_0$, we have
		\[
			\mathbb E|X|^m \leq 2m\int_0^{t_0} t^{m-1} e^{-c t^2/\nu^2}dt + 2m\int_{t_0}^{\infty} t^{m-1} e^{-c t/b}dt \leq 2m\int_0^{\infty} t^{m-1} e^{-c t^2/\nu^2}dt + 2m\int_{0}^{\infty} t^{m-1} e^{-c t/b}dt.
		\]
		We now evaluate the two integrals in closed form up to Gamma factors. For the first term, set $t=\nu r$. Then,
		\[
			2m\int_0^{\infty} t^{m-1}e^{-c t^2/\nu^2}dt = 2m\nu^m\int_0^\infty r^{m-1}e^{-c r^2}dr = m(\nu/\sqrt{c})^m\int_0^\infty s^{m/2-1}e^{-s}ds = m(\nu/\sqrt{c})^m\Gamma\Big(\frac{m}{2}\Big).
		\]
		The penultimate equality follows from the change of variables $s=c r^2$, under which $r=\sqrt{s/c}$ and $dr=\frac12 c^{-1/2}s^{-1/2}ds$. Similarly, set $t=b r$ for the second term. Then,
		\[
			2m\int_0^{\infty} t^{m-1}e^{-c t/b}dt = 2mb^m\int_0^\infty r^{m-1}e^{-c r}dr = 2m(b/c)^{m}\int_0^\infty s^{m-1}e^{-s}ds = 2m(b/c)^{m}\Gamma(m).
		\]
		The penultimate equality follows from the change of variables $s = cr$, under which $dr=ds/c$. Putting the two parts together,
		\begin{align}\label{eq:moment-bound-Gamma}
			\mathbb E|X|^m\leq m(\nu/\sqrt{c})^{m}\Gamma(m/2) + 2m(b/c)^{m}\Gamma(m).
		\end{align}
			
		It remains to upper bound the Gamma factors in the right-hand side of \eqref{eq:moment-bound-Gamma}. Note that by Stirling's formula for the Gamma function (see equation~(1.1) in \cite{nemes2010more}), we have
		\[
			\Gamma(x)=\sqrt{2\pi}x^{x-\frac12}e^{-x}\bigl(1+o(1)\bigr), \ \text{ as } \ b x\to\infty,
		\]
		and thus there exist constants $c_1,c_2>0$ such that for all $x\geq 1$, $c_1x^{x-\frac12}e^{-x} \leq \Gamma(x)\leq c_2x^{x-\frac12}e^{-x}$. Dropping the factors $x^{-1/2}\leq 1$, $e^{-x}\leq 1$ and absorbing $c_2$ into $C^x$, we obtain that there exists
		a constant $C>0$ such that for all $x\geq 1$, $\Gamma(x)\leq C^xx^{x}$. Applying this bound with $x=m$ and $x=m/2$ gives the desired bounds for $\Gamma(m)$ and $\Gamma(m/2)$. Hence
		\[
			mc^{-m/2}\Gamma\Big(\frac{m}{2}\Big) \leq mc^{-m/2}C_0\Big(\frac{m}{2}\Big)^{m/2} \leq (C_1\sqrt m)^m,
		\]
		for $C_1>0$. Similarly,
		\[
			2mc^{-m}\Gamma(m)\leq 2mc^{-m}(Cm)^m\leq (C_2 m)^m
		\]
		for $C_2>0$. Combining these bounds yields that there exists a constant $C>0$ such that for all integers $m\geq2$,
		\[
			\mathbb E|X|^m \leq C^m\Big[(\nu\sqrt m)^m+(b m)^m\Big].
		\]
		Now fix $|\lambda|\leq c_0/b$ with $c_0>0$ sufficiently small. Since $\mathbb E X=0$, the Taylor expansion yields
		\[
		\mathbb E e^{\lambda X} = 1+\sum_{m=2}^\infty \frac{\lambda^m\mathbb E[X^m]}{m!} \leq 1+\sum_{m=2}^\infty \frac{|\lambda|^m\mathbb E|X|^m}{m!} \leq 1+S_{\nu}+S_{b},
		\]
		where $S_{\nu}=\sum_{m=2}^\infty |\lambda|^m (C\nu\sqrt m)^m/m!$ and $S_{b}=\sum_{m=2}^\infty |\lambda|^m (Cb m)^m/m!$.

		To control $S_{b}$,the Stirling bound $m!\geq (m/e)^m$ yields $|\lambda|^m (Cb m)^m/m!\leq (eC|\lambda|b)^m$. Writing $u=eC|\lambda|b$, under $|\lambda|\leq c_0/b$ and $c_0\leq (2eC)^{-1}$, we have $u\leq 1/2$ and thus $S_{b}\leq \sum_{m=2}^\infty u^m=u^2/(1-u)\leq 2u^2\leq 2(eC)^2\lambda^2b^2$. Consequently, $1+S_{b}\leq \exp(S_{b})\leq \exp\big(2(eC)^2\lambda^2b^2\big)$.

		Next, for $S_{\nu}$ we use $m!\geq (m/e)^m$ again to obtain $|\lambda|^m (C\nu\sqrt m)^m/m!\leq (eC|\lambda|\nu)^m m^{-m/2}$. Let $s=eC|\lambda|\nu$ and define $S=\sum_{m=2}^\infty s^m m^{-m/2}$, so that $S_{\nu}\leq S$. Decompose $S$ into even and odd terms:
		\[
			S=\sum_{\ell\geq 1} s^{2\ell}(2\ell)^{-\ell} + \sum_{\ell\geq 1} s^{2\ell+1}(2\ell+1)^{-(2\ell+1)/2}.
		\]
		For the even terms, the Stirling's formula implies $\ell^{-\ell}\leq e^\ell/\ell!$, we have $s^{2\ell}(2\ell)^{-\ell}\leq (es^2/2)^\ell/\ell!$, hence $\sum_{\ell\geq 1} s^{2\ell}(2\ell)^{-\ell}\leq \exp(es^2/2)-1$. For the odd terms, $(2\ell+1)^{-(2\ell+1)/2}\leq (2\ell)^{-\ell}$ gives $\sum_{\ell\geq 1} s^{2\ell+1}(2\ell+1)^{-(2\ell+1)/2}\leq s\big(\exp(es^2/2)-1\big)$. Therefore $S\leq (1+s)\big(\exp(es^2/2)-1\big)$.

		We now show that there exists a constant $c_1>0$ such that
		\begin{align}\label{eq:1ps-exp-bound}
			(1+s)\left(\exp\left(\frac{e}{2}s^2\right)-1\right)\leq \exp(c_1 s^2)-1
		\end{align}
		for all $s\geq 0$. Indeed, for $0\leq s\leq 1$, we have $1+s\leq 2$ and $\exp(es^2/2)-1\leq e\exp(e/2)s^2/2$, so the left side is at most $e\exp{(e/2)} s^2$, while $\exp(c_1 s^2)-1\geq c_1 s^2$. Choosing $c_1\geq e\exp{(e/2)}$ yields $(1+s)(\exp(es^2/2)-1)\leq \exp(c_1 s^2)-1$.

		For $s>1$, note that $1+s\leq 2s\leq 2\exp{(s^2)}$ and $\exp(es^2/2)-1\leq \exp(es^2/2)$, so $(1+s)\big(\exp(es^2/2)-1\big)\leq 2\exp((e/2+1)s^2)$. If we take $c_1\geq e/2+1+\log 4$, then $2\exp{((e/2+1)s^2)}\leq \exp{(c_1 s^2)}/2$ for all $s\geq 1$. Moreover, since $c_1 s^2\geq c_1\geq \log 2$, we have $1-\exp{(-c_1 s^2)}\geq 1-\exp{(-c_1)}\geq 1/2$, and thus $\exp{(c_1 s^2)}-1=\exp{(c_1 s^2)}(1-\exp{(-c_1 s^2)})\geq \exp{(c_1 s^2)}/2$. Combining the last two displays yields $(1+s)(\exp{(es^2/2)}-1)\leq \exp{(c_1 s^2)}-1$. Taking $c_1\geq \max\{e e^{e/2},e/2+1+\log 4\}$ makes \eqref{eq:1ps-exp-bound} hold for all $s\geq 0$. Consequently $1+S_{\nu}\leq 1+S\leq \exp(c_1 s^2)=\exp(c_1 (e C)^2 \lambda^2\nu^2)$.

		Putting the bounds together, we have, for all $|\lambda|\leq c_0/b$,
		\[
			\mathbb E e^{\lambda X} \leq 1+S_{\nu}+S_{b} \leq (1+S_{\nu})(1+S_{b}) \leq \exp\big(c_1 (e C)^2 \lambda^2\nu^2\big)\exp\big(2(e C)^2\lambda^2b^2\big).
		\]
		Since $|\lambda|\leq c_0/b$, the term $\exp\big(2(e C)^2\lambda^2b^2\big)$ is absorbed into $\exp(C\lambda^2\nu^2)$ by adjusting absolute constants. Hence there exists a constant $C>0$ such that for all $|\lambda|\leq c/b$,
		\[
			\mathbb E \exp(\lambda X)\leq \exp\!\big(C\lambda^2\nu^2\big),
		\]
		which establishes \eqref{eq:bern-mgf}.

		\noindent \textbf{The proof of \eqref{eq:bern-mgf} $\Rightarrow$ \eqref{eq:bern-tail}} is standard; see, e.g.,
		\citet[Section~2.1.3]{wainwright2019high}.
	\end{proof}

	\begin{proof}[\textbf{Proof of Lemma \ref{lem:gaussian-mech-prop}}]
		Denote
		\[
			M(\mathcal D_k) = \bbm g(\mathcal D_k) + \bm Z_k
			\ \ \text{and} \ \
			M(\mathcal D_k') = \bbm g(\mathcal D_k') + \bm Z_k',
		\]
		where $\bm Z_k'$ is an independent copy of $\bm Z_k$.
		Note that both $M(\mathcal D_k)$ and $M(\mathcal D_k')$ are Gaussian random matrices in the sense that their vectorized versions, $\operatorname{vec}(M(\mathcal D_k))$ and $\operatorname{vec}(M(\mathcal D_k'))$, are multivariate Gaussian. Specifically, $\operatorname{vec}(M(\mathcal D_k)) \sim \mathcal{N}(\operatorname{vec}(\bbm g(\mathcal D_k)),\, \sigma^2 \bm I_m)$ and $\operatorname{vec}(M(\mathcal D_k')) \sim \mathcal{N}(\operatorname{vec}(\bbm g(\mathcal D_k')),\, \sigma^2 \bm I_m)$, where $\bm I_m$ is the $m\times m$ identity matrix with $m = pd^2$. Consequently, we have the following closed-form expressions for the densities of $\operatorname{vec}(M(\mathcal D_k))$ and $\operatorname{vec}(M(\mathcal D_k'))$:
		\[
			f_k(\bm X) = (2\pi\sigma^2)^{-\frac{m}{2}} \exp\Bigl(-\frac{1}{2\sigma^2}\bigl\|\operatorname{vec}(\bm X)- \operatorname{vec}(\bbm g(\mathcal D_k))\bigr\|_2^2\Bigr) = (2\pi\sigma^2)^{-\frac{m}{2}} \exp\Bigl(-\frac{1}{2\sigma^2}\bigl\|\bm X- \bbm g(\mathcal D_k)\bigr\|_\F^2\Bigr),
		\]
		\[
			f_k'(\bm X) = (2\pi\sigma^2)^{-\frac{m}{2}}\exp\Bigl(-\frac{1}{2\sigma^2}\bigl\|\operatorname{vec}(\bm X)- \operatorname{vec}(\bbm g(\mathcal D_k'))\bigr\|_2^2\Bigr) = (2\pi\sigma^2)^{-\frac{m}{2}}\exp\Bigl(-\frac{1}{2\sigma^2}\bigl\|\bm X- \bbm g(\mathcal D_k')\bigr\|_\F^2\Bigr),
		\]
		Then, the log-likelihood ratio at $\bm X$ is
		\begin{align*}
			\log\frac{f_k(\bm X)}{f_k'(\bm X)} &= -\frac{1}{2\sigma^2}\Bigl(\bigl\|\bm X-\bbm g(\mathcal D_k)\bigr\|_\F^2 - \bigl\|\bm X-\bbm g(\mathcal D_k')\bigr\|_\F^2\Bigr) \\
			&= -\frac{1}{2\sigma^2}\Bigl(\bigl\|\bbm g(\mathcal D_k)-\bbm g(\mathcal D_k')\bigr\|_\F^2 + 2\bigl\langle \bm X-\bbm g(\mathcal D_k'),\bbm g(\mathcal D_k')-\bbm g(\mathcal D_k)\bigr\rangle\Bigr).
		\end{align*}
		By the Cauchy-Schwarz inequality and the condition that $\|\bbm g(\mathcal D_k)-\bbm g(\mathcal D_k')\|_\F \leq \Delta$, we have
		\[
			\left|\log\frac{f_k(\bm X)}{f_k'(\bm X)}\right| \leq \frac{\Delta}{\sigma^2}\bigl\|\bm X-\bbm g(\mathcal D_k')\bigr\|_\F + \frac{\Delta^2}{2\sigma^2}.
		\]
		Next, fix any $\varepsilon>0$ and define the good region $G_\varepsilon = \{\bm X\in\mathbb R^{d\times pd} : \log (f_k(\bm X)/f_k'(\bm X)) \leq \varepsilon \}$.
		For any measurable set $S\subset \mathbb R^{d\times pd}$, decomposing $S = (S\cap G_\varepsilon) \cup (S\cap G_\varepsilon^{\mathsf c})$ and integrating with respect to the densities yields
		\begin{align}\label{eq:ini-DP}
			\mathbb P(M(\mathcal D_k)\in S) &= \int_{S} f_k(\bm X)\mathrm d\bm X = \int_{S\cap G_\varepsilon} f_k(\bm X)\,\mathrm d\bm X + \int_{S\cap G_\varepsilon^{\mathsf c}} f_k(\bm X)\mathrm d\bm X \notag\\
			&\leq \int_{S\cap G_\varepsilon} \mathrm e^{\varepsilon} f_k'(\bm X)\mathrm d\bm X + \mathbb P(M(\mathcal D_k)\in G_\varepsilon^{\mathsf c})\\
			&\leq \mathrm e^{\varepsilon} \mathbb P(M(\mathcal D_k')\in S) + \mathbb P(M(\mathcal D_k)\in G_\varepsilon^{\mathsf c}), \notag
		\end{align}
		where we used that conditional on $G_\varepsilon$, we have $f_k(\bm X) \leq \mathrm e^{\varepsilon} f_k'(\bm X)$. Hence, the desired $(\varepsilon,\delta)$-DP inequality will follow, once we control the probability of the bad event $G_\varepsilon^{\mathsf c}$. To this end, define the privacy-loss random variable
		\begin{align*}
			L(\bm X) &= \log \frac{f_k(\bm X)}{f_k'(\bm X)}
			= \frac{1}{2\sigma^2}\Bigl(\|\bm X - \bbm g(\mathcal D_k')\|_\F^2-\|\bm X-\bbm g(\mathcal D_k)\|_\F^2\Bigr)\\
			&= \frac{1}{\sigma^2}\Bigl\langle \bm X - \bbm g(\mathcal D_k'),\; \bbm g(\mathcal D_k)-\bbm g(\mathcal D_k')\Bigr\rangle
			- \frac{1}{2\sigma^2}\|\bbm g(\mathcal D_k) - \bbm g(\mathcal D_k')\|_\F^2.
		\end{align*}
		Then, substituting $\bm X=M(\mathcal D_k')$ into the explicit expression for the log-likelihood ratio yields
		\[
			L(M(\mathcal D_k')) = \frac{\langle M(\mathcal D_k')-\bbm g(\mathcal D_k'),\bbm g(\mathcal D_k)-\bbm g(\mathcal D_k')\rangle}{\sigma^2} - \frac{\|\bbm g(\mathcal D_k)-\bbm g(\mathcal D_k')\|_\F^2}{2\sigma^2}.
		\]
		Note that $\langle M(\mathcal D_k')-\bbm g(\mathcal D_k'),\bbm g(\mathcal D_k)-\bbm g(\mathcal D_k')\rangle$ is Gaussian with mean $0$ and variance $\sigma^2\|\bbm g(\mathcal D_k)-\bbm g(\mathcal D_k')\|_\F^2$, while the random variable $L(M(\mathcal D_k'))$ is Gaussian with mean $-\|\bbm g(\mathcal D_k)-\bbm g(\mathcal D_k')\|_\F^2/(2\sigma^2)$ and variance $\|\bbm g(\mathcal D_k)-\bbm g(\mathcal D_k')\|_\F^2/\sigma^2$. Let $\Delta_k = \|\bbm g(\mathcal D_k)-\bbm g(\mathcal D_k')\|_\F$. For the Gaussian mechanism, $L(M(\mathcal D_k)) \sim \mathcal N(-\Delta_k^2/2\sigma^2,\Delta_k^2/\sigma^2)$.
		Moreover, the function $\Delta_k\mapsto \varepsilon\sigma/\Delta_k+\Delta_k/2\sigma$ attains its minimum over $(0,\Delta]$ at $\Delta_k=\Delta$ whenever $\Delta\leq \sqrt{2\varepsilon}\sigma$. In particular, this condition is implied by \eqref{eq:gaussian-sigma-choice}. Therefore, the worst case over neighboring pairs is attained at $\Delta_k=\Delta$, and
		\[
			\sup_{\mathcal D_k\sim\mathcal D_k'}\mathbb P\!\left(L(M(\mathcal D_k))>\varepsilon\right) = \mathbb P\left( X > \frac{\varepsilon\sigma}{\Delta} + \frac{\Delta}{2\sigma} \right), \ \text{ with } \ X\sim\mathcal N(0,1).
		\]
		Applying the Gaussian tail bound $\mathbb P(X>t)\leq \frac12 e^{-t^2/2}$ for $t\geq0$ yields
		\[
			\sup_{\mathcal D_k\sim\mathcal D_k'}\mathbb P\!\left(L(M(\mathcal D_k))>\varepsilon\right) \leq \frac12 \exp\left(-\frac12\Bigl(\frac{\varepsilon\sigma}{\Delta} + \frac{\Delta}{2\sigma}\Bigr)^2\right).
		\]
		Consequently, to ensure $\mathbb P(L(M(\mathcal D_k))>\varepsilon)\leq \delta$, it suffices to bound the right-hand side by $\delta$, which is equivalent to $\varepsilon\sigma/\Delta+\Delta/2\sigma \geq \sqrt{2\log (1/2\delta)}$. A convenient sufficient condition is
		\begin{align}\label{eq:suff-cond}
			\frac{\varepsilon\sigma}{\Delta} \geq \sqrt{2\log \left(\frac{1.25}{\delta}\right)}.
		\end{align}
		Rearranging \eqref{eq:suff-cond} gives the exact requirement of $\sigma$ in \eqref{eq:gaussian-sigma-choice}. Therefore, under \eqref{eq:gaussian-sigma-choice},
		\begin{align}\label{eq:delta-DP}
			\mathbb P\bigl(L(M(\mathcal D_k))>\varepsilon\bigr) = \mathbb P\bigl(M(\mathcal D_k)\in G_\varepsilon^{\mathsf c}\bigr) \leq \delta.
		\end{align}
		Substituting this bound \eqref{eq:delta-DP} into \eqref{eq:ini-DP} yields, for any measurable $S\subseteq\mathbb R^{d\times pd}$, $\mathbb P(M(\mathcal D_k)\in S) \leq \mathrm e^\varepsilon \mathbb P\bigl(M(\mathcal D_k')\in S\bigr) + \delta$. Therefore, if \eqref{eq:gaussian-sigma-choice} holds, then
		\begin{align*}
			\mathbb P\bigl(M(\mathcal D_k)\in S\bigr) \leq \mathrm e^\varepsilon \mathbb P\bigl(M(\mathcal D_k')\in S\bigr) + \delta, \ \text{for any measurable } \ S\subset\mathbb R^{d\times pd}.
		\end{align*}
		This proves that the mechanism $M$ is $(\varepsilon,\delta)$-DP.
	\end{proof}

	\section{Technical Lemmas for Stationary VAR Processes}\label{append:technical lemmas}

	This section collects several technical lemmas for stationary VAR($p$) processes that are repeatedly used in the concentration analysis of the main results. Lemma~\ref{lemma:ar-polynomial} links the spectral behavior of $\bm\Psi(z)$ to that of $\mathcal A(z)$. Lemma~\ref{lemma:spectral-bounds} then transfers the frequency-domain bounds in Lemma~\ref{lemma:ar-polynomial} to the Toeplitz matrix generated by the VAR coefficients. Lemmas~\ref{lemma:subg-vector} and~\ref{lemma:stacked-subg-vector} establish sub-Gaussianity for the innovation vector and the stacked lag vector, respectively, which provides the probabilistic foundation for subsequent tail bounds. Lemmas~\ref{lemma:equivalence-representation-of-quadratic-form} and~\ref{lemma:equivalence-representation-of-uXEv} derive quadratic and bilinear innovation representations for $\bbm u^\top \bm X^\top \bm X \bbm u$ and $\bbm u^\top \bm X^\top \bm E\bbm v$. These representations allow us to invoke the infinite-dimensional Hanson-Wright inequality in Lemma~\ref{lemma:infinite-HW}, which in turn yields the high-probability quadratic-form bound in Lemma~\ref{lemma:HW-uXTXu} and the corresponding cross-term concentration result in Lemma~\ref{lemma:HW-uXEv}. Finally, Lemma~\ref{lem:Sigma-x-pd} establishes the positive definiteness and eigenvalue bounds of the stacked lag covariance matrix $\bbm\Sigma_x$, which is used to verify the restricted strong convexity-type conditions needed in the estimation analysis.

	\begin{lemma}\label{lemma:ar-polynomial}
		Let the VAR($p$) process be stationary with AR polynomial $\mathcal{A}(z) = \mathbf{I}_p - \sum_{k=1}^p \mathbf{A}_k z^k$ and its determinant $\det \mathcal{A}(z) \neq 0$ for all $|z| \leq 1$.
		Define
		\begin{align*}
			\bm{\Psi}(z) = \mathcal{A}(z)^{-1} = \sum_{j \geq 0} \bm{\Psi}_j z^j, \,
			\mu_{\min}(\mathcal{A}) = \min_{|z|=1} \lambda_{\min}\!\big(\mathcal{A}^\dagger(z) \mathcal{A}(z)\big), \,
			\mu_{\max}(\mathcal{A}) = \max_{|z|=1} \lambda_{\max}\!\big(\mathcal{A}^\dagger(z) \mathcal{A}(z)\big).
		\end{align*}
		Then, we have
		\[
			\operatorname*{sup}_{|z|=1} \lambda_{\max}\!\big(\bm{\Psi}^\dagger(z) \bm{\Psi}(z)\big)
			= \frac{1}{\mu_{\min}(\mathcal{A})}, \quad
			\operatorname*{inf}_{|z|=1} \lambda_{\min}\!\big(\bm{\Psi}^\dagger(z) \bm{\Psi}(z)\big)
			= \frac{1}{\mu_{\max}(\mathcal{A})}.
		\]
	\end{lemma}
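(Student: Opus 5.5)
The plan is to exploit the fact that $\bm\Psi(z)=\mathcal A(z)^{-1}$ pointwise on the unit circle, so the eigenvalues of $\bm\Psi^\dagger(z)\bm\Psi(z)$ are exactly the reciprocals of those of $\mathcal A^\dagger(z)\mathcal A(z)$.

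First, I will verify that the expansion $\bm\Psi(z)=\sum_{j\geq 0}\bm\Psi_j z^j$ is valid on $|z|=1$. Since $\det\mathcal A(z)\neq 0$ on the closed unit disk and $\det\mathcal A$ is a polynomial, it has no zeros on a slightly larger disk $|z|<\varrho_*$ with $\varrho_*>1$. By Cramer's rule, $\mathcal A(z)^{-1}=\operatorname{adj}\mathcal A(z)/\det\mathcal A(z)$ is therefore analytic on that disk. Its power series thus converges absolutely on $|z|=1$ and equals the matrix inverse there.

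Second, fix $|z|=1$ and note that $\mathcal A(z)$ is invertible. Then
\[
\bm\Psi^\dagger(z)\bm\Psi(z)=\mathcal A(z)^{-1}\bigl(\mathcal A(z)^{-1}\bigr)^\dagger=\bigl(\mathcal A(z)\mathcal A^\dagger(z)\bigr)^{-1}.
\]
The Hermitian positive definite matrices $\mathcal A\mathcal A^\dagger$ and $\mathcal A^\dagger\mathcal A$ have the same eigenvalues, since $AB$ and $BA$ share spectra and both are invertible. It follows that
\[
\lambda_{\max}\bigl(\bm\Psi^\dagger(z)\bm\Psi(z)\bigr)=1/\lambda_{\min}\bigl(\mathcal A^\dagger(z)\mathcal A(z)\bigr),\qquad \lambda_{\min}\bigl(\bm\Psi^\dagger(z)\bm\Psi(z)\bigr)=1/\lambda_{\max}\bigl(\mathcal A^\dagger(z)\mathcal A(z)\bigr).
\]
Equivalently, these are the squared extreme singular values of $\mathcal A(z)^{-1}$.

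Third, I take the supremum and infimum over $|z|=1$. Because $t\mapsto 1/t$ is decreasing on $(0,\infty)$, the supremum of $1/\lambda_{\min}$ equals $1/\min_{|z|=1}\lambda_{\min}=1/\mu_{\min}(\mathcal A)$, and similarly the infimum of $1/\lambda_{\max}$ equals $1/\mu_{\max}(\mathcal A)$. The min and max in the definitions of $\mu_{\min},\mu_{\max}$ are attained, because the extreme eigenvalues are continuous in $z$ on the compact circle. Moreover $\mu_{\min}(\mathcal A)>0$ since $\mathcal A(z)$ is nonsingular there, so no division by zero occurs.

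The only mildly delicate step is the first: justifying that the power-series $\bm\Psi(z)$ coincides with $\mathcal A(z)^{-1}$ on the boundary circle, which the analyticity argument resolves. The rest is linear algebra.
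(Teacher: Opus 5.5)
Your proof is correct and follows the paper's argument: $\mathcal{A}(z)$ is invertible on $|z|=1$, $\bm{\Psi}^\dagger(z)\bm{\Psi}(z)=(\mathcal{A}(z)\mathcal{A}^\dagger(z))^{-1}$, the spectra of $\mathcal{A}\mathcal{A}^\dagger$ and $\mathcal{A}^\dagger\mathcal{A}$ coincide, and the sup/inf pass through the decreasing map $t\mapsto 1/t$. Your analyticity check that the power series equals $\mathcal{A}(z)^{-1}$ on the circle is an extra step; the paper skips it by taking $\bm{\Psi}(z)=\mathcal{A}(z)^{-1}$ as the definition.

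There is one harmless slip in your display. The middle term $\mathcal{A}(z)^{-1}(\mathcal{A}(z)^{-1})^\dagger$ is $\bm{\Psi}\bm{\Psi}^\dagger=(\mathcal{A}^\dagger\mathcal{A})^{-1}$, not $\bm{\Psi}^\dagger\bm{\Psi}$. The chain should read $(\mathcal{A}^{-1})^\dagger\mathcal{A}^{-1}=(\mathcal{A}^\dagger)^{-1}\mathcal{A}^{-1}=(\mathcal{A}\mathcal{A}^\dagger)^{-1}$. Since all these matrices share eigenvalues, the conclusion is unaffected.
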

	\begin{proof}[\textbf{Proof of Lemma \ref{lemma:ar-polynomial}}]
		For $|z|=1$, $\mathcal{A}(z)$ is invertible by stationarity. Using the identity $(\mathbf{B}^{-1})^\dagger=(\mathbf{B}^\dagger)^{-1}$ for any invertible matrix $\mathbf{B}$, we have
		\[
			\bm{\Psi}^\dagger(z)\bm{\Psi}(z)
			= \big(\mathcal{A}(z)^{-1}\big)^\dagger \mathcal{A}(z)^{-1}
			= \big(\mathcal{A}^\dagger(z)\big)^{-1}\mathcal{A}(z)^{-1}
			= \big(\mathcal{A}(z)\mathcal{A}^\dagger(z)\big)^{-1}.
		\]
		Note that $\big(\mathcal{A}(z)\mathcal{A}^\dagger(z)\big)^{-1}$ and $\big(\mathcal{A}^\dagger(z)\mathcal{A}(z)\big)^{-1}$ have identical eigenvalues, i.e., 
		\[
			\lambda_{\max}\big((\mathcal{A}(z)\mathcal{A}^\dagger(z))^{-1}\big)
			= \lambda_{\max}\big((\mathcal{A}^\dagger(z)\mathcal{A}(z))^{-1}\big), \quad
			\lambda_{\min}\big((\mathcal{A}(z)\mathcal{A}^\dagger(z))^{-1}\big)
			= \lambda_{\min}\big((\mathcal{A}^\dagger(z)\mathcal{A}(z))^{-1}\big).
		\]
		Therefore, for any fixed $|z|=1$, we have
		\begin{align*}
			\lambda_{\max}\big(\bm{\Psi}^\dagger(z)\bm{\Psi}(z)\big)
			= \lambda_{\max}\big((\mathcal{A}(z)\mathcal{A}^\dagger(z))^{-1}\big)
			= \lambda_{\max}\big((\mathcal{A}^\dagger(z)\mathcal{A}(z))^{-1}\big)
			= \frac{1}{\lambda_{\min}\big(\mathcal{A}^\dagger(z)\mathcal{A}(z)\big)}
		\end{align*}
		and
		\begin{align*}
			\lambda_{\min}\big(\bm{\Psi}^\dagger(z)\bm{\Psi}(z)\big)
			= \lambda_{\min}\big((\mathcal{A}(z)\mathcal{A}^\dagger(z))^{-1}\big)
			= \lambda_{\min}\big((\mathcal{A}^\dagger(z)\mathcal{A}(z))^{-1}\big)
			= \frac{1}{\lambda_{\max}\big(\mathcal{A}^\dagger(z)\mathcal{A}(z)\big)}.
		\end{align*}
		The last equalities in the above two hold owing to $\lambda_{\max}(\mathbf{B}^{-1}) = 1/\lambda_{\min}(\mathbf{B})$ and $\lambda_{\min}(\mathbf{B}^{-1}) = 1/\lambda_{\max}(\mathbf{B})$ for any invertible matrix $\mathbf{B}$. Taking $\sup_{|z|=1}$ and $\inf_{|z|=1}$ on both sides yields the stated identities with $\mu_{\min}(\mathcal{A})$ and $\mu_{\max}(\mathcal{A})$.
	\end{proof}

	\begin{lemma}\label{lemma:spectral-bounds}
		Let $\bm{\Psi}(z) = \sum_{j \geq 0} \bm{\Psi}_j z^j$ and define the semi-infinite Toeplitz matrix $\bm{P}_T$ as
		\begin{align*}
			\bm{P}_T=
			\begin{bmatrix}
				\bm{\Psi}_\mathbf{0} & \bm{\Psi}_1 & \bm{\Psi}_2 & \bm{\Psi}_3 & \ldots & \bm{\Psi}_{T-1}&\ldots\\
				\bm{O} & \bm{\Psi}_\mathbf{0} & \bm{\Psi}_1 & \bm{\Psi}_2 & \ldots & \bm{\Psi}_{T-2}&\ldots\\
				\bm{O} & \bm{O} & \bm{\Psi}_0 & \bm{\Psi}_1 & \ldots & \bm{\Psi}_{T-3}&\ldots\\
				\vdots&\vdots&\vdots&\vdots&\ddots&\vdots&\ldots\\
				\bm{O}&\bm{O}&\bm{O}&\bm{O}&\ldots&\bm{\Psi}_0&\ldots
			\end{bmatrix}\in \mathbb{R}^{dT \times \infty}.
		\end{align*}
		Then, for any $T \geq 1$, the following two side spectral bounds hold:
		\begin{align*}
			\inf_{|z|=1}\lambda_{\min}\big(\bm{\Psi}^\dagger(z)\bm{\Psi}(z)\big) \leq \lambda_{\min}(\bm{P}_T \bm{P}_T^\top) \leq \lambda_{\max}(\bm{P}_T \bm{P}_T^\top) \leq \sup_{|z|=1}\lambda_{\max}\big(\bm{\Psi}^\dagger(z)\bm{\Psi}(z)\big).
		\end{align*}
	\end{lemma}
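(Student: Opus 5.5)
The natural route is to identify $\bm P_T\bm P_T^\top$, or more conveniently the quadratic form $\bbm w^\top\bm P_T\bm P_T^\top\bbm w=\|\bm P_T^\top\bbm w\|_2^2$, with a norm computed in the frequency domain, and then apply Parseval. The key observation is that $\bm P_T$ is a finite block-row truncation of the bi-infinite (or semi-infinite) block Toeplitz operator associated with the symbol $\bm\Psi(z)$. Fix $\bbm w=(\bbm w_1^\top,\ldots,\bbm w_T^\top)^\top\in\mathbb R^{dT}$ with $\|\bbm w\|_2=1$. Then the $\ell$-th block of $\bm P_T^\top\bbm w\in\ell_2^+$ is $\sum_{i=1}^{T}\bm\Psi_{\ell-i+1}^\top\bbm w_i$, with $\bm\Psi_j=\bm O$ for $j<0$. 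This is exactly the coefficient sequence of the product of the matrix power series $\bm\Psi(z)^\top$ with the vector polynomial $\bbm w(z)=\sum_{i=1}^T\bbm w_i z^{i-1}$, up to the convention $z\leftrightarrow z^{-1}$ and a shift. Since $\|\bm\Psi_j\|_{\op}$ decays geometrically under stationarity, cf. \eqref{eq:Psi-op-bound}, all series converge absolutely on $|z|=1$. Absolute convergence justifies the interchange of summations and ensures that $\bm P_T^\top$ is a bounded operator into $\ell_2^+$.

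I would therefore write
\[
\|\bm P_T^\top\bbm w\|_2^2=\frac{1}{2\pi}\int_{-\pi}^{\pi}\bbm w(e^{i\theta})^\dagger\,\bm\Psi(e^{i\theta})\,\bm\Psi^\dagger(e^{i\theta})\,\bbm w(e^{i\theta})\,d\theta ,
\]
using Parseval's identity for square-summable vector sequences. To get the transposes and conjugations right, I would first verify this identity on the truncation to finitely many lags and then pass to the limit. Bounding the integrand pointwise by $\lambda_{\max}$, respectively $\lambda_{\min}$, of $\bm\Psi\bm\Psi^\dagger$ at $z=e^{i\theta}$ gives $\|\bm P_T^\top\bbm w\|_2^2$ between $\inf_{|z|=1}\lambda_{\min}$ and $\sup_{|z|=1}\lambda_{\max}$, multiplied by $\frac{1}{2\pi}\int\|\bbm w(e^{i\theta})\|_2^2d\theta=\|\bbm w\|_2^2=1$, again by Parseval. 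The eigenvalues of $\bm\Psi\bm\Psi^\dagger$ and $\bm\Psi^\dagger\bm\Psi$ coincide for square matrices, so these are the quantities appearing in the statement. Taking the infimum and supremum over unit $\bbm w$ then yields the two outer inequalities; the middle inequality is trivial.

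The main obstacle is bookkeeping rather than analysis. Concretely, I must match the index pattern of $\bm P_T$, where row block $i$ contains $\bm\Psi_{j-i}$ in column block $j$, to a genuine convolution. I also need to confirm that the lower bound survives truncation to finitely many rows. It does, because truncating rows of $\bm P_T$ (equivalently, restricting $\bbm w$ to a finite block support) does not truncate columns: the image $\bm P_T^\top\bbm w$ is the full infinite sequence, so the Parseval identity is exact and no boundary terms are lost. A sign or conjugation slip only replaces $\bm\Psi(z)$ by $\bm\Psi(\bar z)^\top$, which has the same spectrum on the unit circle. I would note this explicitly so that the final bounds do not depend on the convention chosen.
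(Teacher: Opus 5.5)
Your proposal is correct and follows essentially the same route as the paper. Write $\bbm w^\top\bm P_T\bm P_T^\top\bbm w=\|\bm P_T^\top\bbm w\|_2^2$, recognize $\bm P_T^\top\bbm w$ as the untruncated convolution with generating function $\bm\Psi^\top(e^{-i\theta})$ times the polynomial of $\bbm w$, apply Parseval, bound the integrand pointwise by the extreme eigenvalues (which coincide with those of $\bm\Psi^\dagger\bm\Psi$ on the unit circle), and use Parseval again to recover $\|\bbm w\|_2^2$; your remarks on absolute summability and the $z\leftrightarrow z^{-1}$ convention only make explicit what the paper passes over quickly.
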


	\begin{proof}[\textbf{Proof of Lemma \ref{lemma:spectral-bounds}}]
		Take any $\bbm{u} = (\bbm{u}_0^\top, \bbm{u}_1^\top, \ldots, \bbm{u}_{T-1}^\top) \in \mathbb{R}^{dT}$ and define 
		\[
			\bm{U}_T(\theta) = \sum_{t=0}^{T-1} \bbm{u}_t e^{-it\theta} \in \mathbb{C}^d.
		\]
		Set $\bbm{c} = \bm{P}_T^\top \bbm{u} \in \ell_+^2$, i.e., $\bbm{c} = (\bbm{c}_0^\top, \bbm{c}_1^\top, \ldots)^\top$. Then, by the Toeplitz lower triangular structure of $\bm{P}_T$, for any $m \in \mathbb N$, we have $\bbm{c}_m = \sum_{t=0}^{T-1} \bm{\Psi}_{m-t} \bbm{u}_t$, where $\bm{\Psi}_j = \bm{O}$ for $j<0$ is assumed here for notational convenience. Set $\bm{C}(\theta)=\sum_{m\geq0}\bbm c_me^{-im\theta}$. Since the inner summation for $t$ is finite, it is legal to interchange the order of summations, yielding
		\begin{align*}
			\bm C(\theta) &=\sum_{m\geq 0}\bbm c_m e^{-im\theta} = \sum_{m\geq 0}\sum_{t=1}^{T-1}\bbm\Psi_{m-k}^\top \bbm u_k e^{-im\theta}= \sum_{t=1}^{T-1}\sum_{r\geq 0}\bbm\Psi_{j}^\top \bbm u_k e^{-i(r+t)\theta}\\
			&= \left(\sum_{r\geq 0}\bbm\Psi_r^\top e^{-ir\theta}\right)\left(\sum_{t=1}^{T-1}\bbm u_t e^{-it\theta}\right) =\bbm\Psi^\top(e^{-i\theta})\bm U_T(\theta) \in \mathbb{C}^d.
		\end{align*}
		Using the Parseval's identity for the elements of each $\bbm{c}_m$, we have
		\begin{align*}
			\|\bm{P}_T^\top \bbm{u}\|_2^2 = \|\bbm{c}\|_2^2 = \sum_{m\geq0} \|\bbm{c}_m\|_2^2 = \frac{1}{2\pi} \int_{-\pi}^{\pi} \|\bm{C}(\theta)\|_2^2 d\theta = \frac{1}{2\pi} \int_{-\pi}^{\pi} \bm{U}_T^\dagger(\theta) (\bm{\Psi}^\top(e^{-i\theta}))^\dagger \bm{\Psi}^\top(e^{-i\theta}) \bm{U}_T(\theta) d\theta.
		\end{align*}
		For the symmetric matrix $(\bm{\Psi}^\top(e^{-i\theta}))^\dagger \bm{\Psi}^\top(e^{-i\theta})$, applying the Rayleigh-Ritz theorem yields
		\begin{align*}
			\lambda_{\min}\big((\bm{\Psi}^\top(e^{-i\theta}))^\dagger \bm{\Psi}^\top(e^{-i\theta})\big) \|\bm{U}_T(\theta)\|_2^2
			&\leq \bm{U}_T^\dagger(\theta) (\bm{\Psi}^\top(e^{-i\theta}))^\dagger \bm{\Psi}^\top(e^{-i\theta}) \bm{U}_T(\theta)\\
			&\leq \lambda_{\max}\big((\bm{\Psi}^\top(e^{-i\theta}))^\dagger \bm{\Psi}^\top(e^{-i\theta})\big) \|\bm{U}_T(\theta)\|_2^2.
		\end{align*}
		Note that the eigenvalues of $(\bm{\Psi}^\top(e^{-i\theta}))^\dagger \bm{\Psi}^\top(e^{-i\theta})$ are identical to those of $\bm{\Psi}^\dagger(e^{-i\theta}) \bm{\Psi}(e^{-i\theta})$. Thus, integrating the above inequality with respect to $\theta$ and taking the supremum and infimum over $|z|=1$ of $\lambda_{\max}\big(\bm{\Psi}^\dagger(z)\bm{\Psi}(z)\big)$ and $\lambda_{\min}\big(\bm{\Psi}^\dagger(z)\bm{\Psi}(z)\big)$ respectively yields
		\begin{align*}
			\inf_{|z|=1}\lambda_{\min}\big(\bm{\Psi}^\dagger(z)\bm{\Psi}(z)\big) \frac{1}{2\pi} \int_{-\pi}^{\pi} \|\bm{U}_T(\theta)\|_2^2 d\theta &\leq \frac{1}{2\pi} \int_{-\pi}^{\pi} \bm{U}_T^\dagger(\theta) (\bm{\Psi}^\top(e^{-i\theta}))^\dagger \bm{\Psi}^\top(e^{-i\theta}) \bm{U}_T(\theta) d\theta \\
			&\leq \sup_{|z|=1}\lambda_{\max}\big(\bm{\Psi}^\dagger(z)\bm{\Psi}(z)\big) \frac{1}{2\pi} \int_{-\pi}^{\pi} \|\bm{U}_T(\theta)\|_2^2 d\theta.
		\end{align*}
		For the integral term in the above inequality, by the Parseval's identity again, we have
		\begin{align*}
			\frac{1}{2\pi} \int_{-\pi}^{\pi} \|\bm{U}_T(\theta)\|_2^2 d\theta = \sum_{t=0}^{T-1} \|\bbm{u}_t\|_2^2 = \|\bbm{u}\|_2^2.
		\end{align*}
		Combining the above results, we have
		\begin{align*}
			\inf_{|z|=1}\lambda_{\min}\big(\bm{\Psi}^\dagger(z)\bm{\Psi}(z)\big) \|\bbm{u}\|_2^2 &\leq \bbm{u}^\top \bm{P}_T \bm{P}_T^\top \bbm{u} \leq \sup_{|z|=1}\lambda_{\max}\big(\bm{\Psi}^\dagger(z)\bm{\Psi}(z)\big) \|\bbm{u}\|_2^2.
		\end{align*}
		Since the above inequality holds for every $\bbm u\in\mathbb R^{pT}$, it yields a uniform lower and upper bound for the quadratic form associated with $\bm P_T\bm P_T^\top$. By the variational characterization of eigenvalues, the desired spectral bounds follow immediately.
	\end{proof}

	\begin{lemma}\label{lemma:subg-vector}
		Suppose $\{\xi_{it}\}_{1\leq i \leq d}$ are mutually independent and $\mathbb{E}e^{\mu\xi_{it}} \leq e^{\mu^2 \sigma^2/2}$ for all $\mu \in \mathbb{R}$. Then, $\bbm{\xi}_t$ is a sub-Gaussian random vector with 
		\[
			K_\xi = \sup_{\boldsymbol{u}\neq\mathbf{0}}\frac{\|\langle\bbm{u},\bbm{\xi}_t\rangle\|_{\psi_2}}{\|\bbm{u}\|_2} \leq C \sigma
		\]
		for some constant $C>0$. In addition, for $\bbm{\epsilon}_t=\bbm{\Sigma}_\epsilon^{1/2}\bbm{\xi}_t$, 
		\[
			K_\epsilon =\sup_{\boldsymbol{u}\neq\mathbf{0}}\frac{\|\langle\bbm{u},\bbm{\epsilon}_t\rangle\|_{\psi_2}}{\|\bbm{u}\|_2} \leq C \sigma \sqrt{\lambda_{\max}(\bbm{\Sigma}_\epsilon)}.
		\]
	\end{lemma}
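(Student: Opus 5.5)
The plan is to establish the two claims separately, first for the standardized vector $\bbm\xi_t$ and then transfer to $\bbm\epsilon_t$ by a linear change of direction. For the first claim, fix a unit vector $\bbm u\in\mathbb S^{d-1}$ and consider the scalar $\langle\bbm u,\bbm\xi_t\rangle=\sum_{i=1}^d u_i\xi_{it}$. Since the coordinates $\xi_{it}$ are mutually independent, the moment generating function factorizes:
\[
\mathbb E\exp\bigl(\mu\langle\bbm u,\bbm\xi_t\rangle\bigr)=\prod_{i=1}^d\mathbb E e^{\mu u_i\xi_{it}}\leq\prod_{i=1}^d e^{\mu^2u_i^2\sigma^2/2}=e^{\mu^2\sigma^2\|\bbm u\|_2^2/2}=e^{\mu^2\sigma^2/2},
\]
for every $\mu\in\mathbb R$. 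Thus $\langle\bbm u,\bbm\xi_t\rangle$ is $\sigma^2$-sub-Gaussian in the mgf sense, uniformly over unit $\bbm u$.

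Next, convert the mgf bound into an Orlicz-norm bound. By the standard equivalence of sub-Gaussian characterizations (e.g., Proposition~2.5.2 of \cite{vershynin2018high}), a mean-zero variable $X$ with $\mathbb E e^{\mu X}\leq e^{\mu^2\sigma^2/2}$ satisfies the Chernoff tail $\mathbb P(|X|\geq s)\leq 2e^{-s^2/(2\sigma^2)}$. Integrating this tail then gives $\|X\|_{\psi_2}\leq C\sigma$ for an absolute constant $C$. Applying this to $X=\langle\bbm u,\bbm\xi_t\rangle$ and using homogeneity in $\bbm u$ yields $\|\langle\bbm u,\bbm\xi_t\rangle\|_{\psi_2}\leq C\sigma\|\bbm u\|_2$ for all $\bbm u\neq\bm 0$. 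Taking the supremum gives $K_\xi\leq C\sigma$.

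For the second claim, write $\langle\bbm u,\bbm\epsilon_t\rangle=\langle\bbm\Sigma_\epsilon^{1/2}\bbm u,\bbm\xi_t\rangle$, using the symmetry of $\bbm\Sigma_\epsilon^{1/2}$. Applying the first part with direction $\bbm w=\bbm\Sigma_\epsilon^{1/2}\bbm u$ gives $\|\langle\bbm u,\bbm\epsilon_t\rangle\|_{\psi_2}\leq C\sigma\|\bbm w\|_2$. Since $\|\bbm w\|_2\leq\|\bbm\Sigma_\epsilon^{1/2}\|_{\op}\|\bbm u\|_2=\sqrt{\lambda_{\max}(\bbm\Sigma_\epsilon)}\,\|\bbm u\|_2$, we obtain $K_\epsilon\leq C\sigma\sqrt{\lambda_{\max}(\bbm\Sigma_\epsilon)}$.

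No step is a genuine obstacle; the only point that needs care is the mgf-to-$\psi_2$ conversion. Here I would invoke the standard equivalence directly and state that the constant $C$ is absolute, so it does not depend on $d$, $\bbm u$, or $\sigma$.
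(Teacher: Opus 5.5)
Your proposal is correct and takes essentially the same route as the paper. The paper also factorizes the moment generating function using independence of the coordinates to get the variance proxy $\sigma^2\|\bbm u\|_2^2$, and converts this to $\|\langle\bbm u,\bbm\xi_t\rangle\|_{\psi_2}\leq C\sigma\|\bbm u\|_2$ via the standard sub-Gaussian equivalence in \cite{vershynin2018high}; it then handles $\bbm\epsilon_t$ by writing $\langle\bbm u,\bbm\epsilon_t\rangle=\langle\bbm\Sigma_\epsilon^{1/2}\bbm u,\bbm\xi_t\rangle$ and bounding $\|\bbm\Sigma_\epsilon^{1/2}\bbm u\|_2\leq\sqrt{\lambda_{\max}(\bbm\Sigma_\epsilon)}\,\|\bbm u\|_2$, exactly as you do.
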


	\begin{proof}[\textbf{Proof of Lemma \ref{lemma:subg-vector}}]
		Fix any $\bbm{u} \in \mathbb{R}^d$ and denote the scalar projection $S=\langle\bbm{u}, \bbm{\xi}_t\rangle = \sum_{i=1}^d u_i \xi_{it}$. Then, by mutual independence of $\{\xi_{it}\}_{1\leq i \leq d}$, for any $\mu \in \mathbb{R}$, we have
		\begin{align*}
			\mathbb{E}e^{\mu S} = \mathbb{E}e^{\mu \sum_{i=1}^d u_i \xi_{it}} = \prod_{i=1}^d \mathbb{E}e^{\mu u_i \xi_{it}} \leq \prod_{i=1}^d e^{\mu^2 u_i^2 \sigma^2/2} = e^{\frac{\mu^2 \sigma^2}{2} \|\boldsymbol{u}\|_2^2}.
		\end{align*}
		Hence $S$ is sub-Gaussian with variance proxy $\sigma^2 \|\bbm{u}\|_2^2$. By the equivalence between the variance proxy of a sub-Gaussian random variable and the Orlicz $\psi_2$ norm, e.g., \cite{vershynin2018high}, Section 2.5.2, there exists a constant $C>0$ such that $\|\langle\bbm{u}, \bbm{\xi}_t\rangle\|_{\psi_2} \leq C \sigma \|\bbm{u}\|_2$. Taking the supremum over $\bbm{u} \neq \mathbf{0}$ yields the vector sub-Gaussian constant 
		\[
			K_\xi = \sup_{\boldsymbol{u}\neq\mathbf{0}}\frac{\|\langle\bbm{u},\bbm{\xi}_t\rangle\|_{\psi_2}}{\|\bbm{u}\|_2} \leq C \sigma.
		\] 
		Finally, note that for $\bbm{\epsilon}_t=\bbm{\Sigma}_\epsilon^{1/2}\bbm{\xi}_t$ and any $\bbm{u}$,
		\begin{align*}
			\|\langle \bbm{u},\bbm{\epsilon}_t\rangle\|_{\psi_2}=\|\langle \bbm{\Sigma}_\epsilon^{1/2}\bbm{u},\bbm{\xi}_t\rangle\|_{\psi_2} \leq C\sigma\|\bbm{\Sigma}_\epsilon^{1/2}\bbm{u}\|_2 \leq C\sigma\sqrt{\lambda_{\max}(\bbm{\Sigma}_\epsilon)}\|\bbm{u}\|_2,
		\end{align*}
		Consequently, we have
		\begin{equation*}
			K_\epsilon =\sup_{\bbm{u}\neq\mathbf{0}}\frac{\|\langle\bbm{u},\bbm{\epsilon}_t\rangle\|_{\psi_2}}{\|\bbm{u}\|_2} \leq C \sigma \sqrt{\lambda_{\max}(\bbm{\Sigma}_\epsilon)}. \qedhere
		\end{equation*}
	\end{proof}

	\begin{lemma}\label{lemma:stacked-subg-vector}
		For the stationary VAR($p$) process with its vector moving average representation $\bbm{y}_t = \sum_{j=0}^\infty \bbm{\Psi}_j \bbm{\epsilon}_{t-j}$, suppose that $\{\bbm{\epsilon}_t\}$ are $i.i.d.$ sub-Gaussian random vectors with the vector sub-Gaussian constant $K_\epsilon$, i.e., $K_\epsilon =\sup_{\boldsymbol{u}\neq\mathbf{0}}\|\langle\bbm{u},\bbm{\epsilon}_t\rangle\|_{\psi_2}/\|\bbm{u}\|_2$. Then, the stacked vector $\bbm x_t = (\bbm{y}_{t-1}^\top, \bbm{y}_{t-2}^\top, \ldots, \bbm{y}_{t-p}^\top)^\top$ is also sub-Gaussian with
		\begin{align*}
			\|\langle\bbm{u},\bbm x_t\rangle\|_{\psi_2} \leq \frac{CK_\epsilon}{\sqrt{\mu_{\min}(\mathcal{A})}} \|\bbm{u}\|_2
		\end{align*}
		for any $\bbm{u} \in \mathbb{R}^{pd}$, where $\mu_{\min}(\mathcal{A})$ is defined in Lemma \ref{lemma:ar-polynomial}. Moreover, if $\bbm{\epsilon}_t=\bbm{\Sigma}_\epsilon^{1/2}\bbm{\xi}_t$ with $\bbm{\xi}_t$ defined in Lemma \ref{lemma:subg-vector}, then
		\begin{align*}
			\|\langle\bbm{u},\bbm x_t\rangle\|_{\psi_2} \leq \frac{C \sigma \sqrt{\lambda_{\max}(\bbm{\Sigma}_\epsilon)}}{\sqrt{\mu_{\min}(\mathcal{A})}} \|\bbm{u}\|_2.
		\end{align*}
	\end{lemma}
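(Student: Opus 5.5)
The plan is to reduce the stacked lag vector $\bbm x_t$ to a linear image of the innovation sequence and then use the spectral bounds already established for the Toeplitz operator. From the moving average representation $\bbm y_{t-j}=\sum_{h\geq0}\bbm\Psi_h\bbm\epsilon_{t-j-h}$, I will write $\bbm x_t=\bm P_p\,\bbm e_{t-1}$. Here $\bbm e_{t-1}=(\bbm\epsilon_{t-1}^\top,\bbm\epsilon_{t-2}^\top,\ldots)^\top\in\ell_2^+$ stacks the past innovations, and $\bm P_p$ is the semi-infinite block Toeplitz matrix of Lemma~\ref{lemma:spectral-bounds} with $T=p$. Row block $j$ of $\bm P_p$ is $(\bm O,\ldots,\bm O,\bm\Psi_0,\bm\Psi_1,\ldots)$ with $\bm\Psi_0$ in position $j$. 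Convergence of the series in $L^2$, and almost surely, follows from the geometric decay of $\|\bm\Psi_h\|_{\op}$ under stationarity.

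Fix $\bbm u\in\mathbb R^{pd}$ and set $\bbm c=\bm P_p^\top\bbm u=(\bbm c_0^\top,\bbm c_1^\top,\ldots)^\top$. Then $\langle\bbm u,\bbm x_t\rangle=\sum_{m\geq0}\langle\bbm c_m,\bbm\epsilon_{t-1-m}\rangle$, which is a sum of independent terms. I will control its $\psi_2$ norm through the moment generating function. Each term satisfies $\mathbb E e^{\mu\langle\bbm c_m,\bbm\epsilon\rangle}\leq e^{C\mu^2K_\epsilon^2\|\bbm c_m\|_2^2}$, using the sub-Gaussian/mgf equivalence and mean zero. By independence the mgf of the partial sums is bounded by $\exp(C\mu^2K_\epsilon^2\sum_m\|\bbm c_m\|_2^2)$. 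Passing to the limit by Fatou's lemma gives $\|\langle\bbm u,\bbm x_t\rangle\|_{\psi_2}\leq CK_\epsilon\|\bbm c\|_2$. This squared-norm additivity is the step to handle carefully, since a naive triangle inequality would give $\sum_m\|\bbm c_m\|_2$ instead and lose the bound.

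Finally, $\|\bbm c\|_2^2=\bbm u^\top\bm P_p\bm P_p^\top\bbm u\leq\lambda_{\max}(\bm P_p\bm P_p^\top)\|\bbm u\|_2^2$. Lemma~\ref{lemma:spectral-bounds} bounds this by $\sup_{|z|=1}\lambda_{\max}(\bm\Psi^\dagger(z)\bm\Psi(z))\|\bbm u\|_2^2$, and Lemma~\ref{lemma:ar-polynomial} identifies that supremum as $1/\mu_{\min}(\mathcal A)$. Hence $\|\langle\bbm u,\bbm x_t\rangle\|_{\psi_2}\leq CK_\epsilon\|\bbm u\|_2/\sqrt{\mu_{\min}(\mathcal A)}$.

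The second claim follows by substituting $K_\epsilon\leq C\sigma\sqrt{\lambda_{\max}(\bbm\Sigma_\epsilon)}$ from Lemma~\ref{lemma:subg-vector}. The main obstacle is the infinite-sum mgf step. I will handle it by truncating at $M$ terms, applying the finite-independence bound uniformly in $M$, and letting $M\to\infty$ using the $L^2$ (hence in-distribution) convergence together with Fatou's lemma.
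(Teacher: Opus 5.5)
Your proof is correct and is essentially the paper's argument: the paper also writes $\langle\bbm u,\bbm x_t\rangle=\sum_{m\geq0}\langle\bbm c_m,\bbm\epsilon_{t-m}\rangle$ as a sum of independent terms and bounds its $\psi_2$ norm by $CK_\epsilon(\sum_m\|\bbm c_m\|_2^2)^{1/2}$, then shows $\sum_m\|\bbm c_m\|_2^2\leq\|\bbm u\|_2^2/\mu_{\min}(\mathcal A)$; for the $\psi_2$ step it cites Vershynin's Proposition~2.6.1 and only asserts the extension to infinite sums, which your truncated-mgf/Fatou argument actually justifies. The only other difference is that the paper redoes the Parseval computation for $\bm C(\theta)=\bbm\Psi^\top(e^{-i\theta})\bm U_p(\theta)$ inline, whereas you obtain the same bound by invoking Lemma~\ref{lemma:spectral-bounds} with $T=p$, whose proof is that same computation.
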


	\begin{proof}[\textbf{Proof of Lemma \ref{lemma:stacked-subg-vector}}]
		For any $\bbm{u} = (\bbm{u}_1^\top, \bbm{u}_2^\top, \ldots, \bbm{u}_{p}^\top)^\top \in \mathbb{R}^{pd}$ with each $\bbm{u}_k \in \mathbb{R}^d$, we have
		\begin{align*}
			\langle\bbm{u},\bbm x_t\rangle &= \sum_{k=1}^{p} \langle\bbm{u}_k, \bbm{y}_{t-k}\rangle = \sum_{k=1}^{p}\sum_{j\geq0}\langle\bbm{u}_k, \bbm{\Psi}_j \bbm{\epsilon}_{t-k-j}\rangle = \sum_{k=1}^{p}\sum_{j\geq0}\langle\bbm{\Psi}_j^\top\bbm{u}_k, \bbm{\epsilon}_{t-k-j}\rangle\\
			&= \sum_{k=1}^{p} \sum_{m\geq k} \langle\bbm{\Psi}_{m-k}^\top\bbm{u}_k, \bbm{\epsilon}_{t-m}\rangle = \sum_{k=1}^{p} \sum_{m\geq 0} \langle\bbm{\Psi}_{m-k}^\top\bbm{u}_k, \bbm{\epsilon}_{t-m}\rangle.
		\end{align*}
		The second to last equality holds by changing the index $m=k+j$ and the last equality holds since $\bbm{\Psi}_{r}=\mathbf{0}$ for $r < 0$. For each fixed $m \in \mathbb N$, define $\bbm{c}_m = \sum_{k=1}^{p} \bbm{\Psi}_{m-k}^\top \bbm{u}_k$.
		Then, we have $\langle\bbm{u},\bbm x_t\rangle = \sum_{m\geq0} \langle\bbm{c}_m, \bbm{\epsilon}_{t-m}\rangle$ with $\langle\bbm{c}_m, \bbm{\epsilon}_{t-m}\rangle$ being mutually independent across $m$ and $\|\langle\bbm{c}_m, \bbm{\epsilon}_{t-m}\rangle\|_{\psi_2} \leq K_\epsilon \|\bbm{c}_m\|_2$. By the union bound for sums of independent sub-Gaussian random variables, e.g., \cite{vershynin2018high}, Proposition 2.6.1, we have
		\begin{align}\label{eq:Yt-psi-bound}
			\|\langle\bbm{u},\bbm x_t\rangle\|_{\psi_2} \leq C K_\epsilon \left(\sum_{m\geq0} \|\bbm{c}_m\|_2^2\right)^{1/2}.
		\end{align}
		Note that Proposition 2.6.1 in \cite{vershynin2018high} concerns finite sums, but the same result holds for infinite sums as long as the right-hand side is finite which is verified as follows. First, define the polynomials
		\begin{align*}
			\bm{C}(\theta) = \sum_{m\geq0} \bbm{c}_m e^{-im\theta}\in \mathbb{C}^d,
			\ \ \text{and} \ \
			\bm{U}_p(\theta) = \sum_{k=1}^{p}\bbm{u}_k e^{-ik\theta} \in \mathbb{C}^d.
		\end{align*}
		Recalling that $\bbm{\Psi}_r=\bm{O}$ for $r<0$, we have
		\begin{align*}
			\bm C(\theta)&=\sum_{m\geq 0}\bbm c_m e^{-im\theta} = \sum_{m\geq 0}\sum_{k=1}^{p}\bbm\Psi_{m-k}^\top \bbm u_k e^{-im\theta}=\sum_{k=1}^{p}\sum_{j\geq 0}\bbm\Psi_{j}^\top \bbm u_k e^{-i(j+k)\theta}\\
			&=\left(\sum_{j\geq 0}\bbm\Psi_j^\top e^{-ij\theta}\right)\left(\sum_{k=1}^{p}\bbm u_k e^{-ik\theta}\right) =\bbm\Psi^\top(e^{-i\theta})\bm U_p(\theta) \in \mathbb{C}^d.
		\end{align*}
		Using the Parseval's identity for the elements of each $\bbm{c}_m$, we have
		\begin{align*}
			\sum_{m\geq0} \|\bbm{c}_m\|_2^2 = \frac{1}{2\pi} \int_{-\pi}^{\pi} \|\bm{C}(\theta)\|_2^2 d\theta = \frac{1}{2\pi} \int_{-\pi}^{\pi} \bm{U}_p^\dagger(\theta) (\bbm{\Psi}^\top(e^{-i\theta}))^\dagger \bbm{\Psi}^\top(e^{-i\theta}) \bm{U}_p(\theta) d\theta.
		\end{align*}
		For the symmetric matrix $(\bbm{\Psi}^\top(e^{-i\theta}))^\dagger \bbm{\Psi}^\top(e^{-i\theta})$, applying the Rayleigh-Ritz theorem yields
		\begin{align*}
			\bm{U}_p^\dagger(\theta) (\bbm{\Psi}^\top(e^{-i\theta}))^\dagger \bbm{\Psi}^\top(e^{-i\theta}) \bm{U}_p(\theta) \leq \lambda_{\max}\big((\bbm{\Psi}^\top(e^{-i\theta}))^\dagger \bbm{\Psi}^\top(e^{-i\theta})\big) \|\bm{U}_p(\theta)\|_2^2.
		\end{align*}
		Note that the eigenvalues of $(\bbm{\Psi}^\top(e^{-i\theta}))^\dagger \bbm{\Psi}^\top(e^{-i\theta})$ are identical to those of $\bbm{\Psi}^\dagger(e^{-i\theta}) \bbm{\Psi}(e^{-i\theta})$.
		Thus, integrating the above inequality with respect to $\theta$ and taking the supremum over $|z|=1$ of $\lambda_{\max}\big(\bbm{\Psi}^\dagger(z)\bbm{\Psi}(z)\big)$ yields
		\begin{align*}
			\frac{1}{2\pi} \int_{-\pi}^{\pi} \bm{U}_p^\dagger(\theta) (\bbm{\Psi}^\top(e^{-i\theta}))^\dagger \bbm{\Psi}^\top(e^{-i\theta}) \bm{U}_p(\theta) d\theta \leq \sup_{|z|=1}\lambda_{\max}\big(\bbm{\Psi}^\dagger(z)\bbm{\Psi}(z)\big)\,\frac{1}{2\pi} \int_{-\pi}^{\pi} \|\bm{U}_p(\theta)\|_2^2 d\theta.
		\end{align*}
		For the integral term in the above inequality, by the Parseval's identity again, we have
		\begin{align*}
			\frac{1}{2\pi} \int_{-\pi}^{\pi} \|\bm{U}_p(\theta)\|_2^2 d\theta = \sum_{k=1}^{p} \|\bbm{u}_k\|_2^2 = \|\bbm{u}\|_2^2.
		\end{align*}
		Combining the above results and noting that $\sup_{|z|=1}\lambda_{\max}\big(\bbm{\Psi}^\dagger(z)\bbm{\Psi}(z)\big) = 1/\mu_{\min}(\mathcal{A})$ from Lemma \ref{lemma:ar-polynomial}, we have
		\begin{align}\label{eq:upper-bound-for-cm}
			\left(\sum_{m\geq0} \|\bbm{c}_m\|_2^2\right)^{1/2} \leq \frac{1}{\sqrt{\mu_{\min}(\mathcal{A})}} \|\bbm{u}\|_2.
		\end{align}
		Thus, substituting the above inequality into the earlier bound \eqref{eq:Yt-psi-bound} for $\|\langle\bbm{u},\bbm x_t\rangle\|_{\psi_2}$, we have
		\begin{align*}
			\|\langle\bbm{u},\bbm x_t\rangle\|_{\psi_2} \leq \frac{C K_\epsilon}{\sqrt{\mu_{\min}(\mathcal{A})}} \|\bbm{u}\|_2.
		\end{align*}
		The last claim follows by substituting the upper bound for $K_\epsilon$ in Lemma \ref{lemma:subg-vector}.
	\end{proof}

	\begin{lemma}\label{lemma:equivalence-representation-of-quadratic-form}
		Let $\{\bbm{y}_t\}_{t\in\mathbb Z}$ be a stationary VAR($p$) process with VMA($\infty$) representation $\bbm{y}_t = \sum_{j\geq0} \bbm{\Psi}_j \bbm{\epsilon}_{t-j}$, where $\{\bbm{\Psi}_j\}_{j\geq0}$ is absolutely summable. Suppose $\bbm{\epsilon}_t = \bbm{\Sigma}_\epsilon^{1/2}\bbm{\xi}_t$, where $\bbm{\xi}_t=(\xi_{1t},\ldots,\xi_{dt})^\top$ has independent coordinates, is $i.i.d.$ across $t$, and satisfies
		$\mathbb{E}\exp(\mu\xi_{it}) \leq \exp(\mu^2\sigma^2/2)$ for all $\mu\in\mathbb{R}$ and all $i$.
		Fix an integer $p\geq1$ and define the stacked lag vector $\bbm{x}_t = (\bbm{y}_{t-1}^\top,\ldots,\bbm{y}_{t-p}^\top)^\top\in\mathbb{R}^{pd}$. Let $\bm{X}\in\mathbb{R}^{T\times pd}$ be the data matrix with $t$-th row $\bbm{x}_{T-t+1}^\top$. For any fixed $\bbm{u}=(\bbm{u}_1^\top,\ldots,\bbm{u}_p^\top)^\top\in\mathbb{R}^{pd}$, set $\bm{Y}=\bm{X}\bbm{u}\in\mathbb{R}^T$.
		Then there exist a semi-infinite innovation vector $\bm{Z}_T = (\bbm{\xi}_T^\top,\bbm{\xi}_{T-1}^\top,\bbm{\xi}_{T-2}^\top,\ldots,\bbm{\xi}_1^\top,\bbm{\xi}_0^\top,\bbm{\xi}_{-1}^\top,\ldots)^\top$ and a deterministic positive semidefinite matrix $\bm{Q}= (\bm{I}_{\infty}\otimes\bbm{\Sigma}_\epsilon^{1/2})^\top \bm{H}_u^\top \bm{H}_u (\bm{I}_{\infty}\otimes\bbm{\Sigma}_\epsilon^{1/2})$, such that
		\begin{align*}
			\bbm{u}^\top \bm{X}^\top \bm{X} \bbm{u} = \bm{Y}^\top \bm{Y} = \bm{Z}_T^\top \bm{Q} \bm{Z}_T.
		\end{align*}
		Here, $\bm{H}_u$ is the semi-infinite block Toeplitz matrix
		\begin{align}\label{eq:Hu}
			\bm{H}_u =
				\begin{bmatrix}
				\bbm{c}_0^\top & \bbm{c}_1^\top & \bbm{c}_2^\top & \ldots & \bbm{c}_{T-1}^\top & \ldots\\
				\bm{0}        & \bbm{c}_0^\top & \bbm{c}_1^\top & \ldots & \bbm{c}_{T-2}^\top & \ldots\\
				\bm{0}        & \bm{0}        & \bbm{c}_0^\top & \ldots & \bbm{c}_{T-3}^\top & \ldots\\
				\vdots        & \vdots        & \vdots        & \ddots & \vdots           & \ldots\\
				\bm{0}        & \bm{0}        & \bm{0}        & \ldots & \bbm{c}_0^\top    & \ldots
				\end{bmatrix} \in \mathbb{R}^{T\times\infty},
		\end{align}
		with $\bbm{c}_m = \sum_{k=1}^p \bbm{\Psi}_{m-k}^\top \bbm{u}_k$.
		Moreover, $\bm{Q}$ satisfies the norm bounds
		\[
		\|\bm{Q}\|_{\mathrm{op}}\leq \lambda_{\max}(\bbm{\Sigma}_\epsilon)\frac{p}{\mu_{\min}(\mathcal{A})}\|\bbm{u}\|_2^2
		\ \ \text{and} \ \ 
		\|\bm{Q}\|_{\mathrm{F}}\leq \lambda_{\max}(\bbm{\Sigma}_\epsilon)\frac{\sqrt{pT}}{\mu_{\min}(\mathcal{A})}\|\bbm{u}\|_2^2,
		\]
		where $\mu_{\min}(\mathcal{A})$ is defined in Lemma~\ref{lemma:ar-polynomial}.
	\end{lemma}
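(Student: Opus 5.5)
The plan is to write each entry of $\bm Y=\bm X\bbm u$ as a linear functional of the innovation sequence, read off $\bm H_u$ from it, and then bound $\bm Q$ through the Toeplitz spectral bounds already available. First, exactly as in the proof of Lemma~\ref{lemma:stacked-subg-vector}, for each $t$ I would write
\[
\langle \bbm u,\bbm x_t\rangle=\sum_{k=1}^p\sum_{j\geq0}\langle \bbm\Psi_j^\top\bbm u_k,\bbm\epsilon_{t-k-j}\rangle=\sum_{m\geq0}\bbm c_m^\top\bbm\epsilon_{t-m},
\]
with $\bbm c_m=\sum_{k=1}^p\bbm\Psi_{m-k}^\top\bbm u_k$ and $\bbm\Psi_r=\bm O$ for $r<0$. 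The $s$-th row of $\bm X$ is $\bbm x_{T-s+1}^\top$, so $Y_s=\sum_{m\geq0}\bbm c_m^\top\bbm\epsilon_{T-s+1-m}$. Ordering the innovations as $(\bbm\epsilon_T,\bbm\epsilon_{T-1},\ldots)$, this is the $s$-th row of the block Toeplitz matrix \eqref{eq:Hu}: row $s$ has $\bbm c_0^\top$ in block position $s$, and the blocks shift right as $s$ increases. Hence $\bm Y=\bm H_u(\bm I_\infty\otimes\bbm\Sigma_\epsilon^{1/2})\bm Z_T$.

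It follows that $\bm Y^\top\bm Y=\bm Z_T^\top\bm Q\bm Z_T$, and $\bm Q\succeq0$ because it is a Gram form. Absolute summability of $\{\bbm\Psi_j\}$, together with the finiteness of $\sum_m\|\bbm c_m\|_2^2$ from \eqref{eq:upper-bound-for-cm}, makes the series converge in $L^2$ and almost surely. This justifies treating the semi-infinite products as bounded operators on $\ell_2^+$.

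For the operator norm, $\|\bm Q\|_{\op}\le\lambda_{\max}(\bbm\Sigma_\epsilon)\|\bm H_u\|_{\op}^2$. To bound $\|\bm H_u\|_{\op}$, I would use the factorization $\bm H_u=(\bm I_T\otimes\text{?})$, noting that $\bm H_u$ is a row-selection of the Toeplitz operator with symbol $\bm C(\theta)^\top$. Concretely, for $\bbm w\in\ell_2^+$,
\[
\|\bm H_u\bbm w\|_2^2\le\sup_\theta\|\bm C(\theta)\|_2^2\,\|\bbm w\|_2^2 ,
\]
which follows by a Parseval argument as in Lemma~\ref{lemma:spectral-bounds}. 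Then $\|\bm C(\theta)\|_2\le\|\bbm\Psi(e^{-i\theta})\|_{\op}\|\bm U_p(\theta)\|_2\le\mu_{\min}(\mathcal A)^{-1/2}\sum_k\|\bbm u_k\|_2\le\mu_{\min}(\mathcal A)^{-1/2}\sqrt p\,\|\bbm u\|_2$, where Lemma~\ref{lemma:ar-polynomial} gives the first factor and Cauchy--Schwarz gives the factor $\sqrt p$. Squaring gives the claimed $\lambda_{\max}(\bbm\Sigma_\epsilon)p\|\bbm u\|_2^2/\mu_{\min}(\mathcal A)$.

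For the Frobenius norm, I would note that $\bm Q$ has rank at most $T$ and that $\bm Q$ and $\bm H_u\bbm\Sigma\bm H_u^\top$ share the same nonzero eigenvalues. Therefore $\|\bm Q\|_{\F}\le\sqrt T\|\bm Q\|_{\op}$, which gives $\lambda_{\max}(\bbm\Sigma_\epsilon)\sqrt T\,p\|\bbm u\|_2^2/\mu_{\min}$. This is off by a factor $\sqrt p$ from the target. To recover $\sqrt{pT}$, I would instead bound $\|\bm Q\|_{\F}^2\le\|\bm Q\|_{\op}\tr(\bm Q)$, with $\tr(\bm Q)\le\lambda_{\max}(\bbm\Sigma_\epsilon)\,T\sum_m\|\bbm c_m\|_2^2\le\lambda_{\max}(\bbm\Sigma_\epsilon)T\|\bbm u\|_2^2/\mu_{\min}$ using \eqref{eq:upper-bound-for-cm}. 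Taking square roots gives exactly $\lambda_{\max}(\bbm\Sigma_\epsilon)\sqrt{pT}\|\bbm u\|_2^2/\mu_{\min}(\mathcal A)$.

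The main obstacle is the operator-norm step: correctly identifying the symbol of the semi-infinite Toeplitz operator and making the Parseval argument rigorous for a truncated row set. I would handle the truncation by noting that deleting rows of the bi-infinite convolution operator cannot increase its norm.
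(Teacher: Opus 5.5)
Your proposal is correct and is essentially the paper's proof. It uses the same innovation representation $\bm Y=\bm H_u(\bm I_\infty\otimes\bbm\Sigma_\epsilon^{1/2})\bm Z_T$ and the same symbol/Parseval bound $\|\bm H_u\|_{\op}\le\sup_\theta\|\bm C(\theta)\|_2\le\sqrt{p/\mu_{\min}(\mathcal A)}\,\|\bbm u\|_2$, obtained by compressing the bi-infinite convolution operator. Your Frobenius step $\|\bm Q\|_{\F}^2\le\|\bm Q\|_{\op}\tr(\bm Q)$, with $\tr(\bm Q)\le\lambda_{\max}(\bbm\Sigma_\epsilon)\|\bm H_u\|_{\F}^2=\lambda_{\max}(\bbm\Sigma_\epsilon)T\sum_m\|\bbm c_m\|_2^2$, is just the paper's $\|\bm Q\|_{\F}\le\lambda_{\max}(\bbm\Sigma_\epsilon)\|\bm H_u\|_{\op}\|\bm H_u\|_{\F}$ in another form. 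The only blemish is the unfinished ``factorization'' fragment before the row-selection argument, which you can simply delete.
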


	\begin{proof}[\textbf{Proof of Lemma \ref{lemma:equivalence-representation-of-quadratic-form}}]
		We begin by deriving the equivalence representation of the quadratic form $\bbm{u}^\top \bm{X}^\top \bm{X} \bbm{u}$. Recall that $\bbm{y}_t = \sum_{j\geq0} \bbm{\Psi}_j \bbm{\epsilon}_{t-j}$ and $\bbm{\Psi}_r = \bm{O}$ for $r<0$.
		Then, for $t=1,\ldots,T$,

		\begin{align}\label{eq:ux-t}
		\langle\bbm{u},\bbm x_{T-t+1}\rangle
		&= \sum_{k=1}^{p} \langle\bbm{u}_k, \bbm{y}_{T-t+1-k}\rangle
		= \sum_{k=1}^{p}\sum_{j\geq0}\langle\bbm{u}_k, \bbm{\Psi}_j \bbm{\epsilon}_{T-t+1-k-j}\rangle
		= \sum_{k=1}^{p}\sum_{j\geq0}\langle\bbm{\Psi}_j^\top\bbm{u}_k, \bbm{\epsilon}_{T-t+1-k-j}\rangle \notag\\
		&= \sum_{k=1}^{p} \sum_{m\geq k} \langle\bbm{\Psi}_{m-k}^\top\bbm{u}_k, \bbm{\epsilon}_{T-t+1-m}\rangle
		= \sum_{k=1}^{p} \sum_{m\geq 0} \langle\bbm{\Psi}_{m-k}^\top\bbm{u}_k, \bbm{\epsilon}_{T-t+1-m}\rangle = \sum_{m\geq0} \langle \bbm{c}_m, \bbm{\epsilon}_{T-t+1-m}\rangle.
		\end{align}
		where $\bbm{c}_m = \sum_{k=1}^p \bbm{\Psi}_{m-k}^\top \bbm{u}_k$.
		Thus, the data vector $\bm{Y} = \bm{X}\bbm{u} \in \mathbb{R}^T$ can be expressed as $\bm{Y} = \bm{H}_u \bm{E}_T$, where $\bm{E}_T = (\bbm{\epsilon}_T^\top,\bbm{\epsilon}_{T-1}^\top,\bbm{\epsilon}_{T-2}^\top,\ldots,\bbm{\epsilon}_1^\top,\bbm{\epsilon}_0^\top,\bbm{\epsilon}_{-1}^\top,\ldots)^\top$ and $\bm{H}_u$ is defined in \eqref{eq:Hu} so that $(\bm{H}_u \bm{E}_T)_t = \sum_{m\geq0} \langle\bbm{c}_m, \bbm{\epsilon}_{T-t+1-m}\rangle$ for $t=1,\ldots,T$. Note that $\bbm{\epsilon}_t = \bbm{\Sigma}_\epsilon^{1/2}\bbm{\xi}_t$ and define the semi-infinite innovation vector $\bm{Z}_T = (\bbm{\xi}_T^\top,\bbm{\xi}_{T-1}^\top,\bbm{\xi}_{T-2}^\top,\ldots,\bbm{\xi}_1^\top,\bbm{\xi}_0^\top,\bbm{\xi}_{-1}^\top,\ldots)^\top$. Then $\bm{E}_T = (\bm{I}_{\infty}\otimes\bbm{\Sigma}_\epsilon^{1/2}) \bm{Z}_T$, and thus
		\begin{align*}
			\bbm{u}^\top \bm{X}^\top \bm{X} \bbm{u}
			= \bm{Y}^\top \bm{Y}
			= \bm{E}_T^\top \bm{H}_u^\top \bm{H}_u \bm{E}_T
			= \bm{Z}_T^\top (\bm{I}_{\infty}\otimes\bbm{\Sigma}_\epsilon^{1/2})^\top \bm{H}_u^\top \bm{H}_u (\bm{I}_{\infty}\otimes\bbm{\Sigma}_\epsilon^{1/2}) \bm{Z}_T.
		\end{align*}
		The positive semidefinite matrix $(\bm{I}_{\infty}\otimes\bbm{\Sigma}_\epsilon^{1/2})^\top \bm{H}_u^\top \bm{H}_u (\bm{I}_{\infty}\otimes\bbm{\Sigma}_\epsilon^{1/2})$ is denoted as $\bm{Q}$ for notation convenience. Next, we derive the operator and Frobenius norm bounds for $\bm{Q}$. By the sub-multiplicative property of matrix norms, we have
		\begin{align}
			\label{eq:frobenius-bound-Q}
			&\|\bm{Q}\|_{\mathrm{op}} \leq \|(\bm{I}_{\infty}\otimes\bbm{\Sigma}_\epsilon^{1/2})^\top\|_{\mathrm{op}} \|\bm{H}_u^\top \bm{H}_u\|_{\mathrm{op}} \|\bm{I}_{\infty}\otimes\bbm{\Sigma}_\epsilon^{1/2}\|_{\mathrm{op}} = \lambda_{\max}(\bbm{\Sigma}_\epsilon) \|\bm{H}_u\|_{\mathrm{op}}^2,\\
			\label{eq:op-bound-Q}
			&\|\bm{Q}\|_{\mathrm{F}} \leq \|(\bm{I}_{\infty}\otimes\bbm{\Sigma}_\epsilon^{1/2})^\top\|_{\mathrm{op}} \|\bm{H}_u^\top \bm{H}_u\|_{\mathrm{F}} \|\bm{I}_{\infty}\otimes\bbm{\Sigma}_\epsilon^{1/2}\|_{\mathrm{op}} \leq \lambda_{\max}(\bbm{\Sigma}_\epsilon) \|\bm{H}_u\|_{\mathrm{op}}\|\bm{H}_u\|_{\mathrm{F}}.
		\end{align}
		Hereafter, we focus on deriving upper bounds for $\|\bm{H}_u\|_{\mathrm{op}}$ and $\|\bm{H}_u\|_{\mathrm{F}}$. Define the bi-infinite block Toeplitz operator $\mathcal{H}_u: \ell_2(\mathbb{Z})\to\ell_2(\mathbb{Z})$ by
		\begin{align*}
			[\mathcal H_u]_{r,s} =
				\begin{cases}
				\bbm c_{\,r-s}^\top, & r\geq s,\\[2pt]
				\bm 0^\top, & r<s,
				\end{cases}
				\qquad r,s\in\mathbb{Z},
		\end{align*}
		with the convention $\bbm{c}_m=\bm{0}$ for $m<0$. For any $\bbm{e}=(\ldots,\bbm{e}_{-1}^\top,\bbm{e}_0^\top,\bbm{e}_1^\top,\ldots)^\top\in\ell_2$, the $r$-th coordinate of $\mathcal{H}_u\bbm{e}$ is $(\mathcal{H}_u \bbm{e})_r = \sum_{m\geq 0} \langle \bbm{c}_m,\bbm{e}_{r-m}\rangle$. By the absolute summability of $\{\bbm\Psi_j\}_{j\geq0}$ and finite $p$, we have $\sum_{m\geq0}\|\bbm{c}_m\|_2<\infty$. Consequently, the series $(\mathcal{H}_u\bbm{e})_r$ converges absolutely for each $r$, and $\mathcal{H}_u$ is a bounded linear operator on $\ell_2$.

		Thus $\mathcal{H}_u$ is a infinite-by-infinite block Toeplitz convolution operator on $\ell_2(\mathbb{Z})$. The finite matrix $\bm{H}_u$ is the one-sided compression of $\mathcal{H}_u$ that keeps only the first $T$ output rows while restricting the columns to the past indices that match the stacking order in $\bm{E}_T$; concretely, $\bm{H}_u = \mathcal{P}_T\mathcal{H}_u\mathcal{P}_{\leq T}$, where $\mathcal{P}_T$ extracts rows $1,\ldots,T$ and $\mathcal{P}_{\leq T}$ keeps the column index set $\{T,T-1,T-2,\ldots\}$. Then, by the definition of operator norm, we have $\|\bm{H}_u\|_{\mathrm{op}} \leq \|\mathcal{H}_u\|_{\mathrm{op}}$.
		
		Define $\bm{C}(\theta)=\sum_{m\geq0} \bbm{c}_m e^{-im\theta}$ and $\bm{E}_\infty(\theta)=\sum_{t\in\mathbb{Z}} \bbm{e}_t e^{-it\theta}$. Then $\bm{F}(\theta) = \sum_{r\in\mathbb{Z}} (\mathcal{H}_u \bm{e})_r e^{-ir\theta} = \langle \bm{C}(\theta), \bm{E}_\infty(\theta)\rangle$. By Parseval's identity applied to $(\mathcal{H}_u \bbm{e})_r$ and its Fourier series $\bm{F}(\theta)$, we have
		\begin{align*}
			\|\mathcal{H}_u \bbm{e}\|_2^2 &= \sum_{r\in\mathbb{Z}} |(\mathcal{H}_u \bbm{e})_r|^2 = \frac{1}{2\pi}\int_{-\pi}^{\pi} |\bm{F}(\theta)|^2  d\theta = \frac{1}{2\pi}\int_{-\pi}^{\pi} \big|\langle \bm{C}(\theta),\bm{E}_\infty(\theta)\rangle\big|^2  d\theta\\
			& \leq \frac{1}{2\pi}\int_{-\pi}^{\pi} \|\bm{C}(\theta)\|_2^2 \|\bm{E}_\infty(\theta)\|_2^2 d\theta \leq \sup_{\theta \in[-\pi,\pi]}\|\bm{C}(\theta)\|_2^2 \cdot \frac{1}{2\pi}\int_{-\pi}^{\pi}\|\bm{E}_\infty(\theta)\|_2^2 d\theta.
		\end{align*}
		Again, by Parseval's identity, we have $1/2\pi\int_{-\pi}^{\pi}\|\bm{E}_\infty(\theta)\|_2^2 d\theta = \sum_{t\in\mathbb Z}\|\bbm{e}_t\|_2^2 = \|\bbm{e}\|_2^2$. Hence, for every such $\bbm{e}$, $\|\mathcal{H}_u \bbm{e}\|_2^2 \leq \sup_{\theta \in[-\pi,\pi]}\|\bm{C}(\theta)\|_2^2 \|\bbm{e}\|_2^2$,
		which implies $\|\mathcal{H}_u\|_{\mathrm{op}} \leq \sup_{\theta \in[-\pi,\pi]}\|\bm{C}(\theta)\|_2$. Combining this with $\|\bm{H}_u\|_{\mathrm{op}}\leq\|\mathcal{H}_u\|_{\mathrm{op}}$ yields that 
		\begin{align}\label{eq:Hu-op-upper-bound}
			\|\bm{H}_u\|_{\mathrm{op}} \leq \sup_{\theta \in[-\pi,\pi]}\|\bm{C}(\theta)\|_2.
		\end{align}
	
		Next, we derive a uniform upper bound for $\sup_{\theta \in[-\pi,\pi]}\|\bm C(\theta)\|_2$. Recall that $\bbm c_m=\sum_{k=1}^{p}\bbm\Psi_{m-k}^\top \allowbreak\bbm u_k$, for any $m\in \mathbb N$, and adopt the convention $\bbm\Psi_j=\bm O$ for all $j<0$; in particular, this implies $\bbm c_0=\bm 0$. Then, we have
		\begin{align*}
		\bm C(\theta)
		&=\sum_{m\geq 0}\bbm c_m e^{-im\theta}
		=\sum_{m\geq 0}\sum_{k=1}^{p}\bbm\Psi_{m-k}^\top \bbm u_k e^{-im\theta}=\sum_{k=1}^{p}\sum_{j\geq 0}\bbm\Psi_{j}^\top \bbm u_k e^{-i(j+k)\theta}\\
		&=\left(\sum_{j\geq 0}\bbm\Psi_j^\top e^{-ij\theta}\right)\left(\sum_{k=1}^{p}\bbm u_k e^{-ik\theta}\right) =\bbm\Psi^\top(e^{-i\theta})\bm U_p(\theta),
		\end{align*}
		where $\bm U_p(\theta)=\sum_{k=1}^{p}\bbm u_k e^{-ik\theta}$. Note that $\|\bm C(e^{-i\theta})\|_2 \leq \|\bbm\Psi^\top(e^{-i\theta})\|_{\op}\|\bm U_p(\theta)\|_2$. Then, using the Rayleigh-Ritz theorem together with Lemma~\ref{lemma:ar-polynomial}, we have
		\begin{align}\label{eq:phi-op-upper-bound}
		\|\bbm\Psi^\top(e^{-i\theta})\|_{\op}^2
		&=\lambda_{\max}\!\Bigl((\bbm\Psi^\top(e^{-i\theta}))^\dagger\bbm\Psi^\top(e^{-i\theta})\Bigr)
		=\lambda_{\max}\!\Bigl(\bbm\Psi^\dagger(e^{-i\theta})\bbm\Psi(e^{-i\theta})\Bigr)
		\leq \frac{1}{\mu_{\min}(\mathcal A)}.
		\end{align}
		Moreover, since $|e^{-ik\theta}|=1$, applying the subadditivity of the $\ell_2$-norm and Cauchy-Schwarz inequality, we have
		\begin{align}\label{eq:Up-norm-bound}
		\|\bm U_p(\theta)\|_2
		=\Bigl\|\sum_{k=1}^{p}\bbm u_k e^{-ik\theta}\Bigr\|_2 \leq \sum_{k=1}^{p}\|\bbm u_k e^{-ik\theta}\|_2
		=\sum_{k=1}^{p}\|\bbm u_k\|_2 \leq \sqrt{p}\Bigl(\sum_{k=1}^{p}\|\bbm u_k\|_2^2\Bigr)^{1/2}
		= \sqrt{p}\,\|\bbm u\|_2,
		\end{align}
		Therefore, combining \eqref{eq:phi-op-upper-bound} and \eqref{eq:Up-norm-bound}, we have, for any $\theta\in[-\pi,\pi]$,
		\[
		\|\bm C(\theta)\|_2 \leq \|\bbm\Psi^\top(e^{-i\theta})\|_{\op}\|\bm U_p(\theta)\|_2 \leq \frac{\sqrt{p}}{\sqrt{\mu_{\min}(\mathcal A)}}\|\bbm u\|_2,
		\]
		and hence, taking the supremum over $\theta\in[-\pi,\pi]$ and combining this with \eqref{eq:Hu-op-upper-bound} yields
		\begin{align}\label{eq:Hu-op-bound}
		\|\bm H_u\|_{\op}\leq \sup_{\theta \in[-\pi,\pi]}\|\bm C(\theta)\|_2 \leq \frac{\sqrt{p}}{\sqrt{\mu_{\min}(\mathcal A)}}\|\bbm u\|_2.
		\end{align}

		Finally, we derive the Frobenius norm bound for $\bm{H}_u$. By the definition of $\bm{H}_u$ in \eqref{eq:Hu}, we have $\|\bm{H}_u\|_{\mathrm{F}}^2 = \sum_{t=1}^T \sum_{m\geq0} \|\bbm{c}_m\|_2^2 = T \sum_{m\geq0} \|\bbm{c}_m\|_2^2$. Then, using the earlier bound for $\sum_{m\geq0} \|\bbm{c}_m\|_2^2$ derived in the proof of Lemma~\ref{lemma:stacked-subg-vector}, see \eqref{eq:upper-bound-for-cm}, we have
		\begin{align}\label{eq:Hu-Fro-bound}
		\|\bm H_u\|_{\F}\leq \frac{\sqrt{T}}{\sqrt{\mu_{\min}(\mathcal A)}}\,\|\bbm u\|_2.
		\end{align}
		Plugging \eqref{eq:Hu-op-bound} and \eqref{eq:Hu-Fro-bound} into \eqref{eq:op-bound-Q} and \eqref{eq:frobenius-bound-Q} yields the desired results.
	\end{proof}

	\begin{lemma}\label{lemma:equivalence-representation-of-uXEv}
		Let $\{\bbm{y}_t\}_{t\in\mathbb Z}$ be a stationary VAR($p$) process with VMA($\infty$) representation $\bbm{y}_t = \sum_{j\geq0} \bbm{\Psi}_j \bbm{\epsilon}_{t-j}$, where $\{\bbm{\Psi}_j\}_{j\geq0}$ is absolutely summable. Suppose $\bbm{\epsilon}_t = \bbm{\Sigma}_\epsilon^{1/2}\bbm{\xi}_t$, where $\bbm{\xi}_t=(\xi_{1t},\ldots,\xi_{dt})^\top$ has independent coordinates, is $i.i.d.$ across $t$, and satisfies $\mathbb{E}\exp(\mu\xi_{it}) \leq \exp(\mu^2\sigma^2/2)$ for all $\mu\in\mathbb{R}$ and all $i$. Fix an integer $p\geq1$ and define the stacked lag vector $\bbm{x}_t = (\bbm{y}_{t-1}^\top,\ldots,\bbm{y}_{t-p}^\top)^\top \in \mathbb{R}^{pd}$. Let $\bm{X}\in\mathbb{R}^{T\times pd}$ be the data matrix with $t$-th row $\bbm{x}_{T-t+1}^\top$ and $\bm{E}\in\mathbb{R}^{T\times d}$ be the innovation matrix with $t$-th row $\bbm{\epsilon}_{T-t+1}^\top$. Then for any fixed $\bbm{u}=(\bbm{u}_1^\top,\ldots,\bbm{u}_p^\top)^\top\in\mathbb{R}^{pd}$ and $\bbm{v}\in\mathbb{R}^d$, there exist a semi-infinite innovation vector $\bm{Z}_T = (\bbm{\xi}_T^\top,\bbm{\xi}_{T-1}^\top,\bbm{\xi}_{T-2}^\top,\ldots,\bbm{\xi}_1^\top,\bbm{\xi}_0^\top,\bbm{\xi}_{-1}^\top,\ldots)^\top$ and a deterministic matrix $\bm{Q}_{u,v} = (\bm{I}_{\infty}\otimes\bbm{\Sigma}_\epsilon^{1/2})^\top \bm{H}_u^\top \bm{D}_v (\bm{I}_{\infty}\otimes\bbm{\Sigma}_\epsilon^{1/2})$, such that
		\begin{align*}
			\bbm{u}^\top\bm{X}^\top\bm{E}\bbm{v} = \bm{Z}_T^\top \bm{Q}_{u,v} \bm{Z}_T.
		\end{align*}
		Here $\bm{H}_u$ is the semi-infinite block lower-triangular Toeplitz matrix from Lemma \ref{lemma:equivalence-representation-of-quadratic-form} and $\bm{D}_v$ is the block-diagonal matrix
		\begin{align*}
			\bm{D}_v =
			\begin{bmatrix}
			\bbm{v}^\top & \bm{0}      & \bm{0}      & \cdots 	& \bm{0}	& \cdots \\
			\bm{0}      & \bbm{v}^\top & \bm{0}      & \cdots	& \bm{0}	& \cdots  \\
			\bm{0}      & \bm{0}      & \bbm{v}^\top & \cdots 	& \bm{0}	& \cdots \\
			\vdots      & \vdots      & \vdots      & \ddots 	& \vdots	& \cdots \\
			\bm{0}      & \bm{0}      & \bm{0}      & \cdots	& \bbm{v}^\top	&\cdots
			\end{bmatrix} \in \mathbb{R}^{T\times\infty},
		\end{align*}
		Moreover, $\bm{Q}_{u,v}$ satisfies the norm bounds
		\begin{align*}
			\|\bm{Q}_{u,v}\|_{\mathrm{op}} \leq \lambda_{\max}(\bbm{\Sigma}_\epsilon)\frac{\sqrt{p}}{\sqrt{\mu_{\min}(\mathcal{A})}}\|\bbm{u}\|_2\|\bbm{v}\|_2.
			\ \ \text{and} \ \ 
			\|\bm{Q}_{u,v}\|_{\mathrm{F}} \leq \lambda_{\max}(\bbm{\Sigma}_\epsilon)\frac{\sqrt{T}}{\sqrt{\mu_{\min}(\mathcal{A})}}\|\bbm{u}\|_2\|\bbm{v}\|_2
		\end{align*}
	\end{lemma}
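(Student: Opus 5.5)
The proof mirrors the argument for Lemma~\ref{lemma:equivalence-representation-of-quadratic-form}. It has two parts: an algebraic representation of the bilinear form, and norm bounds for the resulting kernel.

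\textbf{Representation.} By \eqref{eq:ux-t}, the vector $\bm X\bbm u\in\mathbb R^T$ equals $\bm H_u\bm E_T$, where $\bm E_T=(\bbm\epsilon_T^\top,\bbm\epsilon_{T-1}^\top,\ldots)^\top$ is stacked in the same order as $\bm Z_T$. The $t$-th entry of $\bm E\bbm v$ is $\langle\bbm v,\bbm\epsilon_{T-t+1}\rangle$, which is the $t$-th entry of $\bm D_v\bm E_T$. This holds because the $t$-th row of $\bm D_v$ places $\bbm v^\top$ on the $t$-th block, and that block multiplies $\bbm\epsilon_{T-t+1}$. Hence
\[
\bbm u^\top\bm X^\top\bm E\bbm v=(\bm H_u\bm E_T)^\top(\bm D_v\bm E_T)=\bm E_T^\top\bm H_u^\top\bm D_v\bm E_T .
\]
Substituting $\bm E_T=(\bm I_\infty\otimes\bbm\Sigma_\epsilon^{1/2})\bm Z_T$ gives the claimed form with $\bm Q_{u,v}$. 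All series involved converge in $\ell_2$, since $\sum_m\|\bbm c_m\|_2<\infty$ by absolute summability of $\{\bbm\Psi_j\}$. This justifies exchanging the sums.

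\textbf{Norm bounds.} Submultiplicativity and $\|\bm I_\infty\otimes\bbm\Sigma_\epsilon^{1/2}\|_{\op}^2=\lambda_{\max}(\bbm\Sigma_\epsilon)$ give
\[
\|\bm Q_{u,v}\|_{\op}\le\lambda_{\max}(\bbm\Sigma_\epsilon)\|\bm H_u\|_{\op}\|\bm D_v\|_{\op},
\qquad
\|\bm Q_{u,v}\|_{\F}\le\lambda_{\max}(\bbm\Sigma_\epsilon)\|\bm H_u\|_{\F}\|\bm D_v\|_{\op}.
\]
For the second inequality, write $\bm H_u^\top\bm D_v$ and use $\|\bm M\bm N\|_{\F}\le\|\bm M\|_{\F}\|\bm N\|_{\op}$ with $\bm M=\bm H_u^\top$.

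$\bm D_v$ has orthogonal rows, each of norm $\|\bbm v\|_2$, supported on disjoint blocks. Therefore $\bm D_v\bm D_v^\top=\|\bbm v\|_2^2\bm I_T$ and $\|\bm D_v\|_{\op}=\|\bbm v\|_2$.

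The factors for $\bm H_u$ are already available. \eqref{eq:Hu-op-bound} gives $\|\bm H_u\|_{\op}\le\sqrt{p/\mu_{\min}(\mathcal A)}\,\|\bbm u\|_2$, and \eqref{eq:Hu-Fro-bound} gives $\|\bm H_u\|_{\F}\le\sqrt{T/\mu_{\min}(\mathcal A)}\,\|\bbm u\|_2$. Combining these yields both stated bounds.

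\textbf{Main obstacle.} The only delicate point is index bookkeeping: confirming that the block columns of $\bm D_v$ align with the ordering of $\bm Z_T$, so that row $t$ of $\bm H_u$ and row $t$ of $\bm D_v$ refer to the same time index $T-t+1$. The norm bounds then follow directly from the results already established for $\bm H_u$.
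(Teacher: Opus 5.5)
Your proposal is correct and follows the paper's proof essentially step for step. The representation comes from \eqref{eq:ux-t} together with $(\bm D_v\bm E_T)_t=\langle\bbm v,\bbm\epsilon_{T-t+1}\rangle$, and the norm bounds from submultiplicativity, $\|\bm D_v\|_{\op}=\|\bbm v\|_2$, and \eqref{eq:Hu-op-bound}, \eqref{eq:Hu-Fro-bound}; your check $\bm D_v\bm D_v^\top=\|\bbm v\|_2^2\bm I_T$ supplies a step the paper only asserts. One wording fix: $\bm E_T$ is not in $\ell_2$, so what you actually need is that each entry $\sum_{m\geq0}\langle\bbm c_m,\bbm\epsilon_{T-t+1-m}\rangle$ converges absolutely almost surely (and in $L^2$) because $\sum_m\|\bbm c_m\|_2<\infty$; the bilinear form is then a finite sum over $t\in[T]$, since $\bm D_v$ has only $T$ nonzero block columns.
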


	\begin{proof}[\textbf{Proof of Lemma \ref{lemma:equivalence-representation-of-uXEv}}]
		We begin by deriving the equivalence representation of $\bbm{u}^\top\bm{X}^\top\bm{E}\bbm{v}$. Define the semi-infinite innovation vector $\bm{E}_T = (\bbm{\epsilon}_T^\top,\bbm{\epsilon}_{T-1}^\top,\bbm{\epsilon}_{T-2}^\top,\ldots,\bbm{\epsilon}_1^\top,\bbm{\epsilon}_0^\top,\bbm{\epsilon}_{-1}^\top,\ldots)^\top$.
		As the equation \eqref{eq:ux-t} in the proof of Lemma \ref{lemma:equivalence-representation-of-quadratic-form}, there exists a semi-infinite block Toeplitz matrix $\bm{H}_u$, generated by $\{\bbm{c}_m\}_{m\geq0}$ with $\bbm{c}_m = \sum_{k=1}^p \bbm{\Psi}_{m-k}^\top \bbm{u}_k$, such that for each $t=1,\ldots,T$,
		\begin{align*}
			\langle\bbm{u},\bbm{x}_{T-t+1}\rangle  = \sum_{m\geq0} \langle\bbm{c}_m, \bbm{\epsilon}_{T-t+1-m}\rangle = (\bm{H}_u \bm{E}_T)_t.
		\end{align*}
		Besides, $(\bm{D}_v \bm{E}_T)_t = \langle\bbm{v},\bbm{\epsilon}_{T-t+1}\rangle$ under the definition of $\bm{D}_v$. Consequently, we have 
		\begin{align*}
			\bbm{u}^\top\bm{X}^\top\bm{E}\bbm{v} = \sum_{t=1}^{T}\langle\bbm{u},\bbm{x}_{T-t+1}\rangle\langle\bbm{\epsilon}_{T-t+1},\bbm{v}\rangle = (\bm{H}_u \bm{E}_T)^\top (\bm{D}_v \bm{E}_T) = \bm{E}_T^\top \bm{H}_u^\top \bm{D}_v \bm{E}_T.
		\end{align*}
		Note that $\bbm{\epsilon}_t = \bbm{\Sigma}_\epsilon^{1/2}\bbm{\xi}_t$ and defining $\bm{Z}_T = (\bbm{\xi}_T^\top,\bbm{\xi}_{T-1}^\top,\bbm{\xi}_{T-2}^\top,\ldots,\bbm{\xi}_1^\top,\bbm{\xi}_0^\top,\bbm{\xi}_{-1}^\top,\ldots)^\top$, we have $\bm{E}_T = (\bm{I}_\infty\otimes\bbm{\Sigma}_\epsilon^{1/2})\bm{Z}_T$, and hence
		\begin{align*}
			\bbm{u}^\top\bm{X}^\top\bm{E}\bbm{v} = \bm{Z}_T^\top (\bm{I}_\infty\otimes\bbm{\Sigma}_\epsilon^{1/2})^\top \bm{H}_u^\top \bm{D}_v (\bm{I}_\infty\otimes\bbm{\Sigma}_\epsilon^{1/2}) \bm{Z}_T = \bm{Z}_T^\top \bm{Q}_{u,v} \bm{Z}_T,
		\end{align*}
		with $(\bm{I}_\infty\otimes\bbm{\Sigma}_\epsilon^{1/2})^\top \bm{H}_u^\top \bm{D}_v (\bm{I}_\infty\otimes\bbm{\Sigma}_\epsilon^{1/2})$ denoted as $\bm{Q}_{u,v}$ for notation convenience. Next, we derive the operator and Frobenius norm bounds for $\bm{Q}_{u,v}$. By the sub-multiplicative property of matrix norms, we have
		\begin{align}
			\label{eq:frobenius-bound-Quv}
			&\|\bm{Q}_{u,v}\|_{\mathrm{op}} \leq \|(\bm{I}_{\infty}\otimes\bbm{\Sigma}_\epsilon^{1/2})^\top\|_{\mathrm{op}} \|\bm{H}_u^\top \bm{D}_v\|_{\mathrm{op}} \|\bm{I}_{\infty}\otimes\bbm{\Sigma}_\epsilon^{1/2}\|_{\mathrm{op}} \leq \lambda_{\max}(\bbm{\Sigma}_\epsilon) \|\bm{H}_u\|_{\mathrm{op}}\|\bm{D}_v\|_{\mathrm{op}},\\
			\label{eq:op-bound-Quv}
			&\|\bm{Q}_{u,v}\|_{\mathrm{F}} \leq \|(\bm{I}_{\infty}\otimes\bbm{\Sigma}_\epsilon^{1/2})^\top\|_{\mathrm{op}} \|\bm{H}_u^\top \bm{D}_v\|_{\mathrm{F}} \|\bm{I}_{\infty}\otimes\bbm{\Sigma}_\epsilon^{1/2}\|_{\mathrm{op}} \leq \lambda_{\max}(\bbm{\Sigma}_\epsilon) \|\bm{H}_u\|_{\mathrm{F}}\|\bm{D}_v\|_{\mathrm{op}}.
		\end{align}
		Note that $\|\bm{D}_v\|_{\mathrm{op}}=\|\bbm{v}\|_2$, so $\|\bm{D}_v\|_{\mathrm{op}}$ in \eqref{eq:frobenius-bound-Quv} and \eqref{eq:op-bound-Quv} can be replaced by $\|\bbm{v}\|_2$.
		We now derive upper bounds for $\|\bm{H}_u\|_{\mathrm{op}}$ and $\|\bm{H}_u\|_{\mathrm{F}}$. For the operator norm, \eqref{eq:Hu-op-bound} yields $\|\bm{H}_u\|_{\mathrm{op}} \leq \sqrt{p/\mu_{\min}(\mathcal{A})}\|\bbm{u}\|_2$, 
		while for the Frobenius norm, \eqref{eq:Hu-Fro-bound} gives $\|\bm{H}_u\|_{\mathrm{F}} \leq \sqrt{T/\mu_{\min}(\mathcal{A})}\,\|\bbm{u}\|_2$. 
		Substituting these bounds into \eqref{eq:frobenius-bound-Quv} and \eqref{eq:op-bound-Quv} yields
		\begin{align*}
			\|\bm{Q}_{u,v}\|_{\mathrm{op}} \leq \lambda_{\max}(\bbm{\Sigma}_\epsilon)\frac{\sqrt{p}}{\sqrt{\mu_{\min}(\mathcal{A})}}\|\bbm{u}\|_2\|\bbm{v}\|_2,
			\ \ \text{and} \ \
			\|\bm{Q}_{u,v}\|_{\mathrm{F}} \leq \lambda_{\max}(\bbm{\Sigma}_\epsilon)\frac{\sqrt{T}}{\sqrt{\mu_{\min}(\mathcal{A})}}\|\bbm{u}\|_2\|\bbm{v}\|_2,
		\end{align*}
		which are exactly the claimed bounds.
	\end{proof}

	\begin{lemma}[Infinite-dimensional quadratic Hanson-Wright inequality]\label{lemma:infinite-HW}
		Let $I$ be a countable index set and let $\{Z_i\}_{i \in I}$ be independent random variables such that there exists a constant $K > 0$ with $\|Z_i\|_{\psi_2}\leq K$ for all $i \in I$. Let $\bm{Q}=(q_{ij})_{i,j\in I}$ be a symmetric linear operator on $\ell_2(I)$ satisfying $\|\bm{Q}\|_{\mathrm{op}} < \infty$ and $\|\bm{Q}\|_{\mathrm{F}}^2 = \sum_{i,j \in I} q_{ij}^2 < \infty$.
		For any finite subset $J \subset I$, let $\bm{Z}^{(J)} = (Z_i)_{i\in J}$ denote the truncated vector and let $\bm{Q}^{(J)} = (q_{ij})_{i,j\in J}$ denote the corresponding truncated matrix. Set $S_J = (\bm{Z}^{(J)})^\top \bm{Q}^{(J)} \bm{Z}^{(J)} = \sum_{i,j\in J} q_{ij} Z_i Z_j$. Then $\{S_J\}_{J}$ is a Cauchy sequence in $L^2$, and hence there exists a limit $S = \lim_{J\uparrow I} S_J$ in $L^2$, which admits the representation $S = \sum_{i,j\in I} q_{ij} Z_i Z_j = \bm{Z}^\top \bm{Q} \bm{Z}$ in the sense of $L^2$ convergence. Moreover, for any $t>0$,
		\begin{align*}
			\mathbb{P}\left(|S-\mathbb{E} S|\geq t\right) \leq 2\exp\left(-c\min\left\{\frac{t^2}{K^4\|\bm{Q}\|_{\mathrm{F}}^2},\frac{t}{K^2\|\bm{Q}\|_{\mathrm{op}}}\right\}\right).
		\end{align*}
	\end{lemma}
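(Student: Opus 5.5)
The plan is to reduce to the classical finite-dimensional Hanson--Wright inequality of Rudelson and Vershynin and pass to the limit along an exhausting sequence of finite sets. I assume throughout that the $Z_i$ are mean zero, as they are in every application here, since the $\bbm\xi_{k,t}$ are centred. Fix an increasing sequence of finite sets $J_1\subset J_2\subset\cdots$ with $\bigcup_n J_n=I$; since $I$ is countable, the net $J\uparrow I$ can be handled through such sequences.

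\emph{Step 1 ($L^2$ convergence).} I would split $S_J=D_J+O_J$, where $D_J=\sum_{i\in J}q_{ii}Z_i^2$ is the diagonal part and $O_J=\sum_{i\neq j\in J}q_{ij}Z_iZ_j$ is the off-diagonal part.
\begin{itemize}
\item For $J\subset J'$, $O_{J'}-O_J$ is a sum over pairs $(i,j)$ with $i\neq j$ and at least one index in $J'\setminus J$. By independence and $\mathbb E Z_i=0$, cross terms vanish unless the index pairs coincide (up to symmetry). This gives
\[
\mathbb E(O_{J'}-O_J)^2\leq 2\sum q_{ij}^2\,\mathbb E Z_i^2\,\mathbb E Z_j^2\lesssim K^4\sum_{(i,j)\in J'^2\setminus J^2}q_{ij}^2,
\]
which tends to $0$ because $\|\bm Q\|_\F<\infty$.
\item For the diagonal part, the centred increments $D_{J'}-\mathbb E D_{J'}-(D_J-\mathbb E D_J)$ are sums of independent mean-zero terms. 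Their variance is at most $CK^4\sum_{i\in J'\setminus J}q_{ii}^2\to0$.
\end{itemize}
Hence $S_J-\mathbb E S_J$ is Cauchy in $L^2$.

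The uncentred statement additionally needs $\sum_{i}|q_{ii}|\,\mathbb E Z_i^2<\infty$. I would verify this in the intended uses. There $\bm Q=\bm B^\top\bm B$ (or $\bm B^\top\bm D$) with $\|\bm B\|_\F<\infty$ by Lemma~\ref{lemma:equivalence-representation-of-quadratic-form}, so $\sum_i|q_{ii}|\leq\|\bm B\|_\F^2$ in the PSD case, and by Cauchy--Schwarz on the columns in the bilinear case. Then $\mathbb E S_J$ converges, so $S_J$ itself is Cauchy in $L^2$ with limit $S$, and $\mathbb E S=\lim_J\mathbb E S_J$. If one only wants the tail bound, it suffices to work with the centred limit.

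\emph{Step 2 (finite-dimensional tail).} For each $n$, apply the Rudelson--Vershynin Hanson--Wright inequality to $S_{J_n}$, using $\|\bm Q^{(J_n)}\|_\F\leq\|\bm Q\|_\F$ and $\|\bm Q^{(J_n)}\|_{\op}\leq\|\bm Q\|_{\op}$; the latter holds because compression by a coordinate projection is a contraction. This gives, uniformly in $n$,
\[
\mathbb P(|S_{J_n}-\mathbb E S_{J_n}|\geq s)\leq 2\exp\bigl(-c\min\{s^2/(K^4\|\bm Q\|_\F^2),\,s/(K^2\|\bm Q\|_{\op})\}\bigr)=:\phi(s).
\]

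\emph{Step 3 (passage to the limit).} Write $R_n=(S-\mathbb E S)-(S_{J_n}-\mathbb E S_{J_n})$, which tends to $0$ in $L^2$ and hence in probability. For any $\eta\in(0,t)$,
\[
\mathbb P(|S-\mathbb E S|\geq t)\leq\phi(t-\eta)+\mathbb P(|R_n|\geq\eta).
\]
Letting $n\to\infty$ and then $\eta\downarrow0$, and using continuity of $\phi$, yields the claimed bound.

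I expect the main obstacle to be the mean and diagonal bookkeeping in Step 1. The hypotheses control only $\|\bm Q\|_\F$, which does not make $\sum_i q_{ii}\mathbb E Z_i^2$ summable in general. So I would state the result cleanly for the centred quantity, and separately check trace-class summability for the specific operators $\bm Q$ and $\bm Q_{u,v}$ used in Lemmas~\ref{lemma:HW-uXTXu} and~\ref{lemma:HW-uXEv}.
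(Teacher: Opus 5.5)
Your proof is correct, and it has the same three-part skeleton as the paper's: $L^2$ convergence of the truncations, then the finite-dimensional Hanson--Wright inequality applied uniformly through $\|\bm Q^{(J)}\|_{\F}\le\|\bm Q\|_{\F}$ and $\|\bm Q^{(J)}\|_{\op}\le\|\bm Q\|_{\op}$, then a limiting argument. It differs in two places, and in the first one your version is the valid one.

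In Step~1 the paper claims $\mathbb E[(S_{J'}-S_J)^2]\le C_1K^4\sum_{(i,j)\in A(J,J')}q_{ij}^2$ for the \emph{uncentred} sums. This misses the pairing $(i,i),(k,k)$ with $i\neq k$, which contributes $\sum_{i\neq k}q_{ii}q_{kk}\,\mathbb E Z_i^2\,\mathbb E Z_k^2$ and is not controlled by $\|\bm Q\|_{\F}$. For example, take $\bm Q=\mathrm{diag}(1/i)$ and standard Gaussian $Z_i$: then $\|\bm Q\|_{\F}<\infty$, but $\mathbb E S_J=\sum_{i\in J}1/i$ diverges, so $\{S_J\}$ is not Cauchy in $L^2$. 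Your diagonal/off-diagonal split proves exactly what the hypotheses give, namely $L^2$ convergence of $S_J-\mathbb E S_J$. Your trace-class check (via $\|\bm H_u\|_{\F}<\infty$ and $\|\bm D_v\|_{\F}=\sqrt T\|\bbm v\|_2$) is what is needed to justify $S=\bm Z^\top\bm Q\bm Z$ and $\mathbb E S=\lim_J\mathbb E S_J$ in Lemmas~\ref{lemma:HW-uXTXu} and~\ref{lemma:HW-uXEv}. Your explicit mean-zero assumption is also one the paper uses without stating it, and the finite-dimensional inequality genuinely needs it.

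In Step~3 the paper fixes $\eta=\exp(-c\Lambda(t)/2)$, with $\Lambda(t)$ the minimum in the exponent, obtains $3\exp(-c\Lambda(t)/2)$, and then readjusts constants. You instead let $n\to\infty$ at fixed $\eta$ and then $\eta\downarrow 0$, using continuity of the tail function. That is simpler and keeps the constant $c$ of the finite-dimensional theorem.
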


	\begin{proof}[\textbf{Proof of Lemma \ref{lemma:infinite-HW}}]
		The proof of this lemma consists of three parts: first we establish the well-definedness of $S$, then we derive a finite-dimensional Hanson-Wright inequality for the truncations $S_J$, and finally we pass to the limit $J\uparrow I$ to obtain the desired bound for $S$.\\
		\textbf{Step 1 (Well-definedness of $S$).} For any finite subset $J\subset I$, define $S_J = \sum_{i,j\in J} q_{ij} Z_i Z_j$. We first show that $\{S_J\}_J$ is a Cauchy net in $L^2$. For finite sets $J\subset J'\subset I$, we have
		\begin{align*}
			S_{J'}-S_J = \sum_{\substack{i\in J',j\in J'}} q_{ij}Z_iZ_j - \sum_{\substack{i\in J,j\in J}} q_{ij}Z_iZ_j = \sum_{\substack{i\in J',j\in J'\\ \text{at least one of }i,j\notin J}} q_{ij}Z_iZ_j.
		\end{align*}
		For notational convenience, set $A(J,J') = \big\{(i,j)\in J'\times J' : i\notin J \text{ or } j\notin J\big\} = (J'\times J')\setminus (J\times J)$. Then, we have 
		\begin{align*}
			\mathbb{E}\big[(S_{J'}-S_J)^2\big] = \sum_{(i,j),(k,\ell)\in A(J,J')} q_{ij}q_{k\ell}\,\mathbb{E}[Z_iZ_jZ_kZ_\ell].
		\end{align*}

		Since the $\{Z_i\}_{i\in I}$ are independent with mean zero, we have $\mathbb{E}[Z_iZ_jZ_kZ_\ell]=0$ unless the indices $(i,j,k,\ell)$ come in pairs. In all nonzero cases we may bound the fourth moments crudely using the sub-Gaussian assumption: there exists a constant $C_0>0$ such that $\mathbb{E}[Z_i^2 Z_j^2]\leq C_0 K^4$ for $i,j\in I$. Consequently,
		\begin{align*}
			\mathbb{E}\big[(S_{J'}-S_J)^2\big]\leq C_1 K^4 \sum_{(i,j)\in A(J,J')} q_{ij}^2,
		\end{align*}
		for some constant $C_1>0$. Since $\sum_{i,j\in I} q_{ij}^2 = \|\bm{Q}\|_{\mathrm{F}}^2 < \infty$, the tail sum $\sum_{(i,j)\in A(J,J')} q_{ij}^2$ tends to $0$ as $J,J'\uparrow I$. Hence $\mathbb{E}\big[(S_{J'}-S_J)^2\big]\xrightarrow[J,J'\uparrow I]{}0$, so $\{S_J\}_J$ is a Cauchy net in $L^2$. Therefore, there exists $S\in L^2$ such that $S_J\to S$ in $L^2$ as $J\uparrow I$. By construction,
		\begin{align*}
			S = \lim_{J \uparrow I} S_J = \lim_{J \uparrow I} \sum_{i,j\in J} q_{ij} Z_i Z_j = \sum_{i,j\in I} q_{ij} Z_i Z_j = \bm{Z}^\top \bm{Q}\bm{Z}.
		\end{align*}
		Consequently, the infinite quadratic form is well defined in $L^2$.\\
		\textbf{Step 2 (Finite-dimensional Hanson-Wright inequality).}
		For each finite $J\subset I$, the vector $\bm{Z}^{(J)}=(Z_i)_{i\in J}$ is a $|J|$-dimensional vector with independent sub-Gaussian entries, and $\bm{Q}^{(J)}=(q_{ij})_{i,j\in J}$ is a symmetric matrix with $\|\bm{Q}^{(J)}\|_{\mathrm{op}}\leq\|\bm{Q}\|_{\mathrm{op}}$ and $\|\bm{Q}^{(J)}\|_{\mathrm{F}}\leq\|\bm{Q}\|_{\mathrm{F}}$.
		By the classical Hanson-Wright inequality, e.g., \cite{vershynin2018high}, Theorem 6.2.1, there exists a constant $c>0$ such that, for any $t>0$,
		\begin{align*}
			\mathbb{P}\left(|S_J-\mathbb{E} S_J|\geq t\right) \leq 2\exp\left(-c\min\left\{\frac{t^2}{K^4\|\bm{Q}^{(J)}\|_{\mathrm{F}}^2},\frac{t}{K^2\|\bm{Q}^{(J)}\|_{\mathrm{op}}}\right\}\right).
		\end{align*}
		Using the bounds $\|\bm{Q}^{(J)}\|_{\mathrm{F}}\leq\|\bm{Q}\|_{\mathrm{F}}$ and $\|\bm{Q}^{(J)}\|_{\mathrm{op}}\leq\|\bm{Q}\|_{\mathrm{op}}$, we have 
		\begin{align}\label{eq:HW-finite-J}
		\mathbb{P}\left(|S_J-\mathbb{E} S_J|\geq t\right) \leq 2\exp\left(-c\min\left\{\frac{t^2}{K^4\|\bm{Q}\|_{\mathrm{F}}^2},\frac{t}{K^2\|\bm{Q}\|_{\mathrm{op}}}\right\}\right),
		\end{align}
		which holds uniformly over all finite $J \subset I$ and $t>0$.\\
		\textbf{Step 3 (Passing to $S$).}
		From Step 1 we know that $S_J\to S$ in $L^2$, and hence $S_J\to S$ in probability and thus $\mathbb{E} S_J\to \mathbb{E} S$ as $J\uparrow I$. We now pass from the finite-dimensional inequality \eqref{eq:HW-finite-J} to the infinite-dimensional bound. Fix $t>0$ and choose any $\eta\in(0,1/2)$. Since $S_J\to S$ in $L^2$, there exists a finite subset $J_0\subset I$ such that for all $J\supset J_0$,
		\begin{align}\label{eq:L2-small}
		\mathbb{E}\big[(S_J-S)^2\big] \leq \frac{\eta^3 t^2}{4}.
		\end{align}
		By the Cauchy-Schwarz inequality, we have
		\begin{align*}
			|\mathbb{E} S_J-\mathbb{E} S| \leq \mathbb{E}|S_J-S| \leq \big(\mathbb{E}[(S_J-S)^2]\big)^{1/2} \leq \frac{\eta^{3/2}t}{2} < \frac{\eta t}{2},
		\end{align*}
		for all such $J \supset J_0$. Define $\Delta_J = S-S_J$. Then we can decompose $S-\mathbb{E} S=(S_J-\mathbb{E} S_J)+(\Delta_J-\mathbb{E}\Delta_J)$. Therefore, for every $t>0$,
		\begin{align}\label{eq:split}
		\mathbb{P}\big(|S-\mathbb{E} S|\geq t\big) \leq \mathbb{P}\big(|S_J-\mathbb{E} S_J|\geq (1-\eta)t\big) + \mathbb{P}\big(|\Delta_J-\mathbb{E}\Delta_J|\geq \eta t\big).
		\end{align}
		Next, we bound the second probability. By the triangle inequality, we have $|\Delta_J-\mathbb{E}\Delta_J| = |(S-S_J)-(\mathbb{E} S-\mathbb{E} S_J)| \leq |S-S_J|+|\mathbb{E} S-\mathbb{E} S_J|$.
		For all sufficiently large $J$ we have $|\mathbb{E} S-\mathbb{E} S_J|<\eta t/2$, and hence $\{|\Delta_J-\mathbb{E}\Delta_J|\geq \eta t\} \subset \{|S-S_J|\geq \eta t/2\}$.
		Therefore, for such $J$,
		\begin{align*}
			\mathbb{P}\big(|\Delta_J-\mathbb{E}\Delta_J|\geq \eta t\big) \leq \mathbb{P}\left(|S-S_J|\geq \frac{\eta t}{2}\right) \leq \frac{\mathbb{E}[(S-S_J)^2]}{(\eta t/2)^2} \leq \eta,
		\end{align*}
		where we used Chebyshev's inequality and \eqref{eq:L2-small}. Substituting this bound into \eqref{eq:split}, we have, for any $J \supset J_0$,
		\begin{align*}
			\mathbb{P}\big(|S-\mathbb{E} S|\geq t\big) \leq \mathbb{P}\big(|S_J-\mathbb{E} S_J|\geq (1-\eta)t\big)+\eta.
		\end{align*}
		Then, applying \eqref{eq:HW-finite-J} with $t$ replaced by $(1-\eta)t$ yields
		\begin{align*}
			\mathbb{P}\big(|S-\mathbb{E} S|\geq t\big) \leq 2\exp\left(-c\min\left\{\frac{(1-\eta)^2 t^2}{K^4\|\bm{Q}\|_{\mathrm{F}}^2},\frac{(1-\eta)t}{K^2\|\bm{Q}\|_{\mathrm{op}}}\right\}\right)+\eta.
		\end{align*}
		Introduce $\Lambda(t)=\min\{t^2/K^4\|\bm{Q}\|_{\mathrm{F}}^2, t/K^2\|\bm{Q}\|_{\mathrm{op}}\}$.
		Then, the previous inequality can be written as $\mathbb{P}(|S-\mathbb{E} S|\geq t) \leq 2\exp(-c(1-\eta)^2\Lambda(t))+\eta$. Now choose $\eta=\exp(-c\Lambda(t)/2) \in (0,1/2)$. For this choice, we have $2\exp(-c(1-\eta)^2\Lambda(t)) \leq 2\exp(-c\Lambda(t)/2)$. Consequently, we have $\mathbb{P}\big(|S-\mathbb{E} S|\geq t\big) \leq 3\exp(-c\Lambda(t)/2)$. Then, we conclude that there exists a constant $c>0$ such that for $t>0$,
		\begin{align*}
			\mathbb{P}\big(|S-\mathbb{E} S|\geq t\big) \leq 2\exp\left(-c\min\left\{\frac{t^2}{K^4\|\bm{Q}\|_{\mathrm{F}}^2},\frac{t}{K^2\|\bm{Q}\|_{\mathrm{op}}}\right\}\right).
		\end{align*}
		This is exactly the desired bound, and completes the proof.
	\end{proof}

	\begin{lemma}[Hanson-Wright type bound for VAR quadratic forms]\label{lemma:HW-uXTXu}
		Let $\{\bbm{y}_t\}_{t\in\mathbb Z}$ be a stationary VAR($p$) process with VMA($\infty$) representation $\bbm{y}_t = \sum_{j\geq0} \bbm{\Psi}_j \bbm{\epsilon}_{t-j}$, where $\{\bbm{\Psi}_j\}_{j\geq0}$ is absolutely summable. Suppose $\bbm{\epsilon}_t = \bbm{\Sigma}_\epsilon^{1/2}\bbm{\xi}_t$, where $\bbm{\xi}_t=(\xi_{1t},\ldots,\xi_{dt})^\top$ has independent coordinates, is $i.i.d.$ across $t$, and satisfies $\mathbb{E}\exp(\mu\xi_{it}) \leq \exp(\mu^2\sigma^2/2)$ for all $\mu\in\mathbb{R}$ and all $i$. Define the stacked lag vector $\bbm{x}_t=(\bbm{y}_{t-1}^\top,\ldots,\bbm{y}_{t-p}^\top)^\top\in\mathbb{R}^{pd}$, and let $\bm{X}\in\mathbb{R}^{T\times pd}$ be the data matrix with $t$-th row $\bbm{x}_{T-t+1}^\top$. Then, for any fixed $\bbm{u}=(\bbm{u}_1^\top,\ldots,\bbm{u}_p^\top)^\top\in\mathbb{R}^{pd}$, there exists a constant $c>0$ such that, for $t>0$,
		\begin{align*}
			\mathbb{P}\Big(\big|\bbm{u}^\top\bm{X}^\top\bm{X}\bbm{u}-\mathbb{E}[\bbm{u}^\top\bm{X}^\top\bm{X}\bbm{u}]\big|\geq t\Big)\leq 2\exp\Bigg(-c\min\Bigg\{\frac{t^2}{C_{\epsilon,\mathcal A}^2\sigma^4pT\|\bbm{u}\|_2^4},\frac{t}{C_{\epsilon,\mathcal A}\sigma^2p\|\bbm{u}\|_2^2}\Bigg\}\Bigg),
		\end{align*}
		where $C_{\epsilon,\mathcal A} = \lambda_{\max}(\bbm{\Sigma}_\epsilon)\mu_{\min}^{-1}(\mathcal{A})$ and $\mu_{\min}(\mathcal{A})$ is defined in Lemma~\ref{lemma:ar-polynomial}.
	\end{lemma}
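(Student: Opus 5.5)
The plan is to reduce the statement to the infinite-dimensional Hanson--Wright inequality (Lemma~\ref{lemma:infinite-HW}) through the quadratic-form representation of Lemma~\ref{lemma:equivalence-representation-of-quadratic-form}. For fixed $\bbm u$, that lemma writes
\[
\bbm u^\top\bm X^\top\bm X\bbm u=\bm Z_T^\top\bm Q\bm Z_T ,
\]
where $\bm Z_T$ stacks the standardized innovations $\bbm\xi_t$. Its entries are independent across both coordinates and time, since the coordinates of each $\bbm\xi_t$ are independent by assumption and $\{\bbm\xi_t\}$ is i.i.d. in $t$. The matrix $\bm Q=(\bm I_\infty\otimes\bbm\Sigma_\epsilon^{1/2})^\top\bm H_u^\top\bm H_u(\bm I_\infty\otimes\bbm\Sigma_\epsilon^{1/2})$ is deterministic, symmetric and positive semidefinite. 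The same lemma gives the norm bounds
\[
\|\bm Q\|_{\op}\le C_{\epsilon,\mathcal A}\,p\,\|\bbm u\|_2^2,
\qquad
\|\bm Q\|_{\F}\le C_{\epsilon,\mathcal A}\sqrt{pT}\,\|\bbm u\|_2^2,
\]
with $C_{\epsilon,\mathcal A}=\lambda_{\max}(\bbm\Sigma_\epsilon)/\mu_{\min}(\mathcal A)$.

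First I check the hypotheses of Lemma~\ref{lemma:infinite-HW}. Each scalar $\xi_{it}$ satisfies $\mathbb E e^{\mu\xi_{it}}\le e^{\mu^2\sigma^2/2}$. By the variance-proxy/Orlicz equivalence (as in the proof of Lemma~\ref{lemma:subg-vector}), this gives $\|\xi_{it}\|_{\psi_2}\le C\sigma$ uniformly, so we may take $K=C\sigma$. Next, $\bm Q$ must be a bounded symmetric operator with finite Frobenius norm. Symmetry is immediate from its form, and finiteness follows from the two displayed bounds, since $T$, $p$ and $\|\bbm u\|_2$ are finite. We also need the $L^2$ limit $S$ constructed in Lemma~\ref{lemma:infinite-HW} to coincide almost surely with $\bbm u^\top\bm X^\top\bm X\bbm u$. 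I would argue this from the absolute summability of $\{\bbm\Psi_j\}$. Each coordinate $(\bm H_u\bm E_T)_t$ is an $L^2$-convergent series, so the finite truncations of $\|\bm H_u\bm E_T\|_2^2$ converge in $L^2$ to $\|\bm Y\|_2^2$. The truncations $S_J$ over square index blocks then reach the same limit, because both are $L^2$ limits along a cofinal family of finite sets.

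Then I apply the tail bound of Lemma~\ref{lemma:infinite-HW} with these quantities. Substituting $K^4\|\bm Q\|_{\F}^2\le C\sigma^4C_{\epsilon,\mathcal A}^2pT\|\bbm u\|_2^4$ and $K^2\|\bm Q\|_{\op}\le C\sigma^2C_{\epsilon,\mathcal A}p\|\bbm u\|_2^2$ into the minimum produces exactly the stated Bernstein-type tail, once the numerical constants are absorbed into $c$. Since the Hanson--Wright exponent is monotone in the norm parameters, replacing $\|\bm Q\|_{\F}$ and $\|\bm Q\|_{\op}$ by these upper bounds only weakens the inequality, which is legitimate.

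The main obstacle is the identification step: showing that the infinite quadratic form defined in Lemma~\ref{lemma:infinite-HW} as an $L^2$ limit equals the actual random variable $\bbm u^\top\bm X^\top\bm X\bbm u$, and that $\mathbb E S$ matches $\mathbb E[\bbm u^\top\bm X^\top\bm X\bbm u]$. I would handle it through $L^2$ convergence of the VMA($\infty$) representation. Truncating the innovations at lag $M$ gives $\bm Y^{(M)}\to\bm Y$ in $L^4$, using sub-Gaussian moments and $\sum_m\|\bbm c_m\|_2<\infty$. Hence $\|\bm Y^{(M)}\|_2^2\to\|\bm Y\|_2^2$ in $L^2$, and this is a special cofinal subsequence of the $S_J$. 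Because $L^2$ convergence implies convergence of means, the expectations agree as well. Everything else is routine substitution.
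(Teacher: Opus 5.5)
Your proposal is correct and follows the paper's proof. The paper likewise writes $\bbm u^\top\bm X^\top\bm X\bbm u=\bm Z_T^\top\bm Q\bm Z_T$ via Lemma~\ref{lemma:equivalence-representation-of-quadratic-form} and substitutes $\|\bm Q\|_{\op}\le C_{\epsilon,\mathcal A}p\|\bbm u\|_2^2$, $\|\bm Q\|_{\F}\le C_{\epsilon,\mathcal A}\sqrt{pT}\|\bbm u\|_2^2$ and $K=C\sigma$ into Lemma~\ref{lemma:infinite-HW}. Your additional argument, which identifies the $L^2$-limit quadratic form and its mean with the actual random variable through $L^4$ convergence of the VMA truncations, is sound; the paper leaves this step implicit in the representation lemma.
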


	\begin{proof}[\textbf{Proof of Lemma \ref{lemma:HW-uXTXu}}]
		Under the stated assumptions, Lemma~\ref{lemma:equivalence-representation-of-quadratic-form} yields a semi-infinite innovation vector $\bm{Z}_T = (\bbm{\xi}_T^\top,\bbm{\xi}_{T-1}^\top,\bbm{\xi}_{T-2}^\top,\ldots,\bbm{\xi}_1^\top,\bbm{\xi}_0^\top,\bbm{\xi}_{-1}^\top,\ldots)^\top$ and a deterministic positive semidefinite matrix $\bm{Q}= (\bm{I}_{\infty}\otimes\bbm{\Sigma}_\epsilon^{1/2})^\top \bm{H}_u^\top \bm{H}_u (\bm{I}_{\infty}\otimes\bbm{\Sigma}_\epsilon^{1/2})$ such that
		\[
			\bbm{u}^\top \bm{X}^\top \bm{X} \bbm{u}=\bm{Z}_T^\top \bm{Q} \bm{Z}_T.
		\]
		Since $\{\xi_{it}\}$ are independent and sub-Gaussian with $\|\xi_{it}\|_{\psi_2}\lesssim \sigma$, the entries of $\bm{Z}_T$ are independent and satisfy the same $\psi_2$ bound.

		Next, Lemma~\ref{lemma:equivalence-representation-of-quadratic-form} also provides the norm bounds $\|\bm{Q}\|_{\mathrm{op}}\leq C_{\epsilon,\mathcal A}p\|\bbm{u}\|_2^2$ and $\|\bm{Q}\|_{\mathrm{F}}\leq C_{\epsilon,\mathcal A}\sqrt{pT}\|\bbm{u}\|_2^2$. We now apply Lemma~\ref{lemma:infinite-HW} with $S=\bm{Z}_T^\top \bm{Q}\bm{Z}_T$ and $\bm{Q}$ as above. Lemma~\ref{lemma:infinite-HW} guarantees the existence of a constant $c>0$ such that, for $t>0$,
		\begin{align*}
			\mathbb{P}\Big(\big|\bbm{u}^\top\bm{X}^\top\bm{X}\bbm{u}-\mathbb{E}[\bbm{u}^\top\bm{X}^\top\bm{X}\bbm{u}]\big|\geq t\Big) \leq 2\exp\left(-c\min\left\{\frac{t^2}{\sigma^4\|\bm{Q}\|_{\mathrm{F}}^2},\frac{t}{\sigma^2\|\bm{Q}\|_{\mathrm{op}}}\right\}\right).
		\end{align*}
		Substituting the bounds on $\|\bm{Q}\|_{\mathrm{F}}$ and $\|\bm{Q}\|_{\mathrm{op}}$ yields
		\begin{align*}
			\mathbb{P}\Big(\big|\bbm{u}^\top\bm{X}^\top\bm{X}\bbm{u}-\mathbb{E}[\bbm{u}^\top\bm{X}^\top\bm{X}\bbm{u}]\big|\geq t\Big) \leq 2\exp\Bigg(-c\min\Bigg\{\frac{t^2}{C_{\epsilon,\mathcal A}^2\sigma^4pT\|\bbm{u}\|_2^4},\frac{t}{C_{\epsilon,\mathcal A}\sigma^2p\|\bbm{u}\|_2^2}\Bigg\}\Bigg).
		\end{align*}
		This completes the proof.
	\end{proof}

	\begin{lemma}[Hanson-Wright type bound for VAR cross terms]\label{lemma:HW-uXEv}
		Let $\{\bbm{y}_t\}_{t\in\mathbb{Z}}$ be a stationary VAR($p$) process with VMA($\infty$) representation $\bbm{y}_t = \sum_{j\geq 0}\bm{\Psi}_j\bbm{\epsilon}_{t-j}$, where $\{\bbm{\Psi}_j\}_{j\geq 0}$ is absolutely summable. Suppose $\bbm{\epsilon}_t = \bbm{\Sigma}_\epsilon^{1/2}\bbm{\xi}_t$, where $\bbm{\xi}_t=(\xi_{1t},\ldots,\xi_{dt})^\top$ has independent coordinates, is $i.i.d.$ across $t$, and satisfies $\mathbb{E}\exp(\mu\xi_{it})\leq \exp(\mu^2\sigma^2/2)$ for all $\mu\in\mathbb{R}$ and all $i$. Define the stacked lag vector $\bbm{x}_t=(\bbm{y}_{t-1}^\top,\ldots,\bbm{y}_{t-p}^\top)^\top\in\mathbb{R}^{pd}$, and let $\bm{X}\in\mathbb{R}^{T\times pd}$ be the data matrix with $t$-th row $\bbm{x}_{T-t+1}^\top$. Let $\bm{E}\in\mathbb{R}^{T\times d}$ be the innovation matrix with $t$-th row $\bbm{\epsilon}_{T-t+1}^\top$. Then, for any fixed $\bbm{u}=(\bbm{u}_1^\top,\ldots,\bbm{u}_p^\top)^\top\in\mathbb{R}^{pd}$ and $\bbm{v}\in\mathbb{R}^d$, there exists a constant $c>0$ such that, for every $t>0$,
		\begin{align*}
			\mathbb{P}\big(|\bbm{u}^\top\bm{X}^\top\bm{E}\bbm{v}|\geq t\big) \leq 2\exp\Bigg(-c\min\Bigg\{\frac{t^2}{C_{\epsilon,\mathcal A}^2 \sigma^4T\|\bbm{u}\|_2^2\|\bbm{v}\|_2^2}, \frac{t}{C_{\epsilon,\mathcal A}\sigma^2\sqrt{p}\|\bbm{u}\|_2\|\bbm{v}\|_2}\Bigg\}\Bigg),
		\end{align*}
		where $C_{\epsilon,\mathcal A} = \lambda_{\max}(\bbm{\Sigma}_\epsilon)\mu_{\min}^{-1}(\mathcal{A})$ and $\mu_{\min}(\mathcal{A})$ is defined in Lemma~\ref{lemma:ar-polynomial}.
	\end{lemma}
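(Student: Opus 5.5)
The plan mirrors the proof of Lemma~\ref{lemma:HW-uXTXu}: reduce the bilinear form to a quadratic form in the independent innovations and then apply Lemma~\ref{lemma:infinite-HW}. By Lemma~\ref{lemma:equivalence-representation-of-uXEv}, $\bbm{u}^\top\bm{X}^\top\bm{E}\bbm{v}=\bm{Z}_T^\top\bm{Q}_{u,v}\bm{Z}_T$. Here $\bm{Z}_T$ stacks the $\bbm\xi_t$, whose coordinates are independent with $\|\xi_{it}\|_{\psi_2}\lesssim\sigma$ by Lemma~\ref{lemma:subg-vector}. The bounds on $\|\bm{Q}_{u,v}\|_{\op}$ and $\|\bm{Q}_{u,v}\|_{\F}$ in that lemma feed directly into the Hanson--Wright exponent.

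Two adjustments are needed before applying Lemma~\ref{lemma:infinite-HW}, since it is stated for symmetric operators. First, I would replace $\bm{Q}_{u,v}$ by its symmetrization $\bm{Q}^{s}=(\bm{Q}_{u,v}+\bm{Q}_{u,v}^\top)/2$. This leaves the quadratic form unchanged and does not increase the operator or Frobenius norms. Second, I need $\mathbb E[\bbm{u}^\top\bm{X}^\top\bm{E}\bbm{v}]=0$, so that the centered statement is the claimed uncentered one. This holds because $\bbm x_{t}$ is measurable with respect to $\{\bbm\epsilon_s\}_{s\le t-1}$ while $\bbm\epsilon_{t}$ is independent and mean zero. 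Equivalently, the diagonal blocks of $\bm{H}_u^\top\bm D_v$ vanish, since $\bbm c_0=\bm 0$.

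Applying Lemma~\ref{lemma:infinite-HW} with $K\asymp\sigma$ then gives
\[
\mathbb P(|\bbm{u}^\top\bm{X}^\top\bm{E}\bbm{v}|\ge t)\le 2\exp\Bigl(-c\min\Bigl\{\tfrac{t^2}{\sigma^4\|\bm{Q}^s\|_{\F}^2},\tfrac{t}{\sigma^2\|\bm{Q}^s\|_{\op}}\Bigr\}\Bigr).
\]
Substituting $\|\bm{Q}^s\|_{\op}\le\lambda_{\max}(\bbm\Sigma_\epsilon)\sqrt{p/\mu_{\min}(\mathcal A)}\,\|\bbm u\|_2\|\bbm v\|_2$ and $\|\bm{Q}^s\|_{\F}\le\lambda_{\max}(\bbm\Sigma_\epsilon)\sqrt{T/\mu_{\min}(\mathcal A)}\,\|\bbm u\|_2\|\bbm v\|_2$ yields the result.

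The main obstacle is reconciling these constants with $C_{\epsilon,\mathcal A}=\lambda_{\max}(\bbm\Sigma_\epsilon)/\mu_{\min}(\mathcal A)$. The factor $\lambda_{\max}(\bbm\Sigma_\epsilon)/\sqrt{\mu_{\min}(\mathcal A)}$ equals $C_{\epsilon,\mathcal A}\sqrt{\mu_{\min}(\mathcal A)}$. It is bounded by $C_{\epsilon,\mathcal A}$ only when $\mu_{\min}(\mathcal A)\le 1$, or after absorbing $\max\{1,\sqrt{\mu_{\min}(\mathcal A)}\}$ into the constant, using that $\mu_{\min}(\mathcal A)$ is bounded for a stable VAR. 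I would try first to check $\mu_{\min}(\mathcal A)\le1$ directly. Averaging $\mathcal A^\dagger(z)\mathcal A(z)$ over $|z|=1$ gives $\bm I+\sum_j\bm A_j^\top\bm A_j\succeq\bm I$. Therefore $\min_{|z|=1}\lambda_{\min}(\mathcal A^\dagger(z)\mathcal A(z))\le 1$, so $\mu_{\min}(\mathcal A)\le1$. This justifies replacing $\lambda_{\max}(\bbm\Sigma_\epsilon)/\sqrt{\mu_{\min}(\mathcal A)}$ by $C_{\epsilon,\mathcal A}$ in both terms, which completes the argument.
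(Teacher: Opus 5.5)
Your route is the same as the paper's. You use the representation from Lemma~\ref{lemma:equivalence-representation-of-uXEv}, then the mean-zero property (your remark that the diagonal blocks of $\bm H_u^\top\bm D_v$ vanish because $\bbm c_0=\bm 0$ is a correct equivalent of the paper's predictability argument), then symmetrization, then Lemma~\ref{lemma:infinite-HW}. The paper replaces $\lambda_{\max}(\bbm\Sigma_\epsilon)/\sqrt{\mu_{\min}(\mathcal A)}$ by $C_{\epsilon,\mathcal A}$ without comment. You are right that this step needs $\mu_{\min}(\mathcal A)\le 1$, but your justification of that fact does not work.

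The identity $\frac{1}{2\pi}\int_{-\pi}^{\pi}\mathcal A^\dagger(e^{i\theta})\mathcal A(e^{i\theta})\,d\theta=\bm I+\sum_j\bm A_j^\top\bm A_j\succeq\bm I$ is a \emph{lower} bound on the average.
\begin{itemize}
\item By convexity of $\lambda_{\max}$ it yields $\mu_{\max}(\mathcal A)\ge 1$, which is not what you need.
\item By concavity of $\lambda_{\min}$ it only gives $\mu_{\min}(\mathcal A)\le\lambda_{\min}(\bm I+\sum_j\bm A_j^\top\bm A_j)$, and that number is at least $1$.
\item Example: for the scalar AR(1) with coefficient $1/2$ the average is $5/4$, so the argument cannot produce the bound $1$, even though the true value is $\mu_{\min}=1/4$.
\end{itemize}
Your fallback of absorbing $\max\{1,\sqrt{\mu_{\min}(\mathcal A)}\}$ into $c$ needs a model-free bound on $\mu_{\min}(\mathcal A)$. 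That is exactly the missing fact; otherwise $c$ would depend on $\mathcal A$, which matters in the downstream union bounds over nets.

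The fix is to average the inverse instead. Since $\bbm\Psi_0=\mathcal A(0)^{-1}=\bm I$, Parseval gives $\frac{1}{2\pi}\int_{-\pi}^{\pi}\bbm\Psi^\dagger(e^{i\theta})\bbm\Psi(e^{i\theta})\,d\theta=\sum_{h\ge 0}\bbm\Psi_h^\top\bbm\Psi_h\succeq\bm I$. Convexity of $\lambda_{\max}$ then gives $\sup_{|z|=1}\lambda_{\max}(\bbm\Psi^\dagger(z)\bbm\Psi(z))\ge 1$, so $\mu_{\min}(\mathcal A)\le 1$ by Lemma~\ref{lemma:ar-polynomial}.

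Alternatively, use Jensen's formula. $\log|\det\mathcal A(z)|$ is harmonic on a neighbourhood of the closed unit disk and equals $0$ at $z=0$, so its average over $|z|=1$ is $0$. Hence $\det(\mathcal A^\dagger\mathcal A)\le 1$ at some point of the circle, and there $\lambda_{\min}\le 1$. With either replacement the substitution by $C_{\epsilon,\mathcal A}$ is justified and your proof is complete.
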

	
	\begin{proof}[\textbf{Proof of Lemma \ref{lemma:HW-uXEv}}]
		Under the stated assumptions, Lemma~\ref{lemma:equivalence-representation-of-uXEv} yields a semi-infinite innovation vector $\bm{Z}_T = (\bbm{\xi}_T^\top,\bbm{\xi}_{T-1}^\top,\bbm{\xi}_{T-2}^\top,\ldots,\bbm{\xi}_1^\top,\bbm{\xi}_0^\top,\bbm{\xi}_{-1}^\top,\ldots)^\top$ and a deterministic matrix $\bm{Q}_{u,v}$ such that $\bbm{u}^\top\bm{X}^\top\bm{E}\bbm{v} = \bm{Z}_T^\top \bm{Q}_{u,v}\bm{Z}_T$.

		Moreover, note that for each $t$ the stacked lag vector $\bbm{x}_t$ is measurable with respect to the $\sigma$-field generated by $\{\bbm{\epsilon}_s : s \leq t-1\}$, whereas $\bbm{\epsilon}_t$ is independent of the past with $\mathbb{E}[\bbm{\epsilon}_t]=\bm{0}$. Hence, for any fixed $\bbm{u},\bbm{v}$, we have
		\[
			\mathbb{E}\big[\langle\bbm{u},\bbm{x}_t\rangle \langle\bbm{\epsilon}_t,\bbm{v}\rangle\big] = \mathbb{E}\Big[\langle\bbm{u},\bbm{x}_t\rangle \,\mathbb{E}\big[\langle\bbm{\epsilon}_t,\bbm{v}\rangle \mid \mathcal{F}_{t-1}\big]\Big] = 0,
		\]
		where $\mathcal{F}_{t-1}$ denotes the $\sigma$-field generated by $\{\bbm{\epsilon}_s:s\leq t-1\}$. Summing over $t$ gives
		\[
			\mathbb{E}[\bbm{u}^\top\bm{X}^\top\bm{E}\bbm{v}] = \sum_{t=1}^T \mathbb{E}\big[\langle\bbm{u},\bbm{x}_t\rangle \langle\bbm{\epsilon}_t,\bbm{v}\rangle\big] = 0.
		\]

		Define $\widetilde{\bm{Q}}_{u,v} = 1/2(\bm{Q}_{u,v} + \bm{Q}_{u,v}^\top)$. Then, we have $\bbm{u}^\top\bm{X}^\top\bm{E}\bbm{v} = \bm{Z}_T^\top \bm{Q}_{u,v}\bm{Z}_T = \bm{Z}_T^\top \widetilde{\bm{Q}}_{u,v}\bm{Z}_T$. Since the coordinates $\{\xi_{it}\}$ are independent and sub-Gaussian with $\|\xi_{it}\|_{\psi_2}\lesssim\sigma$, the entries of $\bm{Z}_T$ are independent and satisfy the same $\psi_2$ bound.
		Moreover, Lemma~\ref{lemma:equivalence-representation-of-uXEv} provides the operator and Frobenius norm bounds for $\bm{Q}_{u,v}$, and taking the symmetric part does not increase either norm. Consequently, $\|\widetilde{\bm{Q}}_{u,v}\|_{\mathrm{op}} \leq \|\bm{Q}_{u,v}\|_{\mathrm{op}} \leq C_{\epsilon,\mathcal A}\sqrt{p}\|\bbm{u}\|_2\|\bbm{v}\|_2$ and $\|\widetilde{\bm{Q}}_{u,v}\|_{\mathrm{F}} \leq \|\bm{Q}_{u,v}\|_{\mathrm{F}} \leq C_{\epsilon,\mathcal A}\sqrt{T}\|\bbm{u}\|_2\|\bbm{v}\|_2$.
		Applying Lemma~\ref{lemma:infinite-HW} with $S = \bm{Z}_T^\top\widetilde{\bm{Q}}_{u,v}\bm{Z}_T$ and $\bm{Q}=\widetilde{\bm{Q}}_{u,v}$ yields, for $t>0$,
		\begin{align*}
			\mathbb{P}\big(|\bbm{u}^\top\bm{X}^\top\bm{E}\bbm{v}|\geq t\big) \leq 2\exp\left(-c\min\left\{\frac{t^2}{\sigma^4\|\widetilde{\bm{Q}}_{u,v}\|_{\mathrm{F}}^2},\frac{t}{\sigma^2\|\widetilde{\bm{Q}}_{u,v}\|_{\mathrm{op}}}\right\}\right).
		\end{align*}
		Substituting the bounds on $\|\widetilde{\bm{Q}}_{u,v}\|_{\mathrm{F}}$ and $\|\widetilde{\bm{Q}}_{u,v}\|_{\mathrm{op}}$ yields
		\begin{align*}
			\mathbb{P}\big(|\bbm{u}^\top\bm{X}^\top\bm{E}\bbm{v}|\geq t\big)\leq 2\exp\Bigg(-c\min\Bigg\{\frac{t^2}{C_{\epsilon,\mathcal A}^2 \sigma^4T\|\bbm{u}\|_2^2\|\bbm{v}\|_2^2},\frac{t}{C_{\epsilon,\mathcal A}\sigma^2\sqrt{p}\|\bbm{u}\|_2\|\bbm{v}\|_2}\Bigg\}\Bigg),
		\end{align*}
		This completes the proof.
	\end{proof}

	\begin{lemma}[Positive definiteness and eigenvalue bounds of the stacked lag covariance]\label{lem:Sigma-x-pd}
		Let $\{\bbm{y}_t\}_{t\in\mathbb Z}$ be a stationary VAR($p$) process with AR polynomial $\mathcal{A}(z) = \bm{I}_d - \sum_{k=1}^p \bm{A}_k z^k$, satisfying $\det\mathcal{A}(z)\neq 0$ for all $|z|\leq 1$. Let the corresponding VMA($\infty$) representation be $\bbm{y}_t = \sum_{j\geq 0} \bbm{\Psi}_j \bbm{\epsilon}_{t-j}$, $t\in\mathbb Z$, where $\{\bbm{\Psi}_j\}_{j\geq0}$ is absolutely summable. Assume $\{\bbm{\epsilon}_t\}$ is $i.i.d.$ with $\mathbb{E}[\bbm{\epsilon}_t]=\bm{0}$ and $\mathbb{E}[\bbm{\epsilon}_t\bbm{\epsilon}_t^\top]=\bbm{\Sigma}_\epsilon\succ\bm{0}$. For a fixed integer $p\geq 1$, define the stacked lag vector $\bbm{x}_t = (\bbm{y}_{t-1}^\top,\ldots,\bbm{y}_{t-p}^\top)^\top\in\mathbb{R}^{pd}$, and its covariance matrix $\bbm{\Sigma}_x = \mathbb{E}[\bbm{x}_t\bbm{x}_t^\top]\in\mathbb{R}^{p d\times pd}$. Then $\bbm{\Sigma}_x\succ\bm{0}$. Moreover, $c_{\epsilon,\mathcal A} \leq \lambda_{\min}(\bbm{\Sigma}_x) \leq \lambda_{\max}(\bbm{\Sigma}_x) \leq C_{\epsilon,\mathcal A}$ with $c_{\epsilon,\mathcal A} = \lambda_{\min}(\bbm{\Sigma}_\epsilon)\mu_{\max}^{-1}(\mathcal{A})$ and $C_{\epsilon,\mathcal A} = \lambda_{\max}(\bbm{\Sigma}_\epsilon)/\mu_{\min}^{-1}(\mathcal{A})$ with $\mu_{\min}(\mathcal{A})$ and $\mu_{\max}(\mathcal{A})$ are defined in Lemma~\ref{lemma:ar-polynomial}.
	\end{lemma}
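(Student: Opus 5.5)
The plan is to write $\bbm\Sigma_x$ as a Gram matrix of the semi-infinite block Toeplitz operator from Lemma~\ref{lemma:spectral-bounds}, so that Lemma~\ref{lemma:ar-polynomial} delivers the two-sided spectral bounds.

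\textbf{Step 1 (quadratic-form representation).} Fix $\bbm u=(\bbm u_1^\top,\ldots,\bbm u_p^\top)^\top\in\mathbb R^{pd}$. As in the proof of Lemma~\ref{lemma:stacked-subg-vector}, write $\langle\bbm u,\bbm x_t\rangle=\sum_{m\geq0}\langle\bbm c_m,\bbm\epsilon_{t-m}\rangle$ with $\bbm c_m=\sum_{k=1}^p\bbm\Psi_{m-k}^\top\bbm u_k$. Since the $\bbm\epsilon_t$ are i.i.d. with mean zero and covariance $\bbm\Sigma_\epsilon$, and the series converges in $L^2$ by absolute summability, independence across $m$ gives
\[
\bbm u^\top\bbm\Sigma_x\bbm u=\mathbb E\langle\bbm u,\bbm x_t\rangle^2=\sum_{m\geq0}\bbm c_m^\top\bbm\Sigma_\epsilon\bbm c_m .
\]
Bounding $\bbm\Sigma_\epsilon$ between $\lambda_{\min}(\bbm\Sigma_\epsilon)\bm I_d$ and $\lambda_{\max}(\bbm\Sigma_\epsilon)\bm I_d$ then sandwiches this expression between $\lambda_{\min}(\bbm\Sigma_\epsilon)\sum_m\|\bbm c_m\|_2^2$ and $\lambda_{\max}(\bbm\Sigma_\epsilon)\sum_m\|\bbm c_m\|_2^2$.

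\textbf{Step 2 (frequency domain).} By Parseval, as in Lemma~\ref{lemma:stacked-subg-vector},
\[
\sum_{m\geq0}\|\bbm c_m\|_2^2=\frac1{2\pi}\int_{-\pi}^{\pi}\|\bbm\Psi^\top(e^{-i\theta})\bm U_p(\theta)\|_2^2\,d\theta,
\qquad \bm U_p(\theta)=\sum_{k=1}^p\bbm u_k e^{-ik\theta}.
\]
Rayleigh--Ritz bounds the integrand between $\inf_{|z|=1}\lambda_{\min}(\bbm\Psi^\dagger\bbm\Psi)\|\bm U_p(\theta)\|_2^2$ and $\sup_{|z|=1}\lambda_{\max}(\bbm\Psi^\dagger\bbm\Psi)\|\bm U_p(\theta)\|_2^2$. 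A second application of Parseval gives $\frac1{2\pi}\int\|\bm U_p\|_2^2\,d\theta=\|\bbm u\|_2^2$. Lemma~\ref{lemma:ar-polynomial} identifies these extreme eigenvalues as $1/\mu_{\max}(\mathcal A)$ and $1/\mu_{\min}(\mathcal A)$. Combining with Step 1 yields
\[
c_{\epsilon,\mathcal A}\|\bbm u\|_2^2\leq\bbm u^\top\bbm\Sigma_x\bbm u\leq C_{\epsilon,\mathcal A}\|\bbm u\|_2^2 .
\]
The variational characterization of eigenvalues then gives the stated bounds. Here $C_{\epsilon,\mathcal A}$ is read as $\lambda_{\max}(\bbm\Sigma_\epsilon)/\mu_{\min}(\mathcal A)$, since the ``$/\mu_{\min}^{-1}$'' in the statement is a typo.

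\textbf{Step 3 (positive definiteness).} This follows from the lower bound once $c_{\epsilon,\mathcal A}>0$. We have $\lambda_{\min}(\bbm\Sigma_\epsilon)>0$ by assumption. Also $\mu_{\max}(\mathcal A)<\infty$, because $\mathcal A(z)$ is a matrix polynomial and hence continuous on the compact unit circle.

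The main obstacle is that the lower bound is not automatic. Parseval converts the $\bbm c_m$ into a product $\bbm\Psi^\top(e^{-i\theta})\bm U_p(\theta)$, and a pointwise lower bound on this product requires $\inf_\theta\lambda_{\min}(\bbm\Psi^\dagger\bbm\Psi)>0$. This holds because $\bbm\Psi=\mathcal A^{-1}$ exists on $|z|=1$ with bounded inverse. The identity $\bbm c_m=\sum_k\bbm\Psi_{m-k}^\top\bbm u_k$ also needs care: it corresponds to the shifted polynomial $\bm U_p$ with exponents $1,\ldots,p$, but the unimodular shift $e^{-ik\theta}$ does not change norms. Finally, the interchange of sums needs justification, which absolute summability of $\{\bbm\Psi_j\}$ supplies.
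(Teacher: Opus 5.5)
Your proposal is correct and follows the paper's proof essentially step for step. The shared steps are the expansion $\langle\bbm u,\bbm x_t\rangle=\sum_{m\geq0}\langle\bbm c_m,\bbm\epsilon_{t-m}\rangle$, the $\bbm\Sigma_\epsilon$ sandwich, Parseval with $\bm C(\theta)=\bbm\Psi^\top(e^{-i\theta})\bm U_p(\theta)$, and Lemma~\ref{lemma:ar-polynomial}; your reading of $C_{\epsilon,\mathcal A}$ as $\lambda_{\max}(\bbm\Sigma_\epsilon)/\mu_{\min}(\mathcal A)$ matches what the paper's argument actually delivers. Your explicit check that $c_{\epsilon,\mathcal A}>0$ is a small addition the paper leaves implicit, and your opening appeal to Lemma~\ref{lemma:spectral-bounds} is superfluous, since neither your steps nor the paper's proof use it.
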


	\begin{proof}[\textbf{Proof of Lemma \ref{lem:Sigma-x-pd}}]
		Fix any nonzero $\bbm{u}\in\mathbb{R}^{pd}$ and write $\bbm{u}=(\bbm{u}_1^\top,\ldots,\bbm{u}_p^\top)^\top$ with $\bbm{u}_k\in\mathbb{R}^d$. Considering $\bbm{y}_{t-k} = \sum_{j\geq0}\bbm{\Psi}_j\bbm{\epsilon}_{t-k-j}$ and the convention $\bbm{\Psi}_r = \bm{O}$ for $r<0$,
		\begin{align*}
			\langle\bbm{u},\bbm x_t\rangle &= \sum_{k=1}^{p} \langle\bbm{u}_k, \bbm{y}_{t-k}\rangle = \sum_{k=1}^{p}\sum_{j\geq0}\langle\bbm{u}_k, \bbm{\Psi}_j \bbm{\epsilon}_{t-k-j}\rangle = \sum_{k=1}^{p}\sum_{j\geq0}\langle\bbm{\Psi}_j^\top\bbm{u}_k, \bbm{\epsilon}_{t-k-j}\rangle\\
			&= \sum_{k=1}^{p} \sum_{m\geq k} \langle\bbm{\Psi}_{m-k}^\top\bbm{u}_k, \bbm{\epsilon}_{t-m}\rangle = \sum_{k=1}^{p} \sum_{m\geq 0} \langle\bbm{\Psi}_{m-k}^\top\bbm{u}_k, \bbm{\epsilon}_{t-m}\rangle = \sum_{m\geq0} \langle\bbm{c}_m, \bbm{\epsilon}_{t-m}\rangle,
		\end{align*}
		where $\bbm{c}_m = \sum_{k=1}^p \bbm{\Psi}_{m-k}^\top \bbm{u}_k$ with the convention $\bbm \Psi_r = \bm O$ for $r < 0$. Since $\{\bbm{\epsilon}_t\}$ are independent with covariance $\bbm{\Sigma}_\epsilon$, we have
		\begin{align}\label{eq:Var-St-cm}
			\mathrm{Var}(\langle\bbm{u}, \bbm{x}_t\rangle) = \mathrm{Var}\left(\sum_{m\geq0} \langle\bbm{c}_m, \bbm{\epsilon}_{t-m}\rangle\right) = \sum_{m\geq0} \bbm{c}_m^\top \bbm{\Sigma}_\epsilon \bbm{c}_m.
		\end{align}
		Note that $\bbm{\Sigma}_\epsilon\succ \bm 0$, then we have
		\begin{align}\label{eq:Var-St-sandwich-1}
			\lambda_{\min}(\bbm{\Sigma}_\epsilon)\sum_{m\geq 0}\|\bbm c_m\|_2^2 \leq \mathrm{Var}(\langle\bbm{u}, \bbm{x}_t\rangle) \leq \lambda_{\max}(\bbm{\Sigma}_\epsilon)\sum_{m\geq 0}\|\bbm c_m\|_2^2,
		\end{align}
		where $\lambda_{\min}(\bbm{\Sigma}_\epsilon)$ and $\lambda_{\max}(\bbm{\Sigma}_\epsilon)$ are the minimum and maximum eigenvalues of $\bbm{\Sigma}_\epsilon$, respectively.

		Next, recall the previously defined $\bbm{\Psi}(e^{-i\theta})=\sum_{j\geq 0}\bbm{\Psi}_j e^{-ij\theta}$ for $\theta\in[-\pi,\pi]$ and $\bm{U}_p(\theta)=\sum_{k=1}^p \bbm{u}_k e^{-ik\theta}$. Then, we have
		\begin{align*}
			\bm C(\theta) &= \sum_{m\geq 0}\bbm c_m e^{-im\theta} = \sum_{m\geq 0}\sum_{k=1}^{p}\bbm\Psi_{m-k}^\top \bbm u_k e^{-im\theta}=\sum_{k=1}^{p}\sum_{j\geq 0}\bbm\Psi_{j}^\top \bbm u_k e^{-i(j+k)\theta}\\
			&=\left(\sum_{j\geq 0}\bbm\Psi_j^\top e^{-ij\theta}\right)\left(\sum_{k=1}^{p}\bbm u_k e^{-ik\theta}\right) =\bbm\Psi^\top(e^{-i\theta})\bm U_p(\theta).
		\end{align*}
		By Parseval's identity for vector-valued sequences, we have
		\begin{align}\label{eq:Parseval-C}
			\sum_{m\geq 0}\|\bbm c_m\|_2^2 = \frac{1}{2\pi}\int_{-\pi}^{\pi}\|\bm{C}(\theta)\|_2^2\,d\theta.
		\end{align}
		Moreover, since $\bm{C}(\theta)=\bbm{\Psi}^\top(e^{-i\theta})\,\bm U_p(\theta)$, it follows that
		\[
			\|\bm{C}(\theta)\|_2^2 = \bm{C}(\theta)^\dagger \bm{C}(\theta) = \bm{U}_p^\dagger(e^{-i\theta})\bigl(\bbm{\Psi}^\dagger(e^{-i\theta})\bbm{\Psi}(e^{-i\theta})\bigr)\bm{U}_p(e^{-i\theta}).
		\]
		By Lemma~\ref{lemma:ar-polynomial}, we have
		\[
			\frac{1}{\mu_{\max}(\mathcal{A})} \leq \lambda_{\min}\big(\bbm{\Psi}^\dagger(z)\bbm{\Psi}(z)\big) \leq \lambda_{\max}\big(\bbm{\Psi}^\dagger(z)\bbm{\Psi}(z)\big) \leq \frac{1}{\mu_{\min}(\mathcal{A})}, \ \text{for any}\ |z|=1.
		\]
		Therefore,
		\[
			\frac{1}{\mu_{\max}(\mathcal{A})}\|\bm{U}_p(e^{-i\theta})\|_2^2 \leq \|\bm{C}(\theta)\|_2^2 \leq \frac{1}{\mu_{\min}(\mathcal{A})}\|\bm{U}_p(e^{-i\theta})\|_2^2.
		\]
		Integrating and using Parseval again for $\bm{U}_p(e^{-i\theta})$, we have
		\[
			\frac{1}{2\pi}\int_{-\pi}^{\pi}\|\bm{U}_p(e^{-i\theta})\|_2^2\,d\theta = \sum_{k=1}^p \|\bbm u_k\|_2^2 = \|\bbm u\|_2^2.
		\]
		Consequently, we have
		\begin{align}\label{eq:cm-norm-sandwich}
			\frac{1}{\mu_{\max}(\mathcal{A})}\|\bbm u\|_2^2 \leq \sum_{m\geq 0}\|\bbm c_m\|_2^2 \leq \frac{1}{\mu_{\min}(\mathcal{A})}\|\bbm u\|_2^2.
		\end{align}
		Combining \eqref{eq:Var-St-sandwich-1} and \eqref{eq:cm-norm-sandwich} yields
		\[
			\frac{\lambda_{\min}(\bbm{\Sigma}_\epsilon)}{\mu_{\max}(\mathcal{A})}\|\bbm u\|_2^2 \leq \bbm u^\top\bbm\Sigma_x\bbm u \leq \frac{\lambda_{\max}(\bbm{\Sigma}_\epsilon)}{\mu_{\min}(\mathcal{A})}\|\bbm u\|_2^2.
		\]
		Taking the infimum and supremum over $\|\bbm u\|_2=1$ completes the proof.
	\end{proof}

	\begin{lemma}\label{lem:psi-geom}
		Let $\varrho\in(1,\varrho_*)$ and define $C_{\mathcal A}(\varrho)$ by \eqref{eq:Car-def}. With $\vartheta=\varrho^{-1}$, the power-series coefficients $\{\bm \Psi_{k,h}\}_{h \geq 0}$ in the expansion $\mathcal A_k^{-1}(z)=\sum_{h \geq 0}\bm \Psi_{k,h} z^h$, for $|z|<\varrho_*$, satisfy the bound
		\begin{align}\label{eq:Psi-bound}
			\|\bm \Psi_{k,h}\|_{\mathrm{op}}\leq C_{\mathcal A}(\varrho)\vartheta^h,\ \text{for any }\ k\in[K],\ h \geq 0.
		\end{align}
	\end{lemma}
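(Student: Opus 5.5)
This is a Cauchy estimate for the coefficients of the matrix-valued analytic function $\mathcal A_k^{-1}(z)$, and I will prove it as such. Fix $k\in[K]$. Since $\det\mathcal A_k(z)$ is a polynomial with no zeros in the open disk $\{|z|<\varrho_*\}$, Cramer's rule gives $\mathcal A_k^{-1}(z)=\operatorname{adj}(\mathcal A_k(z))/\det\mathcal A_k(z)$. Each entry of this matrix is therefore holomorphic on $\{|z|<\varrho_*\}$. Consequently the power series $\sum_{h\geq0}\bm\Psi_{k,h}z^h$ converges entrywise there and coincides with $\mathcal A_k^{-1}(z)$. For $|z|\leq1$ these are exactly the VMA coefficients used in the main text.

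Next, because $1<\varrho<\varrho_*$, the circle $\{|z|=\varrho\}$ lies inside the disk of holomorphy. Cauchy's integral formula then applies entrywise and yields the matrix identity
\[
\bm\Psi_{k,h}=\frac{1}{2\pi i}\oint_{|z|=\varrho}\frac{\mathcal A_k^{-1}(z)}{z^{h+1}}\,dz
=\frac{1}{2\pi}\int_{-\pi}^{\pi}\mathcal A_k^{-1}(\varrho e^{i\theta})\,\varrho^{-h}e^{-ih\theta}\,d\theta .
\]
To bound the operator norm, I take arbitrary unit vectors $\bbm a,\bbm b$ and apply the scalar integral inequality to $\bbm a^\top\bm\Psi_{k,h}\bbm b$. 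Equivalently, I use that $\|\cdot\|_{\op}$ is a norm and pass it inside the Bochner/Riemann integral. This gives
\[
\|\bm\Psi_{k,h}\|_{\op}\leq\varrho^{-h}\sup_{|z|=\varrho}\|\mathcal A_k^{-1}(z)\|_{\op}\leq C_{\mathcal A}(\varrho)\vartheta^h .
\]
The last inequality is the definition \eqref{eq:Car-def}, which takes the supremum over both $k$ and the circle. Since that supremum is uniform in $k$, the same bound holds for every $k\in[K]$ and every $h\geq0$.

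The only points that need care are as follows.
\begin{enumerate}
\item[(i)] $C_{\mathcal A}(\varrho)$ must be finite. This holds because $z\mapsto\mathcal A_k^{-1}(z)$ is continuous on the compact circle, as already noted after \eqref{eq:Car-def}, and $K$ is finite.
\item[(ii)] The identity $\|\mathcal A_k^{-1}(z)\|_{\op}=\lambda_{\min}(\mathcal A_k^\dagger(z)\mathcal A_k(z))^{-1/2}$ must hold. This is immediate from the singular values of an invertible matrix.
\end{enumerate}
I do not expect any step to be a real obstacle. The main subtlety is justifying the interchange of norm and contour integral, which the scalar bilinear-form reduction handles cleanly.
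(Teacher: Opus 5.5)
Your proposal is correct and follows essentially the same route as the paper. Both apply the Cauchy coefficient formula on the circle $|z|=\varrho$ and then bound the operator norm of the contour integral by $\frac{1}{2\pi}\oint\|\mathcal A_k^{-1}(z)\|_{\op}|z|^{-h-1}|dz|\leq C_{\mathcal A}(\varrho)\vartheta^h$. The only difference is how the norm is passed inside the integral: the paper tests against a unit vector $\bbm x$ and uses subadditivity of the Bochner integral, while you test against a bilinear form $\bbm a^\top(\cdot)\bbm b$; the two are equivalent.
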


	\begin{proof}[\textbf{Proof of Lemma \ref{lem:psi-geom}}]
		Since $\mathcal A_k^{-1}(z)$ is analytic on $\{|z|<\varrho_*\}$, the Cauchy coefficient formula yields
		\[
			\bm \Psi_{k,h} = \frac{1}{2\pi i}\oint_{|z|=\varrho} \mathcal A_k^{-1}(z)\,z^{-h-1}dz,\quad h \geq 0.
		\]
		Parametrize the circle by $z(t)=\varrho e^{it}$ for $t\in[0,2\pi]$, so that $dz = i \varrho e^{it}\,dt$ and $|dz| = \varrho\,dt$.
		For any unit vector $\bbm x\in\mathbb R^{d}$, we have
		\[
			\Bigl(\oint_{|z|=\varrho}\bm B(z)\,dz\Bigr)\bbm x = \int_{-\pi}^{\pi}\bm B\bigl(\varrho e^{it}\bigr)\bbm x i \varrho e^{it}\,dt.
		\]
		Hence, by thes sub-additivity of Bochner integrals and $\|\bm B(z)\bbm x\|_2\leq \|\bm B(z)\|_{\op}\|\bbm x\|_2$, we have
		\begin{align*}
			&\Bigl\|\Bigl(\oint_{|z|=\varrho}\bm B(z)\,dz\Bigr)\bbm x\Bigr\|_2 = \Bigl\|\int_{-\pi}^{\pi}\bm B(\varrho e^{it})\,\bbm x\, i\varrho e^{it}\,dt\Bigr\|_2 \leq \int_{-\pi}^{\pi}\bigl\|\bm B(\varrho e^{it})\,\bbm x\bigr\|_2\,\bigl|i\varrho e^{it}\bigr|\,dt \\
			&= \int_{-\pi}^{\pi}\bigl\|\bm B(\varrho e^{it})\,\bbm x\bigr\|_2\,\varrho\,dt \leq \|\bbm x\|_2\int_{-\pi}^{\pi}\|\bm B(\varrho e^{it})\|_{\op}\,\varrho\,dt = \|\bbm x\|_2\oint_{|z|=\varrho}\|\bm B(z)\|_{\op}\,|dz|.
		\end{align*}
		Taking the supremum over all $\bbm x$ with $\|\bbm x\|_2=1$ yields
		\[
			\Bigl\|\oint_{|z|=\varrho}\bm B(z)\,dz\Bigr\|_{\op} = \sup_{\|\bbm x\|_2=1}\Bigl\|\Bigl(\oint_{|z|=\varrho}\bm B(z)\,dz\Bigr)\bbm x\Bigr\|_2 \leq \oint_{|z|=\varrho}\|\bm B(z)\|_{\op}\,|dz|.
		\]
		Combining this with $\bm B(z)=\mathcal A_k^{-1}(z)z^{-h-1}$, and using $|z|=\varrho$ and $\oint_{|z|=\varrho}|dz|=2\pi\varrho$, we have
		\[
			\|\bm \Psi_{k,h}\|_{\mathrm{op}} \leq \frac{1}{2\pi}\oint_{|z|=\varrho}\|\mathcal A_k^{-1}(z)\|_{\mathrm{op}}\,|z|^{-h-1}\,|dz| \leq C_{\mathcal A}(\varrho)\,\varrho^{-h} = C_{\mathcal A}(\varrho)\,\vartheta^h,
		\]
		where we used the definition $C_{\mathcal A}(\varrho)=\sup_{k\in[K]}\sup_{|z|=\varrho}\|\mathcal A_k^{-1}(z)\|_{\mathrm{op}}$ and $\vartheta=\varrho^{-1}$.
	\end{proof}

\section{ADMM for Single Client Estimation}\label{sec:ADMM}

		In this section, we present the ADMM algorithm for solving the single-client estimation problem in \eqref{eq:local-estimator} and derive the corresponding closed-form updates. For notational simplicity, we suppress the client index $k$ and write $\bm X\in\mathbb R^{T\times pd}$ and $\bm Y\in\mathbb R^{T\times d}$ for the design matrix and response matrix of a fixed client, respectively. Recall that the single-client estimator solves
		\begin{align}
			(\widetilde{\bm A}_{0}, \widetilde{\bbm \Delta})
			\in \argmin_{\bm A_{0}, \bbm \Delta}
			\frac{1}{T}\|\bm Y-\bm X(\bm A_{0}+\bbm \Delta)\|_{\F}^{2}
			+ \lambda \|\bm A_{0}\|_{*}
			+ \omega \|\bbm \Delta\|_{1},
			\label{eq:local-estimator-admm}
		\end{align}
		subject to $\|\bbm \Delta\|_{\op}\leq \zeta$.

		To obtain separable subproblems, we introduce auxiliary variables $\bm B$, $\bm B_0$, $\bm D$, and $\bm Z$, where $\bm B$ represents the full coefficient matrix, $\bm B_0$ represents the low-rank component, $\bm D$ represents the sparse deviation, and $\bm Z$ is an additional copy of $\bm D$ used to impose the operator-norm constraint. The single-client estimator can be equivalently written as
		\begin{align*}
			\min_{\bm B,\bm B_0,\bm D,\bm Z}\ 
			&\ \frac{1}{T}\|\bm Y-\bm X\bm B\|_{\F}^{2}
			+ \lambda \|\bm B_0\|_{*}
			+ \omega \|\bm D\|_{1}
			+ \iota_{\op,\zeta}(\bm Z) \quad
			\text{s.t.}\  \bm B=\bm B_0+\bm D,\ \bm D=\bm Z,
		\end{align*}
		where the indicator function of the operator-norm ball is defined by
		\[
			\iota_{\op,\zeta}(\bm Z)
			=
			\begin{cases}
			0, & \|\bm Z\|_{\op}\leq \zeta,\\
			+\infty, & \text{otherwise}.
			\end{cases}
		\]

		Let $\bm U$ and $\bm V$ be the scaled dual variables associated with the constraints $\bm B=\bm B_0+\bm D$ and $\bm D=\bm Z$, respectively. For penalty parameters $\rho>0$ and $\eta>0$, the scaled augmented Lagrangian is
		\begin{align}\label{eq:augLag-scaled-rewrite}
			\mathcal L_{\rho,\eta}(\bm B,\bm B_0,\bm D,\bm Z,\bm U,\bm V)
			=&\ 
			\frac{1}{T}\|\bm Y-\bm X\bm B\|_{\F}^{2}
			+ \lambda \|\bm B_0\|_{*}
			+ \omega \|\bm D\|_{1}
			+ \iota_{\op,\zeta}(\bm Z) \\
			&\quad
			+ \frac{\rho}{2}\|\bm B-\bm B_0-\bm D+\bm U\|_{\F}^{2}
			+ \frac{\eta}{2}\|\bm D-\bm Z+\bm V\|_{\F}^{2}.
		\end{align}

		We now derive the ADMM updates by minimizing the scaled augmented Lagrangian in \eqref{eq:augLag-scaled-rewrite} with respect to each primal variable in turn while keeping the remaining variables fixed. This leads to the following simple subproblems. The resulting procedure is summarized in Algorithm~\ref{alg:single-client-admm}.

		\noindent
		\textbf{(I) $\bm B$-update.}
		The $\bm B$-subproblem is
		\[
			\bm B^{(n+1)}
			=
			\argmin_{\bm B}\ 
			\frac{1}{T}\|\bm Y-\bm X\bm B\|_{\F}^{2}
			+ \frac{\rho}{2}\|\bm B-\bm B_0^{(n)}-\bm D^{(n)}+\bm U^{(n)}\|_{\F}^{2}.
		\]
		This is a strictly convex quadratic problem. The first-order optimality condition gives
		\[
			\left(\frac{2}{T}\bm X^{\top}\bm X+\rho\bm I_{pd}\right)\bm B^{(n+1)}
			=
			\frac{2}{T}\bm X^{\top}\bm Y
			+ \rho\bigl(\bm B_0^{(n)}+\bm D^{(n)}-\bm U^{(n)}\bigr),
		\]
		and hence
		\[
			\bm B^{(n+1)}
			=
			\left(\frac{2}{T}\bm X^{\top}\bm X+\rho\bm I_{pd}\right)^{-1}
			\left(\frac{2}{T}\bm X^{\top}\bm Y+\rho(\bm B_0^{(n)}+\bm D^{(n)}-\bm U^{(n)})\right).
		\]

		\noindent
		\textbf{(II) $\bm B_0$-update.}
		The $\bm B_0$-subproblem is
		\[
			\bm B_0^{(n+1)}
			=
			\argmin_{\bm B_0}\ 
			\lambda\|\bm B_0\|_{*}
			+ \frac{\rho}{2}\|\bm B^{(n+1)}-\bm B_0-\bm D^{(n)}+\bm U^{(n)}\|_{\F}^{2}.
		\]
		This is the proximal mapping of the nuclear norm. Therefore,
		\[
			\bm B_0^{(n+1)}
			=
			\operatorname{SVT}_{\lambda/\rho}\bigl(\bm B^{(n+1)}-\bm D^{(n)}+\bm U^{(n)}\bigr).
		\]

		\noindent
		\textbf{(III) $\bm D$-update.}
		The $\bm D$-subproblem is
		\begin{align*}
			\bm D^{(n+1)}
			=
			\argmin_{\bm D}\ 
			&\ \omega\|\bm D\|_{1}
			+ \frac{\rho}{2}\|\bm B^{(n+1)}-\bm B_0^{(n+1)}-\bm D+\bm U^{(n)}\|_{\F}^{2} \\
			&\quad
			+ \frac{\eta}{2}\|\bm D-\bm Z^{(n)}+\bm V^{(n)}\|_{\F}^{2}.
		\end{align*}
		Let
		\[
			\bm M_D^{(n)}
			=
			\frac{\rho(\bm B^{(n+1)}-\bm B_0^{(n+1)}+\bm U^{(n)})+\eta(\bm Z^{(n)}-\bm V^{(n)})}{\rho+\eta}.
		\]
		Then the $\bm D$-subproblem is the proximal mapping of the entrywise $\ell_1$ norm, and hence
		\[
			\bm D^{(n+1)}
			=
			\operatorname{Soft}_{\omega/(\rho+\eta)}\bigl(\bm M_D^{(n)}\bigr).
		\]

		\noindent
		\textbf{(IV) $\bm Z$-update.}
		The $\bm Z$-subproblem is
		\[
			\bm Z^{(n+1)}
			=
			\argmin_{\bm Z}\ 
			\iota_{\op,\zeta}(\bm Z)
			+ \frac{\eta}{2}\|\bm D^{(n+1)}-\bm Z+\bm V^{(n)}\|_{\F}^{2}.
		\]
		Therefore,
		\[
			\bm Z^{(n+1)}
			=
			\Pi_{\|\cdot\|_{\op}\leq\zeta}\bigl(\bm D^{(n+1)}+\bm V^{(n)}\bigr).
		\]
		If $\bm M=\bm P\operatorname{diag}(\sigma_1,\ldots,\sigma_r)\bm Q^\top$ is the singular value decomposition of $\bm M$, then this projection is given by clipping singular values:
		\[
			\Pi_{\|\cdot\|_{\op}\leq\zeta}(\bm M)
			=
			\bm P\operatorname{diag}\bigl(\min\{\sigma_1,\zeta\},\ldots,\min\{\sigma_r,\zeta\}\bigr)\bm Q^\top.
		\]

		\noindent
		\textbf{(V) Dual updates.}
		The scaled dual variables are updated by
		\[
			\bm U^{(n+1)}
			=
			\bm U^{(n)}
			+ \bm B^{(n+1)}-\bm B_0^{(n+1)}-\bm D^{(n+1)},
		\]
		and
		\[
			\bm V^{(n+1)}
			=
			\bm V^{(n)}+\bm D^{(n+1)}-\bm Z^{(n+1)}.
		\]

		\begin{breakablealgorithm}
			\caption{Single-client ADMM for low-rank + sparse VAR coefficient estimation}
			\label{alg:single-client-admm}
			\begin{algorithmic}[1]
			\Require $\bm Y\in\mathbb R^{T\times d}$, $\bm X\in\mathbb R^{T\times pd}$, tuning parameters $\lambda>0$, $\omega>0$, $\zeta>0$, penalty parameters $\rho>0$ and $\eta>0$, tolerances $\varepsilon_{\rm pri},\varepsilon_{\rm dual}>0$, and maximum iteration number $N_{\max}$.
			\State Initialize $(\bm B^{(0)},\bm B_0^{(0)},\bm D^{(0)},\bm Z^{(0)},\bm U^{(0)},\bm V^{(0)})$.
			\State Precompute
			\[
			\bm H \gets \frac{2}{T}\bm X^\top\bm X+\rho\bm I_{pd},
			\quad
			\bm G \gets \frac{2}{T}\bm X^\top\bm Y.
			\]
			\For{$n=0,1,2,\ldots,N_{\max}-1$}
			\State $\bm B^{(n+1)} \gets \bm H^{-1}\Bigl(\bm G+\rho(\bm B_0^{(n)}+\bm D^{(n)}-\bm U^{(n)})\Bigr)$.
			\State $\bm B_0^{(n+1)} \gets \operatorname{SVT}_{\lambda/\rho}\bigl(\bm B^{(n+1)}-\bm D^{(n)}+\bm U^{(n)}\bigr)$.
			\State $\bm M_D^{(n)} \gets \{\rho(\bm B^{(n+1)}-\bm B_0^{(n+1)}+\bm U^{(n)})+\eta(\bm Z^{(n)}-\bm V^{(n)})\}/(\rho+\eta)$.
			\State $\bm D^{(n+1)} \gets \operatorname{Soft}_{\omega/(\rho+\eta)}\bigl(\bm M_D^{(n)}\bigr)$.
			\State $\bm Z^{(n+1)} \gets \Pi_{\|\cdot\|_{\op}\leq\zeta}\bigl(\bm D^{(n+1)}+\bm V^{(n)}\bigr)$.
			\State $\bm U^{(n+1)} \gets \bm U^{(n)}+\bm B^{(n+1)}-\bm B_0^{(n+1)}-\bm D^{(n+1)}$.
			\State $\bm V^{(n+1)} \gets \bm V^{(n)}+\bm D^{(n+1)}-\bm Z^{(n+1)}$.
			\State $\bm R_1^{(n+1)} \gets \bm B^{(n+1)}-\bm B_0^{(n+1)}-\bm D^{(n+1)}$.
			\State $\bm R_2^{(n+1)} \gets \bm D^{(n+1)}-\bm Z^{(n+1)}$.
			\State $\bm S^{(n+1)} \gets \rho\bigl(\bm B_0^{(n+1)}-\bm B_0^{(n)}+\bm D^{(n+1)}-\bm D^{(n)}\bigr)+\eta\bigl(\bm Z^{(n+1)}-\bm Z^{(n)}\bigr)$.
			\State \textbf{Stop} if $\|\bm R_1^{(n+1)}\|_{\F}+\|\bm R_2^{(n+1)}\|_{\F}\leq \varepsilon_{\rm pri}\ \text{ and }\ \|\bm S^{(n+1)}\|_{\F}\leq \varepsilon_{\rm dual}$.
			\EndFor
			\Ensure $(\widetilde{\bm A}_0,\widetilde{\bbm \Delta})\gets \bigl((\bm B_0^{(n)})^\top,(\bm Z^{(n)})^\top\bigr)$ and $\widetilde{\bm A}\gets \widetilde{\bm A}_0+\widetilde{\bbm \Delta}$.
			\end{algorithmic}
		\end{breakablealgorithm}

		Here, $\operatorname{Soft}_{\tau}(\cdot)$ denotes the entrywise soft-thresholding operator, i.e.,
		$$
			\operatorname{Soft}_{\tau}(x)=\operatorname{sign}(x)(|x|-\tau)_+,
			\qquad
			\bigl(\operatorname{Soft}_{\tau}(\bm M)\bigr)_{ij}
			=
			\operatorname{Soft}_{\tau}(M_{ij}),
		$$
		where $(u)_+=\max\{u,0\}$.
		$\operatorname{SVT}_{\tau}(\cdot)$ denotes the singular-value soft-thresholding operator: for a matrix $\bm M$ with singular value decomposition $\bm M=\bm P\,\mathrm{diag}(\sigma_1,\ldots,\sigma_r)\,\bm Q^\top$,
		$$
			\operatorname{SVT}_{\tau}(\bm M)
			=
			\bm P\,\mathrm{diag}\bigl((\sigma_1-\tau)_+,\ldots,(\sigma_r-\tau)_+\bigr)\,\bm Q^\top.
		$$
		Finally, $\Pi_{\|\cdot\|_{\op}\leq \zeta}$ denotes the projection onto the operator-norm ball, which is obtained by clipping the singular values at $\zeta$.

	\section{Additional Tables}\label{sec:add-tables}
	\renewcommand{\thetable}{S.\arabic{table}}
	\renewcommand{\thefigure}{S.\arabic{figure}}

	\setcounter{table}{0}
	\setcounter{figure}{0}

	In this section, we summarize the data description for the two empirical applications reported in the main paper. Table~\ref{tab:sample_range_all} lists the sampling frequency and sample period for each dataset. The state-level application uses monthly series for five Great Lakes states (Illinois, Indiana, Michigan, Ohio, and Wisconsin) over 2020--01 to 2023--12. The cross-country application uses quarterly national-level series for a set of advanced economies, with sample periods varying by country depending on data availability.

		\begin{table}[H]
		\centering
		\caption{Sample periods and data frequencies used in the empirical applications.}
		\label{tab:sample_range_all}
		\footnotesize
		\renewcommand{\arraystretch}{0.9}
		\setlength{\tabcolsep}{8pt}

		\begin{tabular}{l c c @{\hspace{18pt}} l c c}
		\toprule
		\multicolumn{3}{c}{\textbf{State-level dataset}} 
		& \multicolumn{3}{c}{\textbf{National dataset}} \\
		\cmidrule(lr){1-3}\cmidrule(lr){4-6}
		\textbf{State} & \textbf{Frequency} & \textbf{Sample period} 
		& \textbf{Nation} & \textbf{Frequency} & \textbf{Sample period} \\
		\midrule
		Illinois  & Monthly & 2020-01 -- 2023-12 & United States  & Quarterly & 1970-Q1 -- 2023-Q4 \\
		Indiana   & Monthly & 2020-01 -- 2023-12 & Australia      & Quarterly & 1985-Q1 -- 2023-Q4 \\
		Michigan  & Monthly & 2020-01 -- 2023-12 & Canada         & Quarterly & 1991-Q1 -- 2023-Q4 \\
		Ohio      & Monthly & 2020-01 -- 2023-12 & Germany        & Quarterly & 1992-Q1 -- 2023-Q4 \\
		Wisconsin & Monthly & 2020-01 -- 2023-12 & Korea          & Quarterly & 2001-Q1 -- 2023-Q4 \\
				&         &                    & Norway         & Quarterly & 2001-Q1 -- 2023-Q4 \\
				&         &                    & Sweden         & Quarterly & 2002-Q1 -- 2023-Q4 \\
				&         &                    & Denmark        & Quarterly & 2005-Q1 -- 2023-Q4 \\
				&         &                    & Japan          & Quarterly & 2003-Q1 -- 2019-Q4 \\
				&         &                    & United Kingdom & Quarterly & 2011-Q1 -- 2023-Q4 \\
		\bottomrule
		\end{tabular}
		\end{table}

		\newpage
		Tables~\ref{tab:data_preprocessing_1} and~\ref{tab:data_preprocessing_2} summarize the variables and preprocessing rules used in our two empirical applications. Table~\ref{tab:data_preprocessing_1} reports the monthly state-level electricity and macroeconomic variables, together with their category labels and preprocessing rules. In particular, electricity-market variables are treated as privacy-sensitive, whereas the macroeconomic controls are publicly available and treated as non-sensitive. Table~\ref{tab:data_preprocessing_2} reports the quarterly cross-country macroeconomic variables, grouped into standard economic categories. For both datasets, we record whether a series is seasonally adjusted and specify the transformation applied prior to estimation, with the transformation rule chosen consistently within each dataset to ensure comparability across units.

		\begin{table}[H]
		\centering
		\footnotesize
		\setlength{\tabcolsep}{8pt}
		\renewcommand{\arraystretch}{0.9}

		\caption{Monthly electricity and macroeconomic variables. Category code (C): 1 = electricity variables, which are treated as privacy-sensitive in the intended deployment setting; 2 = macroeconomic variables, which are publicly available and treated as non-sensitive.
		Seasonal-adjustment indicator (S): 1 = deseasonalized by X-13ARIMA-SEATS; 0 = no seasonal adjustment.
		Transformation code (T): 1 = first difference, 2 = first difference of log-transformed series.
		For each variable, the transformation code is selected globally and then applied uniformly across all states.}
		\label{tab:data_preprocessing_1}

		\begin{tabular}{l c c c p{10.6cm}}
		\toprule
		\textbf{Abbreviation} & \textbf{C} & \textbf{S} & \textbf{T} & \textbf{Description} \\
		\midrule

		% = C=1: electricity variables =
		Revenue\_R & 1 & 1 & 2 & Residential electricity revenue \\
		Sales\_R   & 1 & 1 & 2 & Residential electricity sales \\
		Price\_R   & 1 & 1 & 2 & Residential average retail electricity price \\
		Revenue\_C & 1 & 1 & 2 & Commercial electricity revenue \\
		Sales\_C   & 1 & 1 & 2 & Commercial electricity sales \\
		Price\_C   & 1 & 1 & 2 & Commercial average retail electricity price \\
		Revenue\_I & 1 & 1 & 2 & Industrial electricity revenue \\
		Sales\_I   & 1 & 1 & 2 & Industrial electricity sales \\
		Price\_I   & 1 & 1 & 1 & Industrial average retail electricity price \\

		% = C=2: macroeconomic variables =
		unemployment\_rate  & 2 & 0 & 1 & State unemployment rate \\
		payroll\_employment & 2 & 0 & 2 & Total nonfarm payroll employment \\
		coincident\_index   & 2 & 0 & 1 & Coincident economic activity index \\
		\bottomrule
		\end{tabular}

		\end{table}

		\begin{table}[H]
		\centering
		\footnotesize
		\setlength{\tabcolsep}{8pt}
		\renewcommand{\arraystretch}{0.9}

		\caption{Quarterly macroeconomic variables. Category code (C): 1 = real activity and national accounts; 2 = external sector / trade \& balance; 3 = production and sales; 4 = housing; 5 = money/interest rates; 6 = labour market; 7 = prices and wages; 8 = financial/other.
		Seasonal-adjustment indicator (S): 1 = deseasonalized by X-13ARIMA-SEATS; 0 = no seasonal adjustment.
		Transformation code (T): 1 = first difference and 2 = first difference of log-transformed series.
		For each variable, the transformation code is selected globally and then applied uniformly across all states.}
		\label{tab:data_preprocessing_2}

		\begin{tabular}{l c c c p{10.8cm}}
		\toprule
		\textbf{Abbreviation} & \textbf{C} & \textbf{S} & \textbf{T} & \textbf{Description} \\
		\midrule

		% = C=1: real activity and national accounts =
		GDP        & 1 & 1 & 2 & Real Gross Domestic Product \\
		CE:H       & 1 & 1 & 2 & Household final consumption expenditure \\
		CE:G       & 1 & 1 & 2 & Government final consumption expenditure \\
		INV        & 1 & 1 & 2 & Gross fixed capital formation (investment) \\

		% = C=2: external sector / trade & balance =
		EXP        & 2 & 1 & 2 & Exports of goods and services \\
		IMP        & 2 & 1 & 2 & Imports of goods and services \\
		NEER       & 2 & 0 & 2 & Nominal effective exchange rate \\
		CA         & 2 & 0 & 1 & Current account balance (as \% of GDP) \\

		% = C=3: production and sales =
		IP:T       & 3 & 1 & 2 & Industrial production index: total \\
		IP:MF      & 3 & 1 & 2 & Industrial production index: manufacturing \\
		PV:C       & 3 & 1 & 2 & Production volume index: construction \\
		RT         & 3 & 1 & 2 & Retail trade volume index \\

		% = C=4: housing =
		RHP        & 4 & 0 & 2 & Real house price index \\
		PIR        & 4 & 0 & 2 & House price-to-income ratio \\

		% = C=5: money/interest rates =
		IRST       & 5 & 0 & 1 & Short-term interest rate \\
		IRLT       & 5 & 0 & 1 & Long-term interest rate \\

		% = C=6: labour market =
		EMP        & 6 & 1 & 2 & Employment rate \\
		UNRATE     & 6 & 1 & 1 & Unemployment rate \\

		% = C=7: prices and wages =
		HE         & 7 & 1 & 2 & Hourly earnings \\
		ULC        & 7 & 1 & 2 & Unit labor cost \\
		CPI:T      & 7 & 0 & 2 & Consumer Price Index: total \\
		CPI:core   & 7 & 0 & 2 & Core Consumer Price Index: excluding food and energy \\
		CPI:energy & 7 & 0 & 2 & Consumer Price Index: energy component \\

		% = C=8: financial/other =
		SP         & 8 & 0 & 2 & Share price index \\
		COM        & 8 & 0 & 2 & Commodity price index \\

		\bottomrule
		\end{tabular}
		\end{table}

\linespread{1.54}
\selectfont{}

\setlength{\bibsep}{1pt}
\bibliography{mybib}

\end{document}